\documentclass{article}
\usepackage{fullpage}
\usepackage{amsmath}
\usepackage{amssymb}
\usepackage{amsthm}
\usepackage{thmtools}
\usepackage{enumitem}
\usepackage[oldenum,olditem]{paralist}
\usepackage{graphicx}
\usepackage{outlines}
\usepackage{booktabs}
\usepackage{csquotes}
\usepackage[labelfont={bf}]{caption}
\usepackage[hypcap=true]{subcaption}
\usepackage[backend=biber,style=alphabetic,natbib=true,sortcites=true,sorting=ynt,maxcitenames=3,mincitenames=1,maxbibnames=99,maxalphanames=4,minalphanames=3]{biblatex}
\usepackage[normalem,normalbf]{ulem}
\usepackage{makecell}
\usepackage{multirow}
\usepackage{colortbl}
\usepackage{algorithm}
\usepackage{algorithmicx}
\usepackage[noend]{algpseudocode}
\usepackage{tikz}
\usetikzlibrary{arrows.meta,calc,decorations.markings,decorations.pathmorphing,shapes.geometric}
\usepackage{hyperref}
\hypersetup{
  colorlinks=true,   %
  allcolors=red!50!blue,  %
}
\usepackage{cleveref}

\newcommand{\email}[1]{\href{mailto:#1}{\nolinkurl{#1}}}

\makeatletter
\providecommand*{\theHALG@line}{}
\renewcommand*{\theHALG@line}{\thealgorithm.\arabic{ALG@line}}
\AddToHook{env/algorithmic/begin}{\def\@currentcounter{ALG@line}}
\crefalias{ALG@line}{line}
\AddToHook{cmd/appendix/before}{%
  \crefalias{section}{appendix}%
  \crefalias{subsection}{appendix}%
  \crefalias{subsubsection}{appendix}}
\IfFormatAtLeastTF{2026-06-01}{%
  \renewcommand\thmt@autorefsetup{%
    \@xa\def\csname\thmt@envname autorefname\@xa\endcsname\@xa{\thmt@thmname}}%
}{}
\makeatother

\newcommand{\xx}{\texttt{x}}
\algdef{SE}[UPON]{Upon}{EndUpon}[1]{\textbf{upon}\ #1\ \algorithmicdo}{\algorithmicend\ \textbf{upon}}%
\algtext*{EndUpon}%

\newtheorem{theorem}{Theorem}[section]

\newtheorem{corollary}[theorem]{Corollary}
\newtheorem{proposition}[theorem]{Proposition}
\newtheorem{lemma}[theorem]{Lemma}

\newtheorem{observation}[theorem]{Observation}
\crefname{observation}{observation}{observations}
\Crefname{observation}{Observation}{Observations}

\newtheorem{question}[theorem]{Question}
\newtheoremstyle{subclaim}
  {0.5\topsep}
  {0.5\topsep}
  {\normalfont}
  {}
  {\itshape}
  {.}
  {0.5em}
  {\thmname{#1}\thmnumber{ #2}\thmnote{ (#3)}}
\theoremstyle{subclaim}
\newtheorem{claim}[theorem]{Claim}
\newtheorem*{claim*}{Claim}
\crefname{claim}{claim}{claims}
\Crefname{claim}{Claim}{Claims}
\theoremstyle{plain}
\theoremstyle{definition}
\newtheorem{definition}[theorem]{Definition}
\crefname{definition}{definition}{definitions}
\Crefname{definition}{Definition}{Definitions}

\crefname{example}{example}{examples}
\Crefname{example}{Example}{Examples}
\theoremstyle{remark}
\newtheorem{remark}[theorem]{Remark}
\theoremstyle{plain}

\NewDocumentCommand{\proofqed}{s o m}{%
  \renewcommand{\qedsymbol}{%
    \ensuremath{\text{\openbox}_{\text{\normalfont
      \IfBooleanTF{#1}{#3}{%
        \IfNoValueTF{#2}{\Cref{#3}}{%
          \nameCref{#3}~\hyperref[#2]{\ref*{#3}}}}}}}}}

\newcounter{proofinstance}
\newcounter{proofpart}

\makeatletter
\newcommand*{\proof@instance}{0}
\renewcommand*{\theHproofpart}{\proof@instance.\arabic{proofpart}}

\renewcommand\paragraph{\@startsection{paragraph}{4}{\z@}%
  {3.25ex \@plus1ex \@minus.2ex}%
  {-1em}%
  {\normalfont\normalsize\bfseries\itshape}}

\AddToHook{env/proof/begin}{%
  \edef\proof@savedpart{\arabic{proofpart}}%
  \stepcounter{proofinstance}%
  \edef\proof@instance{\arabic{proofinstance}}%
  \setcounter{proofpart}{0}%
  \renewcommand{\qedsymbol}{\openbox}%
}
\AddToHook{env/proof/end}{%
  \setcounter{proofpart}{\proof@savedpart}%
}
\makeatother

\crefname{proofpart}{part}{parts}
\Crefname{proofpart}{Part}{Parts}

\newcommand{\proofpart}[1]{%
  \par
  \refstepcounter{proofpart}%
  \paragraph*{Part~\theproofpart. #1.}%
}

\newcommand{\deq}{\mathrel{\stackrel{\scriptscriptstyle\Delta}{=}}}

\newcommand{\R}{\mathbb{R}}

\newcommand{\ts}{\textsf{TS}}

\newcommand{\nextOpParens}{\texttt{next()}}
\newcommand{\alg}{\mathcal{A}}
\NewDocumentCommand{\interval}{om}{\IfNoValueTF{#1}{\mathcal{I}(#2)}{\mathcal{I}_{#1}(#2)}}
\newcommand{\receiverSet}[2]{D_{#1}^{#2}}
\NewDocumentCommand{\latency}{om}{\IfNoValueTF{#1}{\delta(#2)}{\delta_{#1}(#2)}}

\newcommand{\byzprocess}[1]{{}^{\mathrm{Byz}}#1}

\newcommand{\Deq}{\;\deq\;}

\DeclareMathOperator{\Aff}{Affine}
\DeclareRobustCommand{\circcase}[1]{%
  \tikz[baseline=(char.base)]{
    \node[circle, draw=black, fill=white, inner sep=0pt, minimum size=10.5pt, font=\scriptsize] (char) {#1};
  }%
}
\newcommand{\inlineoplength}{4.5}
\DeclareRobustCommand{\opinvocationmark}{%
  \tikz[baseline=-0.55ex, x=1ex, y=1ex]{%
    \draw[line width=0.8pt, line cap=round] (0,-0.45) -- (0,0.45) (0,0) -- (\inlineoplength,0);
  }%
}
\DeclareRobustCommand{\opreturnmark}{%
  \tikz[baseline=-0.55ex, x=1ex, y=1ex]{%
    \draw[line width=0.8pt, line cap=round] (0,0) -- (\inlineoplength,0) (\inlineoplength,-0.45) -- (\inlineoplength,0.45);
  }%
}
\DeclareRobustCommand{\solidmessagearrow}{%
  \tikz[baseline=-0.55ex, x=1ex, y=1ex]{%
    \draw[-{Latex}, line width=0.55pt, line cap=round] (0,0) -- (\inlineoplength,0);
  }%
}
\DeclareRobustCommand{\wavymessagearrow}{%
  \tikz[baseline=-0.55ex, x=1ex, y=1ex]{%
    \draw[-{Latex}, line width=0.55pt, decorate, decoration={snake, amplitude=0.25ex, segment length=1.2ex, post length=1ex}] (0,0) -- (\inlineoplength,0);
  }%
}

\makeatletter
\newcommand{\openclosed}{\mathopen{\mathpalette\openclosed@aux\relax}}
\newcommand{\closedopen}{\mathclose{\mathpalette\closedopen@aux\relax}}
\newcommand{\openclosed@aux}[2]{\ooalign{$\m@th#1[$\cr$\m@th#1($\cr}}
\newcommand{\closedopen@aux}[2]{\ooalign{$\m@th#1]$\cr$\m@th#1)$\cr}}
\makeatother

\newcommand{\allMsgs}{M}
\newcommand{\ultMsgs}{M_{\mathrm{ult}}}

\newcommand{\penMsgs}{M_{\mathrm{pen}}}
\newcommand{\latePenMsgs}{M_{\mathrm{late}}}

\DeclareMathOperator{\origin}{origin}
\DeclareMathOperator{\seq}{seq}
\DeclareMathOperator{\rank}{rank}
\DeclareMathOperator{\labelof}{label}
\newcommand{\enc}{\mathcal{E}}
\newcommand{\Label}{\textsc{Label}}
\newcommand{\Req}{\textnormal{\textsc{Req}}}
\newcommand{\Reply}{\textnormal{\textsc{Reply}}}
\newcommand{\Prop}{\textnormal{\textsc{Prop}}}
\newcommand{\Echo}{\textnormal{\textsc{Echo}}}
\newcommand{\signed}[2]{\langle #1\rangle_{\sigma_{#2}}}
\newcommand{\ind}[1]{\mathbf 1\{#1\}}
\newcommand{\merge}{\mathbin{\triangleleft}}
\newcommand{\varfrontier}[1]{\texttt{frontier}_{#1}}
\newcommand{\varlabel}[1]{\texttt{label}_{#1}}
\newcommand{\varacks}[1]{\texttt{acks}_{#1}}
\newcommand{\ltlex}{\mathrel{<_{\mathrm{lex}}}}
\newcommand{\gtlex}{\mathrel{>_{\mathrm{lex}}}}
\newcommand{\gtagcolor}{black!55}
\newcommand{\gtag}[1]{\hfill\textcolor{\gtagcolor}{$#1$}}
\newcommand{\gtagtext}[1]{\textcolor{\gtagcolor}{#1}}

\newcommand{\assertlt}[3]{%
  \pgfmathparse{#1 < #2 ? 1 : 0}%
  \ifnum\pgfmathresult=0
    \PackageError{MyAssertions}{#3}{}%
  \fi
}

\newcommand{\assertle}[3]{%
  \pgfmathparse{#1 <= #2 ? 1 : 0}%
  \ifnum\pgfmathresult=0
    \PackageError{MyAssertions}{#3}{}%
  \fi
}

\colorlet{darkgreen}{green!80!black}
\colorlet{transformshade}{blue!10}
\colorlet{windowshade}{gray!20}
\colorlet{observationcolor}{black!60}
\newcommand{\pzerocolor}{blue!70!black}
\newcommand{\ponecolor}{orange!85!black}
\newcommand{\byzantinecolor}{red}
\tikzset{
    scbase/.style={
        >=Latex,
        line cap=round,
        line join=round,
    },
    sctimeline/.style={line width=1pt},
    sctimelineend/.style={sctimeline, -{Classical TikZ Rightarrow}},
    scwavytimeline/.style={sctimeline, draw=blue, decorate, decoration={snake, amplitude=0.6pt, segment length=4pt}},
    screftime/.style={line width=1pt},
    scendpoint/.style={line width=1pt},
    scconnector/.style={
        dotted,
        draw=blue,
        line width=0.8pt,
        postaction={decorate},
        decoration={markings, mark=at position 0.5 with {\arrow{Stealth}}}
    },
    scspecialconnector/.style={
        scconnector,
        draw=darkgreen,
    },
    scopmark/.style={line width=0.9pt},
    scopspan/.style={<->, line width=0.9pt},
    scstepdot/.style={circle, fill=black, inner sep=1.5pt},
    scbadstep/.style={circle, fill=red, inner sep=1.5pt},
    scspecialstep/.style={rectangle, fill=darkgreen, draw=darkgreen, inner sep=0pt, minimum size=5.5pt},
    scbadspecialstep/.style={regular polygon, regular polygon sides=3, fill=red, draw=red, inner sep=0pt, minimum size=6pt},
    scgoodmsg/.style={->, line width=1.0pt},
    scbadmsg/.style={->, line width=1.0pt, draw=red},
    scwindow/.style={fill=windowshade, draw=none},
    sctlab/.style={fill=white, inner sep=1pt},
    scslab/.style={font=\normalsize},
    screftick/.pic={
        \draw[screftime] (0,-0.17) -- (0,0.17);
    },
    scendpointtick/.pic={
        \draw[scendpoint] (0,-0.2) -- (0,0.2);
    },
    pics/scopabove/.style args={#1/#2/#3}{
        code={
            \draw[scopmark] (0,-\opBarExtend) -- (0,#2+\opBarExtend);
            \draw[scopmark] (#1,-\opBarExtend) -- (#1,#2+\opBarExtend);
            \draw[scopspan] (0,#2) -- (#1,#2);
            \node[sctlab] at ({0.5*(#1)},#2) {$#3$};
        }
    },
    pics/scopbelow/.style args={#1/#2/#3}{
        code={
            \draw[scopmark] (0,#2-\opBarExtend) -- (0,\opBarExtend);
            \draw[scopmark] (#1,#2-\opBarExtend) -- (#1,\opBarExtend);
            \draw[scopspan] (0,#2) -- (#1,#2);
            \node[sctlab] at ({0.5*(#1)},#2) {$#3$};
        }
    },
    scmsgarcraise/.initial=0,
    pics/scgoodmsgarc/.style args={#1}{
        code={
            \pgfmathsetmacro{\scMsgArcHeight}{\arcHeight + \pgfkeysvalueof{/tikz/scmsgarcraise}}
            \draw[scgoodmsg] (0,0)
                parabola bend ({0.5*(#1)}, {\scMsgArcHeight}) (#1,0);
        }
    },
    pics/scbadmsgarc/.style args={#1}{
        code={
            \pgfmathsetmacro{\scMsgArcHeight}{\arcHeight + \pgfkeysvalueof{/tikz/scmsgarcraise}}
            \draw[scbadmsg] (0,0)
                parabola bend ({0.5*(#1)}, {\scMsgArcHeight}) (#1,0);
        }
    },
}

\newcommand{\Crefpart}[2]{\nameCref{#1}~\hyperref[#2]{\ref*{#1}\ref*{#2}}}

\title{On Latency-Optimal Implementations of the Timestamp Object}
\author{%
  Aurojit Panda\thanks{New York University. Email: \email{apanda@cs.nyu.edu}.}
  \and
  Scott Shenker\thanks{UC Berkeley and ICSI. Email: \email{shenker@icsi.berkeley.edu}.}
  \and
  Sam Toueg\thanks{University of Toronto. Email: \email{sam@cs.toronto.edu}.}
  \and
  Wen Zhang\thanks{Google. Email: \email{wnz@google.com}.}
}
\date{}

\begin{document}

\maketitle

\begin{abstract}
Timestamps are widely used in distributed algorithms to order operations.
In this paper, we study the \emph{timestamp object} (\ts{}), which provides a single operation that returns strictly increasing integers.
We ask two questions: What is the maximum \emph{resilience} and the minimum \emph{latency} of \ts{} implementations in asynchronous message-passing systems, where latency is measured by worst-case message delay?
We answer these questions under various types of failure: crash failures, and Byzantine failures with and without digital signatures.
To do so, we give lower bounds and algorithms that determine (1)~the optimal resilience of \ts{} implementations under each type of failure, and (2)~the optimal latency in all cases but one.
Along the way, we introduce a new lower-bound proof technique called a \emph{cyclic chain argument},
which constructs a cycle of indistinguishable schedules along which the returned timestamps must strictly increase, yielding a contradiction.

\end{abstract}

\tableofcontents
\clearpage

\section{Introduction}\label{sec:intro}
The \emph{timestamp object} (\ts{}) provides a single operation, \nextOpParens, which returns a positive integer called a \emph{timestamp}.
Each \nextOpParens{} operation returns a timestamp larger than all previously returned ones; timestamps need not be consecutive.
Many message-passing algorithms, notably implementations of atomic registers, generate ordered tags or timestamps internally (see \Cref{sec:related-work}).
We instead study~\ts{} as an object in its own right, implemented in asynchronous message-passing systems subject to process failures.

We ask two questions: What is the maximum \emph{resilience} and the minimum \emph{latency} of \ts{} implementations?
We focus on systems in which network communication, rather than local computation, dominates the time an operation takes.
Accordingly, we measure latency in terms of \emph{worst-case message delay}: intuitively, the longest time an operation can take when local computation is instantaneous and every message sent during the operation is received within one time unit (we define this precisely in \Cref{sec:delay}).

We answer these questions under various types of failure: crash failures, and Byzantine failures with and without digital signatures.
To do so, we develop lower bounds and algorithms that determine (1)~the optimal resilience of~\ts{} implementations under each type of failure, and (2)~the optimal latency in all cases but one.

Our results are summarized in \Cref{tab:summary}, where $n\geq 2$ is the number of processes in the system and $1\leq f<n$ is the maximum number of faulty processes.
As the table shows, the optimal latency depends on whether more than one process may fail and, under Byzantine failures, on whether digital signatures are available.
Only one case remains open: Byzantine failures without digital signatures when $n>4$ and $f=1$.
There, we narrow the optimal latency down to either two or three message delays.

We also introduce a new lower-bound technique, which we call a \emph{cyclic chain argument}.
The argument constructs a cycle of indistinguishable schedules along which the returned timestamps must strictly increase, yielding a contradiction.

We defer the proofs of several auxiliary results to the appendices.

\suppressfloats[t]
\begingroup
\crefname{theorem}{Thm.}{Thms.}
\Crefname{theorem}{Thm.}{Thms.}
\crefname{proposition}{Prop.}{Props.}
\Crefname{proposition}{Prop.}{Props.}
\crefname{observation}{Obs.}{Obs.}
\Crefname{observation}{Obs.}{Obs.}
\crefname{corollary}{Cor.}{Cors.}
\Crefname{corollary}{Cor.}{Cors.}
\crefname{algorithm}{Alg.}{Algs.}
\Crefname{algorithm}{Alg.}{Algs.}
\newcommand{\resultref}[1]{{\footnotesize(\hyperref[#1]{\Cref*{#1}})}}
\newcommand{\obsref}[1]{{\footnotesize\textcolor{observationcolor}{(\hyperref[#1]{\textcolor{observationcolor}{\Cref*{#1}}})}}}
\newcommand{\onlyifref}[1]{{\footnotesize(only if: \hyperref[#1]{\Cref*{#1}})}}
\newcommand{\onlyifobsref}[1]{{\footnotesize\textcolor{observationcolor}{(only if: \hyperref[#1]{\textcolor{observationcolor}{\Cref*{#1}}})}}}
\begin{table}[htbp]
  \caption[Summary of results]{Summary of results on implementing the timestamp object~(\ts{}), where $n\geq 2$ is the number of processes in the system and $1\leq f<n$ is the maximum number of faulty processes.
  When \ts{} is implementable, we report a lower bound and an upper bound on the \emph{optimal latency} of implementing~\ts{}, i.e., the smallest worst-case message delay of any implementation of~\ts{}.
  A check mark~\checkmark{} indicates that the lower and upper bounds match.
  References \textcolor{observationcolor}{set in gray} point to observations, which follow immediately from results elsewhere in the table.%
 }\label{tab:summary}
  \centering
  \small
  \setlength{\tabcolsep}{5pt}
  \renewcommand{\arraystretch}{1.3}
  \setlength{\aboverulesep}{0pt}
  \setlength{\belowrulesep}{0pt}
  \setlength{\extrarowheight}{1pt} %
  \newcommand{\caserule}{\arrayrulecolor{black!25}\cline{3-9}\arrayrulecolor{black}}
  \newcommand{\caseruleub}{\arrayrulecolor{black!25}\cline{3-6}\cline{9-9}\arrayrulecolor{black}}
  \begin{tabular}{|l|l|ll|r@{\hspace{.4em}}l|r@{\hspace{.4em}}l|c|}
    \toprule
    & & & & \multicolumn{5}{c|}{\textbf{Optimal latency} {\footnotesize(in message delay units)}} \\
    \cline{5-9}
    \textbf{Failure type}
      & \textbf{\ts{} implementable iff}
      & \multicolumn{2}{r|}{\textbf{\# of failures}}
      & \multicolumn{2}{l|}{\textbf{Lower bound}}
      & \multicolumn{2}{l|}{\textbf{Upper bound}}
      & \textbf{Tight?} \\
    \midrule
    \multirow{2}{*}{\textbf{Crash}} & \multirow{2}{*}{\makecell[l]{$n>2f$\\\onlyifref{thm:crash-resilience}}}
      & & $f\geq 2$
      & $3$ & \resultref{thm:crash-lower-bound-fgeq2}
      & $3$ & \resultref{alg:crash-three-delays} & \checkmark \\
      \caserule
      & & & $f=1$
      & $2$ & \resultref{thm:crash-two-delay-lower-bound}
      & $2$ & \resultref{alg:crash-two-delays} & \checkmark \\
    \midrule
    \multirow{2}{*}{\makecell[l]{\textbf{Byzantine}\\\textbf{with signatures}}} & \multirow{2}{*}{\makecell[l]{$n>3f$\\\onlyifref{thm:byzantine-resilience-with-signatures}}}
      & & $f\geq 2$
      & $3$ & \obsref{thm:byz-sig-lower-bound-fgeq2}
      & $3$ & \obsref{thm:byz-sig-upper-bound} & \checkmark \\
      \caserule
    & & & $f=1$
      & $2$ & \obsref{thm:byz-sig-lower-bound-f1}
      & $2$ & \resultref{alg:byz-two-delays-sig} & \checkmark \\
    \midrule
    \multirow{3}{*}{\textbf{Byzantine}} & \multirow{3}{*}{\makecell[l]{$n>3f$\\\onlyifobsref{thm:byzantine-resilience-no-signatures}}}
        &             & $f\geq 2$
      & $3$ & \obsref{thm:byz-lower-bound-fgeq2}
      & \multirow{3}{*}{$3$} & \multirow{3}{*}{\resultref{alg:byz-three-delays}} & \checkmark \\
      \caseruleub
      & & $n=4$,      & $f=1$
      & $3$ & \resultref{thm:byz-lower-bound-n4f1}
      & &
      & \checkmark \\
      \caseruleub
      & & $n>4$,     & $f=1$
      & $2$ & \obsref{thm:byz-lower-bound-f1}
      & &
      & \\
    \bottomrule
  \end{tabular}
\end{table}
\endgroup

\section{System Model and Latency under Crash Failures}\label{sec:model-latency}
We first define the system model and latency measure for crash failures; \Cref{sec:byz-model} adapts these definitions to systems with Byzantine processes.

\subsection{System Model}\label{sec:model}
The system consists of $n\geq 2$ processes connected pairwise by reliable asynchronous channels.
At most $f$~processes may crash, where $1\leq f<n$.
Processes take \emph{steps} asynchronously.
In each step, a process may invoke an operation, receive and send finitely many messages, and return from an operation by producing a response.
A process may invoke an operation in any step, provided its earlier operations have returned.

An \emph{algorithm}~$\alg$ assigns to each process~$p$ a deterministic automaton~$\alg_p$.
Each transition of~$\alg_p$ maps process~$p$'s current state, an optional operation invocation, and the received messages to a new state, outgoing messages, and any response to $p$'s pending operation.

A \emph{schedule}~$S$ of~$\alg$ is an infinite sequence of steps, each taken according to~$\alg$, in which at least $n-f$ processes take infinitely many steps.
The processes that take infinitely many steps are \emph{correct} in~$S$; all other processes are \emph{faulty}.
In a schedule, every message is received at most once and only if it was sent to its recipient in an earlier step, and every message sent to a process that takes infinitely many steps is eventually received.
A \emph{schedule prefix}~$\sigma$ of~$\alg$ is a finite prefix of a schedule of~$\alg$.
An \emph{operation} consists of an invocation step and its corresponding return step, if any.
An operation is \emph{complete} if it returns and \emph{incomplete} otherwise.

\subsection{Worst-Case Message Delay}\label{sec:delay}
With the system model in place, we now turn to measuring the latency of algorithms.
Defining such a measure can be subtle in asynchronous systems, where an algorithm's running time is at the mercy of message and computation delays beyond its control.
(See, for example, recent work by~\citet[Section~6]{DBLP:conf/wdag/BezerraSKR25} for a discussion of the subtleties involved.)

Following convention, we measure an algorithm's latency in units of message delay, an approach well suited to environments in which latency is dominated by communication rather than by local computation.
Specifically, we study an algorithm's \emph{worst-case message delay}:
intuitively, the longest duration an operation can take when local computation is instantaneous and every message sent \emph{during the operation}
is received within one unit of time.
(A similar measure appears as early as \citet[Section~2.1]{DBLP:journals/jacm/Awerbuch85}; see \citet[Section~2.1.2]{DBLP:books/daglib/0017536} for a more detailed exposition.)

A schedule records the order of steps but not the times at which they occur.
To measure the latency of messages and operations, we assign a real time to each step.
These times are used only for analysis and are not accessible to the processes.

\begin{definition}[Run]\label{def:run}
  A \emph{run}~$R=(S,T)$ of~$\alg$ consists of a schedule~$S$ of~$\alg$ and a non-decreasing, unbounded sequence~$T$ of real numbers, where $T[j]$ is the time of step~$S[j]$, such that:
  \begin{enumerate}[label=(\arabic*),ref=\arabic*]
    \item if a message sent in step~$S[j]$ is received in step~$S[k]$, then $T[j]<T[k]$; and
    \item if steps $S[j]$ and~$S[k]$ with $j<k$ are taken by the same process, then $T[j]<T[k]$.
  \end{enumerate}
  A process is \emph{correct} in~$R$ if it is correct in~$S$, and \emph{faulty} otherwise.
\end{definition}

A run assigns a single time to each step, treating it as instantaneous even though it may perform an arbitrary finite amount of local computation.
This choice reflects the intuitive notion of worst-case message delay described in \Cref{sec:intro}, in which local computation takes no time.

We now define the latency of a message and of an operation in a run~$R$.

\begin{definition}[Message latency]
  In a run~$R=(S,T)$, if a message~$m$ is sent in step~$S[j]$ and received in step~$S[k]$, then its \emph{latency} in~$R$ is $\latency[R]{m} \deq T[k] - T[j]$.
\end{definition}

\begin{definition}[Timely message]
  In a run, a message is \emph{timely} if its latency is at most one unit of time.
\end{definition}

\begin{definition}
  In a run $R=(S,T)$, let $O$ be an operation invoked in step~$S[a]$.
  \begin{itemize}
    \item The \emph{time interval} of~$O$ is
      \[
        \interval[R]{O} \Deq
        \begin{cases}
          [T[a], T[b]], & \text{if $O$ returns in step~$S[b]$},\\
          [T[a], \infty), & \text{if $O$ is incomplete}.
        \end{cases}
      \]
    \item The \emph{latency} of~$O$, denoted~$\latency[R]{O}$, is the length of~$\interval[R]{O}$.
      In particular, $\latency[R]{O} = \infty$ if $O$ is incomplete.
    \item A message is \emph{sent during~$O$} if its send time lies in~$\interval[R]{O}$.
  \end{itemize}
\end{definition}

\begin{definition}\label{def:message-timeliness}
  In a run~$R=(S,T)$, let $L$ be the set of processes that take infinitely many steps in~$S$, and let $O$ be an operation.
  We say that \emph{messages are timely during~$O$} if every message sent during~$O$ to a process in~$L$ is timely.
\end{definition}

All these notions are relative to a run, which we specify explicitly when needed.
When the run~$R$ is clear from context, we omit it from the notation, writing $\interval{O}$, $\latency{m}$, and~$\latency{O}$.

\begin{definition}[Worst-case message delay]\label{def:worst-case-message-delay}
  The \emph{worst-case message delay} of an algorithm~$\alg$ is
  \[
    \latency{\alg} \Deq \sup \left\{ \latency[R]{O}~\middle|
    \begin{array}{l}
      \text{$R$ is a run of~$\alg$}, \\
      \text{$O$ is an operation in~$R$ invoked by a correct process in~$R$}, \\
      \text{messages are timely during~$O$}
    \end{array}
    \right\}.
  \]
\end{definition}

\begin{samepage}
If $\latency{\alg}$ is finite, two simple but useful consequences follow:

\begin{restatable}{proposition}{livenessFromFiniteDelay}\label{thm:liveness-from-finite-delay}
  If $\latency{\alg} < \infty$, then $\alg$ satisfies liveness: In every schedule of~$\alg$, every operation invoked by a correct process eventually returns.
\end{restatable}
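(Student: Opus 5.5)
We argue by contradiction. Suppose $\latency{\alg}<\infty$, but some schedule~$S$ of~$\alg$ contains an operation~$O$, invoked by a correct process~$p$ in step~$S[a]$, that never returns. The plan is to assign times~$T$ to the steps of~$S$ so that $R=(S,T)$ is a run and messages are timely during~$O$. Since $O$ is incomplete, $\latency[R]{O}=\infty$. Because $p$ is correct in~$R$, the operation~$O$ then belongs to the set in \Cref{def:worst-case-message-delay}, so $\latency{\alg}=\infty$, a contradiction.

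The timing is built from the schedule order itself. Set $T[j]=0$ for all $j<a$, which is not allowed verbatim because it violates both conditions of \Cref{def:run}. Instead, for the prefix choose strictly increasing times $T[j]=j/(a+1)-1$ for $j<a$, all lying in $[-1,0)$. For $j\ge a$, set $T[j]=\epsilon (j-a)$ for a parameter $\epsilon>0$ still to be chosen. The sequence is strictly increasing, so both conditions of \Cref{def:run} hold: a message sent in~$S[j]$ and received in~$S[k]$ has $j<k$, and so does any pair of steps by the same process. The sequence is also unbounded. Every step after~$S[a]$ has time at least~$T[a]=0$, and since $O$ is incomplete, $\interval[R]{O}=[0,\infty)$. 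Thus the messages sent during~$O$ are exactly those sent in steps~$S[j]$ with $j\ge a$.

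The main obstacle is timeliness. A message sent in step~$S[j]$, $j\ge a$, to a process taking infinitely many steps is eventually received, say in step~$S[k]$, but $k-j$ can be unbounded over all such messages. Then no single linear rate~$\epsilon$ makes every latency at most~$1$. The fix is to make the time increments non-uniform. The sequence must stay unbounded, yet we must compress long gaps between a send and its receipt.

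Define the times inductively instead. Let $T[a]=0$. For $k>a$, let $T[k]$ be the largest value not exceeding $T[k-1]+1$ such that $T[k]\le T[j]+1$ for every $j\ge a$ whose message is received in~$S[k]$. Then $T[k]>T[k-1]$ still holds, because each such $j$ satisfies $j<k$ and hence $T[j]\le T[k-1]$; more precisely, take $T[k]=\min\bigl(T[k-1]+1,\ \min_j T[j]+1\bigr)$, which is positive-increment since $T[j]+1>T[k-1]$ is not guaranteed. To avoid this subtlety, I would use a simpler device: let $T[k]=T[k-1]+1/2^{k}$-style tiny increments within blocks. Partition the steps from~$a$ onward into consecutive finite blocks $B_0,B_1,\dots$ such that every message sent in block~$B_i$ to a process in~$L$ is received by the end of block~$B_{i+1}$. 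Such a partition exists because each block is finite, so it sends finitely many messages, all eventually received. Then spread block~$B_i$ strictly increasingly over the interval $[i/2,(i+1)/2)$. Any timely-relevant message then has latency less than~$1$, while the times remain unbounded and strictly increasing, and this completes the contradiction. Messages sent to processes outside~$L$ need not be timely, by \Cref{def:message-timeliness}, so they impose no constraint.
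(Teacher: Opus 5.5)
Your final argument is correct and is essentially the paper's proof. You partition the steps into consecutive finite blocks so that every message sent in $B_i$ to a process taking infinitely many steps is received by the end of $B_{i+1}$, then spread $B_i$ strictly increasingly over $[i/2,(i+1)/2)$, which gives every such message latency below one. Starting the blocks at $S[a]$ and arguing by contradiction are cosmetic differences; in a write-up, drop the abandoned linear-rate and greedy inductive attempts (the latter really is broken, since $T[j]+1$ can fall below $T[k-1]$).
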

\end{samepage}

\begin{restatable}{proposition}{integerWorstCaseDelay}\label{thm:integer-worst-case-delay}
  If $\latency{\alg} < \infty$, then $\latency{\alg} = k$ for some nonnegative integer~$k$.
\end{restatable}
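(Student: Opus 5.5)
Suppose $\latency{\alg} = d$ is finite but not an integer. I will show that $k \deq \lfloor d\rfloor$ already bounds every admissible latency, so $\latency{\alg}\le k<d$, a contradiction; thus $d$ must be a nonnegative integer. The core tool is a \emph{time-rounding transformation}. Given a run $R=(S,T)$ and an operation $O$ of a correct process, invoked at time $t_0=T[a]$, with messages timely during~$O$, I will build a new run $R'=(S',T')$ with the same schedule $S'=S$ and only modified times. In $R'$ messages must still be timely during~$O$, and $\latency[R']{O}\ge \lceil\latency[R]{O}\rceil$ in the relevant sense. More precisely, I will show that any non-integer latency $\ell$ forces the existence of a run in which $O$ has latency at least $\lfloor \ell\rfloor+1$; letting $\ell$ approach $d$ then gives a run with latency $>d$ once $\ell>\lfloor d\rfloor$. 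So the goal is: from a run with latency $\ell>k$, produce a run with latency $\ge k+1$.

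The construction uses \emph{depth}. Since every step has a finite causal past, define for each step $s$ with $T(s)\ge t_0$ its depth $\mathrm{dp}(s)$ as the maximum number of message hops in a causal chain from a step at time $\ge t_0$ to $s$. A causal chain alternates local precedence and message receipts. Because messages sent during $O$ are timely, and there are finitely many steps in $[t_0,T[b]]$, an induction along the order of $S$ should give $T(s)-t_0\le \mathrm{dp}(s)$ for steps within $\interval[R]{O}$. Hence the return step $b$ has depth at least $\lceil \ell\rceil \ge k+1$. I will then retime. Steps before $a$ keep their times, shifted so that they are all $<t_0$. A step of depth $j$ in the window gets time $t_0+j+\epsilon\cdot(\text{its index in }S)$, for a tiny $\epsilon$ chosen so that all these times stay below $t_0+j+1$. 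Steps after the return step are placed after it, with times increasing unboundedly. I then check the two conditions of \Cref{def:run}. Within-process order holds because depth is monotone along a process's steps and ties are broken by index. Send-before-receive holds because a receipt has depth at least that of the send. Timeliness during $O$ holds because a message from depth $j$ is received at depth $j$ or $j+1$, so its latency is $<1+\epsilon'\le 1$ after choosing $\epsilon$ appropriately; alternatively I can set the times to exactly $t_0+j$ plus an index-perturbation that keeps differences at most $1$. The return step then has time $\ge t_0+k+1$, giving $\latency[R']{O}\ge k+1>d$, which contradicts the definition of $d$ as a supremum.

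The main obstacle is handling messages sent \emph{before} $O$ but received during it, together with messages to faulty (finitely-stepping) processes. These are not constrained by timeliness, and under the retiming their latency may change, but that is harmless: \Cref{def:message-timeliness} only constrains messages sent during $O$ to processes in $L$. So I will define depth counting only hops via messages sent at or after $t_0$, and treat older messages as contributing depth $0$. Nonnegativity of $k$ is immediate since latencies are nonnegative.
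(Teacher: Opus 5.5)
Your outer strategy, taking a run with latency $\ell\in(k,d]$ and retiming it into a run with latency $>d$, is the right one. The depth-based retiming, however, fails at its key step.

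\textbf{The depth inequality is false.} The claim $T(s)-t_0\le\mathrm{dp}(s)$ does not hold. Time passes along local-precedence edges without adding hops, because a process may take a step at any moment that receives nothing, or only messages sent before~$t_0$. For example, let the invoker broadcast~$m$ to all processes, itself included, and return at its next step. This algorithm has $\latency{\alg}=1$. Yet in a run where that next step receives nothing and occurs at $t_0+0.9$, the return step has depth~$0$ and latency $0.9$. So nothing forces $\mathrm{dp}(b)\ge k+1$, and moving each step to roughly $t_0+\mathrm{dp}(s)$ can shrink the latency rather than enlarge it.

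\textbf{Timeliness is not preserved.} A message sent from a depth-$j$ step is received at depth \emph{at least} $j+1$, not at most $j+1$. Suppose the sender's step is a late, untriggered step of low depth, while the receiver already has a long causal chain. Then the message's retimed latency is several units, and it is still sent during~$O$.

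\textbf{The retimed sequence is not a run.} Depth-based times are generally not non-decreasing in the index of~$S$, so $(S,T')$ is not a run unless you also reorder the schedule, which you do not address.

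The missing idea is that no causal analysis is needed. Apply a single strictly increasing map to the time axis. This leaves the schedule and the step order untouched, and it automatically preserves send-before-receive and the set of messages sent during~$O$. With $t_0=0$ and $k<\alpha<k+1$, the paper fixes all times $\le 0$ and compresses each unit interval $(i-1,i]$, $i\ge 1$, affinely onto $(i-\eta,i]$, where $\eta=(k+1-\alpha)/2$. Integer points stay fixed.

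A message with latency at most one then falls into one of three cases:
\begin{itemize}
\item it is sent at time~$0$ and received by time~$1$, so its latency stays at most~$1$;
\item it stays within one interval, so its new latency is $\eta(t-s)<1$;
\item it crosses into the next interval, so its new latency is $1+\eta(t-s-1)\le 1$.
\end{itemize}
Hence messages remain timely during~$O$. Meanwhile the return time, which lies in $(k,k+1]$, is mapped above $k+1-\eta=(k+1+\alpha)/2>\alpha$. This contradicts $\latency{\alg}\le\alpha$.
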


Accordingly, whenever we refer to a worst-case message delay of~$k$, we take~$k$ to be a nonnegative integer.

Both propositions are intuitively clear, but their proofs require some technical care; we defer them to~\Cref{sec:finite-delay}.

\section{The Timestamp Object}\label{sec:timestamp}
We now define the timestamp object and the requirements for implementing it.

\begin{definition}[Timestamp object]\label{def:timestamp-object}
  The \emph{timestamp object}, denoted~\ts{}, has a single operation \nextOpParens{} and the following (nondeterministic) sequential specification:
  \[
    \begin{array}{@{}l@{}}
      \text{\textbf{State:} $\texttt{last}\in\mathbb{Z}_{\ge 0}$, initially~$0$.} \\[0.25ex]
      \nextOpParens\textbf{:} \\
      \qquad \text{choose any $\Delta\in\mathbb{Z}_{\ge 1}$} \\
      \qquad \texttt{last}\gets\texttt{last}+\Delta \\
      \qquad \text{return~\texttt{last}}
    \end{array}
  \]
\end{definition}

Thus, the sequential specification of~\ts{} requires each \nextOpParens{} operation to return a positive integer strictly larger than every value returned by an earlier operation.

We require implementations of~\ts{} to satisfy linearizability with respect to this specification.
Following \citet{DBLP:journals/toplas/HerlihyW90}, a \emph{history} is the sequence of operation invocations and responses in a schedule.
In a history, an operation \emph{precedes} another if its response occurs before the other's invocation.
Intuitively, a history~$H$ is \emph{(crash) linearizable} with respect to~\ts{} if, after discarding some incomplete operations and appending responses to the remaining incomplete operations, the resulting operations can be arranged into a sequential history~$H'$ that
(1)~satisfies the sequential specification of~\ts{} and
(2)~preserves the precedence of operations in~$H$.

\begin{definition}[\ts{} implementation]\label{def:timestamp-implementation}
  Let $\alg$ be an algorithm whose \nextOpParens{} operation returns positive integers.
  We say that $\alg$ \emph{implements}~\ts{} in a system with crash failures if
  every schedule~$S$ of~$\alg$, with corresponding history~$H$,
  satisfies the following conditions:
  \begin{enumerate}[label=(CT\arabic*),ref=CT\arabic*,leftmargin=*]
    \item\label{cond:timestamp-safety} \emph{Safety:} $H$ is linearizable with respect to~\ts{}.
    \item\label{cond:timestamp-liveness} \emph{Liveness:} Every operation invoked by a correct process in~$S$ eventually returns.
  \end{enumerate}
  We call such an algorithm~$\alg$ an \emph{implementation of~\ts{}}.
\end{definition}

The following equivalent characterization of \ts{} implementations replaces the linearizability condition with two conditions on the values returned by complete operations.

\begin{proposition}[Characterization of \ts{} implementations]\label{thm:timestamp-characterization}
  Let $\alg$ be an algorithm whose \nextOpParens{} operation returns positive integers.
  Then $\alg$ implements~\ts{} in a system with crash failures if and only if
  every schedule~$S$ of~$\alg$, with corresponding history~$H$,
  satisfies the following conditions:
  \begin{enumerate}[label=(\getrefnumber{cond:timestamp-safety}\alph*),ref=\getrefnumber{cond:timestamp-safety}\alph*,leftmargin=*]
    \item\label{cond:timestamp-ordering} \emph{Ordering:} If a complete operation~$O_1$ precedes a complete operation~$O_2$ in~$H$, then~$O_1$ returns a smaller value than~$O_2$.
    \item\label{cond:timestamp-uniqueness} \emph{Uniqueness:} No two distinct complete operations in~$H$ return the same value.
    \item[(\getrefnumber{cond:timestamp-liveness})] \emph{Liveness:} Every operation invoked by a correct process in~$S$ eventually returns.
  \end{enumerate}
\end{proposition}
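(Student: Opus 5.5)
Liveness appears verbatim in both characterizations, so it suffices to show that, for a single history~$H$, linearizability with respect to~\ts{} is equivalent to the conjunction of \eqref{cond:timestamp-ordering} and \eqref{cond:timestamp-uniqueness}. I would prove the two directions separately, working with one fixed history.

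\emph{Linearizable implies Ordering and Uniqueness.} Take a completion and sequential history~$H'$ witnessing linearizability. Every complete operation of~$H$ appears in~$H'$ with its original response, since only incomplete operations are discarded or given new responses. The sequential specification makes the values returned along~$H'$ strictly increasing, because each step adds some $\Delta\geq 1$ to \texttt{last}. Hence distinct complete operations return distinct values, which is Uniqueness. If $O_1$ precedes~$O_2$ in~$H$, then $O_1$ also precedes $O_2$ in~$H'$, because $H'$ preserves precedence. So $O_1$ returns a smaller value, which is Ordering.

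\emph{Ordering and Uniqueness imply linearizable.} Discard all incomplete operations; this is permitted by the definition. By Uniqueness, the values returned by the remaining complete operations are pairwise distinct positive integers. Arrange these operations in increasing order of returned value to form~$H'$.

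\begin{itemize}
\item $H'$ satisfies the sequential specification. Choose each $\Delta$ as the difference between consecutive returned values, with the first $\Delta$ equal to the first value. Every such $\Delta$ is at least~$1$, using positivity for the first operation and distinctness for the rest, and \texttt{last} then equals the returned value at each step.
\item $H'$ preserves precedence. If $O_1$ precedes $O_2$ in~$H$, both are complete, so Ordering gives $v(O_1)<v(O_2)$. Hence $O_1$ comes before $O_2$ in~$H'$.
\end{itemize}

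One subtlety is that a history may contain infinitely many complete operations. Sorting by value still yields a well-ordered sequence, because the values are distinct positive integers and so have order type at most~$\omega$. Another is that operations of the same process are ordered by precedence, which we have shown is respected.

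The main obstacle is only definitional care. First, I need to confirm that the informal definition of linearizability allows discarding every incomplete operation, including those whose effects others may have ``observed''. That is fine here, because the conditions only constrain complete operations. Second, I need to handle the case of infinite histories as above.
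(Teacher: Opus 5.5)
Your proposal is correct and follows the paper's proof essentially step for step. In the forward direction you read ordering and uniqueness off the strictly increasing values of the witnessing sequential history; in the backward direction you discard the incomplete operations and sort the complete ones by return value, and your infinite-history remark matches the paper's footnote. Your explicit choice of the increments $\Delta$ just spells out a step the paper leaves implicit.
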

\begin{proof}
  \proofqed{thm:timestamp-characterization}
  The liveness condition~(\ref{cond:timestamp-liveness}) is the same in \Cref{def:timestamp-implementation} and the proposition.
  It therefore suffices to show that, for the history of every schedule of~$\alg$, (\ref{cond:timestamp-safety}) $\Longleftrightarrow$ (\ref{cond:timestamp-ordering}) $\wedge$ (\ref{cond:timestamp-uniqueness}).
  We prove the two directions separately, in a form that we will reuse for the Byzantine analogue (\Cref{thm:byz-timestamp-characterization}).

  \begin{claim}\label{claim:timestamp-characterization-forward}
    For every history~$H$, (\ref{cond:timestamp-safety}) $\Longrightarrow$ (\ref{cond:timestamp-ordering}) $\wedge$ (\ref{cond:timestamp-uniqueness}).
  \end{claim}
  \begin{proof}
    \proofqed{claim:timestamp-characterization-forward}
    Suppose that $H$ satisfies~(\ref{cond:timestamp-safety}).
    Since $H$ is linearizable with respect to~\ts{}, we can discard some incomplete operations from~$H$, append responses to the remaining incomplete operations, and arrange the resulting operations into a sequential history~$H'$ that (1)~satisfies the sequential specification of~\ts{} and (2)~preserves the precedence of operations in~$H$.
    Every complete operation in~$H$ therefore appears in~$H'$ with the same return value.
    By the sequential specification of~\ts{}, each \nextOpParens{} operation increases \texttt{last} by some $\Delta\geq 1$ and returns the new value of \texttt{last}.
    Hence, the values returned by successive operations in~$H'$ are strictly increasing.

    For~(\ref{cond:timestamp-ordering}), suppose that a complete operation~$O_1$ precedes a complete operation~$O_2$ in~$H$.
    Since $H'$ preserves the precedence of operations in~$H$, $O_1$ appears before~$O_2$ in~$H'$ and therefore returns a smaller value.
    For~(\ref{cond:timestamp-uniqueness}), let $O_1$ and~$O_2$ be distinct complete operations in~$H$.
    Both appear in~$H'$ with the same return values as in~$H$.
    Since $H'$ is sequential, one appears before the other; because return values strictly increase in~$H'$, they return distinct values.
  \end{proof}

  \begin{claim}\label{claim:timestamp-characterization-backward}
    For every history~$H$ whose complete operations return positive integers, (\ref{cond:timestamp-ordering}) $\wedge$ (\ref{cond:timestamp-uniqueness}) $\Longrightarrow$ (\ref{cond:timestamp-safety}).
  \end{claim}
  \begin{proof}
    \proofqed{claim:timestamp-characterization-backward}
    Suppose that $H$ satisfies~(\ref{cond:timestamp-ordering}) and~(\ref{cond:timestamp-uniqueness}).
    Discard all incomplete operations from~$H$, and arrange the complete operations into a sequential history~$H'$ in increasing order of their return values.%
    \footnote{The history~$H'$ is well defined even if $H$ contains infinitely many complete operations: The complete operations in~$H$ return positive integers and, by~(\ref{cond:timestamp-uniqueness}), have distinct return values. Thus, an operation returning~$v$ has at most~$v-1$ complete operations with smaller return values, so the complete operations can be listed in increasing order of return value.}
    To show that $H$ is linearizable with respect to~\ts{}, it remains to show that $H'$ satisfies the sequential specification of~\ts{} and preserves the precedence of operations in~$H$.

    The return values in~$H'$ are positive integers and, by construction and~(\ref{cond:timestamp-uniqueness}), strictly increasing.
    Thus, $H'$ satisfies the sequential specification of~\ts{}.

    Next, suppose that a complete operation~$O_1$ precedes a complete operation~$O_2$ in~$H$.
    By~(\ref{cond:timestamp-ordering}), $O_1$ returns a smaller value than~$O_2$.
    Since $H'$ orders operations by increasing return value, $O_1$ appears before~$O_2$ in~$H'$.
    Hence, $H'$ preserves the precedence of operations in~$H$.

    Therefore, $H$ is linearizable with respect to~\ts{}.
  \end{proof}

  Since the \nextOpParens{} operation of~$\alg$ returns positive integers, \Cref{claim:timestamp-characterization-forward,claim:timestamp-characterization-backward} prove the proposition.
\end{proof}

\section{Results for Crash Failures}\label{sec:crash}
In this section, we consider implementations of \ts{} in systems with crash failures.
Recall that we assume there are $n \geq 2$ processes in the system, of which at most $f \geq 1$ may crash.

\subsection{Maximum Resilience}

\begin{proposition}\label{thm:crash-resilience}
  In a system with crash failures, there exists an implementation of~\ts{} only if $n>2f$.
\end{proposition}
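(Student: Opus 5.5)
We argue by contradiction using a standard partition argument. Suppose $n\le 2f$ and that $\alg$ implements~\ts{}. Partition the processes into two nonempty sets $P$ and~$Q$ with $|P|\le f$ and $|Q|\le f$; this is possible since $n\ge 2$ and $n\le 2f$. Pick $p\in P$ and $q\in Q$. By \Cref{thm:timestamp-characterization}, it suffices to exhibit a schedule in which two complete operations, one preceding the other, violate the ordering condition~(\ref{cond:timestamp-ordering}).

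First, we build a schedule~$S_1$ in which all processes of~$Q$ crash at the start without taking any step. This is allowed since $|Q|\le f$. In~$S_1$, $p$ invokes \nextOpParens{}. Since $p$ is correct and at most $f$ processes are faulty, the liveness condition~(\ref{cond:timestamp-liveness}) forces this operation~$O_1$ to return, with some value~$v_1$, after a finite prefix~$\sigma_1$ in which only processes of~$P$ take steps. Messages sent from~$P$ to~$Q$ in~$\sigma_1$ are never received in~$S_1$, which is legal because the recipients are faulty.

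Symmetrically, we build a schedule~$S_2$ in which all processes of~$P$ crash initially and $q$ invokes \nextOpParens{}. Liveness forces this operation~$O_2$ to return some value~$v_2$ after a finite prefix~$\sigma_2$ consisting only of steps of~$Q$.

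Next, we construct a schedule~$S$ that begins with~$\sigma_1$ followed by~$\sigma_2$. In~$S$, all messages between $P$ and~$Q$ are delayed until after both prefixes, and are then delivered; all processes are afterwards scheduled fairly, so every process is correct. Processes in~$Q$ take no steps during~$\sigma_1$, and no message from~$P$ reaches them before the end of~$\sigma_2$. Hence each process in~$Q$ sees exactly the same sequence of inputs in~$\sigma_2$ within~$S$ as in~$S_2$. By determinism of the automata, $\sigma_2$ is a valid continuation, and $O_2$ again returns~$v_2$. Likewise $O_1$ returns~$v_1$. In~$S$, $O_1$ precedes~$O_2$, so ordering requires $v_1<v_2$. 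Swapping the roles of the two prefixes, i.e.\ running $\sigma_2$ first and then~$\sigma_1$, yields a schedule in which $O_2$ precedes~$O_1$, requiring $v_2<v_1$. Since $v_1$ and~$v_2$ are fixed by $S_1$ and~$S_2$ alone, these two requirements contradict each other. Even a single order gives a contradiction if one compares with the operation's own return value, but the two-order version is the cleanest.

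The main obstacle is checking that the spliced sequences are legitimate schedules under the model of \Cref{sec:model}. Three things must hold. First, every step must be a transition of~$\alg$ applied to the receiving process's actual state and received messages; this holds because each process's local view is unchanged. Second, messages must be received at most once and only after being sent; this holds because no cross-partition message is received before both prefixes end. Third, the infinite extension must deliver all pending messages to correct processes; we simply append a fair round-robin continuation that does so. No timing or latency argument is needed here, since the result concerns schedules only.
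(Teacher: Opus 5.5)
Your proof is correct and is exactly the standard partitioning argument that the paper cites without further detail; its proof of \Cref{thm:byzantine-resilience-with-signatures} follows the same pattern with three groups. One remark: a single order suffices only in the way the paper does it there, by assuming without loss of generality that $v_1\ge v_2$ and running $\sigma_1$ first, not by ``comparing with the operation's own return value,'' so that aside should be dropped; the two-order argument you actually give is fine as written.
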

\begin{proof}
  \proofqed{thm:crash-resilience}
  By a standard partitioning argument.
\end{proof}

\subsection{Latency Upper Bounds}

We now give two upper bounds on the worst-case message delay of implementing~\ts{} under crash failures, first for $f=1$ and then for general~$f$.

\subsubsection{Two-Delay Algorithm for \texorpdfstring{$f=1$}{f=1}}\label{sec:crash2delay}

When $f=1$, \ts{} can be implemented in two message delays.

\begin{theorem}\label{thm:crash-upper-bound-f1}
  In a system with crash failures, if $n>2f$ and $f=1$, then there exists an implementation of~\ts{} (\Cref{alg:crash-two-delays}) whose worst-case message delay is two.
\end{theorem}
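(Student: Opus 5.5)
The plan is to give an explicit algorithm (\Cref{alg:crash-two-delays}) and verify the three conditions of \Cref{thm:timestamp-characterization}, together with the latency bound of two. I have not fully checked the combination rule below, and I flag where it may fail.

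\textbf{The algorithm.} Each process~$q$ keeps an integer counter~$c_q$, initially~$0$. To execute \nextOpParens{}, process~$p$ sends $\Req$ to all processes, including itself.
\begin{itemize}
  \item Upon receiving $\Req$ from~$p$, process~$q$ sets $c_q\gets c_q+1$. It then sends $\Reply(c_q)$ to~$p$ and, in the same step, sends its new counter to all processes.
  \item Any process receiving a counter value~$v$ sets $c_q\gets\max(c_q,v)$. This gossip is meant to keep counters close together.
  \item Process~$p$ waits for replies from $n-1=n-f$ distinct processes. It combines the reply values into a number~$v$ and returns the encoding $v\cdot n+\mathrm{id}(p)$, where $\mathrm{id}(p)\in\{1,\dots,n\}$.
\end{itemize}

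\textbf{Liveness, latency and uniqueness.} Liveness holds because at most one process crashes, so $n-1$ replies always arrive. For the latency bound, fix a run in which messages are timely during~$O$. Each correct~$q$ receives $\Req$ within one time unit and replies in that same step. The reply is sent during~$O$, so it too arrives within one unit. Hence $\latency{O}\le 2$. Uniqueness of returned values comes from two facts. First, the encoding separates different invokers. Second, the ordering condition separates operations of the same invoker, since those operations are sequential.

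\textbf{Ordering, the main obstacle.} Suppose $O_1$ precedes~$O_2$, with reply sets $Q_1$ and~$Q_2$ of size $n-1$. Then $|Q_1\cap Q_2|\ge n-2\ge 1$. Every $q\in Q_1$ sent its reply to~$O_1$ before $O_2$ was invoked. So for $q\in Q_1\cap Q_2$, the value $q$ gives $O_2$ strictly exceeds the value it gave~$O_1$.

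The danger is the single process in $Q_1\setminus Q_2$, whose reply may have been the largest. Returning the plain maximum therefore fails without further help.

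To handle this, I would combine replies by taking the maximum, and make the gossip do the work. By the time $O_1$ completes, each replier's raised counter has been sent to all processes. I would then try to show that some process of~$Q_2$ reports to~$O_2$ a value at least the maximum seen by~$O_1$, plus one. This is where $f=1$ and $n\ge 3$ must be used. If that argument does not go through, I would strengthen the algorithm as follows.
\begin{itemize}
  \item Each $\Reply$ carries the largest counter value the replier has heard, not only its own counter.
  \item A process that receives a request adopts the maximum over all gossip it has received before replying.
\end{itemize}
I would then argue by cases on whether the gossip from the dropped process reached a process of~$Q_2$ before that process answered~$O_2$.

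I expect this last step, bounding the influence of the one process that $O_2$ does not hear from, to be the crux of the proof. It is also where the construction may need adjusting.
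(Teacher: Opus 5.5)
\textbf{The ordering step fails, and your algorithm as written violates ordering.} The gossip messages are ordinary asynchronous messages that no operation waits for, so all of them may be delayed beyond any finite prefix. In that case neither plain gossip nor your strengthening (replying with the largest value heard, adopting gossip before replying) has any effect. Your planned case split on whether the dropped process's gossip reached $Q_2$ therefore cannot be closed in the bad case.

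Concretely, take $n=3$ with processes $a,b,c$ and all counters initially~$0$. Let $c$ perform four operations whose reply sets are alternately $\{c,a\}$ and $\{c,b\}$. Delay every gossip message and every request sent to the process outside the reply set. Afterwards $c_c=4$ and $c_a=c_b=2$, and $c$'s fourth operation returns base value~$4$. Now let $b$ invoke and collect replies from $\{a,b\}$. Both replies equal~$3$, so $b$ returns base value $3<4$, although it was invoked after $c$'s operation returned.

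The underlying reason is this. Within two message delays, the only places where $O_1$'s maximum is guaranteed to be recorded when $O_1$ returns are the replier that produced it and the invoker itself. Your invoker never adopts that maximum. So the single replier holding it can be exactly the process that $O_2$ does not hear from.

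\textbf{The missing idea is to use the invoker itself as the second witness,} which is what the paper's \Cref{alg:crash-two-delays} does. It has three ingredients.
\begin{itemize}
  \item The invoker piggybacks its clock~$v$ on its request, and each replier answers with $\max(\xx_j,v)+1$.
  \item Before returning, the invoker sets its own clock to the maximum reply~$t$.
  \item The invoker sends only to the other $n-1$ processes and waits for $n-2$ replies. Hence the maximal replier is distinct from the invoker, and at return two distinct processes hold clocks at least~$t$ (\Cref{claim:crash-two-delays-return-support}).
\end{itemize}
A later invoker is then in one of two cases. Either it is one of these two witnesses, so its request carries a value at least~$t$ and every reply exceeds~$t$. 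Or it hears from $n-2$ of the other $n-1$ processes and hence from at least one witness; this is exactly where $f=1$ is used (\Cref{claim:crash-two-delays-later-return}).

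If you keep self-replies, you would also have to rule out the invoker's own reply being the unique maximum. Your liveness, latency, and uniqueness arguments are fine and match the paper's. Ordering is the part that needs this write-back-to-self step.
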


\begin{proof}
  \proofqed{thm:crash-upper-bound-f1}
We first establish some supporting properties of the algorithm.%
\footnote{In our pseudocode, \textbf{upon} blocks are uninterruptible: Each runs to completion before another begins.}
We then show that the algorithm implements~$\ts{}$ by proving that it returns positive integers and satisfies ordering~(\ref{cond:timestamp-ordering}), uniqueness~(\ref{cond:timestamp-uniqueness}), and liveness~(\ref{cond:timestamp-liveness}).
Finally, we show that its worst-case message delay is two.

\begin{algorithm}
  \caption{A worst-case two-delay implementation of~$\ts{}$ in a system with crash failures where $n>2f$ and $f=1$.}\label{alg:crash-two-delays}
  \begin{algorithmic}[1]
    \Statex {\bfseries\boldmath Code for process~$p_i$} ($0 \leq i < n$):
    \State $\xx_i \gets 0$ \Comment{$p_i$'s \emph{local clock}.}
    \State $\texttt{seq}_i \gets 0$ \Comment{Sequence number of $p_i$'s ongoing invocation, or of its next invocation when idle.}
    \State $\texttt{reqSeq}_i[j] \gets -1$ for each $j \neq i$ \Comment{Largest $q$ in a $\langle \Req~q,-\rangle$ received from $p_j$.}
    \Statex

    \State $\texttt{phase}_i \gets \textsf{idle}$ \Comment{Whether $p_i$ is \textsf{idle} or has an \textsf{ongoing} invocation.}
    \State $\texttt{replies}_i \gets \emptyset$ \Comment{Set of $\langle$sender index, reply value$\rangle$ received for $p_i$'s ongoing invocation.}
    \Statex

    \Upon{\textbf{invocation} of \nextOpParens{}}
      \State $\texttt{replies}_i \gets \emptyset$
      \State $\texttt{phase}_i \gets \textsf{ongoing}$
      \State \textbf{send} $\langle \Req~\texttt{seq}_i, \xx_i \rangle$ \textbf{to} every process except~$p_i$ \label{line:crash-two-delays-broadcast}
    \EndUpon
    \Statex

    \Upon{\textbf{receive} $\langle \Req~q, v \rangle$ \textbf{from} $p_j$}
      \If{$q > \texttt{reqSeq}_i[j]$} \label{line:crash-two-delays-fresh-request}
        \State $\texttt{reqSeq}_i[j] \gets q$
        \State $\xx_i \gets \max(\xx_i, v) + 1$ \label{line:crash-two-delays-reply-max}
        \State \textbf{send} $\langle \Reply~q, \xx_i \rangle$ \textbf{to} $p_j$ \label{line:crash-two-delays-send-reply}
      \EndIf
    \EndUpon
    \Statex

    \Upon{\textbf{receive} $\langle \Reply~q, v \rangle$ \textbf{from} $p_j$}
      \If{$\texttt{phase}_i = \textsf{ongoing}$ \textbf{and} $q = \texttt{seq}_i$}
        \State $\texttt{replies}_i \gets \texttt{replies}_i \cup \{\langle j, v \rangle\}$ \label{line:crash-two-delays-record-reply}
        \If{$|\texttt{replies}_i| = n-2$} \label{line:crash-two-delays-enough-replies}
          \State $t \gets \max \{v' : \langle j', v' \rangle \in \texttt{replies}_i\}$ \Comment{\emph{Base timestamp}.}\label{line:crash-two-delays-return-compute}
          \State $\xx_i \gets \max(\xx_i, t)$ \label{line:crash-two-delays-return-update}
          \State $\texttt{phase}_i \gets \textsf{idle}$
          \State $\texttt{seq}_i \gets \texttt{seq}_i + 1$
          \State \Return $n\cdot t+i$ \Comment{Completes the \nextOpParens{} invocation.}\label{line:crash-two-delays-return}
        \EndIf
      \EndIf
    \EndUpon
  \end{algorithmic}
\end{algorithm}

\begin{claim}\label{claim:crash-two-delays-clock-monotone}
  For every process~$p_i$, its local clock~$\xx_i$ is non-decreasing.
\end{claim}
\begin{proof}
  \proofqed{claim:crash-two-delays-clock-monotone}
  After initialization, only lines~\ref{line:crash-two-delays-reply-max} and~\ref{line:crash-two-delays-return-update} assign to $\xx_i$, and neither decreases it.
\end{proof}

\begin{claim}\label{claim:crash-two-delays-return-support}
  If an invocation of~\nextOpParens{} computes base timestamp~$t$ on line~\ref{line:crash-two-delays-return-compute}, then at the time of return, at least two distinct processes~$p_j$ satisfy $\xx_j\geq t$.
\end{claim}
\begin{proof}
  \proofqed{claim:crash-two-delays-return-support}
  Let $p_i$ be the process performing the invocation.
  Write $R=\{j : \langle j,-\rangle\in\texttt{replies}_i\}$ when $p_i$ computes~$t$ on line~\ref{line:crash-two-delays-return-compute}.
  By line~\ref{line:crash-two-delays-enough-replies}, $|\texttt{replies}_i|=n-2$; by line~\ref{line:crash-two-delays-fresh-request}, each sender contributes at most one element to~$\texttt{replies}_i$, so $|R|=n-2\geq 1$.
  Moreover, $p_i$ sends no request to itself (line~\ref{line:crash-two-delays-broadcast}) and therefore receives no reply from itself, so $i\notin R$.

  For each $j\in R$, let $v_j$ be the unique value such that $\langle j,v_j\rangle\in\texttt{replies}_i$ when $p_i$ computes~$t$ (line~\ref{line:crash-two-delays-return-compute}).
  Since $t=\max_{j\in R} v_j$, choose $\ell\in R$ such that $v_\ell=t$.
  By lines~\ref{line:crash-two-delays-reply-max} and~\ref{line:crash-two-delays-send-reply}, process~$p_\ell$ has $\xx_\ell=v_\ell=t$ when it sends its reply.
  By \Cref{claim:crash-two-delays-clock-monotone}, $\xx_\ell\geq t$ still holds when $p_i$ returns.

  For the invoking process~$p_i$, line~\ref{line:crash-two-delays-return-update} sets $\xx_i\gets\max(\xx_i,t)$ before line~\ref{line:crash-two-delays-return} returns.
  Hence, $\xx_i\geq t$ at the time of return.
  Since $\ell\in R$ and $i\notin R$, the processes $p_\ell$ and $p_i$ are distinct, and the claim follows.
\end{proof}

\begin{claim}\label{claim:crash-two-delays-later-return}
  Let $h$ be an integer, and suppose that when a process~$p_i$ invokes~\nextOpParens{}, there are distinct processes~$p_j$ and~$p_k$ such that $\xx_j\geq h$ and $\xx_k\geq h$.
  Then, if the invocation completes, its base timestamp~$t$ satisfies $t>h$.
\end{claim}
\begin{proof}
  \proofqed{claim:crash-two-delays-later-return}
  Let $v_i$ be the value of~$\xx_i$ included in the \Req{} message that $p_i$ sends for this invocation on line~\ref{line:crash-two-delays-broadcast}.
  When $p_i$ computes~$t$ on line~\ref{line:crash-two-delays-return-compute}, write $R=\{j : \langle j,-\rangle\in\texttt{replies}_i\}$.
  As argued in the proof of \Cref{claim:crash-two-delays-return-support}, $|R|=n-2$ and $i\notin R$.

  If $i\in \{j,k\}$, then $v_i\geq h$ by assumption and monotonicity (\Cref{claim:crash-two-delays-clock-monotone}).
  For each $\ell\in R$, line~\ref{line:crash-two-delays-reply-max} makes $p_\ell$'s reply value at least $v_i+1>h$.
  Since $t$ is the maximum of these reply values (line~\ref{line:crash-two-delays-return-compute}), we have $t>h$.

  Otherwise, $i\notin \{j,k\}$.
  Because $|R|=n-2$ and $i\notin R$, the set~$R$ omits only one of the $n-1$ indices other than~$i$; hence $j\in R$ or $k\in R$.
  Without loss of generality, assume $j\in R$.
  By \Cref{claim:crash-two-delays-clock-monotone}, $\xx_j\geq h$ when $p_j$ receives the \Req{} message from~$p_i$.
  Line~\ref{line:crash-two-delays-reply-max} therefore makes $p_j$'s reply value greater than~$h$.
  This value is among those whose maximum is~$t$ (line~\ref{line:crash-two-delays-return-compute}), so $t>h$.
\end{proof}

\begin{claim}[Ordering for base timestamps]\label{claim:crash-two-delays:ordering}
  Suppose two \nextOpParens{} operations compute base timestamps~$t$ and~$t'$ on line~\ref{line:crash-two-delays-return-compute}, respectively.
  If the first returns before the second is invoked, then $t<t'$.
\end{claim}
\begin{proof}
  \proofqed{claim:crash-two-delays:ordering}
  By \Cref{claim:crash-two-delays-return-support}, when the first operation returns, there exist distinct processes $p_j$ and $p_k$ such that $\xx_j\geq t$ and $\xx_k\geq t$.
  By \Cref{claim:crash-two-delays-clock-monotone}, both inequalities still hold when the second operation is invoked.
  Applying \Cref{claim:crash-two-delays-later-return} with $h=t$ therefore shows that the second operation has base timestamp $t'>t$.
\end{proof}

\begin{claim}[Return values]\label{claim:crash-two-delays:return-values}
  \Cref{alg:crash-two-delays} returns positive integers from \nextOpParens{} and satisfies ordering~(\ref{cond:timestamp-ordering}) and uniqueness~(\ref{cond:timestamp-uniqueness}).
\end{claim}
\begin{proof}
  \proofqed{claim:crash-two-delays:return-values}
  Local clocks~$\xx_i$ start at~$0$ and never decrease (\Cref{claim:crash-two-delays-clock-monotone}), so by line~\ref{line:crash-two-delays-reply-max}, every value sent in a \Reply{} message is a positive integer.
  Hence, every base timestamp~$t$ is a positive integer, and so is $n\cdot t+i$.

  For~(\ref{cond:timestamp-ordering}), suppose a complete operation at~$p_i$ with base timestamp~$t$ precedes one at~$p_j$ with base timestamp~$t'$.
  By \Cref{claim:crash-two-delays:ordering}, $t<t'$, so $n\cdot t+i<n\cdot(t+1)\leq n\cdot t'+j$.

  For~(\ref{cond:timestamp-uniqueness}), timestamps returned by different processes differ modulo~$n$, and those returned by the same process are distinct by~(\ref{cond:timestamp-ordering}) since its operations are sequential.
\end{proof}

\begin{claim}[Liveness]\label{claim:crash-two-delays:liveness}
  In \Cref{alg:crash-two-delays}, every invocation by a correct process eventually returns.
\end{claim}
\begin{proof}
  \proofqed{claim:crash-two-delays:liveness}
  Let $p_i$ be a correct process that invokes~\nextOpParens{}.
  Line~\ref{line:crash-two-delays-broadcast} sends a \Req{} message to each of the other $n-1$ processes, at least $n-2$ of which are correct.
  Each correct recipient eventually receives this fresh request, passes the check on line~\ref{line:crash-two-delays-fresh-request}, and sends a \Reply{} on line~\ref{line:crash-two-delays-send-reply}, which $p_i$ eventually receives.
  Thus line~\ref{line:crash-two-delays-record-reply} eventually makes $|\texttt{replies}_i|=n-2$, causing $p_i$ to pass the check on line~\ref{line:crash-two-delays-enough-replies} and return on line~\ref{line:crash-two-delays-return}.
\end{proof}

\begin{samepage}
\begin{claim}[Latency]\label{claim:crash-two-delays:latency}
  The worst-case message delay of \Cref{alg:crash-two-delays} is two.
\end{claim}
\begin{proof}[Proof sketch]
  \proofqed{claim:crash-two-delays:latency}
  An invocation follows one \Req{} delay and one \Reply{} delay, as illustrated below:
  \[
    \begin{tikzpicture}[
      baseline=(current bounding box.south),
      event/.style={circle, fill, inner sep=2pt},
      msg/.style={-{Stealth[length=2mm]}, thin},
      every node/.style={font=\small}
    ]
      \node[event, label={[align=right, xshift=-1em, yshift=-2pt]below:{invoke\\\nextOpParens{}}}] (invoke) at (0,0) {};
      \node[event] (req1) at (2.4,0.7) {};
      \node[event] (req2) at (2.4,-0.7) {};
      \node[event, label={[align=left, xshift=1em, yshift=-2pt]below:{return}}] (return) at (4.8,0) {};

      \node[fill=white, inner sep=1pt] at (1.2,1.1) {\Req};
      \node[fill=white, inner sep=1pt] at (3.6,1.1) {\Reply};

      \foreach \a in {req1,req2} {
        \draw[msg] (invoke) -- (\a);
        \draw[msg] (\a) -- (return);
      }
    \end{tikzpicture}\qedhere
  \]
\end{proof}
\end{samepage}

By \Cref{claim:crash-two-delays:return-values,claim:crash-two-delays:liveness}, \Cref{alg:crash-two-delays} satisfies the conditions of \Cref{thm:timestamp-characterization} and hence implements~\ts{}.
By \Cref{claim:crash-two-delays:latency}, its worst-case message delay is two.
\end{proof}

\subsubsection{Three-Delay Algorithm for General \texorpdfstring{$f$}{f}}\label{sec:crash3delay}

In the general case, \ts{} can be implemented in three message delays.

\begin{theorem}\label{thm:crash-upper-bound-fgeq2}
  In a system with crash failures, if $n>2f$, then there exists an implementation of~\ts{} (\Cref{alg:crash-three-delays}) whose worst-case message delay is three.
\end{theorem}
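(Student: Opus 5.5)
I will design \Cref{alg:crash-three-delays} as a two-round, quorum-based generalization of \Cref{alg:crash-two-delays}. Each process~$p_i$ keeps a non-decreasing local clock~$\xx_i$. To invoke \nextOpParens{}, $p_i$ runs two phases.

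In the \emph{read phase}, $p_i$ sends a request carrying its sequence number to all processes. Each recipient replies with its current clock. $p_i$ waits for replies from $n-f$ processes, counting itself, and sets the base timestamp to $t \gets 1+\max$ of the received values.

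In the \emph{write phase}, $p_i$ sends $t$ to all processes. Each recipient sets $\xx_j \gets \max(\xx_j,t)$ and acknowledges. $p_i$ waits for $n-f$ acknowledgements, again counting itself, and returns $n\cdot t+i$.

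As in the two-delay algorithm, sequence numbers tag messages so that stale replies and acknowledgements from earlier invocations are ignored, and each sender counts at most once per phase.

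The proof will follow the structure of \Cref{thm:crash-upper-bound-f1}, in four steps.

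\begin{enumerate}
  \item \emph{Clock monotonicity.} Clocks only change by taking a maximum, so they never decrease; this is the analogue of \Cref{claim:crash-two-delays-clock-monotone}.
  \item \emph{Write-quorum support.} When an invocation with base timestamp~$t$ returns, some set $W$ of $n-f$ processes has $\xx_j\geq t$. Each acknowledger updated its clock before acknowledging, and the invoker updates its own clock as well.
  \item \emph{Ordering of base timestamps.} Suppose the first operation returns before the second is invoked. The second operation's read quorum $Q$, of size $n-f$, consists of replies sent after the second invocation, hence after the first return. Since $n>2f$, we have $|W|+|Q|>n$, so $Q$ contains some process of~$W$. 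By monotonicity that process reports a value at least~$t$, so the second base timestamp satisfies $t'\geq t+1$.
  \item \emph{Return values.} The remaining conditions are derived exactly as in \Cref{claim:crash-two-delays:return-values}. Ordering~(\ref{cond:timestamp-ordering}) follows from $n t+i<n t'+j$ whenever $t<t'$. Uniqueness~(\ref{cond:timestamp-uniqueness}) holds because returns from different processes differ modulo~$n$, and returns from the same process are distinct by ordering.
\end{enumerate}

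Liveness follows as in \Cref{claim:crash-two-delays:liveness}. At least $n-f$ processes are correct, so each phase eventually collects enough responses.

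For latency, the read phase costs two delays (request and reply) and the write phase two more (write and acknowledgement), which totals four, not three. To reach three delays I will \textbf{merge the write phase into the next round of messages}. After computing~$t$, the invoker does not collect acknowledgements. Instead, it broadcasts $t$, and every process that receives $t$ updates its clock and rebroadcasts an acknowledgement to \emph{all} processes, including the invoker. The invoker returns once it has $n-f$ acknowledgements. This gives three delays: request, reply, then the acknowledgement broadcast.

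I expect the main obstacle to be making exactly three delays compatible with the quorum-intersection argument of step~3, which relies on write-quorum support at return time. The plan above, on closer inspection, still has four hops (request, reply, $t$-broadcast, acknowledgement). A true three-delay scheme instead has the replying processes send their clock values to \emph{each other} as well as to the invoker. Every process receiving $n-f$ such values for an invocation can then compute the same maximum-based~$t$, update its clock, and send an acknowledgement to the invoker. The invoker returns upon $n-f$ acknowledgements carrying a common~$t$, or the maximum~$t$ it sees. This makes the timeline request, echo, acknowledgement.

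The delicate part is arguing that the $t$ reported by different acknowledgers is consistent enough, or that taking the maximum still yields write-quorum support. I would first try to have each acknowledger store a value at least the $t$ that will be returned, and verify step~3 against that. Finally, I will show that every message along the path is sent during the operation, so that the bound in \Cref{def:worst-case-message-delay} is exactly three.
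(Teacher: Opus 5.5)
Your final message pattern is the right three-delay skeleton, and it is the one the paper uses: a request, then each recipient broadcasts its clock to all processes, then every process that has collected $n-f$ of these values acknowledges to the invoker. But the proposal stops at exactly the step that makes it correct, and both resolutions you suggest fail.

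A ``common $t$'' is not attainable. Different acknowledgers collect different sets of $n-f$ proposals, so their maxima genuinely differ, and an invoker that waits for $n-f$ acknowledgments carrying the same value may wait forever.

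Taking the \emph{maximum} of the acknowledged values breaks your step~2. At return, only the acknowledger that reported the maximum (and the invoker, if it adopts~$t$) is guaranteed to have $\xx_j\geq t$. That is two processes, not $n-f$. When $f\geq2$, a later acknowledger's $n-f$ proposals can miss both, so the next base timestamp need not exceed~$t$. As written, your steps 1--4 prove correctness of the four-delay read/write algorithm, not of the three-delay one.

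The missing idea is the max--min technique. Each acknowledger~$p_j$ sets $\xx_j\gets\max(\xx_j,v_{\max})+1$ and sends this new clock to the invoker. Here $v_{\max}$ is the largest of the $n-f$ proposals $p_j$ collected for that invocation, from distinct senders, and $p_j$ replies at most once per invocation. The invoker returns $n\cdot t+i$, where $t$ is the \emph{minimum} of the $n-f$ acknowledged values.

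With this choice, write-quorum support holds with $W$ the set of acknowledgers. Each has a clock at least its own reported value, which is at least~$t$.

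The ordering argument must then be run per acknowledger, not against a single read quorum. Suppose $n-f$ processes have clocks $\geq h$ when a later operation is invoked. Every acknowledger's $n-f$ proposals are sent after that invocation and, because $n>2f$, include one from this set. So every acknowledged value exceeds~$h$, and hence so does their minimum.

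Your remark about having each acknowledger ``store a value at least the $t$ that will be returned'' is exactly what the minimum guarantees. You need to commit to it and prove it, not leave it open.
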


\begin{proof}
  \proofqed{thm:crash-upper-bound-fgeq2}
The algorithm adopts the ``max-min technique''~\cite[Section~1.1]{DBLP:conf/podc/DuttaGLC04} previously used to achieve worst-case three-delay reads for SWMR and MWMR atomic registers in message-passing systems~\cite{DBLP:conf/netys/HadjistasiNS17}.
We first establish some supporting properties of the algorithm.
We then show that the algorithm implements~$\ts{}$ by proving that it returns positive integers and satisfies ordering~(\ref{cond:timestamp-ordering}), uniqueness~(\ref{cond:timestamp-uniqueness}), and liveness~(\ref{cond:timestamp-liveness}).
Finally, we show that its worst-case message delay is three.

\begin{algorithm}
  \caption{A worst-case three-delay implementation of~$\ts{}$ in a system with crash failures where $n>2f$.}\label{alg:crash-three-delays}
  \begin{algorithmic}[1]
    \Statex {\bfseries\boldmath Code for process~$p_i$} ($0 \leq i < n$):
    \State $\xx_i \gets 0$ \Comment{$p_i$'s \emph{local clock}.}
    \State $\texttt{seq}_i \gets 0$ \Comment{Sequence number of $p_i$'s ongoing invocation, or of its next invocation when idle.}
    \State $\texttt{reqSeq}_i[k] \gets -1$ for each $k$ \Comment{Largest $q$ in a $\langle \Req~q\rangle$ received from $p_k$.}
    \State $\texttt{propSeq}_i[k] \gets -1$ for each $k$ \Comment{Largest $q$ in a received $\langle \Prop~\langle k,q\rangle,-\rangle$.}
    \State $\texttt{props}_i[k] \gets \emptyset$ for each $k$
    \Statex \Comment{Set of $\langle$sender index, proposal value$\rangle$ received for sequence number $\texttt{propSeq}_i[k]$ of~$p_k$.}
    \Statex

    \State $\texttt{phase}_i \gets \textsf{idle}$ \Comment{Whether $p_i$ is \textsf{idle} or has an \textsf{ongoing} invocation.}
    \State $\texttt{replies}_i \gets \emptyset$ \Comment{Set of $\langle$sender index, reply value$\rangle$ received for $p_i$'s ongoing invocation.}
    \Statex

    \Upon{\textbf{invocation} of \nextOpParens{}}
      \State $\texttt{replies}_i \gets \emptyset$
      \State $\texttt{phase}_i \gets \textsf{ongoing}$
      \State \textbf{send} $\langle \Req~\texttt{seq}_i \rangle$ \textbf{to} every process \label{line:crash-three-delays-req-send}
    \EndUpon
    \Statex

    \Upon{\textbf{receive} $\langle \Req~q \rangle$ \textbf{from} $p_k$}
      \If{$q > \texttt{reqSeq}_i[k]$} \label{line:crash-three-delays-fresh-request}
        \State $\texttt{reqSeq}_i[k] \gets q$ \label{line:crash-three-delays-reqseq-update}
        \State \textbf{send} $\langle \Prop~\langle k,q\rangle, \xx_i \rangle$ \textbf{to} every process \label{line:crash-three-delays-prop-send}
      \EndIf
    \EndUpon
    \Statex

    \Upon{\textbf{receive} $\langle \Prop~\langle k,q\rangle, v \rangle$ \textbf{from} $p_j$}
      \If{$q > \texttt{propSeq}_i[k]$}
        \State $\texttt{propSeq}_i[k] \gets q$ \label{line:crash-three-delays-propseq-update}
        \State $\texttt{props}_i[k] \gets \emptyset$
      \EndIf
      \If{$q = \texttt{propSeq}_i[k]$ \textbf{and} $|\texttt{props}_i[k]| < n-f$} \label{line:crash-three-delays-props-guard}
        \State $\texttt{props}_i[k] \gets \texttt{props}_i[k] \cup \{\langle j,v\rangle\}$ \label{line:crash-three-delays-props-add}
        \If{$|\texttt{props}_i[k]| = n-f$} \label{line:crash-three-delays-props-full}
          \State $v_{\max} \gets \max\{v' : \langle j',v'\rangle \in \texttt{props}_i[k]\}$
          \State $\xx_i \gets \max(\xx_i,v_{\max}) + 1$ \label{line:crash-three-delays-clock-update}
          \State \textbf{send} $\langle \Reply~q, \xx_i \rangle$ \textbf{to} $p_k$ \label{line:crash-three-delays-reply-send}
        \EndIf
      \EndIf
    \EndUpon
    \Statex

    \Upon{\textbf{receive} $\langle \Reply~q, v \rangle$ \textbf{from} $p_j$}
      \If{$\texttt{phase}_i = \textsf{ongoing}$ \textbf{and} $q = \texttt{seq}_i$}
        \State $\texttt{replies}_i \gets \texttt{replies}_i \cup \{\langle j,v\rangle\}$ \label{line:crash-three-delays-record-reply}
        \If{$|\texttt{replies}_i| = n-f$} \label{line:crash-three-delays-replies-full}
          \State $t \gets \min \{v' : \langle j',v'\rangle \in \texttt{replies}_i\}$ \Comment{\emph{Base timestamp}.}\label{line:crash-three-delays-return-compute}
          \State $\texttt{phase}_i \gets \textsf{idle}$
          \State $\texttt{seq}_i \gets \texttt{seq}_i + 1$
          \State \Return $n\cdot t+i$ \Comment{Completes the \nextOpParens{} invocation.}\label{line:crash-three-delays-return}
        \EndIf
      \EndIf
    \EndUpon
  \end{algorithmic}
\end{algorithm}

\begin{claim}\label{claim:crash-three-delays-clock-monotone}
  For every process~$p_i$, its local clock~$\xx_i$ is non-decreasing.
\end{claim}
\begin{proof}
  \proofqed{claim:crash-three-delays-clock-monotone}
  After initialization, the only assignment to~$\xx_i$ is on line~\ref{line:crash-three-delays-clock-update}, which cannot decrease~$\xx_i$.
\end{proof}

\begin{claim}\label{claim:crash-three-delays-unique-messages}
  For all processes~$p_i$ and~$p_j$ and every sequence number~$q$, process~$p_j$ sends at most one $\langle \Prop~\langle i,q\rangle,-\rangle$ message to each process and at most one $\langle \Reply~q,-\rangle$ message to~$p_i$.
\end{claim}
\begin{proof}
  \proofqed{claim:crash-three-delays-unique-messages}
  Process~$p_j$ sends $\langle \Prop~\langle i,q\rangle,-\rangle$ on line~\ref{line:crash-three-delays-prop-send} only upon receiving $\langle \Req~q\rangle$ from~$p_i$ with $q>\texttt{reqSeq}_j[i]$, and it sets $\texttt{reqSeq}_j[i]\gets q$ on line~\ref{line:crash-three-delays-reqseq-update} before sending.
  As line~\ref{line:crash-three-delays-reqseq-update} is the only assignment to~$\texttt{reqSeq}_j[i]$ and never decreases it, the test $q>\texttt{reqSeq}_j[i]$ fails thereafter, so $p_j$ sends no further \Prop{} message for~$\langle i,q\rangle$.

  Process~$p_j$ sends $\langle \Reply~q,-\rangle$ to~$p_i$ on line~\ref{line:crash-three-delays-reply-send} only when an insertion on line~\ref{line:crash-three-delays-props-add} brings $|\texttt{props}_j[i]|$ to $n-f$ while $\texttt{propSeq}_j[i]=q$.
  Thereafter, the test on line~\ref{line:crash-three-delays-props-guard} fails for every later $\langle \Prop~\langle i,q\rangle,-\rangle$ message received, so no further insertion---and hence no further reply---occurs while $\texttt{propSeq}_j[i]=q$.
  Moreover, any later assignment to~$\texttt{propSeq}_j[i]$ occurs on line~\ref{line:crash-three-delays-propseq-update}, where it can only increase the value past~$q$, so $\texttt{propSeq}_j[i]$ never equals~$q$ again.
\end{proof}

\begin{claim}\label{claim:crash-three-delays-return-support}
  If an invocation of~\nextOpParens{} computes base timestamp~$t$ on line~\ref{line:crash-three-delays-return-compute}, then at the time of return, at least $n-f$ processes~$p_j$ satisfy $\xx_j\geq t$.
\end{claim}
\begin{proof}
  \proofqed{claim:crash-three-delays-return-support}
  Let $p_i$ be the process performing the invocation, and consider the time when $p_i$ executes line~\ref{line:crash-three-delays-return-compute} before returning.
  Write $R=\{j : \langle j,-\rangle\in\texttt{replies}_i\}$.
  Line~\ref{line:crash-three-delays-return-compute} is reached only when $|\texttt{replies}_i|=n-f$ (line~\ref{line:crash-three-delays-replies-full}).
  Every pair in~$\texttt{replies}_i$ comes from a \Reply{} message with sequence number~$\texttt{seq}_i$, and by \Cref{claim:crash-three-delays-unique-messages} each process sends at most one such message to~$p_i$, so we have $|R|=n-f$.

  Fix an index~$j\in R$, and let $v_j$ be the value with $\langle j,v_j\rangle \in \texttt{replies}_i$.
  By line~\ref{line:crash-three-delays-return-compute}, $t\leq v_j$.
  Process~$p_j$ had $\xx_j=v_j$ when it sent this reply on line~\ref{line:crash-three-delays-reply-send}, so by \Cref{claim:crash-three-delays-clock-monotone}, $\xx_j\geq v_j\geq t$ still holds when $p_i$ returns.
  As $j\in R$ was arbitrary and $|R|=n-f$, the claim follows.
\end{proof}

\begin{claim}\label{claim:crash-three-delays-later-return}
  Let $h$ be an integer, and suppose that when a process~$p_i$ invokes~\nextOpParens{}, at least $n-f$ processes~$p_k$ have $\xx_k\geq h$.
  Then, if the invocation completes, its base timestamp~$t$ satisfies $t>h$.
\end{claim}
\begin{proof}
  \proofqed{claim:crash-three-delays-later-return}
  Let $A$ be a set of $n-f$ indices such that $\xx_k\geq h$ for every $k\in A$ when the invocation begins.
  When $p_i$ executes line~\ref{line:crash-three-delays-return-compute}, write $R=\{j : \langle j,-\rangle\in\texttt{replies}_i\}$.
  Fix an index~$j\in R$, and let $v_j$ be the value with $\langle j,v_j\rangle \in \texttt{replies}_i$.
  Before $p_j$ sends this reply on line~\ref{line:crash-three-delays-reply-send}, $|\texttt{props}_j[i]|=n-f$ (line~\ref{line:crash-three-delays-props-full}); by \Cref{claim:crash-three-delays-unique-messages}, the pairs in~$\texttt{props}_j[i]$ carry distinct indices, so $p_j$ received \Prop{} messages for this invocation from a set~$P_j$ of $n-f$ process indices.
  Since $|A|,|P_j|\geq n-f$ and $n>2f$, we have $|A\cap P_j|>0$.

  Fix some $k\in A\cap P_j$. As $k\in A$, we have $\xx_k\geq h$ when $p_i$ invoked~\nextOpParens{}.
  Process~$p_k$ sent its \Prop{} message on line~\ref{line:crash-three-delays-prop-send} upon receiving $p_i$'s \Req{} message from line~\ref{line:crash-three-delays-req-send}, hence after the invocation began, so \Cref{claim:crash-three-delays-clock-monotone} implies that its proposal value~$w_k$ satisfies $w_k\geq h$.
  As $k\in P_j$, we have $\langle k,w_k\rangle \in \texttt{props}_j[i]$, so $v_{\max}\geq w_k\geq h$ when $p_j$ executes line~\ref{line:crash-three-delays-clock-update}.
  After that line, we have $\xx_j\geq v_{\max}+1>h$, and $p_j$ sends this value of~$\xx_j$ as~$v_j$ on line~\ref{line:crash-three-delays-reply-send}, so $v_j>h$.

  Since $j\in R$ was arbitrary, line~\ref{line:crash-three-delays-return-compute} yields $t=\min\{v_j : j\in R\}>h$.
\end{proof}

\begin{claim}[Ordering for base timestamps]\label{claim:crash-three-delays:ordering}
  Suppose two \nextOpParens{} operations compute base timestamps~$t$ and~$t'$ on line~\ref{line:crash-three-delays-return-compute}, respectively.
  If the first returns before the second is invoked, then $t<t'$.
\end{claim}
\begin{proof}
  \proofqed{claim:crash-three-delays:ordering}
  By \Cref{claim:crash-three-delays-return-support}, when the first operation returns, at least $n-f$ processes~$p_j$ satisfy $\xx_j\geq t$.
  By \Cref{claim:crash-three-delays-clock-monotone}, these inequalities still hold when the second operation is invoked.
  Applying \Cref{claim:crash-three-delays-later-return} with $h=t$ therefore shows that the second operation has base timestamp $t'>t$.
\end{proof}

\begin{claim}[Return values]\label{claim:crash-three-delays:return-values}
  \Cref{alg:crash-three-delays} returns positive integers from \nextOpParens{} and satisfies ordering~(\ref{cond:timestamp-ordering}) and uniqueness~(\ref{cond:timestamp-uniqueness}).
\end{claim}
\begin{proof}
  \proofqed{claim:crash-three-delays:return-values}
  Local clocks~$\xx_i$ start at~$0$ and never decrease (\Cref{claim:crash-three-delays-clock-monotone}), so by line~\ref{line:crash-three-delays-clock-update}, every value sent in a \Reply{} message is a positive integer.
  Hence, every base timestamp~$t$ is a positive integer, and so is $n\cdot t+i$.

  For~(\ref{cond:timestamp-ordering}), suppose a complete operation at~$p_i$ with base timestamp~$t$ precedes one at~$p_j$ with base timestamp~$t'$.
  By \Cref{claim:crash-three-delays:ordering}, $t<t'$, so $n\cdot t+i<n\cdot(t+1)\leq n\cdot t'+j$.

  For~(\ref{cond:timestamp-uniqueness}), timestamps returned by different processes differ modulo~$n$, and those returned by the same process are distinct by~(\ref{cond:timestamp-ordering}) since its operations are sequential.
\end{proof}

\begin{claim}[Liveness]\label{claim:crash-three-delays:liveness}
  In \Cref{alg:crash-three-delays}, every invocation by a correct process eventually returns.
\end{claim}
\begin{proof}
  \proofqed{claim:crash-three-delays:liveness}
  Let $p_i$ be a correct process that invokes~\nextOpParens{}, and let $q$ be the value of~$\texttt{seq}_i$ during this invocation.
  Line~\ref{line:crash-three-delays-req-send} sends $\langle \Req~q\rangle$ to every process, at least $n-f$ of which are correct.
  Each correct recipient eventually receives this fresh request, passes the check on line~\ref{line:crash-three-delays-fresh-request}, and sends $\langle \Prop~\langle i,q\rangle,-\rangle$ to every process on line~\ref{line:crash-three-delays-prop-send}.
  Since $p_i$ sends no request with a higher sequence number before this invocation returns, each correct process~$p_j$ eventually receives these \Prop{} messages from at least $n-f$ distinct processes while $\texttt{propSeq}_j[i]=q$.
  Thus line~\ref{line:crash-three-delays-props-add} eventually makes $|\texttt{props}_j[i]|=n-f$, causing $p_j$ to pass the check on line~\ref{line:crash-three-delays-props-full} and send a \Reply{} to~$p_i$ on line~\ref{line:crash-three-delays-reply-send}, which $p_i$ eventually receives.
  Hence, line~\ref{line:crash-three-delays-record-reply} eventually makes $|\texttt{replies}_i|=n-f$, causing $p_i$ to pass the check on line~\ref{line:crash-three-delays-replies-full} and return on line~\ref{line:crash-three-delays-return}.
\end{proof}

\begin{claim}[Latency]\label{claim:crash-three-delays:latency}
  The worst-case message delay of \Cref{alg:crash-three-delays} is three.
\end{claim}
\begin{proof}[Proof sketch]
  \proofqed{claim:crash-three-delays:latency}
  An invocation follows one \Req{} delay, one \Prop{} delay, and one \Reply{} delay, as illustrated below:
  \[
    \begin{tikzpicture}[
      baseline=(current bounding box.south),
      event/.style={circle, fill, inner sep=2pt},
      msg/.style={-{Stealth[length=2mm]}, thin},
      every node/.style={font=\small}
    ]
      \node[event, label={[align=right, xshift=-1em, yshift=-2pt]below:{invoke\\\nextOpParens{}}}] (invoke) at (0,0) {};
      \node[event] (req1) at (2.4,0.7) {};
      \node[event] (req2) at (2.4,0) {};
      \node[event] (req3) at (2.4,-0.7) {};
      \node[event] (prop1) at (5.0,0.7) {};
      \node[event] (prop2) at (5.0,0) {};
      \node[event] (prop3) at (5.0,-0.7) {};
      \node[event, label={[align=left, xshift=1em, yshift=-2pt]below:{return}}] (return) at (7.4,0) {};

      \node[fill=white, inner sep=1pt] at (1.2,1.1) {\Req};
      \node[fill=white, inner sep=1pt] at (3.7,1.1) {\Prop};
      \node[fill=white, inner sep=1pt] at (6.2,1.1) {\Reply};

      \foreach \a in {req1,req2,req3} {
        \draw[msg] (invoke) -- (\a);
        \foreach \b in {prop1,prop2,prop3} {
          \draw[msg] (\a) -- (\b);
        }
      }
      \foreach \b in {prop1,prop2,prop3} {
        \draw[msg] (\b) -- (return);
      }
    \end{tikzpicture}\qedhere
  \]
\end{proof}

By \Cref{claim:crash-three-delays:return-values,claim:crash-three-delays:liveness}, \Cref{alg:crash-three-delays} satisfies the conditions of \Cref{thm:timestamp-characterization} and hence implements~\ts{}.
By \Cref{claim:crash-three-delays:latency}, its worst-case message delay is three.
\end{proof}

\subsection{Latency Lower Bounds}\label{sec:crash:lower-bounds}

We now turn to lower bounds on the worst-case message delay of implementing~\ts{} under crash failures.

\subsubsection{Auxiliary Lemmas}

\begin{definition}[Loquacious algorithm]\label{def:loquacious}
  An algorithm is \emph{loquacious} if every step sends the same message to every process in the system (including the sender itself).
  We say that such a step \emph{broadcasts} that message.
\end{definition}

The lower-bound proofs will use two auxiliary lemmas.
The first provides a transformation from arbitrary implementations of~\ts{} to loquacious ones.

\begin{restatable}{lemma}{loquaciousTransformation}\label{thm:loquacious-transformation}
  In a system with crash failures, let~$\alg$ be an implementation of~\ts{}.
  Then there exists a loquacious implementation~$\alg'$ of~\ts{} such that $\latency{\alg'}\leq\latency{\alg}$.%
\end{restatable}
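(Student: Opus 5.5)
We construct $\alg'$ by a standard simulation in which each process runs $\alg_p$ internally and broadcasts everything. In each step of $\alg'_p$, process~$p$ does three things. First, it collects from every received broadcast message the components addressed to~$p$. Second, it feeds these (together with any invocation) into a simulated step of~$\alg_p$. Third, it broadcasts a single message containing the tuple of all outgoing messages that simulated step produced, tagged by destination and by a per-sender sequence number. A recipient discards components not addressed to it, and a step that sends nothing broadcasts an empty tuple. The response of $\alg'_p$ in a step is exactly the response of the simulated~$\alg_p$. Because the self-copy of a broadcast contains no component addressed to~$p$ unless $\alg_p$ itself sends to~$p$, the simulated step receives exactly the messages $\alg$ would.

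The core is a mapping lemma. Let $R'=(S',T)$ be a run of~$\alg'$. Replace each step of~$S'$ by the corresponding simulated step of~$\alg$, keeping the same times. Each $\alg$-message $m$ is carried in the broadcast sent in the step where $m$ was sent. We declare $m$ received in the step where the recipient receives that broadcast. This yields a pair $R=(S,T)$, and we check that $R$ is a run of~$\alg$. Each message is received at most once and only after being sent, because broadcasts are. Processes taking infinitely many steps are the same in both, so at least $n-f$ processes are correct and at most $f$ are faulty. Every message to a process taking infinitely many steps is eventually received, since the carrying broadcast is. The timing constraints of \Cref{def:run} transfer verbatim, since times are unchanged and each $\alg$-message inherits its carrier's send and receive times. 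Thus $\latency[R]{m}=\latency[R']{m'}$ whenever $m$ is carried by $m'$.

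Both conditions of \Cref{def:timestamp-implementation} then transfer. The histories of $S$ and $S'$ coincide, and $\alg$ implements~\ts{}, so the history of every schedule of~$\alg'$ is linearizable and satisfies liveness. Return values are positive integers because they are $\alg$'s return values.

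For latency, take an operation $O$ of a correct process in $R'$ during which messages are timely in $R'$. Every $\alg$-message sent during~$O$ to a process in~$L$ is carried by a broadcast sent at the same time that reaches that process. The broadcast is timely, so the $\alg$-message is timely in~$R$. Hence messages are timely during~$O$ in~$R$, and $\latency[R']{O}=\latency[R]{O}\le\latency{\alg}$. Taking the supremum gives $\latency{\alg'}\le\latency{\alg}$.

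The main obstacle is technical care in the mapping. Steps of $\alg'$ that receive only "irrelevant" broadcasts (e.g.\ self-copies or broadcasts carrying nothing for~$p$) must still be legal $\alg$-steps. This holds because $\alg$ permits steps receiving an arbitrary finite subset of pending messages, including none. We must also ensure that the set $L$ and the definition of "sent during~$O$" match exactly between $R$ and~$R'$. Both follow since steps and times are identical.
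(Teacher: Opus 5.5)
Your proposal is correct and follows essentially the same route as the paper: you pack each simulated $\alg$-step's outgoing messages, tagged by destination, into one broadcast, then map each schedule or run of $\alg'$ back to one of $\alg$ with the same steps, history, processes taking infinitely many steps, and step times. From that mapping you transfer safety, liveness, and message-by-message timeliness during $O$, exactly as the paper's correspondence claim and latency claim do. The per-sender sequence numbers you add are unnecessary but harmless.
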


Since this transformation does not increase worst-case message delay, any latency lower bound for loquacious implementations of~\ts{} also holds for arbitrary implementations of~\ts{} in the same system.
We defer the proof of this lemma to \Cref{sec:loquacious-transformation-proofs:crash}.

The second auxiliary lemma concerns completion in two-delay algorithms.
Intuitively, an algorithm with worst-case message delay at most two must return from an operation once its invoker completes a round trip with $n-f$ processes.
The invoker cannot wait for the remaining $f$~processes, which may have crashed, or for another communication step, which would require a third message delay.
The following lemma makes this intuition precise.

\begin{restatable}{lemma}{twoDelayCompletion}\label{thm:two-delay-completion}
  In a system with crash failures, let $\alg$~be a loquacious algorithm with $\latency{\alg} \leq 2$.
  Let $p$~be a process, and let $Q$ be a set of processes such that $|Q|=n-f$ and $p\in Q$.

  Suppose that $\sigma$ is a schedule prefix of~$\alg$ in which the following hold:
  \begin{enumerate}
    \item Process~$p$ invokes an operation and broadcasts a message~$m$ in step~$\sigma[a]$.
    \item For each $q \in Q$, process~$q$ receives message~$m$ and broadcasts a message~$m_q$ in step~$\sigma[j_q]$.\label{item:two-delay-completion-first-wave}
    \item Process~$p$ receives the set of messages $\{m_q : q\in Q\}$ in step~$\sigma[b]$.\label{item:two-delay-completion-second-wave}
    \item Between steps $\sigma[a]$ and $\sigma[b]$, inclusive, the only steps are those specified above.\label{item:two-delay-completion-no-other-steps}
  \end{enumerate}
  Then the operation invoked by~$p$ in step~$\sigma[a]$ returns no later than step~$\sigma[b]$.
\end{restatable}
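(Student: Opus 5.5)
We argue by contradiction: suppose the operation $O$ invoked by~$p$ in step~$\sigma[a]$ has not returned by step~$\sigma[b]$. We extend~$\sigma$ to a run~$R=(S,T)$ in which $p$ is correct, messages are timely during~$O$, and $O$ has latency strictly greater than~$2$. This contradicts $\latency{\alg}\leq 2$, since $\latency{\alg}$ is a supremum over exactly such runs.

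The construction of~$S$ has three parts. Since $|Q|=n-f$, we crash every process outside~$Q$ immediately after~$\sigma$. Because $\alg$ is loquacious and these processes are faulty, they take no further steps, and they are not in the set~$L$ of \Cref{def:message-timeliness}; so messages sent to them impose no timeliness constraint. We then extend $\sigma$ to an infinite schedule in which all processes of~$Q$ take infinitely many steps and every pending message to them is eventually received. Such an extension exists by a standard fair round-robin argument, and it is a schedule of~$\alg$ in which exactly the processes of~$Q$ are correct. Since $p\in Q$, $p$ is correct.

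Next we assign times. Steps of~$\sigma$ before $\sigma[a]$ get times in a range ending before~$0$, respecting order. Step $\sigma[a]$ gets time~$0$. Each $\sigma[j_q]$ gets time~$1$, or slightly less if needed to keep the times of one process strictly increasing. Step $\sigma[b]$ gets time~$2$. By hypothesis~\ref{item:two-delay-completion-no-other-steps}, no other steps occur in $[0,2]$. Messages received at $\sigma[j_q]$ or $\sigma[b]$ that were sent before $\sigma[a]$ come from steps earlier than time~$0$, which is consistent. The only messages sent during $[0,2]$ are $m$ and the $m_q$, and those sent at $\sigma[b]$. Here $m$ reaches each $q\in Q$ within time~$1$. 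Each $m_q$ reaches~$p$ at time~$2$ from time at most~$1$; to be safe, we place all the $j_q$ steps at exactly time~$1$ when they belong to distinct processes. Because of loquacity, each $m_q$ also goes to the other members of~$Q$. Those receipts occur after $\sigma[b]$, so we schedule each receipt within one time unit of its send. We then continue the extension so that every message sent at time $t\le$ (end of~$O$) to a process of~$Q$ is received by time $t+1$. Since $O$ has not returned by time~$2$, its interval extends beyond~$2$. After step~$\sigma[b]$ we put the next step at time~$2+\epsilon$ or later.

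It remains to make $\latency[R]{O}>2$ while keeping messages timely during~$O$. If $O$ never returns in our extension, $\latency[R]{O}=\infty$ already. Otherwise, $O$ returns at some step strictly after $\sigma[b]$, hence at a time greater than~$2$. To obtain a run exceeding the bound, we may also delay all post-$\sigma[b]$ steps slightly, e.g.\ by scheduling them at times $2+\epsilon, 2+2\epsilon,\dots$ with timely deliveries. Since the return occurs strictly after time~$2$, the latency is strictly greater than~$2$. Either way, $\latency{\alg}>2$, a contradiction.

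The main obstacle is the timing bookkeeping. We must show that timeliness can be maintained for all messages sent during~$O$ to correct processes while the schedule still respects $\sigma$ exactly, and that the time assignment is non-decreasing, unbounded, and strictly increasing per process. The delicate case is when some $q=p$: then $\sigma[j_p]$ lies strictly between $\sigma[a]$ and $\sigma[b]$ at $p$, and times $0<1<2$ handle it. We also need pre-$\sigma[a]$ messages still in transit to be deliverable without violating the constraints. Loquacity (\Cref{def:loquacious}) matters here only in that it makes "the only steps" a complete description of who communicated. I would handle this by fixing the extension as a fair schedule and scaling all later gaps so that each delivery happens within one unit.
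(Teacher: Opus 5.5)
Your direct strategy (exhibit a run in which messages are timely during~$O$ but $O$ ends after time~$2$) can be made to work, but the timing you prescribe fails at one concrete point. By loquacity, each first-wave step $\sigma[j_q]$ with $q\in Q\setminus\{p\}$ also sends $m_q$ to $q$ itself. Since $q$ is correct and, by hypothesis~4, its only step in the window is $\sigma[j_q]$, this copy can be received only after $\sigma[b]$, i.e.\ at time at least $T[b]=2$. Because $m$ must reach $q$ within one unit, $T[j_q]\le 1$, and timeliness of the self-copy then forces $T[j_q]=1$ and its receipt at time exactly~$2$.

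Each of your hedges gives these copies latency greater than~$1$: placing some such $\sigma[j_q]$ ``slightly less'' than~$1$, putting the first step after $\sigma[b]$ at $2+\epsilon$ or later, or ``delaying all post-$\sigma[b]$ steps slightly.'' The resulting run is then not among those over which $\latency{\alg}$ is the supremum, and no contradiction follows. Relatedly, you extend $\sigma$ itself. If $\sigma$ continues past $\sigma[b]$ with a step of such a $q$ that does not receive its copy, that copy is late regardless of timing. You must first truncate to $\sigma[0..b]$, which is harmless since the conclusion concerns only that prefix.

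The repair stays within your framework. Truncate, and set $T[a]=0$, $T[j_q]=1$, $T[b]=2$. Immediately after $\sigma[b]$ in the sequence, let each $q'\in Q\setminus\{p\}$ take one step at time exactly~$2$ receiving everything pending to it. This is legal, since $T$ need only be non-decreasing and $q'$'s previous step is at time~$1$. Then proceed in rounds occupying $(2+\frac{k-1}{2},\,2+\frac{k}{2}]$ for $k\ge 1$; in each round every member of~$Q$ takes one step receiving everything pending, while processes outside~$Q$ take no further steps. Every message sent during~$O$ to a member of~$Q$ then has latency at most~$1$. Moreover, $p$'s next step after $\sigma[b]$ lies strictly after time~$2$, so $\latency[R]{O}>2$ (possibly~$\infty$), the desired contradiction.

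If a first-wave step may also receive an earlier~$m_{q}$, the first-wave times must strictly increase and exact placement at~$1$ is impossible. In that case use slack: put the first wave in $(3/4,1)$, put $\sigma[b]$ at~$7/4$ followed at once by those receipts, and take no steps in $(7/4,2]$.

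The paper uses exactly this slack, but it lets the extension after $\sigma[b]$ be arbitrary and obtains the return step from \Cref{thm:liveness-from-finite-delay}. It then appeals to \Cref{thm:penultimately-timely}, whose penultimate-timeliness condition exempts messages sent during $(T[c]-2,T[c]-1]$ to processes other than~$p$. Those are precisely the copies you ran into. Your repaired argument is self-contained, while the paper's packages the bookkeeping in retiming lemmas it reuses elsewhere.
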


We defer the proof of this lemma to \Cref{sec:completion-lemma-proofs:crash}.

\subsubsection{The Bounds}

The first lower bound is intuitively clear: A process invoking~\nextOpParens{} cannot safely return before completing at least one round trip with another process, which requires at least two message delays.

\begin{restatable}{proposition}{crashTwoDelayLowerBound}\label{thm:crash-two-delay-lower-bound}
  In a system with crash failures, any implementation of~\ts{} has worst-case message delay at least two.
\end{restatable}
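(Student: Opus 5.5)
I would argue by contradiction. Suppose $\alg$ implements~\ts{} with $\latency{\alg}\leq 1$; by \Cref{thm:integer-worst-case-delay} the value is $0$ or $1$. By \Cref{thm:loquacious-transformation} I may also assume $\alg$ is loquacious, although the argument below does not really need this.

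The key step is to show that an operation must return without ever receiving a message that was sent during that operation. Consider a schedule prefix in which process~$p$ invokes \nextOpParens{} in step~$\sigma[a]$. In every subsequent step, every other process receives whatever $p$ sent in step~$\sigma[a]$ and takes no other step. I assign times so that $p$'s invocation is at time~$0$, these receipts happen at times just below~$1$, and any messages they send are delayed. Those delayed messages are still timely as long as they arrive by time $T+1$, where $T$ is the send time.

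I will do this carefully by fixing a run in which $p$ takes its next steps at times in $(0,1]$ with no receipt of any message sent after $\sigma[a]$. Concretely, every message sent during the operation has send time in $[0,1]$, so it may be delivered at any time up to send time~$+1$, which is at least~$1$. Also, $p$'s own steps at times in $(0,1]$ may receive nothing new: for a message from another process sent at time $s>0$, receipt after time~$1$ is still within $s+1$. Since $\latency{\alg}\leq1$, the operation must return by time~$1$. Hence it returns after steps of $p$ that receive no message sent by any other process during the operation; the only exception would be messages sent at exactly time~$0$, and other processes do not send at time~$0$ under my timing.

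So $p$'s return value depends only on $p$'s state at invocation, which is determined by $p$'s prior steps. Now build the contradiction with two processes $p\neq p'$. First, $p$ invokes \nextOpParens{} from the initial configuration and returns value~$v$ as above, while all its messages stay in transit. Then $p'$ invokes \nextOpParens{} and, by the same argument applied with $p'$'s state still initial (it has received nothing), it returns without receiving anything. Its return value is therefore the value $v'$ that $p'$ would return if it went first, running solo with $p$ silent. Symmetrically, the schedule where $p'$ goes first and then $p$ gives $p$ returning~$v$. The ordering condition (\ref{cond:timestamp-ordering}) of \Cref{thm:timestamp-characterization} forces $v<v'$ in the first schedule and $v'<v$ in the second, which is a contradiction. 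Both schedules extend to full schedules by delivering all pending messages afterwards, and they remain valid runs.

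The main obstacle is the timing bookkeeping. I must show that messages from other processes, possibly sent at times arbitrarily close to~$0$, can be legitimately held until after $p$ returns while still being timely. I must also make sure $p$'s own self-delivered messages (in the loquacious version) do not matter; these are local information anyway. I would handle this by scheduling every other process to take no steps at all during $[0,1]$, except that this must be compatible with them being correct. That is fine because correctness only requires infinitely many steps eventually, and \Cref{def:message-timeliness} constrains only messages sent during~$O$.
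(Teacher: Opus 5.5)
Your overall structure matches the paper's: first show that $p$'s operation returns by steps of~$p$ that receive nothing from other processes, then concatenate two such solo prefixes in both orders and apply condition~(\ref{cond:timestamp-ordering}). The gap is in that first step, which is exactly where the paper says the technical care lies.

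The bound $\latency{\alg}\le 1$ applies only to a complete run in which every message sent during~$\interval{O}$ to a process taking infinitely many steps is timely. You do not know that $\interval{O}$ ends by time~$1$; that is what you are trying to prove. Your sentence ``every message sent during the operation has send time in~$[0,1]$'' presupposes this conclusion, so the step is circular. If $O$ has not returned by time~$1$, you must exhibit a continuation in which every message sent during~$O$ is delivered within one unit. This includes what $p$ sends in $(0,1]$, what the other processes send just below time~$1$, and everything sent afterwards until $O$ returns. You never construct such a continuation, and an arbitrary one, such as the one liveness supplies, need not have this property.

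Your remedy in the last paragraph fails outright. $p$'s invocation step sends at time~$0$, during~$O$, to processes that take infinitely many steps. If those processes take no steps in~$[0,1]$, these copies are untimely.

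The paper handles this via \Cref{thm:one-delay-completion}:
\begin{enumerate}
\item It fixes the two-step prefix in which $p$ invokes and then receives its own broadcast~$m$, and extends it arbitrarily.
\item It obtains a return step~$c$ from \Cref{thm:liveness-from-finite-delay} and times the run so that $\latency{O}=3/2$.
\item It invokes \Cref{thm:penultimately-timely}. Here penultimate timeliness demands only that $m$ be timely, since loquacity makes $m$ the only message sent to~$p$ at time~$0$. So no control over the continuation is needed.
\end{enumerate}

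Your direct approach can instead be repaired with an explicit continuation. After $p$ invokes at time~$0$, run lockstep rounds: in each round every process, $p$ first, takes one step receiving all pending messages, and round~$r\ge 1$ occupies distinct times in $[0.9r,\,0.9r+0.05]$. Every message then has latency below~$1$ and all processes are correct, so liveness gives a return. The bound $\latency{\alg}\le1$ then forces that return to occur no later than $p$'s round-$1$ step, which receives only $p$'s own invocation messages. That would be a legitimate and more elementary alternative to the paper's retiming lemmas, and it does not even need loquacity. As written, however, your proposal does not supply it.
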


Making this intuition precise requires some technical care; we defer the proof to \Cref{sec:crash-lower-bound-proofs}.

We now show that implementing~\ts{} requires at least \emph{three} message delays when two or more processes may crash.
Our proof uses a novel \emph{cyclic chain argument}: a chain of indistinguishable schedules that closes on itself, forcing the returned timestamps to increase strictly around a cycle.

\begin{theorem}\label{thm:crash-lower-bound-fgeq2}
  In a system with crash failures, if $f\geq 2$, then any implementation of~\ts{} has worst-case message delay at least three.
\end{theorem}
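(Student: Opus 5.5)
By \Cref{thm:loquacious-transformation}, it suffices to derive a contradiction from a loquacious implementation~$\alg$ of~\ts{} with $\latency{\alg}\leq 2$. The tool for forcing operations to return is \Cref{thm:two-delay-completion}. If an invoker~$p$ broadcasts its request, every member of some $Q\ni p$ with $|Q|=n-f$ receives it and broadcasts, and $p$ then receives those $n-f$ messages, then $p$'s operation has returned. Such a round trip is fully determined by the local states at its start, so its return value is fixed by what the participants have seen. Because $f\geq 2$, a quorum of size $n-f$ containing~$p$ can omit two processes. This is the extra freedom unavailable when $f=1$ (where \Cref{alg:crash-two-delays} works).

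The plan is a cyclic chain of schedule prefixes $\sigma_0,\sigma_1,\dots,\sigma_m=\sigma_0$ built from short "round-trip" operations by a few designated processes, say $a$, $b$, $c$, with the remaining processes acting as responders. Each $\sigma_k$ contains two round-trip operations that the completion lemma forces to return. Consecutive prefixes differ only in the order in which a single responder~$r$ receives two request messages. In one order, $r$'s first-wave broadcast reflects one request before the other; in the other order it is reversed. I will choose each quorum to omit the two processes whose steps differ between $\sigma_k$ and $\sigma_{k+1}$. Then at least one of the two completed operations has identical steps in both prefixes, and so returns the same value in both. In each $\sigma_k$ I will arrange that one operation completes strictly before the next operation, by some other process, is invoked. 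Ordering (\ref{cond:timestamp-ordering}) then gives a strict inequality between their return values. Chaining these equalities and strict inequalities around the cycle yields $v_0<v_1<\dots<v_m=v_0$, the desired contradiction.

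Concretely, I would start from a prefix in which $a$'s operation returns and afterwards $b$'s operation returns, so $v_a<v_b$. I then walk to a prefix in which $b$'s operation returns before $a$'s. At each step I swap one reception order at a single responder, and I exploit that each quorum avoids two processes so that the "other" operation is unaffected. The cycle closes by symmetry, rotating the roles of $a$, $b$, $c$ and routing through a third operation if needed, so that the inequality accumulates in a single direction.

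The main obstacle is the precise bookkeeping of the cycle. The swaps must keep every used prefix a valid schedule prefix with the restricted step structure required by item~\ref{item:two-delay-completion-no-other-steps} of \Cref{thm:two-delay-completion}. Each swap must preserve some completed operation's view exactly. Most importantly, the accumulated relations must be strict and all in one direction rather than cancelling. I expect this to require a careful choice of which two processes each quorum excludes, exactly where $f\geq 2$ (and $n>2f$, from \Cref{thm:crash-resilience}) is used. I would first try a three-invoker cycle with quorums $Q_a$ omitting $\{b,c\}$ and $Q_b$ omitting $\{c,a\}$ (and symmetrically for~$c$), and check that each link has a shared, unaffected operation.
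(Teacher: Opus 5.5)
Your template is the right one: round-trip prefixes forced to return by \Cref{thm:two-delay-completion}, a strict inequality inside each prefix from~(\ref{cond:timestamp-ordering}), equalities across consecutive prefixes, and quorums that use $f\geq 2$ to omit two processes. The gap is that the link mechanism you propose cannot close the cycle. If $\sigma_k$ and $\sigma_{k+1}$ differ only in which request a single responder~$r$ receives at each of two of its steps, then every step occupies the same position in both prefixes, and in particular so does every invocation step. Hence an operation with identical steps in both prefixes is invoked at the same position in both. Each strict link in your chain comes from an operation~$E$ that returns before an operation~$L$ is invoked, so $E$'s invocation precedes $L$'s. Each equality link keeps the invocation position fixed. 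Going around the cycle, invocation positions would strictly increase, which is absurd. So you can never walk from a prefix in which $a$'s operation precedes $b$'s to one in which $b$'s precedes $a$'s, and routing through a third operation does not help. Nor can the cycle close ``by symmetry'': $\alg$ may assign a different automaton to each process, so every rotated prefix must be built explicitly and linked by a genuine indistinguishability argument.

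The missing idea is a link in which the shared operation~$X$ of some process~$p_i$ is the \emph{later} of the two complete operations in one prefix and the \emph{earlier} one in the next. This necessarily changes the relative order of invocations. Since $n>2f$, the quorum of~$X$ meets the quorum of the operation preceding it, so some members of $X$'s quorum have already received the preceding request before~$X$'s. In the next prefix, exactly these processes must receive that request first, now from an incomplete background invocation. Meanwhile, the request of the operation that will follow~$X$ must not reach $X$'s quorum before $X$'s request. Making these background deliveries consistent at every link simultaneously is the bookkeeping you defer, and it is the whole content of the proof.

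The paper does this as follows. All $n$ processes invoke, the helper sets are $H_i=\{p_{i+2},\dots,p_{i-f}\}$, and in~$\sigma^i$ the background request of~$p_j$ is delivered exactly to $\receiverSet{j}{i}=H_j\cap\cdots\cap H_{i-1}$. Then each $p_k\in H_i$ receives exactly $m_{k+f},\dots,m_{i-1}$ before~$m_i$ in both $\sigma^i$ and~$\sigma^{i+1}$, and $p_{i+1}$'s request in~$\sigma^i$ reaches no one. The hypothesis $f\geq 2$ is used so that $H_i$, of size $n-f-1$, can avoid $p_{i-1}$, $p_i$, and~$p_{i+1}$.

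Your three-invoker attempt already fails for $n=5$, $f=2$, with one operation per process. The quorums must be $\{a,d,e\}$, $\{b,d,e\}$, and $\{c,d,e\}$. The links at~$b$ and at~$c$ then force $d$ and~$e$ to receive $m_a$ before~$m_c$ in the prefix where $a$ should invoke only after $c$ returns.
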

\definecolor{ctdproc0}{HTML}{0072B2}
\definecolor{ctdproc1}{HTML}{E69F00}
\definecolor{ctdproc2}{HTML}{D55E00}
\definecolor{ctdproc3}{HTML}{009E73}
\definecolor{ctdproc4}{HTML}{CC79A7}
\definecolor{ctdproc5}{HTML}{56B4E9}

\begin{figure}
  \centering
  \input{figures/crash-three-delay/schedule-prefixes.tikz}
  \caption{\textbf{Proof of \Cref{thm:crash-lower-bound-fgeq2}.}
  In a system with crash failures where $f\geq 2$, suppose toward a contradiction that some loquacious implementation~$\alg$ of~\ts{} has $\latency{\alg}\leq2$.
  For $n=6$ and $f=2$, the figure shows $\sigma^0$, $\sigma^1$, and~$\sigma^5$.
  Each dot marks a step and is colored by the invocation it serves, while each arrow shows a message received within the prefix.
  A horizontal segment tracks a \nextOpParens{} operation: \opinvocationmark{} marks its invocation step, and \opreturnmark{} labeled~$r$ indicates that it has returned~$r$ by that step.
  Without a return marker, the figure makes no claim about whether the operation returns within the prefix.
  Condition~(\ref{cond:timestamp-ordering}) gives $a^i<b^i$.
  Furthermore, $\sigma^i$ and $\sigma^{i+1}$ are indistinguishable to~$p_i$ through the return of its operation, so $b^i=a^{i+1}$.
  Together, these relations yield the contradiction $a^0<a^1<\cdots<a^5<a^0$.
  }
  \label{fig:crash-three-delay-schedules}
\end{figure}

\begin{proof}
  \proofqed{thm:crash-lower-bound-fgeq2}
  Consider a system with $n$~processes, of which at most $f\geq2$ may crash.
  Suppose, toward a contradiction, that there is an implementation of~\ts{} whose worst-case message delay is less than three.
  By \Cref{thm:loquacious-transformation}, there is a loquacious implementation~$\alg$ of~\ts{} with $\latency{\alg}<3$.
  Since finite worst-case message delays are integers (\Cref{thm:integer-worst-case-delay}), $\latency{\alg}\leq2$.
  By \Cref{thm:timestamp-characterization}, $\alg$ satisfies ordering~(\ref{cond:timestamp-ordering}).
  Finally, by \Cref{thm:crash-resilience}, $n>2f$.

  We name the processes $p_0,p_1,\ldots,p_{n-1}$, with process indices taken from the cyclic group~$\mathbb{Z}_n$.
  So arithmetic on process indices is modulo~$n$.
  For $i, j\in\mathbb{Z}_n$, we write $i,i+1,\ldots,j$ for the inclusive cyclic range obtained by starting at~$i$ and repeatedly adding one until $j$~is first reached (in particular, when $i=j$, this range is simply~$i$).

  \paragraph{Roadmap.}
  For each $i\in\mathbb{Z}_n$, we will construct a schedule prefix~$\sigma^i$ of~$\alg$ as follows (\Cref{fig:crash-three-delay-schedules} illustrates the construction for $n=6$, $f=2$):%
  \begin{itemize}
    \item For $j=i+1,i+2,\ldots,i-2$,
      process~$p_j$ invokes~\nextOpParens{}; the message broadcast by its invocation step is then received by the processes in some \emph{receiver set}~$\receiverSet{j}{i}$.
    \item Process~$p_{i-1}$ invokes~\nextOpParens{}, completes a round trip of communication with itself and some set~$H_{i-1}$ of \emph{helper} processes, and returns value~$a^i$.
    \item Process~$p_i$ invokes~\nextOpParens{}, completes a round trip with itself and the helper processes in~$H_i$, and returns value~$b^i$.
  \end{itemize}

  Each~$\sigma^i$ can be extended to a schedule of~$\alg$ in which $p_{i-1}$ returns before $p_i$ invokes, so condition~(\ref{cond:timestamp-ordering}) implies
  \[
    a^i<b^i \qquad \text{for each $i\in\mathbb{Z}_n$}.
  \]

  The key to the argument will be to choose the sets $\receiverSet{j}{i}$ and~$H_i$ such that $\sigma^i$ and~$\sigma^{i+1}$ are indistinguishable to process~$p_i$ for each $i\in\mathbb{Z}_n$.
  Consequently, $p_i$ returns the same value from its operation in~$\sigma^i$ and~$\sigma^{i+1}$, and so
  \[
    a^{i+1} = b^i \qquad \text{for each $i\in\mathbb{Z}_n$}.
  \]
  Thus $a^i<a^{i+1}$ for every $i\in\mathbb{Z}_n$.
  Chaining these inequalities around the cycle yields $a^0<a^1<\cdots<a^{n-1}<a^0$, a contradiction.

  We now give the full proof following this roadmap, starting by defining the sets~$H_i$ and~$\receiverSet{j}{i}$.

  \paragraph{Helper and receiver sets.}
  For each $i\in\mathbb{Z}_n$, define the helper set
  \[
    H_i \Deq \{p_{i+2},p_{i+3},\ldots,p_{i-f}\}.
  \]
  By construction, $|H_i|=n-f-1$.
  Moreover, since $n>2f$ and $f\geq2$,
  \begin{equation}\label{eq:crash-lb-h-excludes}
    p_{i-1},p_i,p_{i+1}\notin H_i.
  \end{equation}

  For $i\in\mathbb{Z}_n$ and $j\in\{i+1,i+2,\ldots,i-2\}$, define the receiver set
  \begin{equation}\label{eq:lb-receiver-def}
    \receiverSet{j}{i}
    \Deq
    H_j\cap H_{j+1}\cap\cdots\cap H_{i-1}.
  \end{equation}
  For each fixed $i\in\mathbb{Z}_n$, these sets are nested as their indices advance toward~$i-1$ in cyclic order:
  \begin{equation}\label{eq:lb-receiver-nested}
    \receiverSet{i+1}{i}
    \subseteq \receiverSet{i+2}{i}
    \subseteq \cdots
    \subseteq \receiverSet{i-2}{i}
    \subseteq H_{i-1}.
  \end{equation}
  Unwinding the definitions, for every $p_k\in H_{i-1}$ and
  $x\in\{i+1,i+2,\ldots,i-2\}$,
  \begin{equation}\label{eq:lb-receiver-inv}
    p_k\in\receiverSet{x}{i}
    \iff
    x\in\{k+f,k+f+1,\ldots,i-1\}.
  \end{equation}

  \paragraph{Constructing the schedule prefixes.}
  Fix $i\in\mathbb{Z}_n$.
  We construct~$\sigma^i$ in three phases, listing the steps taken in each phase.
  For each step, we specify the process that takes it, the operation it invokes, if any, and the messages it receives;
  the algorithm~$\alg$ then determines the message the step broadcasts, whether the step returns from an operation, and, if so, the return value.

  Any message whose receipt we do not specify is not received during~$\sigma^i$.
  Since $\alg$ is loquacious, every step broadcasts a message, but we name only those messages relevant to the proof.
  If we do not state whether a step returns from an operation, then it is irrelevant to the proof whether it does.

  The three phases of~$\sigma^i$ are as follows.
  \begin{enumerate}[ref=\arabic*, label={\textit{Phase \arabic*:}}, leftmargin=*]
    \item\label{item:lb-phase-s1}
      \emph{background invocations.} For $j=i+1,i+2,\ldots,i-2$, in that cyclic order:
      \begin{enumerate}
        \item\label{item:lb-phase-s1-invoke} process~$p_j$ takes a step that receives no messages and invokes~\nextOpParens{}, broadcasting a message~$m_j^i$; and then
        \item\label{item:lb-phase-s1-receive} the processes in~$\receiverSet{j}{i}$ (in any order) each take a step that receives~$m_j^i$.
      \end{enumerate}
    \item\label{item:lb-phase-s2}
      \emph{a complete operation by~$p_{i-1}$.}
      \begin{enumerate}
        \item\label{item:lb-phase-s2-invoke} process~$p_{i-1}$ takes a step that receives no messages and invokes~\nextOpParens{}, broadcasting a message~$m_{i-1}^i$;
        \item\label{item:lb-phase-s2-respond} each process $p_k\in H_{i-1}\cup\{p_{i-1}\}$ (in any order) takes a step that receives~$m_{i-1}^i$, broadcasting a message~$\widehat m_{k,i-1}^i$; and then
        \item\label{item:lb-phase-s2-aggregate} process~$p_{i-1}$ takes a step that receives the set of messages $\{ \widehat m_{k,i-1}^i : p_k\in H_{i-1}\cup\{p_{i-1}\} \}$ broadcast above.
      \end{enumerate}
      By \Cref{thm:two-delay-completion} (with $p=p_{i-1}$ and $Q=H_{i-1}\cup\{p_{i-1}\}$), the operation invoked by~$p_{i-1}$ in step~\ref{item:lb-phase-s2-invoke} has returned by this point; denote its return value by~$a^i$.
    \item\label{item:lb-phase-s3}
      \emph{a complete operation by~$p_i$,} repeating phase~\ref{item:lb-phase-s2} with $p_i$ in place of~$p_{i-1}$ and $H_i$ in place of~$H_{i-1}$:
      \begin{enumerate}
        \item\label{item:lb-phase-s3-invoke} process~$p_i$ takes a step that receives no messages and invokes~\nextOpParens{}, broadcasting a message~$m_i^i$;
        \item\label{item:lb-phase-s3-respond} each process $p_k\in H_i\cup\{p_i\}$ (in any order) takes a step that receives~$m_i^i$, broadcasting a message~$\widehat m_{k,i}^i$; and then
        \item\label{item:lb-phase-s3-aggregate} process~$p_i$ takes a step that receives the set of messages $\{ \widehat m_{k,i}^i : p_k\in H_i\cup\{p_i\} \}$ broadcast above.
      \end{enumerate}
      By \Cref{thm:two-delay-completion} (with $p=p_i$ and $Q=H_i\cup\{p_i\}$), the operation invoked by~$p_i$ in step~\ref{item:lb-phase-s3-invoke} has returned by this point; denote its return value by~$b^i$.
  \end{enumerate}
  By construction, each process invokes only one operation, receives only messages broadcast in an earlier step, and receives each message at most once.
  Thus, $\sigma^i$ is a schedule prefix of~$\alg$.
  We call the steps specified in \ref{item:lb-phase-s1-invoke}, \ref{item:lb-phase-s2-invoke}, and~\ref{item:lb-phase-s3-invoke} \emph{invocation steps}, and all other steps of~$\sigma^i$ \emph{receive steps}.

  \paragraph{Return-value order within each prefix.}
  As previewed in the roadmap, we first use condition~(\ref{cond:timestamp-ordering}) to order the return values of the two complete operations in each prefix.
  \begin{claim}\label{claim:crash-lb-a-lt-b}
    For every $i\in\mathbb{Z}_n$, $a^i<b^i$.
  \end{claim}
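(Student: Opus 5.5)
The plan is to extend the finite prefix~$\sigma^i$ to a full schedule~$S^i$ of~$\alg$ in which the operation of~$p_{i-1}$ (returning~$a^i$) precedes the operation of~$p_i$ (returning~$b^i$), and then to invoke ordering~(\ref{cond:timestamp-ordering}), which $\alg$ satisfies by \Cref{thm:timestamp-characterization}.

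First, I will check the precedence within~$\sigma^i$ itself. By \Cref{thm:two-delay-completion}, applied as in phase~\ref{item:lb-phase-s2} with $p=p_{i-1}$ and $Q=H_{i-1}\cup\{p_{i-1}\}$, the operation of~$p_{i-1}$ returns no later than step~\ref{item:lb-phase-s2-aggregate}. The lemma's hypotheses hold because $|Q|=n-f$ and, between the invocation and aggregate steps, the only steps are those of~\ref{item:lb-phase-s2-respond}. The invocation of~$p_i$ occurs in step~\ref{item:lb-phase-s3-invoke}, which comes strictly later. So in the history of~$\sigma^i$, the complete operation of~$p_{i-1}$ precedes the operation of~$p_i$. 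By \Cref{thm:two-delay-completion} applied to phase~\ref{item:lb-phase-s3}, that operation of~$p_i$ is complete by the end of~$\sigma^i$ with value~$b^i$.

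Second, I will extend~$\sigma^i$ to an infinite schedule. I plan to let every process take infinitely many further steps in round-robin order, invoking no new operations, and to deliver every undelivered message in the first subsequent step of its recipient. This gives a valid schedule of~$\alg$ in which all processes are correct and every message is eventually received. Precedence is determined by the prefix, and return values already produced in~$\sigma^i$ are unchanged, so the history~$H$ of this schedule contains complete operations returning~$a^i$ and~$b^i$ with the former preceding the latter. Ordering then gives $a^i<b^i$.

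The only mildly delicate point is that extension. The loquacious steps in the suffix must follow~$\alg$'s automata deterministically, which is automatic, and every message sent in the infinite suffix must also be delivered. I handle this by delivering, at each step of a process, all messages pending to it. With that in place, the schedule conditions of \Cref{sec:model} hold and the claim follows.
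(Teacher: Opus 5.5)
Your proposal is correct and takes the same approach as the paper. You extend $\sigma^i$ to a schedule, note that $p_{i-1}$'s operation has returned (by \Cref{thm:two-delay-completion}) before $p_i$ invokes, and apply ordering~(\ref{cond:timestamp-ordering}); your explicit round-robin extension just spells out the step that the paper leaves implicit.
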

  \begin{proof}
    \proofqed{claim:crash-lb-a-lt-b}
    Fix $i\in\mathbb{Z}_n$.
    Extend~$\sigma^i$ to a schedule of~$\alg$.
    Both operations are complete and $p_{i-1}$'s operation returns before $p_i$'s operation is invoked.
    The claim follows from condition~(\ref{cond:timestamp-ordering}).
  \end{proof}

  \paragraph{Return-value equality across consecutive prefixes.}
  Next, we establish several auxiliary claims about the steps taken by processes in the schedule prefixes.
  We then use these claims to show that two complete operations by the same process in consecutive prefixes return the same value.

  \begin{claim}\label{claim:crash-lb-initial-invocations}
    For every $i,j\in\mathbb{Z}_n$, process~$p_j$'s invocation step is its first step in~$\sigma^i$.
    Consequently,
    \[
      m_j^i=m_j^{i'}
      \qquad\text{for all $i,i',j\in\mathbb{Z}_n$}.
    \]
  \end{claim}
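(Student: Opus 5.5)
The proof is a direct inspection of the three-phase construction of~$\sigma^i$. We show that every step of~$\sigma^i$ taken by $p_j$ before its invocation step would have to be a receive step, and that no such receive step exists. Fix $i,j\in\mathbb{Z}_n$. By construction, $p_j$ takes exactly one invocation step in~$\sigma^i$: in phase~\ref{item:lb-phase-s1} if $j\in\{i+1,\ldots,i-2\}$, in phase~\ref{item:lb-phase-s2} if $j=i-1$, and in phase~\ref{item:lb-phase-s3} if $j=i$. All other steps of~$p_j$ are receive steps, each receiving a message broadcast by some earlier invocation or response step. It therefore suffices to check, phase by phase, that $p_j$ takes no receive step before its own invocation step.

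First let $j\in\{i+1,\ldots,i-2\}$. Among the steps preceding $p_j$'s invocation, the only receive steps are those of phase~\ref{item:lb-phase-s1-receive} for earlier indices $x\in\{i+1,\ldots,j-1\}$. These are taken by the processes in~$\receiverSet{x}{i}$. By~\eqref{eq:lb-receiver-inv}, $p_j\in\receiverSet{x}{i}$ would require $p_j\in H_{i-1}$ and $x\in\{j+f,\ldots,i-1\}$. Since $x$ lies strictly between $i$ and~$j$ cyclically (in the range $i+1,\ldots,j-1$), it cannot lie in the range $j+f,\ldots,i-1$: the two cyclic ranges are disjoint because $f\geq 1$ and both are contained in the cyclic segment from $i+1$ to $i-1$. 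The one step needing care is this cyclic-range disjointness; I would verify it by writing $x=i+s$ and $j=i+t$ with $1\le s<t\le n-2$, so that membership of $x$ in the second range forces $s\geq t+f$, a contradiction.

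Next let $j=i-1$. Before phase~\ref{item:lb-phase-s2} the receive steps are those of phase~\ref{item:lb-phase-s1}, taken by processes in $\receiverSet{x}{i}\subseteq H_{i-1}$ by~\eqref{eq:lb-receiver-nested}, and $p_{i-1}\notin H_{i-1}$ by~\eqref{eq:crash-lb-h-excludes}. Finally let $j=i$. Before phase~\ref{item:lb-phase-s3}, the receive steps are those in the sets $\receiverSet{x}{i}\subseteq H_{i-1}$, together with those of phase~\ref{item:lb-phase-s2}, which are taken by $H_{i-1}\cup\{p_{i-1}\}$. Since $p_i\notin H_{i-1}$ by~\eqref{eq:crash-lb-h-excludes} applied with index $i-1$ (which gives $p_i=p_{(i-1)+1}\notin H_{i-1}$), $p_i$ takes no such step. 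In every case the invocation step is $p_j$'s first step.

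For the consequence, note that $p_j$'s first step starts from its initial state, receives no messages, and invokes~\nextOpParens{}. Because $\alg_{p_j}$ is deterministic, the message broadcast in this step is the same in every~$\sigma^i$, so $m_j^i=m_j^{i'}$. The only subtlety is notational: in each prefix the invocation message of $p_j$ is named $m_j^i$ in whichever phase contains it, and all three phases use this same naming convention, so the equality holds uniformly for every $j$.
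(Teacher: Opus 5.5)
Your proof is correct and follows the paper's argument. The paper shows from the construction that $p_j$ is never among the processes taking receive steps before its invocation, using $p_{i-1},p_i\notin H_{i-1}$ for the last two cases, and then gets the equality of the broadcast messages from determinism at the initial state. The one difference is the case $j\in\{i+1,\ldots,i-2\}$: there the paper reads $\receiverSet{x}{i}\subseteq H_j$ and $p_j\notin H_j$ straight off the intersection definition~\eqref{eq:lb-receiver-def}, since $j$ lies in the range $x,\ldots,i-1$. That avoids your detour through~\eqref{eq:lb-receiver-inv} and the offset computation, which implicitly needs $t+f\le n-1$; your hypothesis $p_j\in H_{i-1}$ does supply this.
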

  \begin{proof}
    \proofqed{claim:crash-lb-initial-invocations}
    Fix $i,j\in\mathbb{Z}_n$.
    In~$\sigma^i$, the invocation steps occur in the cyclic order
    \[
      p_{i+1},\,p_{i+2},\,\ldots,\,p_{i-2},\,p_{i-1},\,p_i.
    \]

    Suppose first that $j\neq i$.
    Before $p_j$~invokes, every step in~$\sigma^i$ is either an invocation step by a process~$p_x$ preceding~$p_j$ in the order above, or a receive step taken by a process in~$\receiverSet{x}{i}$ for such an~$x$.
    Since $p_x$ precedes~$p_j$ in the invocation order, \eqref{eq:lb-receiver-def} gives $\receiverSet{x}{i}\subseteq H_j$.
    Since $p_j\notin H_j$ by~\eqref{eq:crash-lb-h-excludes}, none of these steps is taken by~$p_j$.

    Now suppose that $j=i$.
    Every invocation step preceding $p_i$'s invocation is taken by some~$p_x$ with $x\neq i$.
    Every preceding receive step is taken either by a process in $\receiverSet{x}{i}\subseteq H_{i-1}$ during phase~\ref{item:lb-phase-s1}, where the inclusion follows from~\eqref{eq:lb-receiver-nested}, or by a process in $H_{i-1}\cup\{p_{i-1}\}$ during phase~\ref{item:lb-phase-s2}.
    Since $p_i\notin H_{i-1}$ by~\eqref{eq:crash-lb-h-excludes} and $p_i\neq p_{i-1}$, none of these steps is taken by~$p_i$.

    Thus, $p_j$'s invocation step is taken from its initial state in every prefix.
    These steps invoke the same operation~\nextOpParens{} and receive no messages, so they must broadcast the same message.
    Therefore, $m_j^i=m_j^{i'}$ for all $i,i',j\in\mathbb{Z}_n$.
  \end{proof}

  For each $j\in\mathbb{Z}_n$, denote by~$m_j$ the message~$m_j^i$, which is the same for every $i\in\mathbb{Z}_n$.

  \begin{claim}\label{claim:crash-lb-helper-history-current}
    For every $i\in\mathbb{Z}_n$ and $p_k\in H_i$, in~$\sigma^i$ the steps taken by~$p_k$ through its receipt of~$m_i$ consist of $p_k$'s invocation step followed, in order, by one receive step for each message in the sequence
    \begin{equation}\label{eq:crash-lb-helper-message-sequence}
      m_{k+f},\,m_{k+f+1},\,\ldots,\,m_i.
    \end{equation}
  \end{claim}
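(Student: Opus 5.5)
We prove the claim by reading off, phase by phase, every step that $p_k$ takes in~$\sigma^i$ up to and including its receipt of~$m_i$.

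\emph{Setup and the invocation step.} Fix $i\in\mathbb{Z}_n$ and $p_k\in H_i$, so $k\in\{i+2,i+3,\ldots,i-f\}$. Then $k\neq i-1,i$, which means $p_k$ invokes during phase~\ref{item:lb-phase-s1}. By \Cref{claim:crash-lb-initial-invocations}, its invocation step is its first step in~$\sigma^i$. Every later step of~$p_k$ is a receive step, because each process invokes exactly once. Receive steps occur only in \ref{item:lb-phase-s1-receive}, \ref{item:lb-phase-s2-respond}, \ref{item:lb-phase-s2-aggregate}, \ref{item:lb-phase-s3-respond} and~\ref{item:lb-phase-s3-aggregate}. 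The two aggregate steps are taken by $p_{i-1}$ and~$p_i$, and $p_k$ is neither of these. Step~\ref{item:lb-phase-s3-respond} is where $p_k\in H_i$ receives~$m_i$, so $p_k$ takes no step after it that we need to consider. It remains to decide which of the phase~\ref{item:lb-phase-s1} and phase~\ref{item:lb-phase-s2} receive steps belong to~$p_k$. We split on whether $p_k\in H_{i-1}=\{p_{i+1},\ldots,p_{i-1-f}\}$.

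\emph{Case $k=i-f$.} Here $p_k\notin H_{i-1}$. By~\eqref{eq:lb-receiver-nested}, every $\receiverSet{x}{i}$ is contained in~$H_{i-1}$, so $p_k$ takes no step in~\ref{item:lb-phase-s1-receive} or~\ref{item:lb-phase-s2-respond}. Its only step after invoking is therefore the receipt of~$m_i$. Since $k+f=i$, the sequence~\eqref{eq:crash-lb-helper-message-sequence} is just~$m_i$, as required.

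\emph{Case $k\in\{i+2,\ldots,i-1-f\}$.} Here $p_k\in H_{i-1}$, so~\eqref{eq:lb-receiver-inv} applies. In phase~\ref{item:lb-phase-s1}, $p_k$ receives~$m_x$ exactly for $x\in\{k+f,\ldots,i-1\}\cap\{i+1,\ldots,i-2\}=\{k+f,\ldots,i-2\}$; this range is empty if $k+f=i-1$. These receipts occur in the cyclic order of~$x$, because phase~\ref{item:lb-phase-s1} processes $x$ in that order. In phase~\ref{item:lb-phase-s2}, $p_k$ receives~$m_{i-1}$ in step~\ref{item:lb-phase-s2-respond}, and then in phase~\ref{item:lb-phase-s3} it receives~$m_i$. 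Together these give exactly the sequence~\eqref{eq:crash-lb-helper-message-sequence}, in order.

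\emph{Expected obstacle: cyclic-order bookkeeping.} The delicate part is checking the interval arithmetic in the cyclic group. Concretely, we must verify that $k+f$ lies strictly after~$k$ in the phase-1 order $i+1,\ldots,i-2$, or equals~$i-1$. This holds because $k$ lies in $\{i+2,\ldots,i-1-f\}$ and $f\ge 1$. It guarantees that all of $p_k$'s phase-1 receive steps follow its invocation, so the invocation step is indeed first and is followed by the receipts in the stated order. The other point to handle is the boundary index $k=i-f$, which lies outside~$H_{i-1}$ and is why the case split is needed.
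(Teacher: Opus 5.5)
Your proposal is correct and follows essentially the paper's proof: the same split on whether $p_k\in H_{i-1}$ (equivalently, $k\neq i-f$), with \eqref{eq:lb-receiver-nested} ruling out any earlier receipts in the boundary case $k=i-f$, and \eqref{eq:lb-receiver-inv} together with phase~2 giving $m_{k+f},\ldots,m_{i-1}$ in the other case, followed by $m_i$ in phase~3. The only blemish is notational: under the paper's cyclic-range convention, $\{k+f,\ldots,i-2\}$ with $k+f=i-1$ would denote all of $\mathbb{Z}_n$ rather than the empty set, but you say explicitly that you mean the empty range, so nothing breaks.
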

  \begin{proof}
    \proofqed{claim:crash-lb-helper-history-current}
    Fix $i\in\mathbb{Z}_n$ and $p_k\in H_i$, and consider the prefix~$\sigma^i$.
    By~\eqref{eq:crash-lb-h-excludes}, $p_k\notin\{p_{i-1},p_i\}$, so $p_k$~invokes during phase~\ref{item:lb-phase-s1}.
    We distinguish two cases according to whether $p_k\in H_{i-1}$:
    \begin{itemize}
      \item If $p_k\in H_{i-1}$, then~\eqref{eq:lb-receiver-inv} and the construction of phase~\ref{item:lb-phase-s2} imply that, after invoking, $p_k$ receives $m_{k+f},m_{k+f+1},\ldots,m_{i-1}$ in that order during phases \ref{item:lb-phase-s1} and~\ref{item:lb-phase-s2}.
        It then receives~$m_i$ during phase~\ref{item:lb-phase-s3}, completing the sequence in~\eqref{eq:crash-lb-helper-message-sequence}.
      \item If $p_k\notin H_{i-1}$, then $p_k\in H_i\setminus H_{i-1}$, which implies $k+f=i$.
        Thus the sequence in~\eqref{eq:crash-lb-helper-message-sequence} consists only of~$m_i$.
        By~\eqref{eq:lb-receiver-nested}, no receiver set used in phase~\ref{item:lb-phase-s1} contains~$p_k$, so it receives no messages in that phase.
        Moreover, $p_k\notin H_{i-1}\cup\{p_{i-1}\}$, so it does not receive~$m_{i-1}$ in phase~\ref{item:lb-phase-s2} either.
        Hence $p_k$ takes no receive step before receiving~$m_i$ in phase~\ref{item:lb-phase-s3}, as claimed.\qedhere
    \end{itemize}
  \end{proof}

  \begin{claim}\label{claim:crash-lb-helper-history-next}
    For every $i\in\mathbb{Z}_n$ and $p_k\in H_i$, in~$\sigma^{i+1}$ the steps taken by~$p_k$ through its receipt of~$m_i$ consist of $p_k$'s invocation step followed, in order, by one receive step for each message in the sequence~\eqref{eq:crash-lb-helper-message-sequence}.
  \end{claim}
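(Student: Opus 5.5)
The plan is to mirror the proof of \Cref{claim:crash-lb-helper-history-current}, but read off the structure of~$\sigma^{i+1}$ instead of~$\sigma^i$. Fix $i\in\mathbb{Z}_n$ and $p_k\in H_i$. In~$\sigma^{i+1}$ the phases are as follows:
\begin{itemize}
  \item phase~\ref{item:lb-phase-s1} runs over $x=i+2,i+3,\ldots,i-1$, with receiver sets $\receiverSet{x}{i+1}=H_x\cap\cdots\cap H_i$;
  \item phase~\ref{item:lb-phase-s2} is the complete operation of~$p_i$ with helpers~$H_i$, during which every process of $H_i\cup\{p_i\}$ receives~$m_i$;
  \item phase~\ref{item:lb-phase-s3} is the operation of~$p_{i+1}$.
\end{itemize}
Since we only care about $p_k$'s steps through its receipt of~$m_i$, phase~\ref{item:lb-phase-s3} is irrelevant. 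We must still check that $p_k$ receives~$m_i$ exactly once and only in phase~\ref{item:lb-phase-s2}.

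First, by~\eqref{eq:crash-lb-h-excludes} we have $p_k\notin\{p_i,p_{i+1}\}$, so $p_k$ invokes in phase~\ref{item:lb-phase-s1} of~$\sigma^{i+1}$. By \Cref{claim:crash-lb-initial-invocations}, this invocation is its first step and broadcasts~$m_k$.

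Next, we determine which phase-\ref{item:lb-phase-s1} messages $p_k$ receives. Since $p_k\in H_i=H_{(i+1)-1}$, we apply~\eqref{eq:lb-receiver-inv} with $i+1$ in place of~$i$. For $x\in\{i+2,\ldots,i-1\}$ it gives
\[
  p_k\in\receiverSet{x}{i+1}\iff x\in\{k+f,\ldots,i\}.
\]
Intersecting with the phase-\ref{item:lb-phase-s1} range, $p_k$ receives exactly $m_x$ for $x\in\{k+f,\ldots,i-1\}$. These receipts occur in the phase-\ref{item:lb-phase-s1} cyclic order, which is the order listed in~\eqref{eq:crash-lb-helper-message-sequence}.

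For the order to be right we need two index facts. Since $p_k\in H_i$, we have $k\in\{i+2,\ldots,i-f\}$, so $k+f\in\{i+2+f,\ldots,i\}$.
\begin{itemize}
  \item If $k+f\neq i$, then $k+f$ lies in the phase-\ref{item:lb-phase-s1} range. Moreover, $k$ precedes $k+f$ in that cyclic order, so all these receipts come after $p_k$'s invocation step.
  \item If $k+f=i$, the range is empty and $p_k$ receives nothing in phase~\ref{item:lb-phase-s1}, matching the sequence consisting of~$m_i$ alone.
\end{itemize}
In both cases, phase~\ref{item:lb-phase-s2} then has $p_k\in H_i$ take exactly one step, which receives~$m_i$ and completes~\eqref{eq:crash-lb-helper-message-sequence}. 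The only other phase-\ref{item:lb-phase-s2} steps are by $p_i$, which is not~$p_k$.

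The main obstacle is the cyclic-index bookkeeping. We must justify that~\eqref{eq:lb-receiver-inv} applies with the shifted index, and that the interval $\{k+f,\ldots,i-1\}$ genuinely sits inside, and in the same order as, the phase-\ref{item:lb-phase-s1} range $\{i+2,\ldots,i-1\}$ with $k$ before it. I would verify this using $n>2f$ and $f\geq 2$, together with the explicit membership $k\in\{i+2,\ldots,i-f\}$. Everything else follows directly from the construction.
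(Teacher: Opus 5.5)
Your proposal is correct and follows the paper's proof. Like the paper, you instantiate \eqref{eq:lb-receiver-inv} at $i+1$ to read off the phase-1 receipts $m_{k+f},\ldots,m_{i-1}$ in $\sigma^{i+1}$, and then append the single phase-2 receipt of $m_i$; your extra cyclic-index checks just make explicit what the paper leaves implicit. One harmless slip: the other phase-2 steps are taken by $p_i$ \emph{and} by the remaining helpers in $H_i$, not only by $p_i$, but none of them is a step of $p_k$, which is all that matters.
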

  \begin{proof}
    \proofqed{claim:crash-lb-helper-history-next}
    Fix $i\in\mathbb{Z}_n$ and $p_k\in H_i$, and consider the prefix~$\sigma^{i+1}$.
    By~\eqref{eq:crash-lb-h-excludes}, $p_k\notin\{p_i,p_{i+1}\}$, so $p_k$ takes its invocation step during phase~\ref{item:lb-phase-s1}.
    In~$\sigma^{i+1}$, \eqref{eq:lb-receiver-inv} implies that during phase~\ref{item:lb-phase-s1} process~$p_k$ receives, in order, exactly those messages in~\eqref{eq:crash-lb-helper-message-sequence} that precede~$m_i$, each in a separate receive step.
    During phase~\ref{item:lb-phase-s2}, it receives~$m_i$ in one more receive step because $p_k\in H_i$.
    These are precisely the steps stated in the claim.
  \end{proof}

  \begin{claim}\label{claim:crash-lb-response-messages}
    For every $i\in\mathbb{Z}_n$ and $p_k\in H_i\cup\{p_i\}$,
    \[
      \widehat m_{k,i}^i=\widehat m_{k,i}^{i+1}.
    \]
  \end{claim}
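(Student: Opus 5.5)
The plan is to compare, process by process, the local history that leads up to the step broadcasting $\widehat m_{k,i}$ in $\sigma^i$ (phase~\ref{item:lb-phase-s3}) and in $\sigma^{i+1}$ (phase~\ref{item:lb-phase-s2}). Since $\alg$ is deterministic, $p_k$'s state after a sequence of steps is fixed by its initial state together with the invocations and messages received in each step. Hence it suffices to show that, in both prefixes, $p_k$'s steps up to and including its receipt of~$m_i$ form the same sequence of (invocation, received-message-set) pairs. The message broadcast in the receipt step is then the same. Note that $\widehat m_{k,i}^i$ is by definition the message broadcast in $p_k$'s step receiving $m_i^i=m_i$ in~$\sigma^i$, and $\widehat m_{k,i}^{i+1}$ is the one broadcast in its step receiving $m_i^{i+1}=m_i$ in~$\sigma^{i+1}$. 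Both equalities hold by \Cref{claim:crash-lb-initial-invocations}. Also, each process receives $m_i$ exactly once in each prefix.

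\emph{Case $p_k\in H_i$.} \Cref{claim:crash-lb-helper-history-current} describes $p_k$'s history in~$\sigma^i$ through its receipt of~$m_i$, and \Cref{claim:crash-lb-helper-history-next} describes it in~$\sigma^{i+1}$. In both prefixes the history is $p_k$'s invocation step followed by single-message receive steps for $m_{k+f},\ldots,m_i$. In each prefix the invocation step receives no messages and invokes \nextOpParens{}. By \Cref{claim:crash-lb-initial-invocations}, the messages $m_x$ are prefix-independent, so the received messages coincide step by step. Determinism then gives the same state before the receipt of~$m_i$ and the same broadcast at it, so $\widehat m_{k,i}^i=\widehat m_{k,i}^{i+1}$.

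\emph{Case $p_k=p_i$.} In $\sigma^i$, \Cref{claim:crash-lb-initial-invocations} shows that $p_i$'s first step is its invocation in phase~\ref{item:lb-phase-s3}, which broadcasts~$m_i$. Its next step is the receipt of $m_i$ in step~\ref{item:lb-phase-s3-respond}. In $\sigma^{i+1}$, process $p_i$ plays the role of $p_{(i+1)-1}$ and invokes in phase~\ref{item:lb-phase-s2}. By \Cref{claim:crash-lb-initial-invocations}, this invocation is its first step. Its next step is the receipt of~$m_i$ in step~\ref{item:lb-phase-s2-respond}: phase~\ref{item:lb-phase-s2} is entered right after its invocation, and $p_i$ belongs to no receiver set used in phase~\ref{item:lb-phase-s1}, since it has already invoked there. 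So in both prefixes the history is the invocation followed by one step receiving exactly~$m_i$, and the broadcasts agree.

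The main obstacle is bookkeeping rather than a conceptual difficulty: I must verify that no extra receive step by $p_k$ is interleaved before its receipt of~$m_i$ in either prefix. This includes checking that each receive step receives exactly one message (not a set containing others), and that in $\sigma^{i+1}$ the phase-\ref{item:lb-phase-s2} response step of $p_i$ is not preceded by other steps of $p_i$. The helper claims already settle this for $p_k\in H_i$. For $p_i$ itself, I would argue directly from the phase structure, using $p_i\notin H_{i-1}$ from~\eqref{eq:crash-lb-h-excludes} and the nesting~\eqref{eq:lb-receiver-nested}.
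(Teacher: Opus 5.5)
Your proposal is correct and follows the paper's proof: the same split into $p_k\in H_i$, handled by \Cref{claim:crash-lb-helper-history-current,claim:crash-lb-helper-history-next} plus determinism, and $p_k=p_i$, handled by \Cref{claim:crash-lb-initial-invocations} together with the fact that $p_i$'s next step in each prefix receives exactly $m_i$. One aside is misjustified but harmless: in $\sigma^{i+1}$, $p_i$ does not invoke in phase~\ref{item:lb-phase-s1}; it avoids the phase-\ref{item:lb-phase-s1} receiver sets because they lie in $H_i$ and $p_i\notin H_i$, and \Cref{claim:crash-lb-initial-invocations} already covers this.
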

  \begin{proof}
    \proofqed{claim:crash-lb-response-messages}
    Fix $i\in\mathbb{Z}_n$ and $p_k\in H_i\cup\{p_i\}$.
    By~\eqref{eq:crash-lb-h-excludes}, $p_i\notin H_i$, so exactly one of the following cases holds:
    \begin{itemize}
      \item If $p_k\in H_i$, then by \Cref{claim:crash-lb-helper-history-current,claim:crash-lb-helper-history-next}, $p_k$~takes the same sequence of steps in~$\sigma^i$ and~$\sigma^{i+1}$ through its receive step for~$m_i$.
        This step broadcasts $\widehat m_{k,i}^i$ in~$\sigma^i$ and~$\widehat m_{k,i}^{i+1}$ in~$\sigma^{i+1}$, so $\widehat m_{k,i}^i=\widehat m_{k,i}^{i+1}$.
      \item If $p_k=p_i$, then \Cref{claim:crash-lb-initial-invocations} implies that its first step in both $\sigma^i$ and~$\sigma^{i+1}$ is its invocation step, which broadcasts~$m_i$ in each.
        By construction, its next step in each prefix is a receive step for~$m_i$.
        These receive steps are therefore identical, so $\widehat m_{i,i}^i=\widehat m_{i,i}^{i+1}$.\qedhere
    \end{itemize}
  \end{proof}

  For each $i\in\mathbb{Z}_n$ and $p_k\in H_i\cup\{p_i\}$, we denote by~$\widehat m_{k,i}$ the message~$\widehat m_{k,i}^i=\widehat m_{k,i}^{i+1}$.

  \begin{claim}\label{claim:crash-lb-b-eq-a}
    For every $i\in\mathbb{Z}_n$, $b^i=a^{i+1}$.
  \end{claim}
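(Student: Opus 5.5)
The plan is to show that process~$p_i$ takes exactly the same sequence of steps in~$\sigma^i$ (phase~\ref{item:lb-phase-s3}) and in~$\sigma^{i+1}$ (phase~\ref{item:lb-phase-s2}), through the step in which it receives the replies to its invocation. Since $\alg$ is deterministic, the operation invoked by~$p_i$ then returns the same value in both prefixes, which gives $b^i=a^{i+1}$.

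First, $p_i$'s first step in each prefix is its invocation step. This holds by \Cref{claim:crash-lb-initial-invocations}. In both prefixes that step receives no messages, invokes \nextOpParens{}, and broadcasts~$m_i$.

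Next, I will check that $p_i$ takes no other steps between its invocation and its aggregation step in either prefix.

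In~$\sigma^i$, $p_i$'s invocation is in phase~\ref{item:lb-phase-s3}. After it, $p_i$ takes the receive step for~$m_i$ in~\ref{item:lb-phase-s3-respond}, since $p_i\in H_i\cup\{p_i\}$. It then takes the aggregation step~\ref{item:lb-phase-s3-aggregate}.

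In~$\sigma^{i+1}$, $p_i$ plays the role of~$p_{(i+1)-1}$ and $H_i$ is the helper set of phase~\ref{item:lb-phase-s2}. Phase~\ref{item:lb-phase-s1} ends before $p_i$ invokes. Within phase~\ref{item:lb-phase-s2}, $p_i$ takes its invocation step, one receive step for~$m_i$, and the aggregation step. Between these it takes nothing else, because the other phase-\ref{item:lb-phase-s2} steps are taken by processes in~$H_i$, and $p_i\notin H_i$ by~\eqref{eq:crash-lb-h-excludes}.

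So in both prefixes the second step of~$p_i$ receives exactly~$m_i$, and the two steps coincide. The third step receives the set $\{\widehat m_{k,i}^{i} : p_k\in H_i\cup\{p_i\}\}$ in~$\sigma^i$ and $\{\widehat m_{k,i}^{i+1} : p_k\in H_i\cup\{p_i\}\}$ in~$\sigma^{i+1}$. By \Cref{claim:crash-lb-response-messages}, both sets equal $\{\widehat m_{k,i} : p_k\in H_i\cup\{p_i\}\}$.

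Induction over these three steps then shows that $p_i$'s state, and hence its return value, is identical in both prefixes after the aggregation step. By \Cref{thm:two-delay-completion}, the operation has returned by that step in each prefix. Since the step sequences agree step by step, it returns in the same step in both prefixes and with the same value, so $b^i=a^{i+1}$.

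The main obstacle is bookkeeping rather than any real difficulty. I must confirm that $p_i$ receives nothing else in~$\sigma^i$ before its invocation, and that it does not appear in any receiver set of phase~\ref{item:lb-phase-s1} of~$\sigma^{i+1}$ after its invocation. The latter is automatic because phase~\ref{item:lb-phase-s1} precedes $p_i$'s invocation, and \Cref{claim:crash-lb-initial-invocations} already covers what happens before it.
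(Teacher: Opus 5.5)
Your proposal is correct and follows essentially the same argument as the paper. You match $p_i$'s three steps (invocation, receipt of~$m_i$, aggregation) across $\sigma^i$ and~$\sigma^{i+1}$ using \Cref{claim:crash-lb-initial-invocations,claim:crash-lb-response-messages}, then use \Cref{thm:two-delay-completion} and determinism to conclude that the operation returns at the same step with the same value; your explicit check that $p_i\notin H_i$ rules out other steps of~$p_i$ during phase~2 of~$\sigma^{i+1}$ spells out a detail the paper covers with ``by construction.''
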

  \begin{proof}
    \proofqed{claim:crash-lb-b-eq-a}
    Fix $i\in\mathbb{Z}_n$, and compare the steps of~$p_i$ in $\sigma^i$ and~$\sigma^{i+1}$, which belong to phase~\ref{item:lb-phase-s3} of the former and phase~\ref{item:lb-phase-s2} of the latter.
    In both prefixes, $p_i$~takes exactly three steps up to and including the last step of its round trip:
    \begin{enumerate}
      \item\label{item:lb-pi-invoke} $p_i$ invokes~\nextOpParens{} and broadcasts~$m_i$;
      \item\label{item:lb-pi-self} $p_i$ receives~$m_i$ and broadcasts~$\widehat m_{i,i}$; and
      \item\label{item:lb-pi-aggregate} $p_i$ receives all messages~$\widehat m_{k,i}$ for $p_k\in H_i\cup\{p_i\}$.
    \end{enumerate}
    By \Cref{claim:crash-lb-initial-invocations}, $p_i$'s invocation is its first step in both prefixes and broadcasts the same message~$m_i$.
    By construction, its next step in each prefix receives~$m_i$, so steps \ref{item:lb-pi-invoke} and~\ref{item:lb-pi-self} agree in the two prefixes.
    By \Cref{claim:crash-lb-response-messages}, $p_i$ receives the same set of messages in step~\ref{item:lb-pi-aggregate}, so this step agrees as well.
    By \Cref{thm:two-delay-completion}, $p_i$'s operation returns by this step in each prefix.
    It therefore returns at the same step and with the same value in both prefixes.
    This value is~$b^i$ in~$\sigma^i$ and~$a^{i+1}$ in~$\sigma^{i+1}$, so $b^i=a^{i+1}$.
  \end{proof}

  \paragraph{Contradiction.}
  By \Cref{claim:crash-lb-a-lt-b,claim:crash-lb-b-eq-a}, $a^i<b^i=a^{i+1}$ for every $i\in\mathbb{Z}_n$.
  Chaining these inequalities around the cycle gives
  \[
    a^0<a^1<\cdots<a^{n-1}<a^0,
  \]
  a contradiction.
  Thus every implementation of~\ts{} has worst-case message delay at least three.
\end{proof}

\subsection{Summary}
\Cref{thm:crash-resilience} and \Cref{thm:crash-upper-bound-fgeq2} together give the following necessary and sufficient condition for implementing~\ts{} under crash failures.

\begin{corollary}
  In a system with crash failures, there exists an implementation of~\ts{} if and only if $n>2f$.
\end{corollary}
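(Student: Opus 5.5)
The corollary has two directions, and each follows from a result already proved in this section.

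The ``only if'' direction is exactly \Cref{thm:crash-resilience}. It says that in a system with crash failures, an implementation of~\ts{} exists only if $n>2f$. That proposition was justified by the standard partitioning argument, so nothing further is needed here.

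For the ``if'' direction, assume $n>2f$. \Cref{thm:crash-upper-bound-fgeq2} places no restriction on~$f$ beyond the standing assumption $1\leq f<n$. It therefore applies directly and shows that \Cref{alg:crash-three-delays} implements~\ts{}, with worst-case message delay three. In particular an implementation exists. One could also appeal to \Cref{thm:crash-upper-bound-f1} when $f=1$, but this is unnecessary, since the three-delay algorithm already covers every $f$ with $n>2f$.

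Combining the two directions gives the equivalence. There is no substantive obstacle. The only point to check is that the hypotheses match: the statement concerns existence of an implementation under crash failures, and both cited results are stated in exactly that setting. The existence direction uses only that \Cref{alg:crash-three-delays} implements~\ts{}. The latency value is not needed.
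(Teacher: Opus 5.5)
Your proof is correct and matches the paper's: it obtains this corollary by combining \Cref{thm:crash-resilience} for the ``only if'' direction with \Cref{thm:crash-upper-bound-fgeq2} for the ``if'' direction. You are also right that the latter theorem requires only $n>2f$, despite its label.
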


We next state the optimal latency for implementations of~\ts{} under crash failures.
When $n>2f$ and $f\geq 2$, the latency upper bound in \Cref{thm:crash-upper-bound-fgeq2} matches the lower bound in \Cref{thm:crash-lower-bound-fgeq2}, yielding the following.

\begin{corollary}
  In a system with crash failures, if $n>2f$ and $f\geq 2$, then the optimal latency of implementing~\ts{} is three message delays.
\end{corollary}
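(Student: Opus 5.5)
This corollary follows by combining two results already in the paper. The plan is to check that their hypotheses match and then apply the definition of optimal latency, i.e., the smallest worst-case message delay over all implementations of~\ts{}.

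\emph{Upper bound.} Fix $n>2f$ and $f\geq 2$. By \Cref{thm:crash-upper-bound-fgeq2}, \Cref{alg:crash-three-delays} implements~\ts{} with worst-case message delay exactly three. So the optimal latency is at most three, and the infimum is attained by an actual implementation.

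\emph{Lower bound.} \Cref{thm:crash-lower-bound-fgeq2} shows that when $f\geq 2$, every implementation~$\alg$ of~\ts{} satisfies $\latency{\alg}\geq 3$. This includes implementations with $\latency{\alg}=\infty$, for which the bound holds trivially. Its proof, via the cyclic chain argument, already invokes \Cref{thm:crash-resilience} internally to obtain $n>2f$. That is consistent with our hypothesis, so no extra condition is needed here.

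Combining the two bounds, the minimum worst-case message delay over all implementations is exactly three. There is no real obstacle. The only point needing care is that the optimum is a genuine minimum, not just an infimum, and the upper-bound construction supplies this directly.
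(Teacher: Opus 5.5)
Your proposal is correct and is the same argument the paper uses: it pairs the three-delay upper bound of \Cref{thm:crash-upper-bound-fgeq2} (\Cref{alg:crash-three-delays}, valid whenever $n>2f$) with the three-delay lower bound of \Cref{thm:crash-lower-bound-fgeq2} (valid whenever $f\geq 2$). Your side remarks are accurate but add nothing essential: the minimum is attained by \Cref{alg:crash-three-delays}, and $n>2f$ is already used inside the lower-bound proof.
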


When $n>2f$ and $f=1$, the latency upper bound in \Cref{thm:crash-upper-bound-f1} matches the lower bound in \Cref{thm:crash-two-delay-lower-bound}, yielding the following.

\begin{corollary}
  In a system with crash failures, if $n>2f$ and $f=1$, then the optimal latency of implementing~\ts{} is two message delays.
\end{corollary}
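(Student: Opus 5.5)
The corollary follows by combining two results already in hand, exactly as the preceding two corollaries were obtained. By the characterization in this setting, the \emph{optimal latency} is the smallest worst-case message delay $\latency{\alg}$ over all implementations~$\alg$ of~\ts{}. It therefore suffices to show two things: some implementation achieves worst-case message delay two, and no implementation achieves less.

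For the upper bound, I will invoke \Cref{thm:crash-upper-bound-f1}. Its hypotheses, $n>2f$ and $f=1$, are exactly the hypotheses of the corollary. It yields \Cref{alg:crash-two-delays}, an implementation of~\ts{} with worst-case message delay two. Hence the optimal latency is at most two.

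For the lower bound, I will invoke \Cref{thm:crash-two-delay-lower-bound}. It states that every implementation of~\ts{} in a system with crash failures has worst-case message delay at least two, with no restriction on $n$ or~$f$. It therefore applies in particular when $f=1$ and $n>2f$.

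One small point needs checking. The set of achievable values $\latency{\alg}$ is nonempty, by the upper bound, and bounded below by two. The value two is actually attained, so the infimum is a minimum equal to two, and no limiting subtlety arises. By \Cref{thm:integer-worst-case-delay} the delays are integers anyway. No step is a genuine obstacle: the substantive work lies entirely in the two cited results, the second of which is deferred to the appendix.
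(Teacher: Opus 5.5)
Your proposal is correct and matches the paper's argument: the paper obtains this corollary by pairing the two-delay upper bound of \Cref{thm:crash-upper-bound-f1} (achieved by \Cref{alg:crash-two-delays}) with the unconditional two-delay lower bound of \Cref{thm:crash-two-delay-lower-bound}. Your remark that the minimum of two is actually attained is harmless and consistent with the paper's definition of optimal latency as the smallest worst-case message delay of any implementation.
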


Thus, the optimal latency of implementing~\ts{} under crash failures is settled in all cases where an implementation exists.

\section{Adapting to Byzantine Failures}\label{sec:byz-model}
We now adapt the system model and latency measure (\Cref{sec:model-latency}) and the implementation requirements for~\ts{} (\Cref{sec:timestamp}) to systems with Byzantine processes.
These adaptations apply in both settings we consider: with digital signatures (\Cref{sec:byz-sig}) and without them (\Cref{sec:byz}).

\subsection{System Model and Latency}\label{sec:byz-model:system-latency}

We continue to assume $n\geq 2$ processes and $1\leq f<n$.
The definitions of processes, the message-passing model, and algorithms are unchanged; only the definition of a schedule changes.
In a system with Byzantine processes, a schedule~$S$ of algorithm~$\alg$ is an infinite sequence of steps in which at least $n-f$ processes take infinitely many steps, all according to~$\alg$; these processes are \emph{correct} in~$S$.
Every other process is \emph{faulty}---it may crash, deviate arbitrarily from~$\alg$, or both.
The message-delivery requirements on a schedule are unchanged.

Under these revised notions of schedules and correct processes, the remaining definitions in \Cref{sec:model} and the latency definitions in \Cref{sec:delay} carry over verbatim.
The two consequences of finite worst-case message delay, \Cref{thm:liveness-from-finite-delay,thm:integer-worst-case-delay}, also carry over.
(Their proofs in \Cref{sec:finite-delay} apply equally under crash and Byzantine failures.)

In the setting with digital signatures (\Cref{sec:byz-sig}), no process can forge the signature of a correct process, and any process can relay a signed message and verify its signature.

\begin{remark}
  The eventual-delivery requirement in \Cref{sec:model} and the definition of message timeliness in \Cref{def:message-timeliness} cover messages sent to any process that ``takes infinitely many steps,'' even if that process is Byzantine.
  This is natural because both constrain network behavior, not process behavior: The network does not deliver a message any less reliably or promptly merely because its recipient deviates from~$\alg$.
  In particular, if a Byzantine process~$q$ takes infinitely many steps, every message~$m$ sent to~$q$ is eventually delivered.
  However, $q$~may handle the delivered message arbitrarily: It may act as if it had not received~$m$ or as if it had received a different message~$m'\neq m$.
\end{remark}

\subsection{Implementations of the Timestamp Object}\label{sec:byz-model:timestamp}

To adapt \Cref{def:timestamp-implementation} (implementation of~\ts{}) to Byzantine failures, we replace crash linearizability with Byzantine linearizability~\cite[Section~4]{DBLP:conf/wdag/CohenK21}.
Intuitively, the history~$H$ of a schedule~$S$ is \emph{Byzantine linearizable} with respect to~\ts{} if there exists a history~$\widehat H$ such that
(1)~the projection of~$\widehat H$ onto the correct processes in~$S$ is the same as that of~$H$, and
(2)~$\widehat H$ is crash linearizable with respect to~\ts{}.
In other words, $H$ can be made crash linearizable with respect to~\ts{} by adding, omitting, or modifying operations of processes that are faulty in~$S$.
Below, we underline the \uline{changes for Byzantine failures}.

\begin{definition}[\ts{} implementation under Byzantine failures]\label{def:byz-timestamp-implementation}
  Let $\alg$ be an algorithm whose \nextOpParens{} operation returns positive integers.
  We say that $\alg$ \emph{implements}~\ts{} in a system \uline{with Byzantine processes} if
  every schedule~$S$ of~$\alg$, with corresponding history~$H$,
  satisfies the following conditions:
  \begin{enumerate}[label=(BT\arabic*),ref=BT\arabic*,leftmargin=*]
    \item\label{cond:byz-timestamp-safety} \emph{Safety:} $H$ is \uline{Byzantine linearizable} with respect to~\ts{}.
    \item\label{cond:byz-timestamp-liveness} \emph{Liveness:} Every operation invoked by a correct process in~$S$ eventually returns.
  \end{enumerate}
\end{definition}

The following analogue of \Cref{thm:timestamp-characterization} characterizes \ts{} implementations under Byzantine failures in terms of ordering and uniqueness for operations invoked by correct processes.

\begin{proposition}[Characterization of \ts{} implementations under Byzantine failures]\label{thm:byz-timestamp-characterization}
  Let $\alg$ be an algorithm whose \nextOpParens{} operation returns positive integers.
  Then $\alg$ implements~\ts{} in a system \uline{with Byzantine processes} if and only if
  every schedule~$S$ of~$\alg$, with corresponding history~$H$,
  satisfies the following conditions:
  \begin{enumerate}[label=(\getrefnumber{cond:byz-timestamp-safety}\alph*),ref=\getrefnumber{cond:byz-timestamp-safety}\alph*,leftmargin=*]
    \item\label{cond:byz-timestamp-ordering} \emph{Ordering:} If a complete operation~$O_1$ precedes a complete operation~$O_2$ in~$H$, \uline{where both are invoked by correct processes in~$S$}, then $O_1$ returns a smaller value than~$O_2$.
    \item\label{cond:byz-timestamp-uniqueness} \emph{Uniqueness:} No two distinct complete operations in~$H$ \uline{invoked by correct processes in~$S$} return the same value.
    \item[(\getrefnumber{cond:byz-timestamp-liveness})] \emph{Liveness:} Every operation invoked by a correct process in~$S$ eventually returns.
  \end{enumerate}
\end{proposition}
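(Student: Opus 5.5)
The plan is to mirror the proof of \Cref{thm:timestamp-characterization} and reuse its two claims. Liveness is literally the same condition in \Cref{def:byz-timestamp-implementation} and in the proposition. So it suffices to show, for the history~$H$ of every schedule~$S$ of~$\alg$ with correct set~$C$, that (\ref{cond:byz-timestamp-safety}) holds if and only if (\ref{cond:byz-timestamp-ordering}) and (\ref{cond:byz-timestamp-uniqueness}) hold. Throughout, write $H|C$ for the projection of~$H$ onto the processes in~$C$.

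For the forward direction, I will assume $H$ is Byzantine linearizable and take a history~$\widehat H$ with $\widehat H|C=H|C$ that is crash linearizable. \Cref{claim:timestamp-characterization-forward} then gives ordering and uniqueness for all complete operations of~$\widehat H$. Next I restrict to operations invoked by correct processes. Because the projections agree, such an operation is complete in~$\widehat H$ exactly when it is complete in~$H$, and it has the same return value. The point needing a sentence of care is precedence: whether $O_1$'s response precedes $O_2$'s invocation should be the same in $H$ and~$\widehat H$ when both are operations of correct processes. I will treat this as part of the Cohen--Keidar notion, where $\widehat H$ agrees with~$H$ on the relative order of correct processes' events. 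Failing that, I will take it as read into the intended meaning of ``the projection \ldots\ is the same,'' namely the projection as a subsequence of the global history. Granting this, ordering and uniqueness for correct processes' operations in~$H$ follow directly.

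For the backward direction, I will assume (\ref{cond:byz-timestamp-ordering}) and (\ref{cond:byz-timestamp-uniqueness}) and construct $\widehat H \deq H|C$, which simply deletes every event of faulty processes. Then $\widehat H|C=H|C$ holds trivially. The complete operations of~$\widehat H$ are exactly the complete operations of~$H$ invoked by correct processes, and they return positive integers. Precedence between two operations of~$\widehat H$ coincides with precedence in~$H$, since deleting events does not change the relative order of the remaining ones. Hence $\widehat H$ satisfies (\ref{cond:timestamp-ordering}) and (\ref{cond:timestamp-uniqueness}), and \Cref{claim:timestamp-characterization-backward} shows that $\widehat H$ is crash linearizable. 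Therefore $H$ is Byzantine linearizable.

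The only real obstacle is the precedence-preservation point in the forward direction, and it is definitional rather than combinatorial: I must fix the definition of Byzantine linearizability precisely enough that $\widehat H$ preserves the real-time order among correct processes' operations. Everything else is a direct reuse of the two claims already proved.
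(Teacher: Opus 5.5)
Your proposal is correct and follows essentially the same route as the paper. The forward direction transfers ordering and uniqueness from the crash-linearizable witness $\widehat H$ through the shared projection onto the correct processes, and the backward direction takes $\widehat H$ to be the projection of $H$ onto the correct processes and applies the crash-failure backward claim. The precedence point you flag is handled in the paper just as you propose: it reads the projection as the subsequence of events, which preserves the relative order, and hence the precedence, of correct processes' events.
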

\begin{proof}
  \proofqed{thm:byz-timestamp-characterization}
  The liveness condition~(\ref{cond:byz-timestamp-liveness}) is the same in \Cref{def:byz-timestamp-implementation} and the proposition.
  It therefore suffices to show that, for every schedule~$S$ of~$\alg$ with corresponding history~$H$, (\ref{cond:byz-timestamp-safety}) $\Longleftrightarrow$ (\ref{cond:byz-timestamp-ordering}) $\wedge$ (\ref{cond:byz-timestamp-uniqueness}).
  Each direction reuses the corresponding claim from the proof for systems with crash failures (\Cref{thm:timestamp-characterization}).

  $(\Longrightarrow)$
  Suppose that $H$ is Byzantine linearizable with respect to~\ts{}~(\ref{cond:byz-timestamp-safety}).
  Then there exists a history~$\widehat H$ that is crash linearizable with respect to~\ts{}~(\ref{cond:timestamp-safety}) and whose projection onto the correct processes in~$S$ is the same as that of~$H$.
  By \Cref{claim:timestamp-characterization-forward}, $\widehat H$ satisfies (\ref{cond:timestamp-ordering}) and~(\ref{cond:timestamp-uniqueness}).
  Since $H$ and~$\widehat H$ have the same projection onto the correct processes in~$S$, complete operations invoked by correct processes have the same return values and precedence in the two histories.
  Hence, $H$ satisfies (\ref{cond:byz-timestamp-ordering}) and~(\ref{cond:byz-timestamp-uniqueness}).

  $(\Longleftarrow)$
  Suppose that $H$ satisfies~(\ref{cond:byz-timestamp-ordering}) and~(\ref{cond:byz-timestamp-uniqueness}).
  Let~$H_C$ denote the projection of~$H$ onto the correct processes in~$S$.
  Then $H_C$ satisfies (\ref{cond:timestamp-ordering}) and~(\ref{cond:timestamp-uniqueness}).
  Moreover, because correct processes follow~$\alg$, every complete operation in~$H_C$ returns a positive integer.
  Hence, by \Cref{claim:timestamp-characterization-backward}, $H_C$ is crash linearizable with respect to~\ts{}~(\ref{cond:timestamp-safety}), which makes $H$ Byzantine linearizable with respect to~\ts{}~(\ref{cond:byz-timestamp-safety}).
\end{proof}

\section{Results for Byzantine Failures with Digital Signatures}\label{sec:byz-sig}
In this section, we consider implementations of \ts{} in systems with Byzantine processes and digital signatures.
Recall that we assume there are $n \geq 2$ processes, of which at most $f \geq 1$ may be Byzantine.

\subsection{Maximum Resilience}
\begin{proposition}\label{thm:byzantine-resilience-with-signatures}
  In a system with Byzantine processes and digital signatures, there exists an implementation of~\ts{} only if $n > 3f$.
\end{proposition}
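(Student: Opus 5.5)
We argue by contradiction with a three-way partition (indistinguishability) argument, the Byzantine analogue of the partitioning argument behind \Cref{thm:crash-resilience}. Suppose $n\le 3f$ and some algorithm~$\alg$ implements~\ts{} with Byzantine processes and digital signatures. First, partition the processes into three nonempty sets $A$, $B$, $C$ with $|A|,|B|,|C|\le f$. This is possible because $n\le 3f$; if $n<3$, the case $n=2$, $f\ge1$ is handled by using two sets of size at most~$f$ and the crash-style argument. Second, construct three schedules. In~$S_1$, the processes in~$C$ are Byzantine but behave correctly except that they ignore everything from~$A$, pretending $A$ is silent. In~$S_2$, the processes in~$A$ are Byzantine and behave symmetrically. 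Each schedule has at most $f$ faulty processes.

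Signatures do not help here, because we never require a Byzantine process to forge anything. The faulty processes only run the correct code while \emph{omitting} or \emph{delaying} interaction. Concretely, in every schedule some set of at most $f$ processes either crashes (takes no steps until late) or is correct but all its messages are delayed. This reduces the task to showing that the ``crash vs.\ slow'' ambiguity now admits two disjoint quorums of size $n-f$ that are \emph{each} separated from the other's operation by one Byzantine-sized set. The key point is that $|A\cup B|\ge n-f$ and $|B\cup C|\ge n-f$ both hold, while they overlap only in~$B$, which may be Byzantine.

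The schedules go as follows.
\begin{itemize}
  \item In~$S_A$, the processes in~$C$ take no steps (treated as crashed). A correct process $p\in A$ invokes \nextOpParens{}. By liveness~(\ref{cond:byz-timestamp-liveness}), its operation returns some value~$r_1$ using only communication within $A\cup B$.
  \item In~$S_C$, the processes in~$A$ take no steps. After the same point, a correct process $q\in C$ invokes \nextOpParens{} and returns some value~$r_2$ using only $B\cup C$.
  \item The combined schedule~$S$ has $B$ Byzantine, and $A$ and $C$ correct with all messages between $A$ and $C$ delayed. Toward~$A$, the processes in~$B$ behave as in~$S_A$. Afterwards, they behave toward~$C$ as they would have if they had never seen $p$'s operation, that is, exactly as in~$S_C$. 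In particular, they replay their initial-state behavior, which requires no forgery because $B$'s own signatures are available to $B$ and $C$'s interactions are fresh.
\end{itemize}
Then $A$ cannot distinguish $S$ from~$S_A$, and $C$ cannot distinguish $S$ from~$S_C$. We choose~$S_C$ with the same run-up as~$S_A$ minus $p$'s operation, which gives $r_2$ equal to what $q$ returns in a run with no prior operation. By a symmetric construction with the invocation order swapped, or simply by making $q$'s run identical to an execution where $q$ is the first operation, we aim to get $r_2\le r_1$. Ordering~(\ref{cond:byz-timestamp-ordering}) applied to the correct $p$ and~$q$ in~$S$ requires $r_1<r_2$, yielding a contradiction.

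The main obstacle is forcing $r_2\le r_1$ rather than merely some value. If $q$'s view in~$S$ equals a view in which no operation preceded it, nothing a priori prevents its returned value from exceeding~$r_1$. The fix I would try first is a circular or chain argument in the spirit of the cyclic chain. First, run the symmetric schedule with $q$ first and $p$ second, so $B$ lies in the other direction, which gives $r_2'<r_1'$ for the reversed roles. Then note that $p$'s view in the reversed scenario coincides with its view as the first operation, so $r_1'=r_1$, and likewise $r_2'=r_2$. These two relations yield $r_1<r_2$ and $r_2<r_1$ simultaneously. Verifying that $B$ can consistently present ``fresh'' states to both sides, using only its own signatures and relayed messages it actually received, is the delicate step.
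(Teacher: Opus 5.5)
Your final argument (once you set aside the $S_1,S_2$ sketch) is correct and is essentially the paper's proof. It uses disjoint $A,B,C$ of size at most~$f$ and two independently built prefixes, in each of which one outer set takes no steps; these are then concatenated, with $B$ Byzantine (replaying its first-prefix steps, rolling back to its initial state, and replaying its second-prefix steps) and messages between $A$ and $C$ delayed until both operations return. The paper replaces your two-order argument by assuming without loss of generality $r_1\geq r_2$ and concatenating once with $p$ first. The step you call delicate is immediate: neither prefix contains any step (hence any signature) of its silent outer set, so $B$ needs only its own signatures and messages it actually receives. Finally, $B$ may be empty, which covers $n=2$ without a separate crash argument.
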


\begin{proof}
  \proofqed{thm:byzantine-resilience-with-signatures}
  Suppose, toward a contradiction, that an implementation~$\alg$ of~\ts{} exists in such a system.
  By \Cref{thm:byz-timestamp-characterization}, $\alg$ satisfies ordering~(\ref{cond:byz-timestamp-ordering}).
  Because $2\leq n\leq3f$, we can choose pairwise disjoint sets $A,B,C$ whose union is the set of processes, such that $|A|,|B|,|C|\leq f$ and $A,C\neq\emptyset$.
  Fix processes $p_a\in A$ and $p_c\in C$.

  Consider a schedule of~$\alg$ in which the processes in~$C$ crash before taking any steps (where $|C|\leq f$), the processes in~$A\cup B$ are correct, and~$p_a$ invokes \nextOpParens{}.
  By liveness, this operation eventually returns some value~$r_a$.
  Let~$\sigma_A$ be the schedule prefix of~$\alg$ ending at its return; thus, $\sigma_A$ contains no steps by processes in~$C$.
  Symmetrically, let~$\sigma_C$ be a schedule prefix of~$\alg$ in which $p_c$ invokes \nextOpParens{} and returns~$r_c$ and processes in~$A$ take no steps (where $|A|\leq f$).
  Without loss of generality, assume $r_a\geq r_c$.

  Now consider the concatenation $\sigma_A\circ\sigma_C$.
  It extends to a schedule of~$\alg$ in which the processes in~$B$ are Byzantine (where $|B|\leq f$) and those in~$A\cup C$ are correct.
  The processes in~$B$ first reproduce their steps in~$\sigma_A$, and then ``roll back'' to their initial states and reproduce their steps in~$\sigma_C$; digital signatures do not prevent a Byzantine process from resetting its own state.
  The messages sent to~$C$ during~$\sigma_A$ and to~$A$ during~$\sigma_C$ are delayed until after both operations return, so that the processes in~$A\cup C$ follow~$\alg$ throughout.
  Since $p_a, p_c\in A\cup C$ are correct and $p_a$'s operation returns before $p_c$'s operation is invoked, condition~(\ref{cond:byz-timestamp-ordering}) implies $r_a<r_c$, a contradiction.
\end{proof}

\subsection{Latency Upper Bounds}\label{sec:byz2delay-sig}
When $n>3f$, \ts{} can be implemented with a worst-case message delay of three.
In fact, \Cref{thm:byz-upper-bound-ngt3f}, proved in \Cref{sec:byz:upper-bound}, gives such an implementation without digital signatures.
Since it also works in systems \emph{with} digital signatures, we immediately obtain the following.

\begin{observation}\label{thm:byz-sig-upper-bound}
  In a system with Byzantine processes and digital signatures, if $n>3f$, then there exists an implementation of~\ts{} whose worst-case message delay is three.
\end{observation}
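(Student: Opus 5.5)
This observation should follow directly from the signature-free result, \Cref{thm:byz-upper-bound-ngt3f} (proved in \Cref{sec:byz:upper-bound}). That theorem gives an algorithm (\Cref{alg:byz-three-delays}) implementing~\ts{} with worst-case message delay three whenever $n>3f$, in a system with Byzantine processes and no digital signatures. The plan is to take that same algorithm unchanged and check that both its correctness and its latency carry over to the setting with signatures.

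The key step is a monotonicity argument about the adversary. Adding digital signatures changes only what faulty processes can do, and it does so by restricting them: they can no longer forge signatures of correct processes. The algorithm never signs or verifies anything, so correct processes behave identically in both settings.

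Consequently, every schedule of the algorithm in the system with signatures is also a schedule of it in the system without signatures. Faulty processes remain arbitrary, and any behavior they exhibit with signatures is also available to them without signatures. Likewise, every run is also a run of the signature-free system.

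I would then check the two requirements of \Cref{def:byz-timestamp-implementation} separately.

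\emph{Safety and liveness.} Byzantine linearizability and liveness are properties of each individual schedule. They hold for every signature-free schedule by \Cref{thm:byz-upper-bound-ngt3f}, so they hold on the smaller set of schedules with signatures.

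\emph{Latency.} The worst-case message delay of \Cref{def:worst-case-message-delay} is a supremum over runs. A supremum over a subset of runs can only decrease, so it is at most three.

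To get equality rather than just an upper bound, I would observe that the failure-free timely runs witnessing delay three in the signature-free analysis contain no faulty behavior at all. Those runs therefore exist verbatim with signatures, so the worst-case message delay is exactly three. Alternatively, the statement only asserts the existence of an implementation whose delay is three, and either route suffices.

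The only point needing care is making precise the claim that the schedule set with signatures is contained in the one without. The model treats signatures as an unforgeability constraint on faulty processes, so this inclusion is essentially definitional, and I do not expect a genuine obstacle.
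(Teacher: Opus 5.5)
Your proposal is correct and matches the paper's argument: the paper obtains this observation by noting that the signature-free \Cref{alg:byz-three-delays} from \Cref{thm:byz-upper-bound-ngt3f} also works when signatures are available. Your extra steps make explicit what the paper leaves implicit: signatures only restrict faulty processes, so per-schedule safety and liveness carry over and the latency supremum can only shrink, while the failure-free runs witnessing three delays still exist.
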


For the special case $f=1$, \ts{} can be implemented with only two message delays using digital signatures.

\begin{theorem}\label{thm:byz-sig-upper-bound-f1}
  In a system with Byzantine processes and digital signatures, if $n>3f$ and $f=1$, then there exists an implementation of~\ts{} (\Cref{alg:byz-two-delays-sig}) whose worst-case message delay is two.
\end{theorem}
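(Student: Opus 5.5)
The plan is to adapt \Cref{alg:crash-two-delays} to $f=1$ Byzantine processes, following the same proof structure used for \Cref{thm:crash-upper-bound-f1}, and to use signatures so that faulty processes cannot pull base timestamps down. I do not have a fully checked design; the support and intersection argument below (second paragraph) is the part I expect to be hardest, and I may be wrong about it.

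\emph{Algorithm.} An invocation by $p_i$ broadcasts a signed request $\signed{\Req~q,\xx_i,C_i}{i}$. Here $C_i$ is a \emph{certificate} for $\xx_i$: a set of signed replies from the operation that produced $p_i$'s current clock value. A correct recipient checks the certificate. It then sets $\xx_j\gets\max(\xx_j,v)+1$, exactly as on line~\ref{line:crash-two-delays-reply-max}, and returns a signed reply $\signed{\Reply~q,\xx_j}{j}$. The reply also carries the signed request it answers, so that any third party can check that the reply value exceeds the certified request value. The invoker waits for $n-2$ valid replies from other processes, which is possible because at most one of the other $n-1$ processes is faulty. It then computes the base timestamp $t$ as the \emph{second-largest} value among the replies together with its own clock, sets $\xx_i\gets\max(\xx_i,t)$, keeps the replies as its new certificate, and returns $n\cdot t+i$. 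Positivity, uniqueness modulo~$n$, liveness, and the two-delay latency then follow verbatim from \Cref{claim:crash-two-delays:return-values,claim:crash-two-delays:liveness,claim:crash-two-delays:latency}. Together with \Cref{thm:byz-timestamp-characterization}, this leaves only ordering~(\ref{cond:byz-timestamp-ordering}) to prove.

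\emph{Ordering.} This requires two analogues of \Cref{claim:crash-two-delays-return-support,claim:crash-two-delays-later-return}.
\begin{itemize}
\item \emph{Support.} When a correct invoker returns $t$, at least two \emph{correct} processes have $\xx\geq t$. Taking the second maximum ensures that at least two of the contributing values are $\geq t$, so at least one of them comes from a correct replier. The invoker itself, being correct, is the other.
\item \emph{Later return.} If two correct processes have $\xx\geq h$ when a correct process invokes, its base timestamp exceeds $h$. The argument has two cases. If the invoker is itself a holder, every valid reply is $\geq v+1>h$ by the signed-request check. If not, the $n-2$ repliers omit only one of the other $n-1$ processes, so at least one correct holder replies with a value $>h$. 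The invoker's own clock is also $\geq$ its certified request value.
\end{itemize}

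\emph{Main obstacle.} The hard step is the second case of the later-return claim. A single correct reply above $h$ does not force the second maximum above $h$, because the Byzantine replier may answer with a small value. This is exactly where signatures must be used. My first attempt would be to require each reply to be justified by the replier's own certificate for its clock, so that a stale or low value can be rejected. The invoker would then treat the largest certified value as a lower bound that is transferable through the certificate. Failing that, I would change the return rule to $t=\max$ of the certified values and strengthen the support claim. That would require showing that a valid certificate for $t$ implies that two correct processes hold values $\geq t$, which needs $n-2\geq 2$ signers, available since $n\geq 4$.
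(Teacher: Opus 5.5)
\textbf{Ordering fails for your algorithm.} The gap is ordering~(\ref{cond:byz-timestamp-ordering}). For the algorithm you actually specify, it is not merely unproved but false. Take $n=4$ with $p_3$ Byzantine.
First, $p_2$ invokes with $\xx_2=0$. Process $p_1$ replies $1$, and $p_3$ replies $1$; this reply is signed, carries the request, and exceeds the certified request value~$0$, so it is valid. The second-largest of $\{1,1,0\}$ is $t=1$, and $p_2$ returns $4\cdot1+2=6$ while its request to~$p_0$ is still in transit.
Then $p_0$ invokes with $\xx_0=0$. Process $p_1$ replies $\max(1,0)+1=2$, and $p_3$ again replies~$1$. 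The second-largest of $\{2,1,0\}$ is again~$1$, so $p_0$ returns $4\cdot1+0=4<6$.
No certificate on scalar clocks can reject $p_3$'s second reply. It is exactly what a correct $p_3$ that had not yet received $p_2$'s request would send. Moreover, $p_3$'s earlier reply went only to~$p_2$, so within two delays nothing that exposes the rollback reaches~$p_0$.
Your support argument has a separate hole. The two contributing values $\geq t$ may be the Byzantine reply and the invoker's own clock, which can grow during the operation while it serves other requests. In that case no correct \emph{replier} need hold~$t$.

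\textbf{The fallbacks do not repair this.} Under the max rule, $p_3$ can answer $p_2$'s request with~$100$. Then $t=100$ is held by no correct process other than~$p_2$, and a later invoker that omits~$p_2$ returns a smaller value.
Requiring $t$ to be certified by $n-2$ signers before returning does not help within two delays. Those signers would have to have already seen~$t$, but a fresh reply value $\max(\xx_j,v)+1$ is known only to its sender, so this costs a third delay.
With a scalar clock, the second-maximum rule gives support but not the later-return property, and the maximum rule gives the reverse. Nothing in your proposal breaks this tension.

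\textbf{The missing idea.} The paper replaces the clock by a frontier recording every process's latest signed label. Each label carries a quorum certificate ($n-1$ acknowledgments) for its origin's previous label. Any such certificate's signer set and any later operation's reply set~$Q_\beta$ share a correct process. Hence a later operation~$\beta$ can lose information relative to an earlier~$\alpha$ only at the Byzantine coordinate, and only by one sequence number (\Cref{thm:byz-two-delays-sig:one-loss}).
Meanwhile $\beta$'s correct invoker gains at least one at its own coordinate. So $N(F)=\sum_k\seq(F[k])$ never decreases between successive operations. Ties are broken by $\min\{r_i,r_o\}$ and $\ind{r_i>r_o}$, using the ranks of the invoker and of the omitted process.
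Without some such bounded-loss mechanism, your proposal does not establish the theorem.
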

\begin{proof}
  \proofqed{thm:byz-sig-upper-bound-f1}
  We prove the theorem in five parts.
  \Cref{part:byz-two-delays-sig:definitions,part:byz-two-delays-sig:algorithm} introduce the necessary definitions and present the algorithm.
  \Cref{part:byz-two-delays-sig:preliminary-properties} establishes preliminary properties, which we use in \Cref{part:byz-two-delays-sig:safety} to prove that \nextOpParens{} returns positive integers and satisfies ordering and uniqueness (\Cref{thm:byz-two-delays-sig:return-values}).
  Finally, \Cref{part:byz-two-delays-sig:liveness-latency} proves liveness (\Cref{thm:byz-two-delays-sig:liveness}) and shows that the worst-case message delay is two (\Cref{thm:byz-two-delays-sig:latency}).

\proofpart{Definitions}\label{part:byz-two-delays-sig:definitions}

We start by defining some concepts used in the algorithm.
We name the processes $p_0,\ldots,p_{n-1}$.
The notation $\signed{m}{i}$ denotes statement~$m$ carrying a digital signature by process~$p_i$.

\begin{definition}
A \emph{label} by process~$p_i$ has the form
\[
  \ell=\signed{\Label~i,k,r}{i},
  \qquad
  k,r\in\mathbb{Z}_{\ge 1}.
\]
\end{definition}
We write
\begin{itemize}
  \item $\origin(\ell) \deq i$ for the \emph{origin} of~$\ell$,
  \item $\seq(\ell) \deq k$ for its \emph{sequence number},
  \item $\rank(\ell) \deq r$ for its \emph{rank}.
\end{itemize}
Successive invocations by a correct process~$p_i$ are identified by labels of origin~$i$ with consecutive sequence numbers.

\begin{definition}
  An \emph{acknowledgment} of label~$\ell$ by process~$p_i$ is $\signed{\ell}{i}$.
\end{definition}

\begin{definition}
  A \emph{quorum certificate} for label~$\ell$ is a set of acknowledgments
  \[
    C = \{\signed{\ell}{i}:i\in Q\}
  \]
  where $Q\subseteq \{0,\ldots,n-1\}$ and $|Q|=n-1$.
\end{definition}

A quorum certificate for~$\ell$ contains acknowledgments from $n-1$ distinct processes.
Since at most one process is Byzantine, it proves that at least $n-2$ correct processes have handled a request carrying label~$\ell$.

\begin{definition}
  A \emph{certified label} is either $\bot$ or a pair
  \[
    \widehat{\ell}=(\ell,C)
  \]
  where $\ell$ is a label and
  \begin{itemize}
    \item if $\seq(\ell)=1$, then $C=\emptyset$;
    \item if $\seq(\ell)>1$, then $C$ is a quorum certificate for some label $\ell^-$ such that $\origin(\ell^-)=\origin(\ell)$ and $\seq(\ell^-)=\seq(\ell)-1$.
  \end{itemize}
  We say that $\widehat{\ell}$ \emph{contains} label~$\ell$.
\end{definition}
For a non-$\bot$ certified label $\widehat{\ell}=(\ell,C)$, define
\begin{align*}
  \labelof(\widehat{\ell})&\deq\ell,
  &\origin(\widehat{\ell})&\deq\origin(\ell),\\
  \seq(\widehat{\ell})&\deq\seq(\ell),
  &\rank(\widehat{\ell})&\deq\rank(\ell).
\end{align*}
Set $\seq(\bot)\deq 0$ and $\rank(\bot)\deq 0$.

\begin{definition}[Frontier]
  Let $F$ be a sequence of $n$~certified labels.
  We call $F$ a \emph{frontier} if the following conditions hold for every $i=0,\ldots,n-1$:
  \begin{itemize}
    \item $\origin(F[i])=i$ whenever $F[i]\ne\bot$;
    \item $\rank(F[i])\le N(F)$, where
      \[
        N(F)\Deq\sum_{k=0}^{n-1} \seq(F[k]).
      \]
  \end{itemize}
  The \emph{empty frontier} is $\bot^n \Deq (\bot,\ldots,\bot)$.
\end{definition}

\begin{definition}[Merge]
For frontiers $F$ and $G$, the \emph{merge of $G$ into $F$}, denoted $F\merge G$, is the sequence of $n$~certified labels defined by
\[
  (F\merge G)[i]
  \Deq
  \begin{cases}
    G[i] & \text{if $\seq(G[i])>\seq(F[i])$},\\
    F[i] & \text{otherwise}.
  \end{cases}
\]
\end{definition}
This definition directly implies
\[
  \seq((F\merge G)[i])=\max\{\seq(F[i]),\seq(G[i])\}
  \qquad\text{for every $i=0,\ldots,n-1$}.
\]
\begin{proposition}\label{thm:byz-two-delays-sig:merge-frontier}
If $F$ and $G$ are frontiers, then $F\merge G$ is a frontier.
\end{proposition}
\begin{proof}
  \proofqed{thm:byz-two-delays-sig:merge-frontier}
  Write $H\deq F\merge G$.
  For each~$i=0,\ldots,n-1$, the certified label $H[i]$ is either $F[i]$ or $G[i]$, hence is $\bot$ or has origin~$i$.
  Furthermore,
  \begin{align*}
    \rank(H[i]) &\le\max\{\rank(F[i]),\rank(G[i])\} \\
    &\le\max\{N(F),N(G)\} & (\text{$F$, $G$ are frontiers}) \\
    &= \max\left\{\sum_{k=0}^{n-1} \seq(F[k]), \sum_{k=0}^{n-1} \seq(G[k])\right\} \\
    &\le \sum_{k=0}^{n-1} \max\{\seq(F[k]),\seq(G[k])\} \\
    &= \sum_{k=0}^{n-1} \seq(H[k]) & (\text{by the definition of $\merge$}) \\
    &= N(H).
  \end{align*}
  Hence, $H$ is a frontier.
\end{proof}

\begin{definition}
For frontiers $F$ and $G$, we write $F\preceq G$ if
\[
  \seq(F[i])\le\seq(G[i])
  \qquad\text{for every $i=0,\ldots,n-1$}.
\]
\end{definition}

\begin{proposition}\label{thm:byz-two-delays-sig:frontier-preorder}
Let $F$ and $G$ be frontiers.
\begin{enumerate}[label=(\alph*),ref=(\alph*)]
  \item\label{fo:byz-two-delays-sig:preorder} The relation~$\preceq$ is a preorder on frontiers.%
  \footnote{The relation~$\preceq$ is not a partial order because it is not antisymmetric.
  For example, two frontiers can differ in the ranks of corresponding labels while having the same sequence number at every coordinate.
  They are distinct, but each is~$\preceq$ the other.}
  \item\label{fo:byz-two-delays-sig:upper} $F\preceq F\merge G$ and $G\preceq F\merge G$.
  \item\label{fo:byz-two-delays-sig:monotone} If $F\preceq G$, then $N(F)\le N(G)$.
\end{enumerate}
\end{proposition}

\begin{definition}[Messages]
  There are two kinds of valid messages:
  \begin{itemize}
    \item A valid \emph{request} from process~$p_i$ has the form $\langle \Req~F\rangle$, where $F$ is a frontier with $F[i]\ne\bot$.
    \item A valid \emph{reply} from process~$p_j$ has the form $\langle \Reply~\signed{\ell}{j},F\rangle$, where $\ell$ is a label and $F$ is a frontier.
  \end{itemize}
\end{definition}

\proofpart{Algorithm and timestamp encoding}\label{part:byz-two-delays-sig:algorithm}

\Cref{alg:byz-two-delays-sig} gives the implementation.
It computes base timestamps using the two functions $\mathcal{T}_i$ and~$\enc$ from lines~\ref{line:byz-two-delays-sig:def-triple} and~\ref{line:byz-two-delays-sig:def-enc}; we explain the idea behind them here.

\begin{algorithm}
\caption{A worst-case two-delay implementation of~$\ts{}$ in a system with Byzantine processes and digital signatures where $n>3f$ and $f=1$.
Each \gtagtext{gray tag} in the right margin records how the values immediately after its line correspond to the notation in \Cref{part:byz-two-delays-sig:preliminary-properties}.}\label{alg:byz-two-delays-sig}
\begin{algorithmic}[1]
  \State $\begin{aligned}[t]
      \mathcal{T}_i(F,o)\Deq{}&\bigl(N(F),\min\{r_i,r_o\},\ind{r_i>r_o}\bigr)\\
      &\text{where $r_i=\rank(F[i])$, $r_o=\rank(F[o])$}
    \end{aligned}$ \label{line:byz-two-delays-sig:def-triple}
    \Comment{See \Cref{def:byz-two-delays-sig:timestamp-functions}.}
  \State $\enc((s,m,\delta))\Deq s^2+2m+\delta$ \label{line:byz-two-delays-sig:def-enc}
    \Comment{See \Cref{def:byz-two-delays-sig:encoding}.}
  \Statex

  \Statex {\bfseries\boldmath Code for process~$p_i$} ($0 \leq i < n$):
  \State $\varfrontier{i}\gets\bot^n$
  \State $\varlabel{i}\gets\bot$
    \Comment{Label for ongoing invocation, $\bot$ while idle.}
  \State $\varacks{i}\gets\emptyset$ \Comment{Acknowledgments for ongoing invocation; latest label's quorum certificate while idle.}
  \Statex

  \Upon{\textbf{invocation} of \nextOpParens{}} \gtag{\varfrontier{i}=L_\gamma}\label{line:byz-two-delays-sig:invoke-upon}
    \State $\mathit{seq}\gets \seq(\varfrontier{i}[i])+1$ \label{line:byz-two-delays-sig:choose-seq}
    \State $\mathit{rank}\gets N(\varfrontier{i})+1$ \label{line:byz-two-delays-sig:choose-rank}
    \State $\ell\gets\signed{\Label~i,\mathit{seq},\mathit{rank}}{i}$ \gtag{\ell=\ell_\gamma}\label{line:byz-two-delays-sig:create-label}
    \State $\varlabel{i}\gets \ell$
    \State $\varfrontier{i}[i]\gets(\ell,\varacks{i})$ \gtag{\varfrontier{i}=F_\gamma^0}\label{line:byz-two-delays-sig:invoke-frontier}
    \State $\varacks{i}\gets\emptyset$
    \State \textbf{send} $\langle \Req~\varfrontier{i} \rangle$ \textbf{to} every process \label{line:byz-two-delays-sig:send-request}
  \EndUpon

  \Statex

  \Upon{\textbf{receive} valid~$\langle \Req~F \rangle$ \textbf{from} $p_j$}
    \State $\varfrontier{i}\gets\varfrontier{i}\merge F$ \label{line:byz-two-delays-sig:respond-merge}
    \State $\ell\gets\labelof(F[j])$
    \State \textbf{send} $\langle \Reply~\signed{\ell}{i},\varfrontier{i} \rangle$ \textbf{to}~$p_j$ \label{line:byz-two-delays-sig:send-reply}
  \EndUpon

  \Statex

  \Upon{\textbf{receive} valid $\langle \Reply~\signed{\ell}{j},F \rangle$ \textbf{from} $p_j$} \label{line:byz-two-delays-sig:receive-reply}
    \If{$\ell=\varlabel{i}\neq\bot$ \textbf{and} $\signed{\ell}{j}\notin\varacks{i}$} \gtag{F=F_{\gamma,j}}\label{line:byz-two-delays-sig:accept-check}
      \State $\varacks{i}\gets\varacks{i}\cup\{\signed{\ell}{j}\}$
      \State $\varfrontier{i}\gets\varfrontier{i}\merge F$ \label{line:byz-two-delays-sig:reply-frontier-merge}
      \If{$|\varacks{i}|=n-1$} \gtag{\varfrontier{i}=F_\gamma}\label{line:byz-two-delays-sig:completion-check}
        \State $Q\gets\{k\in\{0,\ldots,n-1\}:\signed{\ell}{k}\in\varacks{i}\}$ \gtag{Q=Q_\gamma}\label{line:byz-two-delays-sig:define-quorum}
        \State $o\gets$ the unique index in $\{0,\ldots,n-1\}\setminus Q$ \gtag{o=o_\gamma}\label{line:byz-two-delays-sig:choose-omitted}
        \State $\tau\gets\mathcal{T}_i(\varfrontier{i},o)$ \gtag{\tau=\tau_\gamma}\label{line:byz-two-delays-sig:compute-triple}
        \State $t\gets\enc(\tau)$ \Comment{\emph{Base timestamp}.}\quad\gtagtext{$t=t_\gamma$}\label{line:byz-two-delays-sig:compute-timestamp}
        \State $\varlabel{i}\gets\bot$
        \State \Return $n\cdot t+i$ \Comment{Completes the \nextOpParens{} invocation.}\label{line:byz-two-delays-sig:return-timestamp}
      \EndIf
    \EndIf
  \EndUpon
\end{algorithmic}
\end{algorithm}

For each operation, the algorithm computes a base-timestamp triple~$\tau$ of integers by applying~$\mathcal{T}_i$ (line~\ref{line:byz-two-delays-sig:compute-triple}).
The function~$\mathcal{T}_i$ is defined so that the lexicographic order on triples respects the precedence of operations (proved in \Cref{part:byz-two-delays-sig:safety}).
Because \nextOpParens{} must return a single positive integer, the algorithm then maps~$\tau$ to a positive integer~$t$ (line~\ref{line:byz-two-delays-sig:compute-timestamp}) using the encoding~$\enc$, which preserves the lexicographic order (\Cref{thm:byz-two-delays-sig:encoding}).
This integer~$t$ is the operation's \emph{base timestamp}; the invoker~$p_i$ then returns $n\cdot t+i$ (line~\ref{line:byz-two-delays-sig:return-timestamp}), using its index~$i$ to break ties between operations with the same base timestamp.

\begin{definition}[Base-timestamp triple]\label{def:byz-two-delays-sig:timestamp-functions}
For a frontier~$F$ and process indices $i,o\in \{0,\ldots,n-1\}$, the \emph{base-timestamp triple} is the triple of integers defined by
\[
  \mathcal{T}_i(F,o)
  \Deq
  \bigl(N(F),\min\{r_i,r_o\},\ind{r_i>r_o}\bigr)
\]
where $r_i=\rank(F[i])$ and $r_o=\rank(F[o])$, and where $\ind{P}$ denotes the indicator of a condition~$P$, equal to~$1$ if $P$ holds and to~$0$ otherwise.
\end{definition}

We compare these triples lexicographically, writing $\tau\ltlex\tau'$ for the resulting strict total order.

\begin{definition}[Encoding]\label{def:byz-two-delays-sig:encoding}
The \emph{encoding} of a base-timestamp triple~$\tau=(s,m,\delta)$ is the integer
\[
  \enc(\tau) \Deq s^2+2m+\delta.
\]
\end{definition}

Intuitively, a triple's first component~$s$ confines its encoding to the interval $[s^2,(s+1)^2)$.
These intervals are disjoint and increasing, so the first component dominates.
The remaining two components break ties within an interval.

\begin{proposition}\label{thm:byz-two-delays-sig:encoding-block}
Every base-timestamp triple $\tau=\mathcal{T}_i(F,o)$ satisfies
\[
  N(F)^2\le \enc(\tau)<\left(N(F)+1\right)^2.
\]
\end{proposition}
\begin{proof}
  \proofqed{thm:byz-two-delays-sig:encoding-block}
Let $r_i=\rank(F[i])$ and $r_o=\rank(F[o])$, so that $\tau=(s,m,\delta)$ with $s=N(F)$, $m=\min\{r_i,r_o\}$, and $\delta=\ind{r_i>r_o}\in\{0,1\}$.
Because ranks are nonnegative and $F$ is a frontier, $0\leq r_i,r_o\leq s$, and so $0\le m\le s$.

If $r_i\le r_o$, then $2m+\delta=2r_i\le 2s$.
If $r_i>r_o$, then $2m+\delta=2r_o+1<2(r_o+1)\le 2r_i\le 2s$.
Thus $0\le 2m+\delta\le 2s$ in both cases, and therefore
\[
  s^2\le \enc(\tau)=s^2+2m+\delta\le s^2+2s<(s+1)^2,
\]
which proves the claim.
\end{proof}

\begin{lemma}[Encoding order]\label{thm:byz-two-delays-sig:encoding}
For any base-timestamp triples $\tau$ and $\tau'$,
\[
  \tau\ltlex \tau'
  \quad\Longleftrightarrow\quad
  \enc(\tau)<\enc(\tau').
\]
\end{lemma}
\begin{proof}
  \proofqed{thm:byz-two-delays-sig:encoding}
Since $\ltlex $ is a strict total order, it suffices to prove the forward implication.
Suppose $\tau=(s,m,\delta)\ltlex (s',m',\delta')=\tau'$.
If $s<s'$, then $s+1\le s'$ and \Cref{thm:byz-two-delays-sig:encoding-block} gives
\[
  \enc(\tau)<(s+1)^2\le (s')^2\le\enc(\tau').
\]
Otherwise $s=s'$ and $(m,\delta)\ltlex (m',\delta')$, so $2m+\delta<2m'+\delta'$ (as $m,m'$ are integers and $\delta,\delta'\in\{0,1\}$), and adding~$s^2$ gives $\enc(\tau)<\enc(\tau')$.
\end{proof}

\proofpart{Preliminary properties}\label{part:byz-two-delays-sig:preliminary-properties}

The following are some simple invariants used throughout the safety proof.
\begin{proposition}\label{thm:byz-two-delays-sig:local-invariants}
  The following hold at every correct process~$p_i$:
  \begin{enumerate}[label=(\alph*),ref=(\alph*)]
    \item \label{inv:byz-two-delays-sig:frontier} $\varfrontier{i}$ is a frontier and is non-decreasing under~$\preceq$.
    \item \label{inv:byz-two-delays-sig:correct-labels} The sequence numbers and ranks of labels created by~$p_i$ strictly increase with creation order.
    \item \label{inv:byz-two-delays-sig:valid-messages} Messages sent by~$p_i$ on lines~\ref{line:byz-two-delays-sig:send-request} and~\ref{line:byz-two-delays-sig:send-reply} are valid requests and valid replies, respectively.
\end{enumerate}
\end{proposition}

\Crefpart{thm:byz-two-delays-sig:local-invariants}{inv:byz-two-delays-sig:correct-labels} immediately implies the following.

\begin{corollary}\label{thm:byz-two-delays-sig:correct-label-order}
Let $\ell$ and $\ell'$ be labels created by the same correct process.
If $\seq(\ell)\ge\seq(\ell')$, then $\rank(\ell)\ge\rank(\ell')$.
\end{corollary}

Next, we state some properties of a complete operation invoked by a correct process.
Throughout, we denote by $\gamma$ such an operation and $p_i$ its invoker, and use the following notation (which also appears as right-margin tags in \Cref{alg:byz-two-delays-sig}):
\begin{itemize}
  \item $\ell_\gamma$ is the label created for~$\gamma$ on line~\ref{line:byz-two-delays-sig:create-label}.
  \item $L_\gamma$ is the invoker's frontier at the start of~$\gamma$ (line~\ref{line:byz-two-delays-sig:invoke-upon}).
  \item $F_\gamma^0$ is the frontier carried by $\gamma$'s request, namely the value of $\varfrontier{i}$ after line~\ref{line:byz-two-delays-sig:invoke-frontier}.
  \item $Q_\gamma$ is the value of~$Q$ on line~\ref{line:byz-two-delays-sig:define-quorum}.
  \item $o_\gamma$ is the value of~$o$ from line~\ref{line:byz-two-delays-sig:choose-omitted}.  (So $Q_\gamma=\{0,\ldots,n-1\}\setminus\{o_\gamma\}$.)
  \item For $j\in Q_\gamma$, $F_{\gamma,j}$ is the frontier in the unique reply from~$p_j$ that $p_i$ receives during~$\gamma$ and that passes the check on line~\ref{line:byz-two-delays-sig:accept-check}.
  \item $F_\gamma$ is the value of $\varfrontier{i}$ when the check on line~\ref{line:byz-two-delays-sig:completion-check} succeeds.
  \item $\tau_\gamma=\mathcal{T}_i(F_\gamma,o_\gamma)$ is the base-timestamp triple computed on line~\ref{line:byz-two-delays-sig:compute-triple}.
  \item $t_\gamma=\enc(\tau_\gamma)$ is the base timestamp computed on line~\ref{line:byz-two-delays-sig:compute-timestamp}.
\end{itemize}

\begin{proposition}\label{thm:byz-two-delays-sig:reply-persistence}
Let $p_j$ be a correct process such that $j\in Q_\gamma$, and let $F$ be a frontier.
If $F\preceq\varfrontier{j}$ holds at some point before~$\gamma$ is invoked, then $F\preceq F_\gamma$.
\end{proposition}
\begin{proof}
  \proofqed{thm:byz-two-delays-sig:reply-persistence}
Because $p_j$ is correct, $F_{\gamma,j}$ is the value of $\varfrontier{j}$ when $p_j$ replies to $\gamma$'s request on line~\ref{line:byz-two-delays-sig:send-reply}, after $\gamma$ is invoked; frontier monotonicity (\Crefpart{thm:byz-two-delays-sig:local-invariants}{inv:byz-two-delays-sig:frontier}) therefore gives $F\preceq F_{\gamma,j}$.
Since $j\in Q_\gamma$, process~$p_i$ receives $p_j$'s reply before $\gamma$ completes, merging $F_{\gamma,j}$ into $\varfrontier{i}$ on line~\ref{line:byz-two-delays-sig:reply-frontier-merge}, after which $F_{\gamma,j}\preceq\varfrontier{i}$.
Frontier monotonicity preserves this until the check on line~\ref{line:byz-two-delays-sig:completion-check} succeeds, at which point $\varfrontier{i}=F_\gamma$, so $F_{\gamma,j}\preceq F_\gamma$.
By the transitivity of~$\preceq$ (\Crefpart{thm:byz-two-delays-sig:frontier-preorder}{fo:byz-two-delays-sig:preorder}), we have $F\preceq F_\gamma$.
\end{proof}

\begin{corollary}\label{thm:byz-two-delays-sig:coordinate-persistence}
Let $p_j$ be a correct process with $j\in Q_\gamma$.
For any $k\in \{0,\ldots,n-1\}$ and $s\in\mathbb{Z}_{\ge 0}$, suppose that $\seq(\varfrontier{j}[k])\ge s$ at some point before~$\gamma$ is invoked.
Then $\seq(F_\gamma[k])\ge s$.
\end{corollary}

\begin{proposition}\label{thm:byz-two-delays-sig:own-label-persists}
After $p_i$ executes line~\ref{line:byz-two-delays-sig:invoke-frontier} in invocation $\gamma$, $\labelof(\varfrontier{i}[i])=\ell_\gamma$ holds until $\gamma$ returns.
\end{proposition}
\begin{proof}
  \proofqed{thm:byz-two-delays-sig:own-label-persists}
Line~\ref{line:byz-two-delays-sig:invoke-frontier} sets $\varfrontier{i}[i]$ to the certified label $(\ell_\gamma,\varacks{i})$, so $F_\gamma^0[i]\ne\bot$ and $\labelof(F_\gamma^0[i])=\ell_\gamma$ immediately after.
Fix a point between then and $\gamma$'s return, and let $F$ be the value of $\varfrontier{i}$ at that point.
Frontier monotonicity (\Crefpart{thm:byz-two-delays-sig:local-invariants}{inv:byz-two-delays-sig:frontier}) gives $F_\gamma^0\preceq F$, so
\[
  \seq(F[i])\ge\seq(F_\gamma^0[i])=\seq(\ell_\gamma)\ge1.
\]
In particular $F[i]\ne\bot$, and $\origin(F[i])=i$ because $F$ is a frontier.
So $\labelof(F[i])$ carries correct process~$p_i$'s (unforgeable) signature and was therefore created by~$p_i$.
Since labels are created only on line~\ref{line:byz-two-delays-sig:create-label} and $p_i$ has at most one outstanding invocation, $p_i$ creates no label after~$\ell_\gamma$ before $\gamma$ returns.
Every label that $p_i$ created before~$\ell_\gamma$ has a smaller sequence number by \Crefpart{thm:byz-two-delays-sig:local-invariants}{inv:byz-two-delays-sig:correct-labels}, so $\labelof(F[i])=\ell_\gamma$.
\end{proof}

\begin{proposition}\label{thm:byz-two-delays-sig:invoker-coordinate}
The following equalities hold:
\begin{enumerate}[label=(\alph*),ref=(\alph*)]
  \item\label{ic:byz-two-delays-sig:label} $\ell_\gamma=\labelof(F_\gamma^0[i])=\labelof(F_\gamma[i])$.
  \item\label{ic:byz-two-delays-sig:sequence} $\seq(\ell_\gamma)=\seq(F_\gamma^0[i])=\seq(F_\gamma[i])$.
  \item\label{ic:byz-two-delays-sig:rank} $\rank(\ell_\gamma)=\rank(F_\gamma^0[i])=\rank(F_\gamma[i])$.
  \item\label{ic:byz-two-delays-sig:size} $\rank(\ell_\gamma)=N(F_\gamma^0)$.
\end{enumerate}
\end{proposition}
\begin{proof}
  \proofqed{thm:byz-two-delays-sig:invoker-coordinate}
During the invocation of~$\gamma$, line~\ref{line:byz-two-delays-sig:invoke-frontier} sets $\varfrontier{i}[i]$ to a certified label containing~$\ell_\gamma$.
Hence, $\ell_\gamma = \labelof(F_\gamma^0[i])$.
Moreover, $F_\gamma$ is the value of $\varfrontier{i}$ when the check on line~\ref{line:byz-two-delays-sig:completion-check} succeeds, which precedes $\gamma$'s return, so \Cref{thm:byz-two-delays-sig:own-label-persists} gives $\labelof(F_\gamma[i])=\ell_\gamma$.
We have thus proved part~\ref{ic:byz-two-delays-sig:label}.

Applying $\seq$ and $\rank$ to part~\ref{ic:byz-two-delays-sig:label} immediately yields parts~\labelcref{ic:byz-two-delays-sig:sequence,ic:byz-two-delays-sig:rank}.

Finally, lines~\labelcref{line:byz-two-delays-sig:choose-seq,line:byz-two-delays-sig:invoke-frontier} give
\[
  F_\gamma^0[k]=L_\gamma[k] \quad (k\ne i),
  \qquad
  \seq(F_\gamma^0[i])=\seq(L_\gamma[i])+1.
\]
Therefore, by line~\ref{line:byz-two-delays-sig:choose-rank},
$N(F_\gamma^0)=N(L_\gamma)+1=\rank(\ell_\gamma)$,
proving part~\ref{ic:byz-two-delays-sig:size}.
\end{proof}

\proofpart{Safety}\label{part:byz-two-delays-sig:safety}

Fix complete operations $\alpha$ and $\beta$, invoked by correct processes $p_a$ and $p_b$, respectively, such that $\alpha$ returns before $\beta$ is invoked.
We first establish several relations between the two operations' frontiers and ranks, then use them to prove that \Cref{alg:byz-two-delays-sig} satisfies ordering~(\ref{cond:byz-timestamp-ordering}) and uniqueness~(\ref{cond:byz-timestamp-uniqueness}).

\begin{proposition}\label{thm:byz-two-delays-sig:base-persists}
$F_\alpha^0\preceq F_\beta$.
\end{proposition}
\begin{proof}
  \proofqed{thm:byz-two-delays-sig:base-persists}
Since $|Q_\alpha|=|Q_\beta|=n-1$, we have $|Q_\alpha\cap Q_\beta|
  \ge 2(n-1)-n
  =n-2
  \ge2$.
Since at most one process is Byzantine, $Q_\alpha\cap Q_\beta$ contains a correct process~$p_i$.

Because $i\in Q_\alpha$, process~$p_i$ merges $F_\alpha^0$ into $\varfrontier{i}$ on line~\ref{line:byz-two-delays-sig:respond-merge}, before $\alpha$ returns.
Thus $F_\alpha^0\preceq\varfrontier{i}$ at some point before~$\beta$ is invoked.
Since $i\in Q_\beta$, \Cref{thm:byz-two-delays-sig:reply-persistence} gives $F_\alpha^0\preceq F_\beta$.
\end{proof}

\begin{proposition}\label{thm:byz-two-delays-sig:invoker-rank-persistence}
$\seq(F_\beta[a])\ge\seq(F_\alpha[a])$ and $\rank(F_\beta[a])\ge\rank(F_\alpha[a])$.
\end{proposition}
\begin{proof}
  \proofqed{thm:byz-two-delays-sig:invoker-rank-persistence}
By \Cref{thm:byz-two-delays-sig:base-persists,thm:byz-two-delays-sig:invoker-coordinate},
\[
  \seq(F_\beta[a])
  \ge\seq(F_\alpha^0[a])
  =\seq(F_\alpha[a])
  =\seq(\ell_\alpha)
  \ge1,
\]
which is the first inequality.
It also implies that $F_\alpha[a],F_\beta[a]\neq\bot$, which means they contain labels that carry correct process~$p_a$'s (unforgeable) signature, and so were created by~$p_a$.
The second inequality then follows from \Cref{thm:byz-two-delays-sig:correct-label-order}.
\end{proof}

\begin{proposition}\label{thm:byz-two-delays-sig:invoker-included}
If $a\in Q_\beta$, then $F_\alpha\preceq F_\beta$.
\end{proposition}
\begin{proof}
  \proofqed{thm:byz-two-delays-sig:invoker-included}
When the check on line~\ref{line:byz-two-delays-sig:completion-check} succeeds during~$\alpha$, $\varfrontier{a}=F_\alpha$; this point precedes~$\beta$'s invocation.
Since $p_a$ is correct and $a\in Q_\beta$, \Cref{thm:byz-two-delays-sig:reply-persistence} gives $F_\alpha\preceq F_\beta$.
\end{proof}

\begin{lemma}\label{thm:byz-two-delays-sig:one-loss}
For any $j\in \{0,\ldots,n-1\}$, if $\seq(F_\beta[j])<\seq(F_\alpha[j])$, then process~$p_j$ is Byzantine and $\seq(F_\beta[j])\ge \seq(F_\alpha[j])-1$.
\end{lemma}
\begin{proof}
  \proofqed{thm:byz-two-delays-sig:one-loss}
Fix $j\in \{0,\ldots,n-1\}$ and suppose
\begin{equation}\label{eq:byz-two-delays-sig:assumed-decrease}
  \seq(F_\beta[j])<\seq(F_\alpha[j]).
\end{equation}
This implies $\seq(F_\alpha[j])\ge1$; in particular, $F_\alpha[j]\neq\bot$.
We proceed by proving four claims.

\begin{claim}\label{claim:byz-two-delays-sig:one-loss-omitted}
$o_\beta=a$.
\end{claim}
\begin{proof}
  \proofqed{claim:byz-two-delays-sig:one-loss-omitted}
By \eqref{eq:byz-two-delays-sig:assumed-decrease}, $F_\alpha\not\preceq F_\beta$.
The contrapositive of \Cref{thm:byz-two-delays-sig:invoker-included} therefore gives $a\notin Q_\beta$.
By line~\ref{line:byz-two-delays-sig:choose-omitted}, $Q_\beta$ omits exactly~$o_\beta$, so $o_\beta=a$ and $Q_\beta=\{0,\ldots,n-1\}\setminus\{a\}$.
\end{proof}

\begin{claim}\label{claim:byz-two-delays-sig:one-loss-quorum}
$j\in Q_\beta$.
\end{claim}
\begin{proof}
  \proofqed{claim:byz-two-delays-sig:one-loss-quorum}
If $j=a$, then \Cref{thm:byz-two-delays-sig:invoker-rank-persistence} gives $\seq(F_\beta[a])\ge\seq(F_\alpha[a])$, contradicting \eqref{eq:byz-two-delays-sig:assumed-decrease}.
Hence $j\ne a$, and \Cref{claim:byz-two-delays-sig:one-loss-omitted} gives $j\in Q_\beta$.
\end{proof}

\begin{claim}\label{claim:byz-two-delays-sig:one-loss-byzantine}
Process~$p_j$ is Byzantine.
\end{claim}
\begin{proof}
  \proofqed{claim:byz-two-delays-sig:one-loss-byzantine}
Suppose for contradiction that $p_j$ is correct.
Let $\ell=\labelof(F_\alpha[j])$.
Since $\origin(\ell)=j$, the label~$\ell$ carries correct process~$p_j$'s (unforgeable) signature; hence $p_j$ created~$\ell$ on line~\ref{line:byz-two-delays-sig:create-label}.
Line~\ref{line:byz-two-delays-sig:invoke-frontier} then assigns to $\varfrontier{j}[j]$ a certified label containing~$\ell$.
Because $\ell$ appears in~$F_\alpha$, this assignment precedes $\alpha$'s return, and so it precedes $\beta$'s invocation.

Immediately after the assignment,
\[
  \seq(\varfrontier{j}[j])
  =\seq(\ell)
  =\seq(F_\alpha[j]).
\]
Since $j\in Q_\beta$ by \Cref{claim:byz-two-delays-sig:one-loss-quorum}, \Cref{thm:byz-two-delays-sig:coordinate-persistence} gives
\[
  \seq(F_\beta[j])
  \ge\seq(\ell)
  =\seq(F_\alpha[j]),
\]
contradicting \eqref{eq:byz-two-delays-sig:assumed-decrease}.
Hence $p_j$ is Byzantine.
\end{proof}

\begin{claim}\label{claim:byz-two-delays-sig:one-loss-bound}
$\seq(F_\beta[j])\ge \seq(F_\alpha[j])-1$.
\end{claim}
\begin{proof}
  \proofqed{claim:byz-two-delays-sig:one-loss-bound}
The claim is immediate if $\seq(F_\alpha[j])=1$.
Assume henceforth that $\seq(F_\alpha[j])>1$.
The certified label $F_\alpha[j]$ carries a quorum certificate for a label~$\ell^-$ satisfying
\[
  \origin(\ell^-)=j,
  \qquad
  \seq(\ell^-)=\seq(F_\alpha[j])-1.
\]

The certificate contains acknowledgments of~$\ell^-$ (of the form $\signed{\ell^-}{k}$) from $n-1$ processes, at least $n-2$ of which are correct.
Since $Q_\beta$ omits only one process, at least $n-3\ge1$ of these correct processes belong to~$Q_\beta$.
Fix one such process~$p_i$.

The quorum certificate in~$F_\alpha[j]$ contains an acknowledgment carrying correct process~$p_i$'s (unforgeable) signature, so $p_i$ sent that acknowledgment on line~\ref{line:byz-two-delays-sig:send-reply} before $\alpha$ returns.
Before sending the acknowledgment, on line~\ref{line:byz-two-delays-sig:respond-merge} $p_i$ merges into $\varfrontier{i}$ a frontier~$F$ with $\labelof(F[j])=\ell^-$.
Immediately after,
\[
  \seq(\varfrontier{i}[j])
  \ge\seq(F[j])
  =\seq(\ell^-).
\]
This holds at some point before~$\beta$ is invoked.

Since $i\in Q_\beta$, \Cref{thm:byz-two-delays-sig:coordinate-persistence} now gives
\[
  \seq(F_\beta[j])
  \ge\seq(\ell^-)
  =\seq(F_\alpha[j])-1.
  \qedhere
\]
\end{proof}

\Cref{claim:byz-two-delays-sig:one-loss-byzantine,claim:byz-two-delays-sig:one-loss-bound} establish the two conclusions.
\end{proof}

\begin{proposition}\label{thm:byz-two-delays-sig:own-coordinate-advance}
$\seq(\ell_\beta)>\seq(F_\alpha[b])$ and $\rank(\ell_\beta)>\rank(F_\alpha[b])$.
\end{proposition}
\begin{proof}
  \proofqed{thm:byz-two-delays-sig:own-coordinate-advance}
If $F_\alpha[b]=\bot$, the claim is immediate.
Otherwise, $\labelof(F_\alpha[b])$ carries correct process~$p_b$'s (unforgeable) signature, so $p_b$ created it before~$\alpha$ returned, whereas it created $\ell_\beta$ after~$\beta$ was invoked.
Thus, $\ell_\beta$ comes after $\labelof(F_\alpha[b])$ in $p_b$'s label-creation order and therefore has a larger sequence number and rank by \Crefpart{thm:byz-two-delays-sig:local-invariants}{inv:byz-two-delays-sig:correct-labels}.
\end{proof}

\begin{proposition}\label{thm:byz-two-delays-sig:replier-rank}
If $b\in Q_\alpha$, then $\rank(\ell_\beta)>\rank(\ell_\alpha)$.
\end{proposition}
\begin{proof}
  \proofqed{thm:byz-two-delays-sig:replier-rank}
By \Cref{thm:byz-two-delays-sig:invoker-coordinate}, $N(F_\alpha^0)=\rank(\ell_\alpha)$.
Because $b\in Q_\alpha$, process~$p_b$ merges $F_\alpha^0$ into $\varfrontier{b}$ on line~\ref{line:byz-two-delays-sig:respond-merge} upon receiving a request for~$\alpha$---so before $\alpha$ returns.
Since $\alpha$ returns before~$\beta$ is invoked, frontier monotonicity (\Crefpart{thm:byz-two-delays-sig:local-invariants}{inv:byz-two-delays-sig:frontier}) gives $F_\alpha^0\preceq L_\beta$.
Thus, by line~\ref{line:byz-two-delays-sig:choose-rank} during~$\beta$ and \Crefpart{thm:byz-two-delays-sig:frontier-preorder}{fo:byz-two-delays-sig:monotone},
\[
  \rank(\ell_\beta)
  =N(L_\beta)+1
  \ge N(F_\alpha^0)+1
  =\rank(\ell_\alpha)+1
  >\rank(\ell_\alpha).
\]
\end{proof}

\begin{proposition}[Ordering for base timestamps]\label{thm:byz-two-delays-sig:ordering}
In \Cref{alg:byz-two-delays-sig}, if $\alpha$ and $\beta$ are complete operations invoked by correct processes and $\alpha$ returns before $\beta$ is invoked, then their base timestamps satisfy $t_\beta>t_\alpha$.
\end{proposition}
\begin{proof}
  \proofqed{thm:byz-two-delays-sig:ordering}
Write
\begin{align*}
  x_\alpha&\Deq\rank(F_\alpha[a])=\rank(\ell_\alpha),
    &\text{(``$=$'' by \Cref{thm:byz-two-delays-sig:invoker-coordinate})}\\
  y_\alpha&\Deq\rank(F_\alpha[o_\alpha]),\\
  x_\beta&\Deq\rank(F_\beta[b])=\rank(\ell_\beta),
    &\text{(``$=$'' by \Cref{thm:byz-two-delays-sig:invoker-coordinate})}\\
  y_\beta&\Deq\rank(F_\beta[o_\beta]),
\end{align*}
so that
\begin{alignat*}{4}
  \tau_\alpha
  &= \bigl(N(F_\alpha)
  &,\,& \min\{x_\alpha,y_\alpha\}
  &,\,& \ind{x_\alpha>y_\alpha}
  &\bigr),\\
  \tau_\beta
  &= \bigl(N(F_\beta)
  &,\,& \min\{x_\beta,y_\beta\}
  &,\,& \ind{x_\beta>y_\beta}
  &\bigr).
\end{alignat*}
Since $\enc$ respects the lexicographic order of these triples (\Cref{thm:byz-two-delays-sig:encoding}), it suffices to prove that $\tau_\beta\gtlex \tau_\alpha$.
We prove this via three claims, each comparing one component of $\tau_\alpha$ and $\tau_\beta$ under the assumption that the preceding components are equal.

\begin{claim}\label{claim:byz-two-delays-sig:safety-first}
$N(F_\beta)\ge N(F_\alpha)$.
\end{claim}
\begin{proof}
  \proofqed{claim:byz-two-delays-sig:safety-first}
First, by \Cref{thm:byz-two-delays-sig:invoker-coordinate,thm:byz-two-delays-sig:own-coordinate-advance},
\begin{equation}\label{eq:byz-two-delays-sig:b-advance}
  \seq(F_\beta[b])=\seq(\ell_\beta)\ge \seq(F_\alpha[b])+1.
\end{equation}

Next, since at most one process is Byzantine and $p_b$ is correct, \Cref{thm:byz-two-delays-sig:one-loss} ensures that there exists at most one index $j\ne b$ such that $\seq(F_\beta[j])<\seq(F_\alpha[j])$, and that if such~$j$ exists, then $\seq(F_\beta[j])\ge \seq(F_\alpha[j])-1$.
Summing over $i\ne b$ therefore gives
\begin{equation}\label{eq:byz-two-delays-sig:others-persist}
  \sum_{i\ne b} \seq(F_\beta[i])
  \ge
  \sum_{i\ne b} \seq(F_\alpha[i])-1.
\end{equation}
Putting together \eqref{eq:byz-two-delays-sig:b-advance} and \eqref{eq:byz-two-delays-sig:others-persist}, we get
\begin{align*}
  N(F_\beta)
  &= \seq(F_\beta[b]) + \sum_{i\ne b} \seq(F_\beta[i]) \\
  &\ge \bigl(\seq(F_\alpha[b])+1\bigr)
    + \left(\sum_{i\ne b} \seq(F_\alpha[i])-1\right) \\
  &= N(F_\alpha),
\end{align*}
as claimed.
\end{proof}

\begin{claim}\label{claim:byz-two-delays-sig:safety-second}
If $N(F_\beta)=N(F_\alpha)$, then
\begin{align}
  y_\beta\ge x_\alpha&\ge \min\{x_\alpha,y_\alpha\},\label{eq:byz-two-delays-sig:rank-persist}\\
  x_\beta&>\min\{x_\alpha,y_\alpha\}.\label{eq:byz-two-delays-sig:rank-advance}
\end{align}
Consequently,
\[
  \min\{x_\beta,y_\beta\}\ge\min\{x_\alpha,y_\alpha\}.
\]
\end{claim}
\begin{proof}
  \proofqed{claim:byz-two-delays-sig:safety-second}
Assume that $N(F_\beta)=N(F_\alpha)$.
We first show that $a\notin Q_\beta$.
Suppose instead that $a\in Q_\beta$.
Then \Cref{thm:byz-two-delays-sig:invoker-included} gives $F_\alpha\preceq F_\beta$, so $\seq(F_\beta[i])\ge\seq(F_\alpha[i])$ for every~$i$.
But $\seq(F_\beta[b])>\seq(F_\alpha[b])$ by \eqref{eq:byz-two-delays-sig:b-advance}.
Summing over all~$i$ then yields $N(F_\beta)>N(F_\alpha)$, contradicting our assumption.

Since $Q_\beta$ omits only one index, we conclude that $o_\beta=a$, so $y_\beta=\rank(F_\beta[a])$.
\Cref{thm:byz-two-delays-sig:invoker-rank-persistence} immediately gives $y_\beta\ge x_\alpha$, proving \eqref{eq:byz-two-delays-sig:rank-persist}.

For~\eqref{eq:byz-two-delays-sig:rank-advance}, if $b\in Q_\alpha$, then \Cref{thm:byz-two-delays-sig:replier-rank} gives $x_\beta>x_\alpha\ge\min\{x_\alpha,y_\alpha\}$;
otherwise, $o_\alpha=b$, so $y_\alpha=\rank(F_\alpha[b])$ and \Cref{thm:byz-two-delays-sig:own-coordinate-advance} gives $x_\beta>y_\alpha\ge\min\{x_\alpha,y_\alpha\}$.
So in both cases, $x_\beta>\min\{x_\alpha,y_\alpha\}$.

Inequalities~\eqref{eq:byz-two-delays-sig:rank-persist} and~\eqref{eq:byz-two-delays-sig:rank-advance} together imply that $\min\{x_\beta,y_\beta\}\ge\min\{x_\alpha,y_\alpha\}$.
\end{proof}

\begin{claim}\label{claim:byz-two-delays-sig:safety-third}
If $N(F_\beta)=N(F_\alpha)$ and $\min\{x_\beta,y_\beta\}=\min\{x_\alpha,y_\alpha\}$, then
\[
  \ind{x_\beta>y_\beta}>\ind{x_\alpha>y_\alpha}.
\]
\end{claim}
\begin{proof}
  \proofqed{claim:byz-two-delays-sig:safety-third}
Suppose $N(F_\beta)=N(F_\alpha)$ and $\min\{x_\beta,y_\beta\}=\min\{x_\alpha,y_\alpha\}$.
By \Cref{claim:byz-two-delays-sig:safety-second}, inequalities~\eqref{eq:byz-two-delays-sig:rank-persist} and~\eqref{eq:byz-two-delays-sig:rank-advance} hold.
By \eqref{eq:byz-two-delays-sig:rank-advance}, $x_\beta>\min\{x_\beta,y_\beta\}$, so $y_\beta=\min\{x_\beta,y_\beta\}=\min\{x_\alpha,y_\alpha\}$.
Combined with \eqref{eq:byz-two-delays-sig:rank-persist}, this gives $y_\beta=x_\alpha=\min\{x_\alpha,y_\alpha\}$.
Therefore,
\begin{alignat*}{3}
  x_\alpha&=\min\{x_\alpha,y_\alpha\}\le y_\alpha,
    &&\qquad\text{so }&\ind{x_\alpha>y_\alpha}&=0;\\
  x_\beta&>\min\{x_\beta,y_\beta\}=y_\beta,
    &&\qquad\text{so }&\ind{x_\beta>y_\beta}&=1.
\end{alignat*}
This proves $\ind{x_\beta>y_\beta}>\ind{x_\alpha>y_\alpha}$.
\end{proof}

\Cref{claim:byz-two-delays-sig:safety-first,claim:byz-two-delays-sig:safety-second,claim:byz-two-delays-sig:safety-third} together show that $\tau_\beta\gtlex \tau_\alpha$.
Therefore, $t_\beta=\enc(\tau_\beta)>\enc(\tau_\alpha)=t_\alpha$.
\end{proof}

\begin{proposition}[Return values]\label{thm:byz-two-delays-sig:return-values}
  \Cref{alg:byz-two-delays-sig} returns positive integers from \nextOpParens{} and satisfies ordering~(\ref{cond:byz-timestamp-ordering}) and uniqueness~(\ref{cond:byz-timestamp-uniqueness}).
\end{proposition}
\begin{proof}
  \proofqed{thm:byz-two-delays-sig:return-values}
  Consider a complete operation~$\gamma$ at a correct process~$p_i$.
  By \Crefpart{thm:byz-two-delays-sig:invoker-coordinate}{ic:byz-two-delays-sig:sequence}, $N(F_\gamma)\geq\seq(F_\gamma[i])=\seq(\ell_\gamma)\geq 1$.
  Thus, \Cref{thm:byz-two-delays-sig:encoding-block} gives $t_\gamma\geq N(F_\gamma)^2\geq 1$, so both $t_\gamma$ and $n\cdot t_\gamma+i$ are positive integers.

  For~(\ref{cond:byz-timestamp-ordering}), suppose a complete operation~$\alpha$ at a correct process~$p_a$ precedes a complete operation~$\beta$ at a correct process~$p_b$.
  By \Cref{thm:byz-two-delays-sig:ordering}, $t_\alpha<t_\beta$, so $n\cdot t_\alpha+a<n\cdot(t_\alpha+1)\leq n\cdot t_\beta+b$.

  For~(\ref{cond:byz-timestamp-uniqueness}), timestamps returned by different correct processes differ modulo~$n$, and those returned by the same correct process are distinct by~(\ref{cond:byz-timestamp-ordering}) since its operations are sequential.
\end{proof}

\proofpart{Liveness and latency}\label{part:byz-two-delays-sig:liveness-latency}

\begin{proposition}[Liveness]\label{thm:byz-two-delays-sig:liveness}
In \Cref{alg:byz-two-delays-sig}, every invocation of \nextOpParens{} by a correct process eventually returns.
\end{proposition}
\begin{proof}
  \proofqed{thm:byz-two-delays-sig:liveness}
On line~\ref{line:byz-two-delays-sig:send-request}, a correct invoker sends every process a request that is valid by \Crefpart{thm:byz-two-delays-sig:local-invariants}{inv:byz-two-delays-sig:valid-messages}; hence every correct process eventually receives it.
By line~\ref{line:byz-two-delays-sig:invoke-frontier}, the request frontier contains the invoker's current label at the invoker's coordinate, so each correct recipient sends a valid reply acknowledging that label on line~\ref{line:byz-two-delays-sig:send-reply}.
These replies have distinct signers and therefore pass the check on line~\ref{line:byz-two-delays-sig:accept-check}.
At least $n-1$ processes, including the invoker itself, are correct, so eventually the invoker accepts $n-1$ replies, the check on line~\ref{line:byz-two-delays-sig:completion-check} succeeds, and the invoker returns on line~\ref{line:byz-two-delays-sig:return-timestamp}.
\end{proof}

\begin{proposition}[Latency]\label{thm:byz-two-delays-sig:latency}
  The worst-case message delay of \Cref{alg:byz-two-delays-sig} is two.
\end{proposition}
\begin{proof}[Proof sketch]
  \proofqed{thm:byz-two-delays-sig:latency}
  An invocation follows one \Req{} delay and one \Reply{} delay, as illustrated below:
  \[
    \begin{tikzpicture}[
      baseline=(current bounding box.south),
      event/.style={circle, fill, inner sep=2pt},
      msg/.style={-{Stealth[length=2mm]}, thin},
      every node/.style={font=\small}
    ]
      \node[event, label={[align=right, xshift=-1em, yshift=-2pt]below:{invoke\\\nextOpParens{}}}] (invoke) at (0,0.35) {};
      \node[event] (req1) at (2.4,1.05) {};
      \node[event] (req2) at (2.4,0.35) {};
      \node[event] (req3) at (2.4,-0.35) {};
      \node[event] (req4) at (2.4,-1.05) {};
      \node[event, label={[align=left, xshift=1em, yshift=-2pt]below:{return}}] (return) at (4.8,0.35) {};

      \node[fill=white, inner sep=1pt] at (1.2,1.45) {\Req};
      \node[fill=white, inner sep=1pt] at (3.6,1.45) {\Reply};

      \foreach \a in {req1,req2,req3,req4} {
        \draw[msg] (invoke) -- (\a);
        \draw[msg] (\a) -- (return);
      }
    \end{tikzpicture}\qedhere
  \]
\end{proof}

  By \Cref{thm:byz-two-delays-sig:return-values,thm:byz-two-delays-sig:liveness}, \Cref{alg:byz-two-delays-sig} satisfies the conditions of \Cref{thm:byz-timestamp-characterization} and hence implements~\ts{}.
  By \Cref{thm:byz-two-delays-sig:latency}, its worst-case message delay is two.
\end{proof}

\subsection{Latency Lower Bounds}
The latency lower bounds for systems with crash failures (\Cref{sec:crash:lower-bounds}) also apply to systems with Byzantine processes and digital signatures.
Thus, \Cref{thm:crash-two-delay-lower-bound} immediately implies the following.

\begin{observation}\label{thm:byz-sig-lower-bound-f1}
  In a system with Byzantine processes and digital signatures, any implementation of~\ts{} has worst-case message delay at least two.
\end{observation}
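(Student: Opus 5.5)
The plan is a reduction to \Cref{thm:crash-two-delay-lower-bound}. Any behavior allowed with crash failures is also allowed with Byzantine processes and digital signatures. So the crash lower bound should carry over once we check two things: the relevant runs exist in the Byzantine setting, and the Byzantine correctness condition still forces the same contradiction.

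First, I fix an implementation~$\alg$ of~\ts{} for the Byzantine setting with signatures, in the sense of \Cref{def:byz-timestamp-implementation}, and assume toward a contradiction that $\latency{\alg}<2$. Only one direction of the reduction is needed. Every schedule of~$\alg$ under crash failures, where faulty processes follow~$\alg$ until they stop, is also a schedule under Byzantine failures, because a Byzantine process may simply behave as a crashing one. Signatures only restrict faulty behavior and never constrain processes that follow~$\alg$, so this inclusion still holds. Likewise, every run $(S,T)$ of~$\alg$ in the crash model is a run in the Byzantine model with the same latencies. Hence the supremum in \Cref{def:worst-case-message-delay} over crash runs is at most the supremum over Byzantine runs, and the crash worst-case message delay of~$\alg$ is less than two.

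Second, I need $\alg$ to satisfy whatever the proof of \Cref{thm:crash-two-delay-lower-bound} actually uses. Viewed as a crash-model algorithm, $\alg$ may fail to be a full crash implementation, because in a crash schedule a crashed process's operations count in~(\ref{cond:timestamp-ordering}). To avoid this mismatch, I will not invoke the crash proposition as a black box. Instead I will rerun its argument, presumably via \Cref{thm:loquacious-transformation} and a short indistinguishability argument, with two checks:
\begin{itemize}
\item the violated ordering involves two complete operations, each invoked by a process that is correct in the constructed schedule;
\item liveness is used only for correct invokers.
\end{itemize}
Then (\ref{cond:byz-timestamp-ordering}) from \Cref{thm:byz-timestamp-characterization} yields the same contradiction.

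Alternatively, if the crash-model proof already arranges that both relevant operations are by correct processes, the observation follows immediately: any violation of crash ordering it exhibits is a violation of Byzantine ordering in the same schedule.

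The main obstacle is this bookkeeping, namely that the deferred proof's contradiction concerns correct invokers. I also need the loquacious transformation to remain valid in the Byzantine setting; if it is problematic, I would try to avoid it, since a one-round-trip lower bound should not need loquaciousness.
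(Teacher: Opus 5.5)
Your proposal is correct and takes essentially the paper's route. The paper notes that the crash-failure bound (\Cref{thm:crash-two-delay-lower-bound}) carries over, because crash behavior is a special case of Byzantine behavior and signatures constrain only faulty processes. Your worry about crashed invokers also dissolves: any finite crash-schedule prefix extends to a schedule in which every process is correct, so (\ref{cond:byz-timestamp-ordering}) and (\ref{cond:byz-timestamp-uniqueness}) already give (\ref{cond:timestamp-ordering}) and (\ref{cond:timestamp-uniqueness}), and the crash result then applies to $\alg$ as a black box, with no Byzantine loquacious transformation needed.
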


And \Cref{thm:crash-lower-bound-fgeq2} similarly implies the following.

\begin{observation}\label{thm:byz-sig-lower-bound-fgeq2}
  In a system with Byzantine processes and digital signatures, if $f\geq 2$, then any implementation of~\ts{} has worst-case message delay at least three.
\end{observation}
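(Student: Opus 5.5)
The plan is a reduction: show that every schedule and run used in the proof of \Cref{thm:crash-lower-bound-fgeq2} is also a legitimate schedule and run in the Byzantine setting with signatures. A crashed process is one particular kind of faulty process. Suppose some algorithm $\alg$ implements~\ts{} with Byzantine processes and digital signatures, with $\latency{\alg}\leq 2$ and $f\geq 2$. Then $\alg$ is also an algorithm in the crash model, where signatures are just bit strings the processes compute. Any crash-model schedule of~$\alg$ with at most $f$ crashed processes is a Byzantine-model schedule in which the crashed processes are the faulty ones and all correct processes follow~$\alg$.

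The key steps, in order, are as follows.

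First, I check that $\alg$ implements~\ts{} in the crash model. Liveness transfers directly, because the correct processes and delivery requirements are identical. For ordering and uniqueness, I use \Cref{thm:timestamp-characterization} together with \Cref{thm:byz-timestamp-characterization}. In the Byzantine version these conditions apply only to correct processes, while the crash version requires them for all complete operations, including those by processes that crash later. However, every process that completes an operation in a crash schedule followed~$\alg$ up to that point. Its prefix therefore extends to a schedule in which that process is correct, because its crash can simply be deferred. I will handle this carefully by noting that the lower-bound proof only uses condition~(\ref{cond:timestamp-ordering}) on schedule extensions of~$\sigma^i$. I can choose the extension in which both $p_{i-1}$ and $p_i$ are correct and at most $f$ other processes crash, which is possible since $n>2f\geq 4$.

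Second, I check that the worst-case message delay does not increase. Every crash-model run is a Byzantine-model run with the same timing, so the supremum in \Cref{def:worst-case-message-delay} over crash runs is at most the one over Byzantine runs. Hence $\latency{\alg}\leq 2$ holds in the crash model as well.

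Then \Cref{thm:crash-lower-bound-fgeq2} yields a contradiction. Alternatively, I can rerun its proof verbatim, after applying \Cref{thm:loquacious-transformation} and \Cref{thm:two-delay-completion} in the crash model.

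The main obstacle is the first step: the mismatch between ordering for all complete operations and ordering for correct processes only. I expect to resolve it either by the deferral argument above or by observing directly that \Cref{claim:crash-lb-a-lt-b} only needs an extension in which $p_{i-1}$ and $p_i$ are correct. Such an extension exists, since only processes outside a set of size $n-f$ need to be silent, and even those can be allowed to take steps.
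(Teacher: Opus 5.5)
Your proposal is correct and takes the same route as the paper: the paper's observation just says the crash-failure bound (\Cref{thm:crash-lower-bound-fgeq2}) carries over because crash schedules and runs of the algorithm are special cases of Byzantine ones, and your latency-supremum comparison is the same argument. Your deferral argument, which extends a finite prefix so that a process that later crashes becomes correct, fills in a detail the paper leaves implicit: crash linearizability also constrains operations of processes that eventually crash, and Byzantine linearizability does not.
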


\subsection{Summary}

\Cref{thm:byzantine-resilience-with-signatures} and \Cref{thm:byz-sig-upper-bound} together give the following necessary and sufficient condition for implementing~\ts{} under Byzantine failures with digital signatures.

\begin{corollary}
  In a system with Byzantine processes and digital signatures, there exists an implementation of~\ts{} if and only if $n>3f$.
\end{corollary}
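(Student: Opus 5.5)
The corollary follows by combining two results already established, one for each direction. For the ``only if'' direction, I would invoke \Cref{thm:byzantine-resilience-with-signatures}. It shows that in a system with Byzantine processes and digital signatures, an implementation of~\ts{} exists only if $n>3f$. The proof there is a partitioning argument: the processes split into three groups $A,B,C$, each of size at most $f$. The processes in $B$ equivocate by replaying their steps from $\sigma_A$ and then from $\sigma_C$. This yields two sequential operations whose return values must satisfy $r_a<r_c$ by~(\ref{cond:byz-timestamp-ordering}), which contradicts the choice $r_a\geq r_c$. Nothing new is needed here.

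For the ``if'' direction, I would invoke \Cref{thm:byz-sig-upper-bound}. It states that when $n>3f$, there is an implementation of~\ts{} with worst-case message delay three in this setting. That observation itself rests on \Cref{thm:byz-upper-bound-ngt3f}, the signature-free three-delay algorithm from \Cref{sec:byz:upper-bound}. An algorithm correct without signatures remains correct when signatures are available, because it simply never uses them, and the model is otherwise identical. In particular, such an algorithm satisfies the conditions of \Cref{def:byz-timestamp-implementation}, which is all the corollary requires.

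There is no real obstacle: the corollary just conjoins the two implications. The only point to double-check is that both results refer to the same system model and the same notion of implementation, namely \Cref{def:byz-timestamp-implementation} under the adapted model of \Cref{sec:byz-model}. Both do, so the equivalence follows immediately.
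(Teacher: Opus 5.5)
Your proposal is correct and matches the paper's own argument exactly. The paper derives the corollary by combining \Cref{thm:byzantine-resilience-with-signatures} for the ``only if'' direction with \Cref{thm:byz-sig-upper-bound}, which is inherited from the signature-free \Cref{thm:byz-upper-bound-ngt3f}, for the ``if'' direction.
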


We next state the optimal latency for implementations of~\ts{} in systems with Byzantine processes and digital signatures.
When $n>3f$ and $f\geq 2$, the latency upper bound in \Cref{thm:byz-sig-upper-bound} matches the latency lower bound in \Cref{thm:byz-sig-lower-bound-fgeq2}, yielding the following.

\begin{corollary}
  In a system with Byzantine processes and digital signatures, if $n>3f$ and $f\geq 2$, then the optimal latency of implementing~\ts{} is three message delays.
\end{corollary}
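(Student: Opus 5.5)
This corollary is the combination of two earlier results, so the plan is to quote them and check that their hypotheses are met.

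For the upper bound, I apply \Cref{thm:byz-sig-upper-bound}. Under the hypotheses $n>3f$ and $f\geq 2$, it gives an implementation of~\ts{} in a system with Byzantine processes and digital signatures whose worst-case message delay is three. Hence the optimal latency is at most three. That observation rests on \Cref{thm:byz-upper-bound-ngt3f}, the signature-free algorithm of \Cref{sec:byz:upper-bound}, which I take as given. One point deserves a sentence: an algorithm correct without signatures remains correct when signatures are available. The reason is that making signatures available only restricts Byzantine behavior (they cannot forge), so every schedule of the algorithm in the signed setting is also a schedule in the unsigned setting. Safety, liveness, and the latency bound therefore carry over unchanged.

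For the lower bound, I apply \Cref{thm:byz-sig-lower-bound-fgeq2}: when $f\geq 2$, every implementation has worst-case message delay at least three. This observation is derived from \Cref{thm:crash-lower-bound-fgeq2}. The justification for that transfer is that every crash-failure schedule is also a Byzantine schedule in which the faulty processes merely stop. Consequently, any algorithm implementing~\ts{} in the Byzantine setting with signatures also implements it under crash failures, with worst-case message delay no larger. Signatures can be modeled as part of the message contents and do not affect that cyclic chain argument. Since the observation is already stated, I cite it directly.

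Combining the two bounds, the minimum worst-case message delay over all implementations is exactly three. By \Cref{thm:integer-worst-case-delay}, any implementation with finite delay has integer delay, so no implementation attains a value strictly between two and three. There is no real obstacle here: both directions are stated earlier, and the only care needed is to confirm that the hypotheses $n>3f$ and $f\geq 2$ match those of the two observations.
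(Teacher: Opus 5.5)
Your proposal is correct and matches the paper's own argument: the corollary is exactly the combination of \Cref{thm:byz-sig-upper-bound} (an implementation with worst-case message delay three whenever $n>3f$) and \Cref{thm:byz-sig-lower-bound-fgeq2} (a lower bound of three for every implementation when $f\geq 2$). Your final appeal to \Cref{thm:integer-worst-case-delay} is unnecessary, though harmless. The lower bound already covers every implementation, and the upper bound shows that three is attained.
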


When $n>3f$ and $f=1$, the latency upper bound in \Cref{thm:byz-sig-upper-bound-f1} matches the latency lower bound in \Cref{thm:byz-sig-lower-bound-f1}, yielding the following.

\begin{corollary}
  In a system with Byzantine processes and digital signatures, if $n>3f$ and $f=1$, then the optimal latency of implementing~\ts{} is two message delays.
\end{corollary}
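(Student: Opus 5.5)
This corollary follows by pairing two results already in the paper. For the lower bound, I would use \Cref{thm:byz-sig-lower-bound-f1}: every implementation of~\ts{} in a system with Byzantine processes and digital signatures has worst-case message delay at least two. That observation comes from \Cref{thm:crash-two-delay-lower-bound}. The transfer from crash failures is valid because any crash-failure schedule is also a Byzantine schedule in which the faulty processes only stop taking steps. An implementation that works against Byzantine processes therefore also satisfies the crash-failure requirements. I would also check that its delay under crash failures is no larger, since both delays are suprema over runs of the same algorithm and the crash runs form a subset of the Byzantine runs. So no implementation can beat two message delays.

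For the upper bound, I would invoke \Cref{thm:byz-sig-upper-bound-f1}. Under the hypotheses $n>3f$ and $f=1$, it gives an implementation of~\ts{} (\Cref{alg:byz-two-delays-sig}) whose worst-case message delay is exactly two. The optimal latency is the infimum of $\latency{\alg}$ over all implementations~$\alg$. Since it is bounded below by two and attained by this algorithm, it equals two, so the optimum is a minimum.

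I do not expect any real obstacle, because all the substantive work lives in the cited results. The one point I would state explicitly is the containment of crash runs in Byzantine runs. It justifies applying the crash lower bound in the Byzantine setting, and the paper already asserts it in the text preceding the observations.
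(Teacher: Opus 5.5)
Your proposal is correct and follows the paper's argument: the corollary combines the lower bound of \Cref{thm:byz-sig-lower-bound-f1}, which is inherited from \Cref{thm:crash-two-delay-lower-bound}, with the two-delay algorithm of \Cref{thm:byz-sig-upper-bound-f1}. One refinement to your side remark: Byzantine linearizability places no constraint on operations of processes that eventually crash. So containment of crash schedules in Byzantine schedules does not by itself give crash-failure safety; instead, extend a finite prefix containing any two complete operations to a schedule in which every process is correct, and apply (\ref{cond:byz-timestamp-ordering}) and (\ref{cond:byz-timestamp-uniqueness}) there.
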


Thus, the optimal latency of implementing~\ts{} under Byzantine failures with digital signatures is settled in all cases where an implementation exists.

\section{Results for Byzantine Failures without Digital Signatures}\label{sec:byz}
In this section, we consider implementations of \ts{} in systems with Byzantine processes, but without digital signatures.
Again, we assume there are $n \geq 2$ processes, of which at most $f \geq 1$ may be Byzantine.

\subsection{Maximum Resilience}
The resilience lower bound for systems with Byzantine processes and digital signatures, given by \Cref{thm:byzantine-resilience-with-signatures}, holds also for systems \emph{without} digital signatures:

\begin{observation}\label{thm:byzantine-resilience-no-signatures}
  In a system with Byzantine processes, there exists an implementation of~\ts{} only if $n > 3f$.
\end{observation}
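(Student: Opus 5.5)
The plan is to reduce this observation to \Cref{thm:byzantine-resilience-with-signatures}. The reduction rests on one simple point: a system without digital signatures is weaker than, or at best equal to, one with them. Concretely, suppose $\alg$ implements~\ts{} in a system with Byzantine processes but no signatures. I would first check that $\alg$ also implements~\ts{} when signatures are available. Signatures only restrict what Byzantine processes can do, because they cannot forge a correct process's signature. So every schedule of~$\alg$ in the signature model is also a schedule of~$\alg$ in the signature-free model, with the same correct processes and the same history. $\alg$ never signs or verifies anything, so it is literally the same collection of automata in both models. Conditions (\ref{cond:byz-timestamp-safety}) and~(\ref{cond:byz-timestamp-liveness}) are statements about individual schedules, so they transfer. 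Hence $\alg$ is an implementation of~\ts{} in the signature model, and \Cref{thm:byzantine-resilience-with-signatures} gives $n>3f$.

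For robustness, I would also note the alternative: simply re-running the proof of \Cref{thm:byzantine-resilience-with-signatures} verbatim. That proof never used signatures in any essential way. The Byzantine processes in~$B$ only replay their own honest behavior from $\sigma_A$ and then from $\sigma_C$, after resetting their state. This is a legal Byzantine behavior regardless of whether signatures exist. The remaining ingredients carry over unchanged: the partition into $A,B,C$ with $|A|,|B|,|C|\le f$ and $A,C\neq\emptyset$, the two solo prefixes $\sigma_A$ and $\sigma_C$ obtained by liveness, the concatenation with delayed cross messages, and the appeal to ordering~(\ref{cond:byz-timestamp-ordering}) via \Cref{thm:byz-timestamp-characterization}.

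The only step needing care is the containment of schedule sets in the first argument. One must make sure the signature model is formalized so that a signature-oblivious algorithm's schedules there are a subset of its schedules without signatures. This holds because the unforgeability constraint only removes adversarial behaviors. If that formalization feels shaky, the direct re-run of the partition argument sidesteps it entirely, so I expect no real obstacle.
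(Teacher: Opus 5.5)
Your proposal is correct and matches the paper's approach. The paper also obtains this observation directly from \Cref{thm:byzantine-resilience-with-signatures}, on the grounds that removing signatures only strengthens the adversary, so the impossibility carries over. Your containment-of-schedules argument is exactly the justification left implicit there, and your alternative of re-running the partition proof is also valid.
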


\subsection{Latency Upper Bound}\label{sec:byz:upper-bound}

\begin{theorem}\label{thm:byz-upper-bound-ngt3f}
  In a system with Byzantine processes, if $n>3f$, then there exists an implementation of~\ts{} (\Cref{alg:byz-three-delays}) whose worst-case message delay is three.
\end{theorem}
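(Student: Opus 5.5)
We will design \Cref{alg:byz-three-delays} as a Byzantine adaptation of the max-min algorithm (\Cref{alg:crash-three-delays}) and then verify the conditions of \Cref{thm:byz-timestamp-characterization}. The invoker $p_i$ broadcasts $\langle\Req~q\rangle$. Each process, on its first request for $\langle i,q\rangle$, broadcasts $\langle\Prop~\langle i,q\rangle,\xx_j\rangle$. Once a process has received \Prop{} messages for $\langle i,q\rangle$ from $n-f$ distinct senders, it discards the $f$ largest values. It then sets $\xx_j\gets\max(\xx_j,v)+1$, where $v$ is the $(f+1)$-st largest proposal, so that $v$ is bounded above by some correct process's clock, and replies with $\langle\Reply~q,\xx_j\rangle$. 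The invoker waits for $n-f$ replies from distinct senders. It takes as base timestamp $t$ the $(f+1)$-st smallest reply value, which is at most some correct replier's value, and returns $n\cdot t+i$. Taking order statistics instead of plain max and min is the only essential change. Liveness follows as in \Cref{claim:crash-three-delays:liveness}, because at least $n-f$ correct processes follow each phase. Latency is three by the same Req/Prop/Reply picture as \Cref{claim:crash-three-delays:latency}.

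Safety proceeds through two claims mirroring \Cref{claim:crash-three-delays-return-support,claim:crash-three-delays-later-return}. First (support): if a correct invoker computes $t$, then at return at least $n-2f$ correct processes have $\xx\ge t$. The reason is that among the $n-f$ repliers at most $f$ are Byzantine, and all but $f$ reply values are $\ge t$. Hence at least $n-2f$ values are $\ge t$ and at most $f$ of those come from Byzantine senders, leaving at least $n-3f\ge 1$ correct ones. That count alone is too weak, so I would instead argue directly: the $n-f-f$ replies that are $\ge t$ include at least $n-3f$ correct processes. We need this set to intersect every later proposal quorum in more than $f$ correct processes, and this is where counting gets delicate.

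Second (later return): if $h$ is such that sufficiently many correct processes have $\xx\ge h$ at the invocation of a correct operation, then its base timestamp exceeds $h$. Fix a correct replier $p_j$. Its $n-f$ proposals include at least $n-2f$ from correct processes. If at least $f+1$ of these come from processes with clock $\ge h$, then the $(f+1)$-st largest proposal is $\ge h$ and $p_j$'s reply exceeds $h$. All correct replies then exceed $h$, and since at most $f$ replies are Byzantine, the $(f+1)$-st smallest reply also exceeds $h$. So the invariant to carry is a set $A$ of correct processes with $\xx\ge h$ such that every $(n-f)$-subset meets $A$ in at least $f+1$ correct processes, i.e.\ $|A|\ge 2f+1$ correct processes. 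Ordering of base timestamps then follows by combining the two claims with clock monotonicity, and uniqueness and positivity follow as in \Cref{claim:crash-three-delays:return-values}.

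The main obstacle is making the support claim deliver $2f+1$ correct processes with $\xx\ge t$ when only $n>3f$ holds; the naive count gives only $n-3f$. My first attempt is to choose $t$ more cleverly at the invoker. Rather than the $(f+1)$-st smallest, I would take the $(2f+1)$-st smallest reply, or equivalently require every correct replier to already hold its clock $\ge t$. Then at least $n-f-2f$ replies are $\ge t$, which still falls short. If counting fails, I would strengthen the reply phase: each correct process that sends a reply with value $w$ guarantees $\xx\ge w$, and the Prop phase is made all-to-all so that every correct process that later proposes has seen enough. The ultimate fallback is to argue that $n-f$ correct-majority quorums pairwise intersect in $f+1$ correct processes. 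I would then define the support set as the correct repliers whose value is $\ge t$ and show it intersects every later Prop quorum in at least $f+1$ processes whose proposals are $\ge t$, using $|Q_1\cap Q_2|\ge n-2f\ge f+1$ together with the trimming. Tuning these thresholds is the crux I expect to need iteration.
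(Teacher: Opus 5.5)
There is a genuine gap, and it is the one you flag yourself. With a trimmed maximum at the repliers and the $(f+1)$-st smallest reply at the invoker, the support claim yields only $n-3f$ correct processes with $\xx\ge t$. Your later-return claim needs $2f+1$, so the argument closes only for $n>5f$.

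This is a defect of the algorithm, not merely of the counting: for $n=4$, $f=1$ it violates ordering. Let $p_3$ be Byzantine.
\begin{itemize}
  \item First, $p_3$ sends a request only to $p_0$ and~$p_1$, and sends its own proposal only to~$p_1$. Then only $p_1$ ever collects $n-f=3$ proposals for that request, so afterwards $\xx_1=1$ and $\xx_0=\xx_2=0$.
  \item Next, $p_0$ runs $\alpha$ while all messages to $p_2$ are delayed. Processes $p_0$ and $p_1$ both collect the proposals $0,1,0$ from $p_0,p_1,p_3$, take $v=0$, and reply $1$ and~$2$, respectively. Process $p_3$ replies $100$. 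Hence $t_\alpha=2$, and $\alpha$ returns $8$.
  \item Finally, $p_2$ runs $\beta$ while $p_1$'s messages are delayed. Processes $p_0$ and $p_2$ collect the proposals $1,0,0$ from $p_0,p_2,p_3$, take $v=0$, and reply $2$ and~$1$, respectively. Process $p_3$ replies $0$. Hence $t_\beta=1$, and $\beta$ returns $6<8$.
\end{itemize}

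None of your fallbacks repairs this.
\begin{itemize}
  \item Taking the $(2f+1)$-st smallest reply only shrinks the support set further, as you noticed.
  \item Two $(n-f)$-sets meet in $n-2f\ge f+1$ processes, but only $n-3f\ge 1$ of those are guaranteed correct. Your trimming needs $f+1$ correct high proposals.
  \item Dropping the trimming costs nothing, since Byzantine inflation of clocks cannot hurt ordering. But it only improves the requirement to $n-3f\ge f+1$, i.e., $n>4f$.
\end{itemize}

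The missing idea is to obtain support from a third phase of \emph{echoes} rather than from the repliers' own clocks. In \Cref{alg:byz-three-delays}, the protocol runs as follows.
\begin{itemize}
  \item A proposer increments its clock before proposing.
  \item Every process that receives $p_j$'s proposal~$v$ for $p_i$'s invocation sets $\xx\gets\max(\xx,v)$ and sends $\langle\Echo~q,j,v\rangle$ directly to~$p_i$.
  \item The invoker deems $\langle j,v\rangle$ certified once $n-f$ distinct processes echo it. It waits for $n-f$ certified proposals and returns the \emph{maximum} certified value.
\end{itemize}
This design fixes the count for three reasons.
\begin{itemize}
  \item Certification of the returned value~$t$ already guarantees $\xx\ge t$ at $n-2f$ correct echoers, not $n-3f$.
  \item At most one value per proposer can be certified, since two certificates would need disjoint echo sets of size $n-f$.
  \item A correct proposer's certified value is its genuine one, because some correct echoer heard it directly.
\end{itemize}
A later invocation certifies $n-f$ distinct proposers. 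Since $(n-2f)+(n-f)-n=n-3f>0$, one of them is a correct process whose clock was already $\ge t$, so its proposal is $\ge t+1$ and the maximum exceeds~$t$. Because a maximum is taken over values each backed by $n-2f$ correct clocks, a single correct process in the intersection suffices, instead of the $2f+1$ that your order statistics demand.
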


\begin{proof}
  \proofqed{thm:byz-upper-bound-ngt3f}
The algorithm uses three types of messages.
We call a message \emph{valid} if it has one of the following forms:
\begin{itemize}
  \item a request $\langle \Req~q\rangle$, where $q$ is an integer;
  \item a proposal $\langle \Prop~\langle k,q\rangle,v\rangle$, where $k\in\{0,\ldots,n-1\}$ and $q$ and~$v$ are integers;
  \item an echo $\langle \Echo~q,j,v\rangle$, where $j\in\{0,\ldots,n-1\}$ and $q$ and~$v$ are integers.
\end{itemize}
We first establish some supporting properties of the algorithm.
We then show that the algorithm implements~$\ts{}$ by proving that it returns positive integers and satisfies ordering~(\ref{cond:byz-timestamp-ordering}), uniqueness~(\ref{cond:byz-timestamp-uniqueness}), and liveness~(\ref{cond:byz-timestamp-liveness}).
Finally, we show that its worst-case message delay is three.

\begin{algorithm}
  \caption{A worst-case three-delay implementation of~$\ts{}$ in a system with Byzantine processes where $n>3f$.}\label{alg:byz-three-delays}
  \begin{algorithmic}[1]
    \Statex {\bfseries\boldmath Code for process~$p_i$} ($0 \leq i < n$):
    \State $\xx_i \gets 0$ \label{line:byz-three-delays-clock-init}\Comment{$p_i$'s \emph{local clock}.}
    \State $\texttt{seq}_i \gets 0$ \Comment{Sequence number of $p_i$'s ongoing invocation, or of its next invocation when idle.}\label{line:byz-three-delays-seq-init}
    \State $\texttt{reqSeq}_i[k] \gets -1$ for each $k$ \Comment{Largest $q$ in a $\langle \Req~q\rangle$ received from $p_k$.}\label{line:byz-three-delays-reqseq-init}
    \State $\texttt{propSeq}_i[k,j] \gets -1$ for each $k,j$ \Comment{Largest $q$ in a $\langle \Prop~\langle k, q\rangle, - \rangle$ received from $p_j$.}\label{line:byz-three-delays-propseq-init}
    \Statex

    \State $\texttt{phase}_i \gets \textsf{idle}$ \Comment{Whether $p_i$ is \textsf{idle} or has an \textsf{ongoing} invocation.}
    \State $\texttt{echoVal}_i[j,k] \gets \bot$ for each $j,k$ \Comment{$p_j$'s proposal for $p_i$'s ongoing invocation as echoed by $p_k$.}
    \Statex

    \Upon{\textbf{invocation} of \nextOpParens{}}
      \State $\texttt{phase}_i \gets \textsf{ongoing}$
      \State $\texttt{echoVal}_i[j,k] \gets \bot$ for each $j,k$
      \State \textbf{send} $\langle \Req~\texttt{seq}_i \rangle$ \textbf{to} every process \label{line:byz-three-delays-req-send}
    \EndUpon
    \Statex

    \Upon{\textbf{receive} valid~$\langle \Req~q \rangle$ \textbf{from} $p_k$}
      \If{$q > \texttt{reqSeq}_i[k]$} \label{line:byz-three-delays-req-guard}
        \State $\texttt{reqSeq}_i[k] \gets q$ \label{line:byz-three-delays-reqseq-update}
        \State $\xx_i \gets \xx_i + 1$ \label{line:byz-three-delays-clock-increment}
        \State \textbf{send} $\langle \Prop~\langle k, q\rangle, \xx_i \rangle$ \textbf{to} every process \label{line:byz-three-delays-prop-send}
      \EndIf
    \EndUpon
    \Statex

    \Upon{\textbf{receive} valid~$\langle \Prop~\langle k, q \rangle, v \rangle$ \textbf{from} $p_j$}
      \If{$q > \texttt{propSeq}_i[k,j]$} \label{line:byz-three-delays-prop-guard}
        \State $\texttt{propSeq}_i[k,j] \gets q$ \label{line:byz-three-delays-propseq-update}
        \State $\xx_i \gets \max(\xx_i, v)$ \label{line:byz-three-delays-clock-max}
        \State \textbf{send} $\langle \Echo~q, j, v \rangle$ \textbf{to} $p_k$ \label{line:byz-three-delays-echo-send}\Comment{$p_i$ says to $p_k$: ``$p_j$ proposed $v$ for your invocation~$q$.''}
      \EndIf
    \EndUpon
    \Statex

    \Upon{\textbf{receive} valid~$\langle \Echo~q, j, v \rangle$ \textbf{from} $p_k$}
      \If{$\texttt{phase}_i = \textsf{ongoing}$ \textbf{and} $q = \texttt{seq}_i$} \label{line:byz-three-delays-current-echo}
        \If{$\texttt{echoVal}_i[j,k] = \bot$} \label{line:byz-three-delays-echo-record-test}
          \State $\texttt{echoVal}_i[j,k] \gets v$ \label{line:byz-three-delays-echo-record}
        \EndIf
        \State $\mathit{certified} \gets \left\{ \langle j', v' \rangle : v'\neq\bot,\, |\{ k' : \texttt{echoVal}_i[j',k'] = v' \}| \geq n-f \right\}$ \label{line:byz-three-delays-certified}
        \Statex \Comment{The proposal by $p_{j'}$ of value~$v'$ is \emph{certified} once it is echoed by $n-f$ distinct processes~$p_{k'}$.}
        \If{$|\mathit{certified}| \geq n-f$} \label{line:byz-three-delays-certified-full}
          \State $\texttt{phase}_i \gets \textsf{idle}$
          \State $\texttt{seq}_i \gets \texttt{seq}_i + 1$ \label{line:byz-three-delays-seq-increment}
          \State $t \gets \max\{v' : \langle j',v'\rangle\in\mathit{certified}\}$ \Comment{\emph{Base timestamp}.}\label{line:byz-three-delays-return-compute}
          \State \Return $n\cdot t+i$ \Comment{Completes the \nextOpParens{} invocation.}\label{line:byz-three-delays-return}
        \EndIf
      \EndIf
    \EndUpon
  \end{algorithmic}
\end{algorithm}

\begin{claim}\label{claim:byz-three-delays-clock-monotone}
  For every correct process~$p_i$, its local clock~$\xx_i$ is non-decreasing.
\end{claim}
\begin{proof}
  \proofqed{claim:byz-three-delays-clock-monotone}
  After initialization, the only assignments to~$\xx_i$ are on lines~\ref{line:byz-three-delays-clock-increment} and~\ref{line:byz-three-delays-clock-max}, neither of which decreases~$\xx_i$.
\end{proof}

\begin{claim}\label{claim:byz-three-delays-certified-support}
  If the set~$\mathit{certified}$ computed by a correct process~$p_i$ on line~\ref{line:byz-three-delays-certified} contains $\langle j,v\rangle$, then at that time, at least $n-2f$ correct processes~$p_k$ satisfy $\xx_k\geq v$.
\end{claim}
\begin{proof}
  \proofqed{claim:byz-three-delays-certified-support}
  Let $S=\{k:\texttt{echoVal}_i[j,k]=v\}$ when $p_i$ executes line~\ref{line:byz-three-delays-certified}.
  Then $|S|\geq n-f$, so at least $n-2f$ indices in~$S$ belong to correct processes.
  For each such correct index~$k$, process~$p_k$ sent its \Echo{} message only after line~\ref{line:byz-three-delays-clock-max} made $\xx_k\geq v$.
  By \Cref{claim:byz-three-delays-clock-monotone}, $\xx_k\geq v$ still holds when $p_i$ executes line~\ref{line:byz-three-delays-certified}.
\end{proof}

\begin{claim}\label{claim:byz-three-delays-return-support}
  If an invocation of~\nextOpParens{} by a correct process computes base timestamp~$t$ on line~\ref{line:byz-three-delays-return-compute}, then at the time of return, at least $n-2f$ correct processes~$p_k$ satisfy $\xx_k\geq t$.
\end{claim}
\begin{proof}
  \proofqed{claim:byz-three-delays-return-support}
  By line~\ref{line:byz-three-delays-return-compute}, $\langle j,t\rangle\in\mathit{certified}$ for some~$j$.
  By \Cref{claim:byz-three-delays-certified-support}, $\xx_k\geq t$ for at least $n-2f$ correct processes~$p_k$ when $\mathit{certified}$ is computed, and \Cref{claim:byz-three-delays-clock-monotone} implies that these inequalities still hold when the invocation returns.
\end{proof}

\begin{claim}\label{claim:byz-three-delays-certified-unique}
  If the set~$\mathit{certified}$ computed by a correct process~$p_i$ on line~\ref{line:byz-three-delays-certified} contains both $\langle j,v\rangle$ and $\langle j,v'\rangle$, then $v=v'$.
\end{claim}
\begin{proof}
  \proofqed{claim:byz-three-delays-certified-unique}
  Let $S_u=\{k:\texttt{echoVal}_i[j,k]=u\}$ when $p_i$ executes line~\ref{line:byz-three-delays-certified}.
  Suppose $v\neq v'$.
  Since both pairs belong to $\mathit{certified}$, we have $|S_v|,|S_{v'}|\geq n-f$.
  Since each entry $\texttt{echoVal}_i[j,k]$ records at most one non-$\bot$ value, $S_v\cap S_{v'}=\emptyset$.
  But $|S_v \cap S_{v'}|=|S_v|+|S_{v'}|-|S_v\cup S_{v'}|\geq 2(n-f)-n=n-2f>0$, a contradiction.
\end{proof}

\begin{claim}[Ordering for base timestamps]\label{claim:byz-three-delays:ordering}
  Suppose two \nextOpParens{} operations invoked by correct processes compute base timestamps~$t$ and~$t'$ on line~\ref{line:byz-three-delays-return-compute}, respectively.
  If the first returns before the second is invoked, then $t<t'$.
\end{claim}
\begin{proof}
  \proofqed{claim:byz-three-delays:ordering}
  By \Cref{claim:byz-three-delays-return-support}, when the first operation returns, there is a set~$C$ of process indices such that $|C|\geq n-2f$ and for every $k\in C$, $p_k$ is correct and $\xx_k\geq t$.
  By \Cref{claim:byz-three-delays-clock-monotone}, these inequalities still hold when the second operation is invoked and thereafter.

  Let $p_j$ be the process performing the second operation, let $q_j$ be the value of~$\texttt{seq}_j$ when it is invoked, let $\mathit{certified}'$ be $p_j$'s certified set when it returns on line~\ref{line:byz-three-delays-return}, and let $P=\{k:\langle k,-\rangle\in\mathit{certified}'\}$.
  The check on line~\ref{line:byz-three-delays-certified-full} ensures that $|\mathit{certified}'|\geq n-f$.
  By \Cref{claim:byz-three-delays-certified-unique}, distinct pairs in $\mathit{certified}'$ have distinct first components, so $|P|=|\mathit{certified}'|\geq n-f$.
  Hence,
  \[
    |C\cap P| = |C|+|P|-|C \cup P| \geq (n-2f)+(n-f)-n = n-3f > 0,
  \]
  so we may fix some $\ell\in C\cap P$; process~$p_\ell$ is correct.

  Since $\ell\in P$, there is a value~$v_\ell$ such that $\langle \ell,v_\ell\rangle\in\mathit{certified}'$.
  By line~\ref{line:byz-three-delays-certified}, process~$p_j$ received an $\langle\Echo~q_j, \ell, v_\ell\rangle$ message from at least $n-f$ distinct processes;
  at least $n-2f>0$ of them are correct, so fix one such correct process~$p_m$.
  By line~\ref{line:byz-three-delays-echo-send}, $p_m$ sent that \Echo{} only after receiving
  from~$p_\ell$ the message $\langle\Prop~\langle j,q_j\rangle, v_\ell\rangle$.
  As $p_\ell$ is correct, it sent that \Prop{} to every process on line~\ref{line:byz-three-delays-prop-send} immediately after executing $\xx_\ell\gets\xx_\ell+1$ on line~\ref{line:byz-three-delays-clock-increment}, upon receiving the \Req{} message that $p_j$ sent at the start of the second operation.

  Because $\ell\in C$, $\xx_\ell\geq t$ immediately before this assignment, so $v_\ell\geq t+1>t$.
  Since line~\ref{line:byz-three-delays-return-compute} computes the base timestamp as the maximum value in $\mathit{certified}'$, we have $t'\geq v_\ell>t$.
\end{proof}

\begin{claim}[Return values]\label{claim:byz-three-delays:return-values}
  \Cref{alg:byz-three-delays} returns positive integers from \nextOpParens{} and satisfies ordering~(\ref{cond:byz-timestamp-ordering}) and uniqueness~(\ref{cond:byz-timestamp-uniqueness}).
\end{claim}
\begin{proof}
  \proofqed{claim:byz-three-delays:return-values}
  Consider a complete operation at a correct process~$p_i$ with base timestamp~$t$, and let $\mathit{certified}$ be the set from which $t$ is computed.
  By the check on line~\ref{line:byz-three-delays-certified-full}, this set contains at least $n-f$ pairs, whose first components are distinct by \Cref{claim:byz-three-delays-certified-unique}.
  Since $n-f>f$, $\mathit{certified}$ contains a pair $\langle\ell,v_\ell\rangle$ such that $p_\ell$ is correct.
  Some correct process echoed~$v_\ell$ after receiving it in a \Prop{} message from~$p_\ell$, which sent that message right after incrementing~$\xx_\ell$ on line~\ref{line:byz-three-delays-clock-increment}; hence $v_\ell\geq 1$.
  Moreover, every value in~$\mathit{certified}$ is an integer, since it comes from a valid \Echo{} message.
  Thus, $t$, the maximum of these values, is a positive integer, and so is $n\cdot t+i$.

  For~(\ref{cond:byz-timestamp-ordering}), suppose a complete operation at a correct process~$p_i$ with base timestamp~$t$ precedes one at a correct process~$p_j$ with base timestamp~$t'$.
  By \Cref{claim:byz-three-delays:ordering}, $t<t'$, so $n\cdot t+i<n\cdot(t+1)\leq n\cdot t'+j$.

  For~(\ref{cond:byz-timestamp-uniqueness}), timestamps returned by different correct processes differ modulo~$n$, and those returned by the same correct process are distinct by~(\ref{cond:byz-timestamp-ordering}) since its operations are sequential.
\end{proof}

\begin{claim}[Liveness]\label{claim:byz-three-delays:liveness}
  In \Cref{alg:byz-three-delays}, every invocation of~\nextOpParens{} by a correct process eventually returns.
\end{claim}
\begin{proof}
  \proofqed{claim:byz-three-delays:liveness}
  Every message sent by a correct process is valid, since $\texttt{seq}_i$ and~$\xx_i$ always hold integers and every other field is either a process index or copied from a valid message.
  Let $p_i$ be a correct process that invokes~\nextOpParens{}, and let $q$ be the value of~$\texttt{seq}_i$ at that invocation.
  By initialization (lines~\ref{line:byz-three-delays-seq-init} and~\ref{line:byz-three-delays-reqseq-init}) and the increment of~$\texttt{seq}_i$ after each return (line~\ref{line:byz-three-delays-seq-increment}), the \Req{} message for~$q$ is fresh at every correct process.
  Hence every correct process eventually receives it, passes the check on line~\ref{line:byz-three-delays-req-guard}, and sends a corresponding \Prop{} message to every process.

  Until the invocation returns, $p_i$ remains in phase~$\textsf{ongoing}$ with $\texttt{seq}_i=q$ and sends no request with a higher sequence number.
  Likewise, by initialization (lines~\ref{line:byz-three-delays-seq-init} and~\ref{line:byz-three-delays-propseq-init}) and the increment of~$\texttt{seq}_i$ after each return (line~\ref{line:byz-three-delays-seq-increment}), every \Prop{} message for~$q$ from a correct proposer is fresh at every correct process, which eventually passes the check on line~\ref{line:byz-three-delays-prop-guard} and sends the corresponding \Echo{} to~$p_i$.
  Thus, for each of at least $n-f$ correct proposers, $p_i$ eventually receives $n-f$ matching \Echo{} messages.

  Since the corresponding $\texttt{echoVal}_i$ entries were reset to~$\bot$ when the invocation began, $p_i$ records all $n-f$ echoes on line~\ref{line:byz-three-delays-echo-record}; together, the resulting entries certify the proposal.
  Consequently, the condition $|\mathit{certified}|\geq n-f$ eventually holds, and $p_i$ returns on line~\ref{line:byz-three-delays-return}.
\end{proof}

\begin{claim}[Latency]\label{claim:byz-three-delays:latency}
  The worst-case message delay of \Cref{alg:byz-three-delays} is three.
\end{claim}
\begin{proof}[Proof sketch]
  \proofqed{claim:byz-three-delays:latency}
  An invocation follows one \Req{} delay, one \Prop{} delay, and one \Echo{} delay, as illustrated below (where a thin arrow denotes a single message and a thick arrow denotes multiple messages):
  \[
    \begin{tikzpicture}[
      baseline=(current bounding box.south),
      event/.style={circle, fill, inner sep=2pt},
      msg/.style={-{Stealth[length=2mm]}, thin},
      echomsg/.style={-{Stealth[length=2mm]}, very thick},
      manymsg/.style={-{Stealth[length=2mm]}, thin},
      every node/.style={font=\small}
    ]
      \node[event, label={[align=right, xshift=-1em, yshift=-2pt]below:{invoke\\\nextOpParens{}}}] (invoke) at (0,0.35) {};
      \node[event] (req1) at (2.4,1.05) {};
      \node[event] (req2) at (2.4,0.35) {};
      \node[event] (req3) at (2.4,-0.35) {};
      \node[event] (req4) at (2.4,-1.05) {};
      \node[event] (prop1) at (5.0,1.05) {};
      \node[event] (prop2) at (5.0,0.35) {};
      \node[event] (prop3) at (5.0,-0.35) {};
      \node[event] (prop4) at (5.0,-1.05) {};
      \node[event, label={[align=left, xshift=1em, yshift=-2pt]below:{return}}] (return) at (7.4,0.35) {};

      \node[fill=white, inner sep=1pt] at (1.2,1.45) {\Req};
      \node[fill=white, inner sep=1pt] at (3.7,1.45) {\Prop};
      \node[fill=white, inner sep=1pt] at (6.2,1.45) {\Echo};

      \draw[msg] (invoke) -- (req1);
      \draw[msg] (invoke) -- (req2);
      \draw[msg] (invoke) -- (req3);
      \draw[msg] (invoke) -- (req4);

      \foreach \a in {req1,req2,req3,req4} {
        \foreach \b in {prop1,prop2,prop3,prop4} {
          \draw[manymsg] (\a) -- (\b);
        }
      }

      \draw[echomsg] (prop1) -- (return);
      \draw[echomsg] (prop2) -- (return);
      \draw[echomsg] (prop3) -- (return);
      \draw[echomsg] (prop4) -- (return);
    \end{tikzpicture}\qedhere
  \]
\end{proof}

  By \Cref{claim:byz-three-delays:return-values,claim:byz-three-delays:liveness}, \Cref{alg:byz-three-delays} satisfies the conditions of \Cref{thm:byz-timestamp-characterization} and hence implements~\ts{}.
  By \Cref{claim:byz-three-delays:latency}, its worst-case message delay is three.
\end{proof}

\subsection{Latency Lower Bounds}

\subsubsection{Auxiliary Lemmas}\label{sec:byz:auxiliary-lemmas}

We begin with two lemmas used in the proofs below.

First, the notion of a loquacious algorithm (\Cref{def:loquacious}) carries over unchanged.
A Byzantine process that deviates from~$\alg$ need not behave loquaciously.
The loquacious-transformation lemma has the following Byzantine analogue.

\begin{restatable}[Byzantine version of \Cref{thm:loquacious-transformation}]{lemma}{loquaciousTransformationByz}\label{thm:loquacious-transformation-byz}
  In a system with Byzantine processes, let~$\alg$ be an implementation of~\ts{}.
  Then there exists a loquacious implementation~$\alg'$ of~\ts{} such that $\latency{\alg'}\leq\latency{\alg}$.
\end{restatable}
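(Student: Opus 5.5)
The plan is to build $\alg'$ by the standard simulation that makes every step a broadcast, mirroring the construction behind \Cref{thm:loquacious-transformation}, and then to check safety, liveness, and latency in the Byzantine model. In $\alg'$, each process~$p$ runs a local copy of its automaton~$\alg_p$. When the simulated automaton takes a step and produces outgoing messages $\{(q,m_q)\}$, $p$ instead broadcasts one message containing the whole tuple of (destination, payload) pairs together with a per-sender step counter. A step in which $\alg_p$ would send nothing broadcasts an empty tuple. On receiving a broadcast from~$p_j$, a process~$q$ extracts only the payload addressed to itself, if any, and feeds it to its simulated~$\alg_q$ as a message from~$p_j$. Invocations and responses of \nextOpParens{} pass straight through to the simulated automaton, so every process returns exactly what $\alg$ would return.

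\textbf{Safety and liveness.} Fix a schedule~$S'$ of~$\alg'$. I would map it to a schedule~$S$ of~$\alg$ as follows. Each correct process's steps are replaced by its simulated $\alg$-steps. Each Byzantine process~$p_j$ in~$S'$ becomes a Byzantine process in~$S$ that sends to~$q$ exactly the payloads $q$ extracts from $p_j$'s broadcasts. This is where the Byzantine setting differs from the crash proof: a faulty sender may broadcast tuples that are inconsistent with any run of~$\alg$. That causes no difficulty, because Byzantine senders in~$S$ may send arbitrary messages to each recipient, and without signatures there is no authentication constraint to respect. Since channels are point-to-point and authenticated by sender identity, the correct processes of~$S$ observe exactly what they observe in~$S'$. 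Hence the projection of the history onto the correct processes coincides in the two schedules. Byzantine linearizability and liveness of~$\alg$ then transfer to~$\alg'$. Equivalently, conditions (\ref{cond:byz-timestamp-ordering}), (\ref{cond:byz-timestamp-uniqueness}) and~(\ref{cond:byz-timestamp-liveness}) transfer via \Cref{thm:byz-timestamp-characterization}, since those conditions refer only to correct processes. Eventual delivery also transfers: a message in~$S$ is received when its carrying broadcast is received in~$S'$, and recipients that take infinitely many steps receive all broadcasts.

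\textbf{Latency.} I would take a run $R'=(S',T')$ of~$\alg'$ and an operation~$O$ by a correct process such that messages are timely during~$O$. Assign each simulated step the time of its real step, which yields a run $R=(S,T)$ of~$\alg$. Every $\alg$-message sent during~$O$ to a process taking infinitely many steps rides on a broadcast sent at the same time and received at the same time, so it is timely in~$R$. Therefore $\latency[R']{O}=\latency[R]{O}\le\latency{\alg}$, and taking the supremum gives $\latency{\alg'}\le\latency{\alg}$.

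\textbf{Main obstacle.} I expect the delicate point to be the Byzantine mapping. A faulty process in~$S'$ need not take steps according to~$\alg'$, and timeliness in \Cref{def:message-timeliness} concerns recipients that take infinitely many steps, possibly Byzantine ones. So I must check that the constructed~$S$ is a legal schedule of~$\alg$ in the Byzantine sense. Concretely, $S$ must have the same set of processes taking infinitely many steps, at least $n-f$ of them correct. It must also satisfy the delivery constraints, in particular at-most-once receipt, which the step counters guarantee. Finally, the Byzantine recipients' timeliness obligations must transfer, which holds because receipt times are preserved. The rest should follow the crash-failure argument of \Cref{sec:loquacious-transformation-proofs:crash} essentially verbatim.
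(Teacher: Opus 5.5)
This is essentially the paper's proof. It uses the same packing construction of~$\alg'$ and the same modified schedule transformation: a faulty sender's step in~$S$ sends to~$q$ exactly the original messages paired with~$q$ in the packed messages it actually sent to~$q$ in~$S'$. Histories, step times, and message latencies then carry over unchanged.

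One assertion needs correcting. You cannot decree that a process faulty in~$S'$ ``becomes a Byzantine process in~$S$,'' because correctness in~$S$ is determined by the transformed steps themselves. A process can deviate from~$\alg'$ in ways the transformation erases, for example by sending different packed messages to different recipients, or by padding them with pairs addressed to other processes. Its transformed steps may then all conform to~$\alg$, so it is correct in~$S$; the paper's proof flags exactly this case.

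So the correct sets need not coincide. What holds, and what your safety, liveness, and latency arguments actually use, is only that every process correct in~$S'$ is correct in~$S$. Since $S$ and~$S'$ have identical histories, a witness~$\widehat H$ that agrees with~$S$ on its possibly larger set of correct processes also agrees with~$S'$ on its correct processes.

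The per-sender step counters are unnecessary. At-most-once receipt in~$S$ is inherited directly from~$S'$, because each simulated message rides on a specific packed copy. Receivers must also not filter on the counters: a Byzantine sender can forge them, and filtering would break the match between what a correct process's simulated automaton processes and what $S$ records as received.
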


We defer its proof to \Cref{sec:loquacious-transformation-proofs:byz}.

Our second auxiliary lemma adapts the completion lemma (\Cref{thm:two-delay-completion}) to the Byzantine setting by adding one hypothesis:
Every process in~$Q$ must follow~$\alg$ throughout the prefix.
Without this requirement, $p$~may deviate from the algorithm and fail to return; alternatively, another process $q\in Q$ may send a reply that $p$~recognizes as invalid, in which case $p$~may, without violating the two-delay bound, disregard it and wait for a reply from a process outside~$Q$ before returning.
We state the resulting Byzantine variant below, with \underline{the differences from~\Cref{thm:two-delay-completion} underlined}.
(We index sequences from~$0$.
For a sequence~$\mu$, we write $\mu[i..j]$ for the contiguous subsequence $\mu[i],\dots,\mu[j]$.)

\begin{restatable}[Byzantine version of \Cref{thm:two-delay-completion}]{lemma}{twoDelayCompletionByz}\label{thm:two-delay-completion-byz}
  In a system with \underline{Byzantine processes}, let $\alg$~be a loquacious algorithm with $\latency{\alg} \leq 2$.
  Let $p$~be a process, and let $Q$ be a set of processes such that $|Q|=n-f$ and $p\in Q$.

  Suppose that $\sigma$ is a schedule prefix of~$\alg$ in which the following hold:
  \begin{enumerate}
    \item Process~$p$ invokes an operation and broadcasts a message~$m$ in step~$\sigma[a]$.
    \item For each $q \in Q$, process~$q$ receives~$m$ and broadcasts a message~$m_q$ in step~$\sigma[j_q]$.
    \item Process~$p$ receives the set of messages $\{m_q : q\in Q\}$ in step~$\sigma[b]$.
    \item Between steps $\sigma[a]$ and $\sigma[b]$, inclusive, the only steps are those specified above.
    \item \underline{Processes in~$Q$ follow the algorithm~$\alg$ throughout~$\sigma[0..b]$.}
  \end{enumerate}
  Then the operation invoked by~$p$ in step~$\sigma[a]$ returns no later than step~$\sigma[b]$.
\end{restatable}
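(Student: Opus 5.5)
We argue by contradiction: we build a run of~$\alg$ in which $p$~is correct, messages are timely during $p$'s operation, and yet the operation has latency greater than two. This will contradict $\latency{\alg}\le 2$. We expect the argument to follow the crash-failure proof of \Cref{thm:two-delay-completion}. The new hypothesis~5 is what lets every process in~$Q$ be treated as correct.

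Suppose the operation invoked in~$\sigma[a]$ has not returned by~$\sigma[b]$. First extend $\sigma[0..b]$ to a schedule~$S$ of~$\alg$. In~$S$ the processes in~$Q$ are correct, which is legitimate because by hypothesis~5 they follow~$\alg$ throughout $\sigma[0..b]$, and we let them continue to follow~$\alg$. All processes outside~$Q$ crash right after~$\sigma[b]$; there are exactly $n-|Q|=f$ of them. Their steps in $\sigma[0..b]$ need not follow~$\alg$, which is fine because they are faulty. Since they take finitely many steps, \Cref{def:message-timeliness} imposes no timeliness on messages sent to them. After~$\sigma[b]$, all messages among processes in~$Q$ are eventually delivered, so~$S$ is a valid schedule.

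Next we assign times. Steps of $\sigma[0..a-1]$ get times before~$0$, $\sigma[a]$ gets time~$0$, each step $\sigma[j_q]$ gets a distinct time in $(0,1]$ ordered as in~$\sigma$, $\sigma[b]$ gets time~$2$, and all later steps get times beyond~$2$. We check the run conditions. Any message sent in $[0,2]$ and received by~$\sigma[b]$ is one of the~$m_q$, sent in $(0,1]$, so it has latency less than two. The issue is the timeliness of the $m_q$'s. We fix this by pushing the times of the $\sigma[j_q]$ to exactly~$1$, ordered by a tiny stagger $1-\epsilon_q$, and placing~$\sigma[b]$ at time $1+\min_q(1-\epsilon_q)$, which is less than two. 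Then~$m$ and every~$m_q$ have latency at most one. Any other message sent during the operation by processes in~$Q$ (broadcast steps between~$a$ and~$b$, and the steps after~$b$) is assigned a delivery time within one unit; this is possible because after~$\sigma[b]$ we may schedule the correct processes freely. Since the operation has not returned by~$\sigma[b]$, we would still need it not to return within the remaining time window to contradict the bound.

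This last step is the main obstacle, and we handle it as the crash proof presumably does: by a limiting or stretching argument rather than a direct one. The idea is that the time of~$\sigma[b]$ can be made arbitrarily close to~$2$ while all messages stay timely. After~$\sigma[b]$, we delay all further steps by an amount at most~$\epsilon$ beyond the bound, or we observe that any return occurs at a step strictly after~$\sigma[b]$. Any such step needs either a new message from some process (taking more time) or a later local step, which we may schedule at time greater than~$2$ since local steps can be postponed; the messages sent at~$\sigma[b]$ are received after time~$2$, within one unit. Then $\latency{O}>2$, contradicting $\latency{\alg}\le 2$.

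The Byzantine-specific content consists of only two checks. First, the processes outside~$Q$ are allowed to be faulty (crash-like), with $|Q^c|=f$. Second, every process in~$Q$, including~$p$, is correct and $\alg$-following, so the supremum in \Cref{def:worst-case-message-delay} applies to $p$'s operation.
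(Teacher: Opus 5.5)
Your Byzantine-specific step is correct and is exactly the paper's. Hypothesis~5 is what lets you extend $\sigma[0..b]$ to a schedule in which the processes of~$Q$ keep following~$\alg$ forever and the other $n-|Q|=f$ processes stop. Then $p$ is correct and its operation falls under the definition of $\latency{\alg}$.

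\textbf{The gap is in the timing argument.} The paper takes that argument verbatim from the crash proof; you flag it as ``the main obstacle'' but do not resolve it. Because $\alg$ is loquacious, each $\sigma[j_q]$ sends $m_q$ to \emph{every} process of~$Q$, not only to~$p$. Some of these copies can only be received after $\sigma[b]$: for instance, the copy that each $q\in Q\setminus\{p\}$ sends to itself, since $q$ takes no other step between $\sigma[a]$ and $\sigma[b]$. They are sent at about time~$1$, so timeliness forces their receive steps into $(T[b],\,T[j_q]+1]$, i.e., before time~$2$.
\begin{itemize}
\item Your assignment puts every step after $\sigma[b]$ beyond time~$2$. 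That makes these copies untimely, so $\latency{\alg}\le 2$ says nothing about your run.
\item With $T[b]=1+\min_q T[j_q]$ exactly, that window is empty for the minimizing~$q$.
\item If instead you place those receive steps before time~$2$, then in the arbitrary extension~$S$ you fixed first, $p$'s return step may precede them. That forces the return before time~$2$ and the contradiction disappears.
\end{itemize}
Saying that ``after $\sigma[b]$ we may schedule the correct processes freely'' is exactly where the missing idea sits. A limiting or stretching argument cannot supply it: you need one run in which the required messages are timely and the latency is strictly greater than~$2$. (The crash proof does not use such an argument either.)

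\textbf{The paper's way around this is not to require those copies to be timely.} It times the steps $a<i<b$ in $(3/4,1)$ and $b\le i\le c$ in $[7/4,5/2]$, so $\latency{O}=5/2$. It then checks only penultimate timeliness (\Cref{def:penultimately-timely}): the copies of~$m$, sent at time~$0$, arrive before time~$1$, and the only messages to~$p$ sent in $(1/2,3/2]$ are the $m_q$, received at time $7/4$. \Cref{thm:penultimately-timely} then gives $\latency{O}\le\latency{\alg}\le 2$. Its proof performs precisely the reordering you are missing: it moves late receive steps of processes other than~$p$ ahead of $p$'s return step, which is legitimate because they commute with $p$'s steps.

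If you want to keep your direct construction, you must \emph{build} the extension rather than take an arbitrary one. For example, with $T[j_q]\in(3/4,1)$ and $T[b]=7/4$:
\begin{itemize}
\item Let each process of $Q\setminus\{p\}$ receive all pending messages in a step at a time in $(7/4,\,1+\min_q T[j_q])$.
\item Keep these processes taking such steps in rounds of length at most $1/4$.
\item Let $p$ take its next step only at a time in $(2,\,11/4)$, and include $p$ in the rounds from then on.
\end{itemize}
Then every message sent during the operation is timely. Since the operation has not returned by $\sigma[b]$, it returns, if at all, after time~$2$, which gives the contradiction.
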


Its proof is very close to that of \Cref{thm:two-delay-completion}; we defer it to \Cref{sec:completion-lemma-proofs:byz}.

\subsubsection{The Bounds}

The latency lower bounds for systems with crash failures (\Cref{sec:crash:lower-bounds}) also apply to systems with Byzantine processes, since any implementation that tolerates Byzantine processes also tolerates crashes.
Thus, \Cref{thm:crash-two-delay-lower-bound} immediately implies the following.

\begin{observation}\label{thm:byz-lower-bound-f1}
  In a system with Byzantine processes, any implementation of~\ts{} has worst-case message delay at least two.
\end{observation}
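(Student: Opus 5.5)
The plan is a reduction from the Byzantine setting to the crash setting, followed by a direct application of \Cref{thm:crash-two-delay-lower-bound}. Let $\alg$ be an implementation of~\ts{} in a system with Byzantine processes (at most $f$ faulty). The goal is to show that $\alg$ is also an implementation of~\ts{} in the system with crash failures and the same $n$ and~$f$, with the same worst-case message delay. Then \Cref{thm:crash-two-delay-lower-bound} gives $\latency{\alg}\geq 2$ directly.

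\textbf{Step 1: crash schedules are Byzantine schedules.} Let $S$ be a schedule of~$\alg$ in the crash model. Its correct processes (at least $n-f$ of them) take infinitely many steps according to~$\alg$. Every other process takes finitely many steps, all according to~$\alg$. A Byzantine process may "crash, deviate arbitrarily, or both", so following $\alg$ and then stopping is an admissible Byzantine behavior. The message-delivery requirements are identical in the two models. Hence $S$ is a schedule of~$\alg$ in the Byzantine model, with the same set of correct processes.

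\textbf{Step 2: transfer the implementation conditions.} By \Cref{thm:byz-timestamp-characterization}, the history $H$ of~$S$ satisfies Byzantine ordering, Byzantine uniqueness, and liveness, but only for operations of correct processes. I need the crash versions (\Cref{thm:timestamp-characterization}), which constrain all complete operations, including those of processes that later crash. This is the main obstacle. A process crashing in~$S$ may have completed operations, and the Byzantine guarantees say nothing about them.

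To handle this, use the fact that a crash-faulty process behaves correctly before it crashes. Given two complete operations $O_1,O_2$ in~$H$, truncate $S$ just after the later of their return steps. Extend the prefix to a schedule in which every process that takes a step in the prefix is correct. This is possible as long as the extension respects the bound on faults. If more than $f$ processes would need to be declared correct beyond those already correct, there is no problem: declaring extra processes correct only relaxes the constraint. Declaring a process correct and continuing it per~$\alg$ is always allowed, and the other processes may be taken as faulty, in which case they are simply slow.

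In the resulting schedule, the invokers of $O_1$ and $O_2$ are correct. Byzantine ordering and uniqueness therefore apply to this pair, and the pair has the same return values and precedence in the prefix as in~$S$. Liveness transfers directly, since correct processes coincide. So $\alg$ implements~\ts{} under crash failures.

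\textbf{Step 3: latency is unchanged.} Runs in the crash model are runs in the Byzantine model, because they are built on the same schedules with the same timing constraints. Hence $\latency{\alg}$ measured in the crash model is at most its value in the Byzantine model. Applying \Cref{thm:crash-two-delay-lower-bound} in the crash model then yields $\latency{\alg}\geq 2$ in the Byzantine model.
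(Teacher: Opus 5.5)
Your proof is correct and uses the same reduction as the paper, which states this observation as an immediate consequence of \Cref{thm:crash-two-delay-lower-bound} on the grounds that any implementation tolerating Byzantine processes also tolerates crashes; your Steps 1--3 spell that reduction out. Your Step~2 handles the one point the paper leaves implicit: Byzantine linearizability says nothing about operations of processes that later crash, whereas crash linearizability constrains them. The cleanest way to phrase the extension is to let all $n$ processes continue according to~$\alg$ and be correct, which makes your somewhat tangled remarks about the fault bound unnecessary.
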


Similarly, \Cref{thm:crash-lower-bound-fgeq2} immediately implies the following.

\begin{observation}\label{thm:byz-lower-bound-fgeq2}
  In a system with Byzantine processes, if $f\geq 2$, then any implementation of~\ts{} has worst-case message delay at least three.
\end{observation}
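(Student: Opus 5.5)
This follows from \Cref{thm:crash-lower-bound-fgeq2} by a simulation argument: a Byzantine-tolerant implementation is in particular a crash-tolerant one. Let $\alg$ implement~\ts{} in a system with Byzantine processes and $f\geq 2$. I would first show that $\alg$ also implements~\ts{} in a system with crash failures, and then compare the two latency measures.

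For the first step, fix a schedule~$S$ of~$\alg$ in the crash model. Every process in~$S$ follows~$\alg$, possibly stopping after finitely many steps, and at least $n-f$ processes take infinitely many steps. Such behavior is allowed for faulty processes in the Byzantine model, since a faulty process ``may crash, deviate arbitrarily from~$\alg$, or both.'' The message-delivery requirements are identical in the two models, so $S$ is also a schedule of~$\alg$ in the Byzantine model, with the same set of correct processes.

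By \Cref{thm:byz-timestamp-characterization}, $S$ satisfies~(\ref{cond:byz-timestamp-ordering}), (\ref{cond:byz-timestamp-uniqueness}) and~(\ref{cond:byz-timestamp-liveness}). These give ordering and uniqueness only for operations of correct processes, whereas~(\ref{cond:timestamp-ordering}) and~(\ref{cond:timestamp-uniqueness}) quantify over all complete operations. This gap is the one point needing care. I would close it as follows. Given complete operations in~$S$, some of which belong to processes that crash, consider the alternative crash-model schedule~$S'$. In~$S'$, these processes follow~$S$ through their last step and then continue forever, receiving all pending messages later, while other processes keep their behavior.

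If the number of crashed processes that must be revived would exceed what correctness in~$S'$ permits, then fewer processes are faulty in~$S'$, which is always allowed. The catch is that those revived processes might then need to take infinitely many steps. That is fine, because any number of correct processes is permitted; only the faulty count is bounded by~$f$. Making every process correct in~$S'$ is therefore legal. The history up to any finite point is unchanged, so precedence and return values of the finitely many operations involved are preserved. Applying~(\ref{cond:byz-timestamp-ordering}) and~(\ref{cond:byz-timestamp-uniqueness}) in~$S'$ then yields~(\ref{cond:timestamp-ordering}) and~(\ref{cond:timestamp-uniqueness}) in~$S$.

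Concretely, the extension first delivers all in-transit messages, then schedules all processes round-robin with fair delivery, and does so after a finite prefix of~$S$ containing the relevant invocations and returns. Together with~(\ref{cond:byz-timestamp-liveness}), \Cref{thm:timestamp-characterization} then shows that $\alg$ implements~\ts{} under crash failures.

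For the second step, every run of~$\alg$ in the crash model is a run in the Byzantine model. The notions ``correct'', ``timely during~$O$'', and latency agree in both models. Hence the supremum in \Cref{def:worst-case-message-delay} taken over crash-model runs is at most the Byzantine one. \Cref{thm:crash-lower-bound-fgeq2} gives a crash-model value of at least three, so $\latency{\alg}\geq 3$. The main obstacle is the correct-versus-all-operations gap in the first step, handled by the revival extension described above.
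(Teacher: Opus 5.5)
Your proposal is correct and follows the paper's route: the paper simply notes that any implementation tolerating Byzantine processes also tolerates crashes, so \Cref{thm:crash-lower-bound-fgeq2} applies, and your two steps spell this out (every crash-model schedule and run is a Byzantine-model one with the same correct processes, and so the supremum in \Cref{def:worst-case-message-delay} can only grow). Your revival extension also makes explicit a point the paper leaves implicit: crash linearizability constrains operations of processes that later crash. You handle this correctly by extending a finite prefix of the schedule to one in which every process is correct.
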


When $f=1$, the best lower bound we obtain for $n>4$ is two delays (from \Cref{thm:byz-lower-bound-f1}).
However, in the boundary case $n=4$, we prove a stronger lower bound of three delays.

\begin{theorem}\label{thm:byz-lower-bound-n4f1}
  In a system with Byzantine processes, if $n=4$ and $f=1$, then any implementation of~\ts{} has worst-case message delay at least three.
\end{theorem}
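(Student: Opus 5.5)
The plan is to adapt the cyclic chain argument of \Cref{thm:crash-lower-bound-fgeq2} to $n=4$, $f=1$, using the single Byzantine process in place of the second crash. Suppose toward a contradiction that some implementation has worst-case message delay below three. By \Cref{thm:loquacious-transformation-byz} and \Cref{thm:integer-worst-case-delay}, there is a loquacious implementation~$\alg$ with $\latency{\alg}\le2$. By \Cref{thm:byz-timestamp-characterization}, $\alg$ satisfies ordering~(\ref{cond:byz-timestamp-ordering}) for operations of correct processes. Index the processes by $\mathbb{Z}_4$.

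For each $i$ I would build a prefix~$\sigma^i$ with the same shape as before. It has background invocations, then a complete operation by $p_{i-1}$ returning~$a^i$, then a complete operation by $p_i$ returning~$b^i$. Each round trip uses a quorum $Q$ of size $n-f=3$ containing the invoker. In the crash proof the helper sets had to avoid $p_{i-1},p_i,p_{i+1}$, and with $n=4$ there is room for only one such helper. So I would take $Q_i=\{p_i,p_{i+2},p_{i+3}\}$; note that $p_{i+3}=p_{i-1}$ is the previous invoker. The crash proof needed $p_{i-1}\notin H_i$ because $p_{i-1}$'s local state differs between $\sigma^i$, where it has just completed its own operation, and $\sigma^{i+1}$, where it is only a background invoker. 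Here I would let the Byzantine process absorb that discrepancy. In the prefix where $p_{i-1}$ has the ``wrong'' history, it is declared Byzantine and sends to $p_i$ exactly the reply it sends in the other prefix. This is legal because without signatures nothing constrains its messages. Receiver sets for the background messages would be chosen, as in~\eqref{eq:lb-receiver-def}, so that the one honest helper $p_{i+2}$ has the same history in both prefixes through its receipt of~$m_i$.

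The two relations are then obtained as before, with one change. The inequality $a^i<b^i$ needs $p_{i-1}$ and $p_i$ to be correct in~$\sigma^i$. The equality $b^i=a^{i+1}$ needs $p_i$'s three steps (invoke, self-receive, aggregate) to be identical in $\sigma^i$ and $\sigma^{i+1}$, which \Cref{claim:crash-lb-initial-invocations}-style arguments give. Return of each operation is obtained from \Cref{thm:two-delay-completion-byz}. Its new hypothesis requires every process in $Q$ to follow~$\alg$ through the return, so in prefixes where a member of $Q$ is Byzantine I would not apply the lemma directly. Instead I would argue that the invoker's view coincides with its view in a prefix where all of $Q$ is faithful, so it returns at the same step with the same value.

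The main obstacle is the bookkeeping of who is Byzantine in each~$\sigma^i$. At most one process may be faulty per prefix, and it must be neither $p_{i-1}$ nor $p_i$, since ordering is used on their operations. It must also not break the faithful-view argument needed for the other invocation. The naive choice of making $p_{i-1}$ Byzantine in $\sigma^{i+1}$ collides with $p_{i-1}=p_{i+3}$ serving as an honest helper of $p_{i+1}$ there. I would therefore try to decouple the roles. One option is to insert an intermediate prefix $\tau^i$ between $\sigma^i$ and $\sigma^{i+1}$, in which $p_i$'s operation is the only one that must return. The chain then has the form $\sigma^i\sim\tau^i\sim\sigma^{i+1}$ for $p_i$, and each link changes only one process's behavior, played by the Byzantine process. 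If that fails, I would fall back on letting the Byzantine process be the background invoker $p_{i+1}$ or $p_{i+2}$ in alternate prefixes and recheck the receiver-set invariants~\eqref{eq:lb-receiver-inv} with $f$ replaced by~$2$ for the purposes of the construction.
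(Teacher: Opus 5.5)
\paragraph{The core construction is missing.} You never actually fix the prefixes $\sigma^i$, and you leave open who is Byzantine in each and what it sends. That is where the proof has to happen, and the plan you do state cannot be completed as written. In $\sigma^i$, process $p_{i-1}$ must be correct and has completed its own operation; in $\sigma^{i+1}$ it has merely invoked in the background. So your rule makes $p_{i-1}=p_{(i+1)+2}$ the faulty process of $\sigma^{i+1}$, i.e., the faulty process of $\sigma^j$ is $p_{j+2}$. The collision you worry about is not the real issue, since a Byzantine process may send different messages to different recipients. The real issues are these:
\begin{itemize}
  \item \emph{No faithful quorum.} $p_{j+2}$ lies in both quorums used in $\sigma^j$, namely $Q_{j-1}=\{p_{j-1},p_{j+1},p_{j+2}\}$ and $Q_j=\{p_j,p_{j+2},p_{j-1}\}$. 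Hence \Cref{thm:two-delay-completion-byz} applies to no round trip of any $\sigma^j$, and the faithful reference prefixes you appeal to do not exist in your construction.
  \item \emph{Circular definitions.} For $b^{j-1}=a^j$, the faulty $p_{j+2}$ in $\sigma^j$ must send $p_{j-1}$ exactly the reply that $p_{j-2}=p_{j+2}$ sends, correctly, in $\sigma^{j-1}$ after completing its own operation. That completion consumed the faulty reply to the first invoker of $\sigma^{j-1}$, which copies a post-completion reply from $\sigma^{j-2}$, and so on around $\mathbb{Z}_4$ back to $\sigma^j$. This demands a fixed point of $\alg$'s message function that need not exist. For example, suppose every message of $\alg$ carries a Lamport-style counter, one more than every counter its sender has previously sent or received; this changes neither correctness nor latency. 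Then the counter $X_j$ of the faulty reply to $p_{j-1}$ in $\sigma^j$ satisfies $X_j\geq X_{j-1}+2$ for every $j\in\mathbb{Z}_4$, which is impossible.
  \item \emph{The fallback fails.} Redoing the crash bookkeeping with $f$ replaced by~$2$ yields round trips with only $n-2=2$ participants. The completion lemma for $f=1$ needs $n-f=3$.
\end{itemize}

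\paragraph{The missing idea.} Build all-correct reference executions first, and let the single Byzantine process replay messages from those, so that nothing is defined in terms of another principal prefix. Your $\tau^i$ gestures at this but is not worked out, and once references exist the four-cycle is unnecessary. The paper uses two auxiliary prefixes:
\begin{itemize}
  \item In $\alpha_0$, $p_1$ invokes and only $p_2$ receives its broadcast; then $p_0$ completes a round trip with $\{p_0,p_2,p_3\}$ and returns $r_0$.
  \item In $\alpha_1$, symmetrically, $p_0$'s broadcast reaches only $p_3$ before $p_1$ round-trips with $\{p_1,p_2,p_3\}$ and returns $r_1$.
\end{itemize}
Everyone follows $\alg$ in these prefixes, so \Cref{thm:two-delay-completion-byz} applies directly. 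In the principal prefix $\sigma_{01}$, $p_0$'s round trip comes first and then $p_1$'s. Process $p_3$ receives $m_0$ and then $m_1$, so its honest replies coincide with those in $\alpha_0$ and $\alpha_1$. Process $p_2$ is the lone Byzantine one and replays its $\alpha_0$ reply to $p_0$ and its $\alpha_1$ reply to $p_1$. The prefix $\sigma_{10}$ swaps the roles of $p_2$ and $p_3$. Each invoker's view in both principal prefixes equals its view in its auxiliary prefix, so ordering~(\ref{cond:byz-timestamp-ordering}) gives $r_0<r_1$ and $r_1<r_0$. What makes this work is the asymmetry in which helper has already seen the other invoker's broadcast in each reference execution: it lets one honest helper serve both orders while the other helper lies.
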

\begin{figure}
  \centering
  \begin{subfigure}{\textwidth}
    \centering
    \begin{tikzpicture}[x=1cm,y=1cm]
  \tikzset{
    process line/.style={draw=black!35, line width=0.55pt, densely dotted},
    proc label/.style={anchor=east, font=\scriptsize},
    operation/.style={line width=0.8pt},
    operation tick/.style={line width=0.8pt},
    message/.style={-{Latex}, line width=0.55pt, shorten >=1.4pt},
    byzantine message/.style={-{Latex}, line width=0.55pt, shorten >=1.4pt, decorate, decoration={snake, amplitude=.35mm, segment length=2mm}},
    event/.style={circle, draw=none, inner sep=1.45pt},
    return label/.style={anchor=west, font=\scriptsize, xshift=2pt},
  }
  \node[proc label] at (-0.18,-1.120) {\phantom{$\byzprocess{p_2}$}};
\node[font=\small, anchor=north] at (7.350,-2.260) {Auxiliary prefix $\alpha_0$};
\node[proc label, text=\pzerocolor] at (-0.18,-0.000) {$p_0$};
\draw[process line] (0,-0.000) -- (14.700,-0.000);
\node[proc label, text=\ponecolor] at (-0.18,-0.560) {$p_1$};
\draw[process line] (0,-0.560) -- (14.700,-0.560);
\node[proc label, text=black] at (-0.18,-1.120) {$p_2$};
\draw[process line] (0,-1.120) -- (14.700,-1.120);
\node[proc label, text=black] at (-0.18,-1.680) {$p_3$};
\draw[process line] (0,-1.680) -- (14.700,-1.680);
\draw[operation, draw=\ponecolor] (1.050,-0.560) -- (14.700,-0.560);
\draw[operation tick, draw=\ponecolor] (1.050,-0.710) -- (1.050,-0.410);
\draw[message, draw=\ponecolor] (1.050,-0.560) -- (3.150,-1.120);
\node[event, fill=\ponecolor] at (3.150,-1.120) {};
\node[event, fill=\ponecolor] at (1.050,-0.560) {};
\draw[operation, draw=\pzerocolor] (8.400,-0.000) -- (12.600,-0.000);
\draw[operation tick, draw=\pzerocolor] (8.400,-0.150) -- (8.400,0.150);
\draw[message, draw=\pzerocolor] (8.400,-0.000) to[out=30,in=150,looseness=.65] (10.395,-0.000);
\node[event, fill=\pzerocolor] at (10.395,-0.000) {};
\draw[message, draw=\pzerocolor] (8.400,-0.000) -- (10.500,-1.120);
\node[event, fill=\pzerocolor] at (10.500,-1.120) {};
\draw[message, draw=\pzerocolor] (8.400,-0.000) -- (10.605,-1.680);
\node[event, fill=\pzerocolor] at (10.605,-1.680) {};
\draw[message, draw=\pzerocolor] (10.395,-0.000) to[out=30,in=150,looseness=.65] (12.600,-0.000);
\draw[message, draw=\pzerocolor] (10.500,-1.120) -- (12.600,-0.000);
\draw[message, draw=\pzerocolor] (10.605,-1.680) -- (12.600,-0.000);
\draw[operation tick, draw=\pzerocolor] (12.600,-0.150) -- (12.600,0.150);
\node[event, fill=\pzerocolor] at (12.600,-0.000) {};
\node[return label, text=\pzerocolor] at (12.600,-0.000) {$r_0$};
\node[event, fill=\pzerocolor] at (8.400,-0.000) {};
\node[font=\small, anchor=north] at (7.350,-6.090) {Auxiliary prefix $\alpha_1$};
\node[proc label, text=\pzerocolor] at (-0.18,-3.830) {$p_0$};
\draw[process line] (0,-3.830) -- (14.700,-3.830);
\node[proc label, text=\ponecolor] at (-0.18,-4.390) {$p_1$};
\draw[process line] (0,-4.390) -- (14.700,-4.390);
\node[proc label, text=black] at (-0.18,-4.950) {$p_2$};
\draw[process line] (0,-4.950) -- (14.700,-4.950);
\node[proc label, text=black] at (-0.18,-5.510) {$p_3$};
\draw[process line] (0,-5.510) -- (14.700,-5.510);
\draw[operation, draw=\pzerocolor] (1.050,-3.830) -- (14.700,-3.830);
\draw[operation tick, draw=\pzerocolor] (1.050,-3.980) -- (1.050,-3.680);
\draw[message, draw=\pzerocolor] (1.050,-3.830) -- (3.255,-5.510);
\node[event, fill=\pzerocolor] at (3.255,-5.510) {};
\node[event, fill=\pzerocolor] at (1.050,-3.830) {};
\draw[operation, draw=\ponecolor] (8.400,-4.390) -- (12.600,-4.390);
\draw[operation tick, draw=\ponecolor] (8.400,-4.540) -- (8.400,-4.240);
\draw[message, draw=\ponecolor] (8.400,-4.390) to[out=30,in=150,looseness=.65] (10.395,-4.390);
\node[event, fill=\ponecolor] at (10.395,-4.390) {};
\draw[message, draw=\ponecolor] (8.400,-4.390) -- (10.500,-4.950);
\node[event, fill=\ponecolor] at (10.500,-4.950) {};
\draw[message, draw=\ponecolor] (8.400,-4.390) -- (10.605,-5.510);
\node[event, fill=\ponecolor] at (10.605,-5.510) {};
\draw[message, draw=\ponecolor] (10.395,-4.390) to[out=30,in=150,looseness=.65] (12.600,-4.390);
\draw[message, draw=\ponecolor] (10.500,-4.950) -- (12.600,-4.390);
\draw[message, draw=\ponecolor] (10.605,-5.510) -- (12.600,-4.390);
\draw[operation tick, draw=\ponecolor] (12.600,-4.540) -- (12.600,-4.240);
\node[event, fill=\ponecolor] at (12.600,-4.390) {};
\node[return label, text=\ponecolor] at (12.600,-4.390) {$r_1$};
\node[event, fill=\ponecolor] at (8.400,-4.390) {};
\end{tikzpicture}
    \caption{Auxiliary schedule prefixes, in which all processes follow algorithm~$\alg$.}
    \label{fig:byz-n4f1-auxiliary}
  \end{subfigure}

  \bigskip

  \begin{subfigure}{\textwidth}
    \centering
    \begin{tikzpicture}[x=1cm,y=1cm]
  \tikzset{
    process line/.style={draw=black!35, line width=0.55pt, densely dotted},
    proc label/.style={anchor=east, font=\scriptsize},
    operation/.style={line width=0.8pt},
    operation tick/.style={line width=0.8pt},
    message/.style={-{Latex}, line width=0.55pt, shorten >=1.4pt},
    byzantine message/.style={-{Latex}, line width=0.55pt, shorten >=1.4pt, decorate, decoration={snake, amplitude=.35mm, segment length=2mm}},
    event/.style={circle, draw=none, inner sep=1.45pt},
    return label/.style={anchor=west, font=\scriptsize, xshift=2pt},
  }
  \node[proc label] at (-0.18,-1.120) {\phantom{$\byzprocess{p_2}$}};
\node[font=\small, anchor=north] at (7.350,-2.260) {Principal prefix $\sigma_{01}$: $r_0<r_1$};
\node[proc label, text=\pzerocolor] at (-0.18,-0.000) {$p_0$};
\draw[process line] (0,-0.000) -- (14.700,-0.000);
\node[proc label, text=\ponecolor] at (-0.18,-0.560) {$p_1$};
\draw[process line] (0,-0.560) -- (14.700,-0.560);
\node[proc label, text=\byzantinecolor] at (-0.18,-1.120) {$\byzprocess{p_2}$};
\draw[process line, draw=\byzantinecolor] (0,-1.120) -- (14.700,-1.120);
\node[proc label, text=black] at (-0.18,-1.680) {$p_3$};
\draw[process line] (0,-1.680) -- (14.700,-1.680);
\draw[operation, draw=\pzerocolor] (1.050,-0.000) -- (5.250,-0.000);
\draw[operation tick, draw=\pzerocolor] (1.050,-0.150) -- (1.050,0.150);
\draw[message, draw=\pzerocolor] (1.050,-0.000) to[out=30,in=150,looseness=.65] (3.045,-0.000);
\node[event, fill=\pzerocolor] at (3.045,-0.000) {};
\draw[message, draw=\pzerocolor] (1.050,-0.000) -- (3.150,-1.120);
\node[event, fill=\pzerocolor] at (3.150,-1.120) {};
\draw[message, draw=\pzerocolor] (1.050,-0.000) -- (3.255,-1.680);
\node[event, fill=\pzerocolor] at (3.255,-1.680) {};
\draw[message, draw=\pzerocolor] (3.045,-0.000) to[out=30,in=150,looseness=.65] (5.250,-0.000);
\draw[byzantine message, draw=\pzerocolor] (3.150,-1.120) -- (5.250,-0.000);
\draw[message, draw=\pzerocolor] (3.255,-1.680) -- (5.250,-0.000);
\draw[operation tick, draw=\pzerocolor] (5.250,-0.150) -- (5.250,0.150);
\node[event, fill=\pzerocolor] at (5.250,-0.000) {};
\node[return label, text=\pzerocolor] at (5.250,-0.000) {$r_0$};
\node[event, fill=\pzerocolor] at (1.050,-0.000) {};
\draw[operation, draw=\ponecolor] (8.400,-0.560) -- (12.600,-0.560);
\draw[operation tick, draw=\ponecolor] (8.400,-0.710) -- (8.400,-0.410);
\draw[message, draw=\ponecolor] (8.400,-0.560) to[out=30,in=150,looseness=.65] (10.395,-0.560);
\node[event, fill=\ponecolor] at (10.395,-0.560) {};
\draw[message, draw=\ponecolor] (8.400,-0.560) -- (10.500,-1.120);
\node[event, fill=\ponecolor] at (10.500,-1.120) {};
\draw[message, draw=\ponecolor] (8.400,-0.560) -- (10.605,-1.680);
\node[event, fill=\ponecolor] at (10.605,-1.680) {};
\draw[message, draw=\ponecolor] (10.395,-0.560) to[out=30,in=150,looseness=.65] (12.600,-0.560);
\draw[byzantine message, draw=\ponecolor] (10.500,-1.120) -- (12.600,-0.560);
\draw[message, draw=\ponecolor] (10.605,-1.680) -- (12.600,-0.560);
\draw[operation tick, draw=\ponecolor] (12.600,-0.710) -- (12.600,-0.410);
\node[event, fill=\ponecolor] at (12.600,-0.560) {};
\node[return label, text=\ponecolor] at (12.600,-0.560) {$r_1$};
\node[event, fill=\ponecolor] at (8.400,-0.560) {};
\node[font=\small, anchor=north] at (7.350,-6.090) {Principal prefix $\sigma_{10}$: $r_1<r_0$};
\node[proc label, text=\pzerocolor] at (-0.18,-3.830) {$p_0$};
\draw[process line] (0,-3.830) -- (14.700,-3.830);
\node[proc label, text=\ponecolor] at (-0.18,-4.390) {$p_1$};
\draw[process line] (0,-4.390) -- (14.700,-4.390);
\node[proc label, text=black] at (-0.18,-4.950) {$p_2$};
\draw[process line] (0,-4.950) -- (14.700,-4.950);
\node[proc label, text=\byzantinecolor] at (-0.18,-5.510) {$\byzprocess{p_3}$};
\draw[process line, draw=\byzantinecolor] (0,-5.510) -- (14.700,-5.510);
\draw[operation, draw=\ponecolor] (1.050,-4.390) -- (5.250,-4.390);
\draw[operation tick, draw=\ponecolor] (1.050,-4.540) -- (1.050,-4.240);
\draw[message, draw=\ponecolor] (1.050,-4.390) to[out=30,in=150,looseness=.65] (3.045,-4.390);
\node[event, fill=\ponecolor] at (3.045,-4.390) {};
\draw[message, draw=\ponecolor] (1.050,-4.390) -- (3.150,-4.950);
\node[event, fill=\ponecolor] at (3.150,-4.950) {};
\draw[message, draw=\ponecolor] (1.050,-4.390) -- (3.255,-5.510);
\node[event, fill=\ponecolor] at (3.255,-5.510) {};
\draw[message, draw=\ponecolor] (3.045,-4.390) to[out=30,in=150,looseness=.65] (5.250,-4.390);
\draw[message, draw=\ponecolor] (3.150,-4.950) -- (5.250,-4.390);
\draw[byzantine message, draw=\ponecolor] (3.255,-5.510) -- (5.250,-4.390);
\draw[operation tick, draw=\ponecolor] (5.250,-4.540) -- (5.250,-4.240);
\node[event, fill=\ponecolor] at (5.250,-4.390) {};
\node[return label, text=\ponecolor] at (5.250,-4.390) {$r_1$};
\node[event, fill=\ponecolor] at (1.050,-4.390) {};
\draw[operation, draw=\pzerocolor] (8.400,-3.830) -- (12.600,-3.830);
\draw[operation tick, draw=\pzerocolor] (8.400,-3.980) -- (8.400,-3.680);
\draw[message, draw=\pzerocolor] (8.400,-3.830) to[out=30,in=150,looseness=.65] (10.395,-3.830);
\node[event, fill=\pzerocolor] at (10.395,-3.830) {};
\draw[message, draw=\pzerocolor] (8.400,-3.830) -- (10.500,-4.950);
\node[event, fill=\pzerocolor] at (10.500,-4.950) {};
\draw[message, draw=\pzerocolor] (8.400,-3.830) -- (10.605,-5.510);
\node[event, fill=\pzerocolor] at (10.605,-5.510) {};
\draw[message, draw=\pzerocolor] (10.395,-3.830) to[out=30,in=150,looseness=.65] (12.600,-3.830);
\draw[message, draw=\pzerocolor] (10.500,-4.950) -- (12.600,-3.830);
\draw[byzantine message, draw=\pzerocolor] (10.605,-5.510) -- (12.600,-3.830);
\draw[operation tick, draw=\pzerocolor] (12.600,-3.980) -- (12.600,-3.680);
\node[event, fill=\pzerocolor] at (12.600,-3.830) {};
\node[return label, text=\pzerocolor] at (12.600,-3.830) {$r_0$};
\node[event, fill=\pzerocolor] at (8.400,-3.830) {};
\end{tikzpicture}
    \caption{Principal schedule prefixes, in which the process marked \textcolor{\byzantinecolor}{Byz} deviates from algorithm~$\alg$.}
    \label{fig:byz-n4f1-principal}
  \end{subfigure}
  \caption{\textbf{Proof of \Cref{thm:byz-lower-bound-n4f1}.}
    In a system with $n=4$ processes, at most $f=1$ of which may be Byzantine, suppose toward a contradiction that some loquacious implementation~$\alg$ of~\ts{} has $\latency{\alg}\leq2$.
    Each dot marks a step and is colored by the invocation it serves, while each arrow shows a message received within the prefix.
    A horizontal segment tracks a \nextOpParens{} operation: \opinvocationmark{} marks its invocation step, and \opreturnmark{} labeled~$r$ indicates that it has returned~$r$ by that step;
    without a return marker, the figure makes no claim about whether the operation returns within the prefix.
    A solid arrow~\solidmessagearrow{} shows a message sent according to~$\alg$; a wavy arrow~\wavymessagearrow{} in a principal prefix shows a Byzantine message with the same content as the corresponding message in an auxiliary prefix.
    By \Cref{thm:two-delay-completion-byz}, the operations of~$p_0$ in~$\alpha_0$ and~$p_1$ in~$\alpha_1$ return by the final depicted steps; let $\textcolor{\pzerocolor}{r_0}$ and~$\textcolor{\ponecolor}{r_1}$ denote their respective return values.
    Process~$p_0$ cannot distinguish either principal prefix from~$\alpha_0$, and $p_1$ cannot distinguish either from~$\alpha_1$.
    Thus, in both principal prefixes, $p_0$ returns~$\textcolor{\pzerocolor}{r_0}$ and $p_1$ returns~$\textcolor{\ponecolor}{r_1}$.
    Condition~(\ref{cond:byz-timestamp-ordering}) then gives $\textcolor{\pzerocolor}{r_0}<\textcolor{\ponecolor}{r_1}$ in~$\sigma_{01}$ and $\textcolor{\ponecolor}{r_1}<\textcolor{\pzerocolor}{r_0}$ in~$\sigma_{10}$, a contradiction.
  }
  \label{fig:byz-n4f1}
\end{figure}

\begin{proof}
  \proofqed{thm:byz-lower-bound-n4f1}
  Consider a system with $n=4$ processes named $\{p_0,p_1,p_2,p_3\}$, at most $f=1$ of which may be Byzantine.
  Suppose, toward a contradiction, that in this system there exists an implementation of~\ts{} with worst-case message delay less than three.
  By \Cref{thm:loquacious-transformation-byz}, there is a loquacious implementation~$\alg$ of~\ts{} with $\latency{\alg}<3$.
  Since finite worst-case message delays are integers (\Cref{thm:integer-worst-case-delay}), $\latency{\alg}\leq2$.
  By \Cref{thm:byz-timestamp-characterization}, $\alg$ satisfies ordering~(\ref{cond:byz-timestamp-ordering}).

  We construct four schedule prefixes of~$\alg$: two \emph{auxiliary} prefixes $\alpha_0$ and~$\alpha_1$ as shown in \Cref{fig:byz-n4f1-auxiliary}, and two \emph{principal} prefixes $\sigma_{01}$ and~$\sigma_{10}$ as shown in \Cref{fig:byz-n4f1-principal}.%

  First consider the auxiliary prefixes $\alpha_0$ and~$\alpha_1$, shown in \Cref{fig:byz-n4f1-auxiliary}.
  Schedule prefix~$\alpha_0$ begins with process~$p_1$ invoking an operation and~$p_2$ receiving its broadcast.
  Next, $p_0$ invokes an operation and broadcasts a message.
  Processes $p_0,p_2,p_3$ each receive this message and broadcast a reply, in three separate steps, after which $p_0$ receives all three replies in a single step.
  All processes follow~$\alg$, so by \Cref{thm:two-delay-completion-byz} (with $p=p_0$ and $Q=\{p_0,p_2,p_3\}$), $p_0$'s operation has returned by the final step; let~$r_0$ be the value returned.
  Similarly, in~$\alpha_1$, process~$p_0$ invokes an operation and~$p_3$ receives its broadcast before~$p_1$ invokes an operation and completes an analogous round trip with $p_1,p_2,p_3$.
  Again, by \Cref{thm:two-delay-completion-byz} (with $p=p_1$ and $Q=\{p_1,p_2,p_3\}$), $p_1$'s operation has returned by the final step; let~$r_1$ be the value returned.

  Next, construct~$\sigma_{01}$ as shown in \Cref{fig:byz-n4f1-principal}: Process~$p_0$ completes a round trip with $p_0,p_2,p_3$, followed by~$p_1$ with $p_1,p_2,p_3$.
  Processes $p_0,p_1,p_3$ follow~$\alg$, whereas $p_2$ deviates from~$\alg$ by sending its reply from~$\alpha_0$ to~$p_0$ and its reply from~$\alpha_1$ to~$p_1$.%
  \footnote{%
    Note that $p_2$ may not be able to do this if $\alg$ used digital signatures.
    In that case, $p_2$'s reply to~$p_0$ in~$\alpha_0$ may include data signed by~$p_1$, received in~$p_1$'s invocation broadcast.
    Process~$p_2$ could not reproduce that reply in~$\sigma_{01}$: When $p_2$ must send it, $p_1$ has not yet invoked, so the signature does not exist and~$p_2$ cannot forge it.
    The same applies to~$p_3$ in~$\sigma_{10}$.%
  }
  Process~$p_3$ receives $p_0$'s invocation message and then~$p_1$'s, so its state at each reply is the one it has in~$\alpha_0$ and in~$\alpha_1$; it therefore sends those replies to~$p_0$ and~$p_1$, respectively.
  All remaining messages are delayed until after~$\sigma_{01}$.
  These steps form a schedule prefix of~$\alg$, which extends to a schedule in which $p_2$ is the only Byzantine process.
  Prefix~$\sigma_{01}$ is indistinguishable from~$\alpha_0$ to~$p_0$ and from~$\alpha_1$ to~$p_1$, so their operations return $r_0$ and~$r_1$, respectively.
  The former returns before the latter is invoked, so applying condition~(\ref{cond:byz-timestamp-ordering}) to the extended schedule yields $r_0<r_1$.

  Symmetrically, construct~$\sigma_{10}$ as shown in \Cref{fig:byz-n4f1-principal}, with $p_1$'s operation first and $p_0$'s second.
  Here, processes $p_0,p_1,p_2$ follow~$\alg$, whereas $p_3$ deviates from~$\alg$ by sending its reply from~$\alpha_1$ to~$p_1$ and its reply from~$\alpha_0$ to~$p_0$.
  As above, $\sigma_{10}$ extends to a schedule of~$\alg$ in which $p_3$ is the only Byzantine process.
  It is indistinguishable from~$\alpha_1$ to~$p_1$ and from~$\alpha_0$ to~$p_0$, so the operations return $r_1$ and~$r_0$, respectively.
  Applying condition~(\ref{cond:byz-timestamp-ordering}) to the extended schedule yields $r_1<r_0$, contradicting $r_0<r_1$.
\end{proof}

\subsection{Summary and Open Question}
\Cref{thm:byzantine-resilience-no-signatures} and \Cref{thm:byz-upper-bound-ngt3f} together give the following necessary and sufficient condition for implementing~\ts{} in a system with Byzantine processes.

\begin{corollary}
  In a system with Byzantine processes, there exists an implementation of~\ts{} if and only if $n>3f$.
\end{corollary}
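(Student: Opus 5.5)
This corollary follows by combining two results already established, one for each direction.

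For the ``only if'' direction, I would invoke \Cref{thm:byzantine-resilience-no-signatures}, which states that an implementation of~\ts{} exists in a system with Byzantine processes only if $n>3f$. That observation was obtained from the partitioning argument in \Cref{thm:byzantine-resilience-with-signatures}. Its construction never uses signatures to help the adversary: the Byzantine set~$B$ only replays its own honest behaviour from~$\sigma_A$ and then from~$\sigma_C$. Moreover, removing signatures only weakens the algorithm designer. So the argument transfers verbatim.

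For the ``if'' direction, I would invoke \Cref{thm:byz-upper-bound-ngt3f}. When $n>3f$, \Cref{alg:byz-three-delays} implements~\ts{} without digital signatures, with worst-case message delay three. In particular, an implementation exists.

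There is no substantive obstacle here. The only point to check is that the hypotheses line up exactly: both results are stated for the same model (Byzantine processes, no signatures, $n\ge2$, $1\le f<n$). Once that is confirmed, the two implications together give the stated equivalence.
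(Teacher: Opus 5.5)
Your proposal is correct and matches the paper's argument: the paper obtains this corollary by combining \Cref{thm:byzantine-resilience-no-signatures} for the ``only if'' direction with \Cref{thm:byz-upper-bound-ngt3f} (\Cref{alg:byz-three-delays}) for the ``if'' direction. Your reason for why the impossibility carries over is also the right one: an algorithm that uses no signatures is also an algorithm in the signature model, so an implementation without signatures would contradict \Cref{thm:byzantine-resilience-with-signatures}.
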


We now turn to the optimal latency.
When $n>3f$ and $f\geq 2$, the latency upper bound in \Cref{thm:byz-upper-bound-ngt3f} matches the latency lower bound in \Cref{thm:byz-lower-bound-fgeq2}, yielding the following.

\begin{corollary}
  In a system with Byzantine processes, if $n>3f$ and $f\geq 2$, then the optimal latency of implementing~\ts{} is three message delays.
\end{corollary}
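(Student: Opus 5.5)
This corollary follows by combining two results already established; no new construction is needed.

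For the upper bound, I invoke \Cref{thm:byz-upper-bound-ngt3f}. Since $n>3f$, it supplies an implementation of~\ts{} (\Cref{alg:byz-three-delays}) in a system with Byzantine processes whose worst-case message delay is three. Hence the optimal latency is at most three.

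For the lower bound, I invoke \Cref{thm:byz-lower-bound-fgeq2}. Since $f\geq 2$, every implementation of~\ts{} in this system has worst-case message delay at least three. That observation itself rests on \Cref{thm:crash-lower-bound-fgeq2}: an implementation tolerating Byzantine processes also tolerates crash failures, because crashed processes are a special case of faulty ones. Hence the optimal latency, defined as the minimum worst-case message delay over all implementations, is at least three.

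The two bounds coincide, so the optimal latency is exactly three message delays.

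The only point that needs care is that the minimum is attained, so that ``optimal latency'' is well defined. This holds because an implementation exists (by \Cref{thm:byz-upper-bound-ngt3f}), every finite worst-case message delay is an integer (\Cref{thm:integer-worst-case-delay}), and any nonempty set of nonnegative integers has a minimum. I expect no real obstacle; the substance lies entirely in the cited results.
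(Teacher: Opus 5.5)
Your proposal is correct and matches the paper's argument: it pairs the three-delay upper bound of \Cref{thm:byz-upper-bound-ngt3f} with the lower bound of \Cref{thm:byz-lower-bound-fgeq2}, which the paper derives from \Cref{thm:crash-lower-bound-fgeq2}. Your remark about the minimum being attained is harmless but not needed, since \Cref{alg:byz-three-delays} already achieves exactly three while every implementation needs at least three.
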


When $n=4$ and $f=1$, the latency upper bound in \Cref{thm:byz-upper-bound-ngt3f} matches the latency lower bound in \Cref{thm:byz-lower-bound-n4f1}, yielding the following.

\begin{corollary}
  In a system with Byzantine processes, if $n=4$ and $f=1$, then the optimal latency of implementing~\ts{} is three message delays.
\end{corollary}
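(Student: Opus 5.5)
The corollary combines the two bounds already established for this case. For the upper bound, I would apply \Cref{thm:byz-upper-bound-ngt3f} with $n=4$ and $f=1$. Its hypothesis $n>3f$ holds because $4>3$. It therefore gives an implementation of~\ts{}, namely \Cref{alg:byz-three-delays}, whose worst-case message delay is three. So the optimal latency is at most three.

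For the lower bound, I would invoke \Cref{thm:byz-lower-bound-n4f1}. It states that when $n=4$ and $f=1$, every implementation of~\ts{} in a system with Byzantine processes has worst-case message delay at least three. Here the optimal latency means the smallest worst-case message delay over all implementations, as in the caption of \Cref{tab:summary}. This minimum exists because, by \Cref{thm:integer-worst-case-delay}, finite worst-case delays are nonnegative integers. Combining the two bounds, the minimum equals three.

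Nothing in this argument is a real obstacle; the difficulty was already handled in the proof of \Cref{thm:byz-lower-bound-n4f1}. The only point to check is that both results are stated for the same model: Byzantine processes without digital signatures. Both appear in \Cref{sec:byz}, so they apply to the same class of algorithms and the bounds can be combined directly.
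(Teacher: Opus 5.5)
Your proposal is correct and follows the paper's argument: it combines the upper bound of \Cref{thm:byz-upper-bound-ngt3f}, which applies since $4>3\cdot 1$, with the lower bound of \Cref{thm:byz-lower-bound-n4f1}. The appeal to \Cref{thm:integer-worst-case-delay} is harmless but unnecessary, because three is both a lower bound on the worst-case message delay of every implementation and attained by \Cref{alg:byz-three-delays}.
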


One case remains open: $n>4$ and $f=1$.
Here the upper bound is three delays (\Cref{thm:byz-upper-bound-ngt3f}), whereas the lower bound is two delays (\Cref{thm:byz-lower-bound-f1}).

\begin{question}\label{question:three-delay-lower-bound-byzantine-feq1}
  In a system with Byzantine processes, if $n>4$ and $f=1$, what is the optimal latency of implementing~\ts{}?
\end{question}

\section{Related Work}\label{sec:related-work}
\paragraph{Applications of timestamps.}
Timestamps have long been used in concurrent and distributed algorithms to order requests and distinguish versions of shared data.
For example, in \citeauthor{DBLP:journals/cacm/Lamport78}'s mutual-exclusion algorithm~\cite[561]{DBLP:journals/cacm/Lamport78}, processes order requests for a shared resource by their timestamps; once a request is granted, any subsequently issued request receives a larger timestamp.
In optimistic concurrency control for databases, \textcite[Section~3.2]{DBLP:journals/tods/KungR81} assign transaction numbers from a global counter so that a transaction that completes before another starts receives a smaller number.
In an early shared-memory construction of MWMR atomic registers, \textcite[Section~3]{DBLP:conf/focs/VitanyiA86} attach ordered tags to written values to determine which values reads return.
Motivated by such uses, we study the resilience and latency of \ts{}~implementations in the presence of crash and Byzantine failures.

\paragraph{Message-passing register algorithms.}
Many register implementations in message-passing systems generate timestamps internally, and we can extract this logic to obtain an implementation of~\ts{}.
For example, in the MWMR atomic register of \textcite[Section~4.1]{DBLP:conf/ftcs/LynchS97}, a writer queries a read quorum for tags and forms a new tag by incrementing the largest sequence number received and pairing it with its own identifier; the writer then propagates the tagged value to a write quorum and waits for acknowledgments.
If we instantiate this algorithm with majority quorums, omit register values, and have each writer return its new tag as a positive integer preserving tag order, we can obtain an implementation of~\ts{} in systems with crash failures when $n>2f$.
However, the resulting implementation takes two round trips per operation, for a worst-case message delay of four.
In contrast, our algorithms achieve the optimal latency: two message delays when $f=1$ (\Cref{alg:crash-two-delays}) and three when $f\geq 2$ (\Cref{alg:crash-three-delays}).

As mentioned in \Cref{sec:crash3delay}, \Cref{alg:crash-three-delays} adopts the max-min technique~\cite[Section~1.1]{DBLP:conf/podc/DuttaGLC04}, previously used to obtain three-delay reads for MWMR atomic registers~\cite[Section~4]{DBLP:conf/netys/HadjistasiNS17}.

Some storage protocols that tolerate Byzantine servers~\cites[Section~3.2]{DBLP:journals/dc/MalkhiR98}[Section~3, Figure~1]{DBLP:conf/wdag/MartinAD02} use the same general write pattern as \citeauthor{DBLP:conf/ftcs/LynchS97}'s register algorithm.
A writer queries a quorum for timestamps, chooses a fresh timestamp larger than those received, and propagates the timestamped value, waiting for acknowledgments from a quorum.
To extract their timestamping logic, we can omit the stored data, wait for $n-f$ responses in each phase, and have each write return its new timestamp.
This yields an implementation of~\ts{} that does not rely on digital signatures in systems with Byzantine processes when $n>3f$.
However, its worst-case message delay is four, whereas \Cref{alg:byz-three-delays} achieves three.

\paragraph{Timestamp services in distributed databases.}
Some distributed databases use a dedicated timestamp service for transaction processing.
Percolator's timestamp oracle~\cite[Section~2.3]{DBLP:conf/osdi/PengD10} runs on a single server with stable storage and supplies strictly increasing timestamps.
It can serve requests from memory in one round trip (two message delays).
To preserve timestamp monotonicity across restarts, the oracle periodically records an upper bound on its allocated timestamps in stable storage and resumes above that bound after a restart.
The oracle cannot serve requests while it is down, whereas our model requires operations invoked by correct processes to complete despite up to~$f$ (permanent) crashes.

\paragraph{Timestamp abstractions in shared memory.}
\textcite[Algorithm~4]{DBLP:journals/jacm/AspnesACE15} give a limited-use, wait-free implementation of~\ts{} from a max register.
Other work implements slightly weaker objects from read/write registers: the \emph{weak counter} of \textcite[Section~4]{DBLP:conf/wdag/GuerraouiR05}, designed for anonymous systems, and \emph{fetch-and-increment with multiplicity} of \textcite[Section~4.2]{DBLP:journals/dc/CastanedaRR23}.
For both objects, an operation returns an integer larger than those returned by preceding operations, but concurrent operations may return the same integer.
Breaking such ties by process identifiers yields wait-free implementations of~\ts{}, although for the weak counter this sacrifices anonymity.
All of these shared-memory constructions can be run in message-passing systems via register emulation~\cite{DBLP:journals/jacm/AttiyaBD95,DBLP:conf/ftcs/LynchS97}, but the resulting implementations do not achieve the optimal latency of implementing~\ts{}.
\textcite{DBLP:conf/wdag/EllenFR07} and \textcite{DBLP:conf/podc/HelmiHPW11} study how many read/write registers are needed to implement variants of~\ts{}, whereas we study latency in message-passing systems.

\section{Conclusion}
We have determined the maximum resilience of~\ts{} implementations in asynchronous message-passing systems with crash or Byzantine failures, as well as the optimal latency of implementing~\ts{} in all but one case.
The open case concerns systems with Byzantine processes and no digital signatures when $n>4$ and $f=1$: Do two message delays suffice, or are three necessary?
When $n=4$ and $f=1$, we have shown that digital signatures reduce the optimal latency from three message delays to two, so resolving this question would show whether digital signatures offer the same advantage for $n>4$.

\section{AI Methodology}\label{sec:ai-methodology}
The ideas behind \Cref{alg:byz-two-delays-sig} emerged from a multi-agent search using GPT-5.6 Sol in Codex.
We then put considerable effort into simplifying the construction (with help from GPT and Claude), developing our own notation, and writing our own presentation of the algorithm and its proof to make them easier to follow.

The ideas behind \Cref{alg:byz-three-delays} emerged from a conversation with GPT-5.5 Pro that began with two of our own three-delay \ts{}~implementations in systems with Byzantine processes: one with signatures for $n>3f$, and one without signatures for $n>4f$.
GPT proposed an implementation without signatures for $n=4$ and $f=1$, and then generalized it to all $n>3f$.
We simplified the algorithm in discussion with GPT, then wrote our own presentation of the algorithm and its proof to make them easier to follow.

GPT also contributed to proving two auxiliary results.
GPT-5.4 Pro proposed a retiming argument for \Cref{thm:floor-worst-case-delay} after we found an error in its first attempt, and GPT-5.5 developed the proof of \nameCref{thm:liveness-from-finite-delay}~\hyperref[proof:liveness-from-finite-delay]{\ref*{thm:liveness-from-finite-delay}} under our guidance.
We simplified the time mapping in the former and wrote the exposition of both proofs.

We also used GPT, Claude, and Gemini for literature search, writing code to generate figures, and copyediting.
We checked all technical content, including every algorithm and proof, and take full responsibility for this paper.

\newrefcontext[sorting=anyt]  %
\printbibliography

\appendix

\section{Auxiliary Notations}\label{sec:auxiliary-notation}
We use the following notation for finite sequences of steps:
\begin{itemize}
  \item $|\sigma|$ is the number of steps in a schedule prefix~$\sigma$; and
  \item $\langle s_1,\dots,s_k\rangle$ is the finite sequence consisting of the steps $s_1,\dots,s_k$, in that order.
\end{itemize}

We also introduce notation for the affine maps used below to retime runs.
In interval notation, an overstruck bracket~$\openclosed$ or~$\closedopen$ denotes an endpoint that may be open or closed.

\begin{definition}[Affine map]\label{def:affine-map}
  Let $I=\openclosed a,b\closedopen$ and $J=\openclosed c,d\closedopen$ be bounded real intervals with the same endpoint types, where $a<b$ and $c<d$.
  We write
  \[
    \Aff_{I\to J}:I\to J
  \]
  for the map defined by
  \[
    \Aff_{I\to J}(x) \Deq c+\frac{d-c}{b-a}(x-a).
  \]

  We use the same notation for half-infinite intervals bounded on the same side and having the same finite-endpoint type:
  \begin{align*}
    \Aff_{\openclosed a,+\infty)\to\openclosed c,+\infty)}(x)
      &\Deq x+(c-a), \\
    \Aff_{(-\infty,b\closedopen\to(-\infty,d\closedopen}(x)
      &\Deq x+(d-b).
  \end{align*}
\end{definition}

\begin{proposition}\label{thm:affine-map}
  $\Aff_{I\to J}$ is a strictly increasing bijection from~$I$ to~$J$.
\end{proposition}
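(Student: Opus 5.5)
The plan is to verify the claim directly from the formulas in \Cref{def:affine-map}, treating the bounded case and the two half-infinite cases separately. In each case three things must be checked: (i) the map is strictly increasing on~$I$, (ii) it sends $I$ into~$J$, and (iii) every point of~$J$ has a preimage in~$I$. Injectivity then follows from strict monotonicity.

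\emph{Bounded case.} Let $I=\openclosed a,b\closedopen$ and $J=\openclosed c,d\closedopen$ with $a<b$ and $c<d$. The map $x\mapsto c+\frac{d-c}{b-a}(x-a)$ is affine with slope $\frac{d-c}{b-a}>0$, so it is strictly increasing on all of~$\R$. It sends $a\mapsto c$ and $b\mapsto d$, and it has the explicit inverse $y\mapsto a+\frac{b-a}{d-c}(y-c)$, which is also strictly increasing and sends $c\mapsto a$ and $d\mapsto b$. Hence the map restricts to a bijection from the open interval $(a,b)$ onto $(c,d)$. For the endpoints:
\begin{itemize}
  \item $a\in I$ if and only if $c\in J$, since $I$ and~$J$ have the same endpoint types, and $a\mapsto c$;
  \item likewise, $b\in I$ if and only if $d\in J$, and $b\mapsto d$.
\end{itemize}
So the image of~$I$ is exactly~$J$, and the inverse map sends $J$ back into~$I$.

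\emph{Half-infinite cases.} For $\openclosed a,+\infty)\to\openclosed c,+\infty)$, the map is the translation $x\mapsto x+(c-a)$. It is strictly increasing, has inverse $y\mapsto y-(c-a)$, and sends $a\mapsto c$. Because the two intervals share the same finite-endpoint type, $x\ge a$ (or $x>a$) holds exactly when $x+(c-a)\ge c$ (respectively $>c$). The case $(-\infty,b\closedopen\to(-\infty,d\closedopen$ is symmetric, using the translation by $d-b$ and the endpoint correspondence $b\mapsto d$.

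There is no real obstacle here. The only point needing care is the endpoint bookkeeping: the hypothesis that $I$ and~$J$ have matching endpoint types is exactly what guarantees that an included endpoint of~$I$ maps to an included endpoint of~$J$, and an excluded one to an excluded one, so that the image is all of~$J$ and no more.
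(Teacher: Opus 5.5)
Your proof is correct. The paper states this proposition without proof, treating it as evident; your direct check (positive slope and an explicit inverse in the bounded case, translations in the half-infinite cases, and matching endpoint types ensuring that included endpoints map to included endpoints and excluded to excluded) is exactly the routine verification the paper leaves implicit.
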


\section{Properties of Finite Worst-Case Message Delay}\label{sec:finite-delay}
In this section, we prove the two consequences of finite worst-case message delay stated in \Cref{sec:delay}.
Although originally stated for crash failures, these results---along with their proofs---hold under Byzantine failures as well, with ``correct'' interpreted in the respective sense.

\subsection{Finite Worst-Case Message Delay Implies Liveness}\label{sec:finite-delay:liveness}
We first prove \Cref{thm:liveness-from-finite-delay}, restated below.

\phantomsection\label{proof:liveness-from-finite-delay}
\livenessFromFiniteDelay*
\begin{proof}
  \proofqed[proof:liveness-from-finite-delay]{thm:liveness-from-finite-delay}
  Let~$\alg$ be an algorithm with finite worst-case message delay~$\alpha$.
  Let $S$ be an (infinite) schedule of~$\alg$, and let~$O$ be an operation in~$S$ invoked by a correct process in step~$S[a]$.
  We show that~$O$ returns in~$S$.

  We assign times to the steps of~$S$, obtaining a run in which every received message has latency at most one.
  For each $i\geq 0$, let~$e_i$ be the largest index~$j$ such that step~$S[j]$ receives a message sent in step~$S[i]$;
  if no such message is received, set $e_i=i$.

  Partition the step indices into consecutive finite blocks $B_k=[b_k,b_{k+1})$, for $k\geq 0$, as follows.
  Set $b_0=0$, and choose each $b_{k+1}>b_k$ large enough that $e_i<b_{k+1}$ for every $i<b_k$.
  By construction, if a message sent in block~$B_k$ is later received, then its receive step lies in either block~$B_k$ or~$B_{k+1}$.

  Assign times~$T$ to the steps in~$S$ so that the steps in each block~$B_k$ get strictly increasing times in the interval $[k/2,(k+1)/2)$.
  These intervals are disjoint, ordered by~$k$, and unbounded, so~$T$ is strictly increasing and unbounded, and therefore $R=(S,T)$ is a run of~$\alg$.

  Now consider any message sent in block~$B_k$ and later received.
  By the preceding paragraph, its receive step lies in~$B_k$ or~$B_{k+1}$.
  Its send time is at least~$k/2$, and its receive time is less than $k/2+1$, so its latency is less than~$1$.
  Hence every message that is received has latency at most one in~$R$.

  Every message sent to a process that takes infinitely many steps is received, and so messages are timely during~$O$ in~$R$.
  Since $\alg$ has worst-case message delay~$\alpha$, we have $\latency{O}\leq \alpha < \infty$.
  Hence~$O$ is complete in~$R$, and thus returns in schedule~$S$.
\end{proof}

\subsection{A Finite Worst-Case Message Delay Is an Integer}\label{sec:finite-delay:integer}
Next, we prove \Cref{thm:integer-worst-case-delay}:
that a finite worst-case message delay is always a nonnegative integer.
The key step is the following rounding lemma.

\begin{lemma}\label{thm:floor-worst-case-delay}
  Let $\alg$ be an algorithm.
  If $\latency{\alg} \le \alpha$ for some nonnegative real number~$\alpha$, then $\latency{\alg} \le \lfloor \alpha \rfloor$.
\end{lemma}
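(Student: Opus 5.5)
We argue by contraposition through a retiming argument. Let $k\deq\lfloor\alpha\rfloor$. Suppose that some run $R=(S,T)$ of~$\alg$ contains an operation~$O$, invoked by a correct process in step~$S[a]$, such that messages are timely during~$O$ and $d\deq\latency[R]{O}>k$. We will produce a run $R'=(S,T')$ with the same schedule in which messages are still timely during~$O$ and $\latency[R']{O}>\alpha$. This contradicts $\latency{\alg}\le\alpha$, and since $\latency{\alg}$ is a supremum, it yields $\latency{\alg}\le k$.

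If $d\ge k+1$, we already have $d>\alpha$ and there is nothing to do; this includes the case where $O$ is incomplete. So assume that $O$ returns in step~$S[b]$ and $k<d<k+1$. Write each time $x\ge T[a]$ as $x=T[a]+L(x)+\phi(x)$, where the \emph{level} is $L(x)\deq\lfloor x-T[a]\rfloor$ and the \emph{phase} satisfies $\phi(x)\in[0,1)$. Then the invocation has level~$0$ and phase~$0$, and the return has level~$k$ and phase $\phi_b\deq d-k\in(0,1)$.

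Now fix $\varepsilon\in(0,1)$ with $k+1-\varepsilon>\alpha$, and choose a strictly increasing bijection $h:[0,1)\to[0,1)$ with $h(0)=0$ and $h(\phi_b)=1-\varepsilon$. A piecewise-affine map works: glue $\Aff_{[0,\phi_b]\to[0,1-\varepsilon]}$ to $\Aff_{[\phi_b,1)\to[1-\varepsilon,1)}$. Define
\[
  T'[j]\Deq\begin{cases}T[j], & T[j]<T[a],\\ T[a]+L(T[j])+h(\phi(T[j])), & T[j]\ge T[a].\end{cases}
\]
Then $T'[a]=T[a]$ and $T'[b]=T[a]+k+1-\varepsilon$, so $\latency[R']{O}=k+1-\varepsilon>\alpha$.

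It remains to check that $R'$ is a run and that messages are timely during~$O$ in~$R'$. The key observation is that on $[T[a],\infty)$ the map $x\mapsto T[a]+L(x)+h(\phi(x))$ preserves the order of times, because it preserves level and phase order lexicographically. It also fixes every point $T[a]+\ell$ with $\ell\in\mathbb{Z}_{\ge0}$. Hence $T'$ is non-decreasing and unbounded, and conditions~(1) and~(2) of \Cref{def:run} are preserved. For a pair with send time below $T[a]$ and receive time at least $T[a]$, the new receive time is still at least $T[a]$, which exceeds the unchanged send time. For timeliness, take a message sent at time $x\in\interval[R]{O}$ and received at time~$y$. From $x<y\le x+1$ it follows that either $L(y)=L(x)$ and $\phi(y)>\phi(x)$, or $L(y)=L(x)+1$ and $\phi(y)\le\phi(x)$. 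Both of these conditions depend only on levels and on the order of phases, so $h$ preserves them and the new latency is again at most~$1$. Messages sent before $T[a]$ are not sent during~$O$ and impose no constraint.

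I expect the main obstacle to be precisely this choice of retiming. A naive uniform stretch of $[T[a],T[b]]$ by the factor $(k+1-\varepsilon)/d>1$ would destroy the timeliness of messages whose latency is close to~$1$. The level/phase decomposition avoids this, because timeliness is invariant under any order-preserving reparametrization of the phase within each unit period.
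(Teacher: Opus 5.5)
Your proof is correct and follows the same retiming strategy as the paper. Both proofs reparametrize time within each unit period measured from the invocation, fixing integer offsets, so that latency at most one is preserved (same period with a later phase, or the next period with a phase that is not later) while the return is pushed past $\alpha$. The paper compresses each $(i-1,i]$ uniformly onto $(i-\eta,i]$ and checks three cases by direct computation, whereas your phase bijection $h$ makes the same check structural. Do state explicitly that strict monotonicity of the retiming, together with $T'[a]=T[a]$, makes the set of messages sent during $O$ the same in $R$ and $R'$.
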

\begin{figure}
    \centering
    \begin{tikzpicture}[scbase]
        \newif\ifdrawintervalrects
        \drawintervalrectsfalse
        \newif\ifdrawtransformarrows
        \drawtransformarrowsfalse

        \def\yTop{2.6}
        \def\yBot{0.0}
        \def\labelAboveOffset{13pt}
        \def\labelBelowOffset{11pt}
        \def\rowLabelXShift{0.2}
        \def\intervalHalfHeight{0.24}
        \def\opTopY{3.55}
        \def\opBotY{-1.05}
        \def\opBarExtend{0.12}
        \def\arcHeight{0.42}
        \tikzset{scmsgcaselabel/.style={circle, draw=black, fill=white, inner sep=0pt, minimum size=10.5pt, font=\scriptsize}}
        \ifdrawtransformarrows
            \tikzset{sctoptimelab/.style={sctlab}}
        \else
            \tikzset{sctoptimelab/.style={sctlab, fill=none}}
        \fi

        \pgfmathsetmacro{\alphaVal}{2.25}
        \pgfmathsetmacro{\xScale}{3.75}
        \pgfmathtruncatemacro{\kVal}{floor(\alphaVal)}
        \pgfmathtruncatemacro{\kponeVal}{\kVal+1}
        \pgfmathsetmacro{\etaVal}{(\kponeVal-\alphaVal)/2}

        \pgfmathsetmacro{\Lratio}{\kVal + 0.7*(\alphaVal-\kVal)}

        \pgfmathsetmacro{\xA}{0.0}
        \pgfmathsetmacro{\xLeft}{-0.2*\xScale}
        \pgfmathsetmacro{\xGapWidth}{0.4*\xScale}

        \pgfmathsetmacro{\xIoneL}{0}
        \pgfmathsetmacro{\xIoneR}{\xScale}
        \pgfmathsetmacro{\xItwoL}{\xIoneR}
        \pgfmathsetmacro{\xItwoR}{2*\xScale}
        \pgfmathsetmacro{\xIlastL}{\xItwoR + \xGapWidth}
        \pgfmathsetmacro{\xIlastR}{\xIlastL + \xScale}
        \pgfmathsetmacro{\xL}{\xIlastL + (\Lratio-\kVal)*\xScale}
        \pgfmathsetmacro{\xAlpha}{\xIlastL + (\alphaVal-\kVal)*\xScale}

        \pgfmathsetmacro{\xJoneL}{\xIoneR - \etaVal*\xScale}
        \pgfmathsetmacro{\xJoneR}{\xScale}
        \pgfmathsetmacro{\xJtwoL}{\xItwoR - \etaVal*\xScale}
        \pgfmathsetmacro{\xJtwoR}{2*\xScale}
        \pgfmathsetmacro{\xKBot}{\xIlastL}
        \pgfmathsetmacro{\xJlastL}{\xIlastR - \etaVal*\xScale}
        \pgfmathsetmacro{\xJlastR}{\xIlastR}
        \pgfmathsetmacro{\xPhiL}{\xJlastL + (\Lratio-\kVal)*\etaVal*\xScale}

        \pgfmathsetmacro{\xTopDots}{0.5*(\xItwoR+\xIlastL)}
        \pgfmathsetmacro{\xBotDots}{\xTopDots}
        \pgfmathsetmacro{\xRight}{max(\xIlastR,\xJlastR) + 0.2*\xScale}

        \pgfmathsetmacro{\xMsgCaseOneSendTop}{\xA}
        \pgfmathsetmacro{\xMsgCaseOneRecvTop}{0.62*\xScale}
        \pgfmathsetmacro{\xMsgCaseTwoSendTop}{\xItwoL + 0.48*\xScale}
        \pgfmathsetmacro{\xMsgCaseTwoRecvTop}{\xItwoL + 0.88*\xScale}
        \pgfmathsetmacro{\xMsgCaseThreeSendTop}{0.72*\xScale}
        \pgfmathsetmacro{\xMsgCaseThreeRecvTop}{\xItwoL + 0.22*\xScale}

        \pgfmathsetmacro{\xMsgCaseOneSendBot}{\xA}
        \pgfmathsetmacro{\xMsgCaseOneRecvBot}{\xJoneL + 0.62*\etaVal*\xScale}
        \pgfmathsetmacro{\xMsgCaseTwoSendBot}{\xJtwoL + 0.48*\etaVal*\xScale}
        \pgfmathsetmacro{\xMsgCaseTwoRecvBot}{\xJtwoL + 0.88*\etaVal*\xScale}
        \pgfmathsetmacro{\xMsgCaseThreeSendBot}{\xJoneL + 0.72*\etaVal*\xScale}
        \pgfmathsetmacro{\xMsgCaseThreeRecvBot}{\xJtwoL + 0.22*\etaVal*\xScale}

        \assertlt{0}{\etaVal}{0 < etaVal}
        \assertlt{\etaVal}{0.5}{etaVal < 1/2}
        \assertlt{\kVal}{\alphaVal}{kVal < alphaVal}
        \assertlt{\alphaVal}{\kponeVal}{alphaVal < k+1}

        \assertlt{\xIlastL}{\xL}{k < latency(O)}
        \assertlt{\xL}{\xAlpha}{latency(O) < alpha}
        \assertlt{\xAlpha}{\xIlastR}{alpha < k+1}

        \assertlt{\xAlpha}{\xJlastL}{alpha < k+1-eta}
        \assertlt{\xJlastL}{\xPhiL}{k+1-eta < phi(latency(O))}
        \assertlt{\xPhiL}{\xJlastR}{phi(latency(O)) < k+1}

        \assertlt{\xLeft}{\xA}{xLeft < xA}
        \assertlt{\xA}{\xIoneR}{xA < xIoneR}
        \assertlt{\xIoneR}{\xItwoR}{xIoneR < xItwoR}
        \assertlt{\xItwoR}{\xTopDots}{xItwoR < xTopDots}
        \assertlt{\xTopDots}{\xIlastL}{xTopDots < xIlastL}
        \assertlt{\xIlastL}{\xL}{xIlastL < xL}
        \assertlt{\xL}{\xAlpha}{xL < xAlpha}
        \assertlt{\xAlpha}{\xIlastR}{xAlpha < xIlastR}
        \assertlt{\xIlastR}{\xRight}{xIlastR < xRight}

        \assertlt{\xA}{\xJoneL}{xA < xJoneL}
        \assertlt{\xJoneL}{\xJoneR}{xJoneL < xJoneR}
        \assertlt{\xJoneR}{\xJtwoL}{xJoneR < xJtwoL}
        \assertlt{\xJtwoL}{\xJtwoR}{xJtwoL < xJtwoR}
        \assertlt{\xJtwoR}{\xBotDots}{xJtwoR < xBotDots}
        \assertlt{\xBotDots}{\xKBot}{xBotDots < xKBot}
        \assertlt{\xKBot}{\xAlpha}{xKBot < xAlpha}
        \assertlt{\xAlpha}{\xJlastL}{xAlpha < xJlastL}
        \assertlt{\xJlastL}{\xPhiL}{xJlastL < xPhiL}
        \assertlt{\xPhiL}{\xJlastR}{xPhiL < xJlastR}
        \assertlt{\xJlastR}{\xRight}{xJlastR < xRight}

        \assertlt{\xMsgCaseOneSendTop}{\xMsgCaseOneRecvTop}{case 1 send < receive}
        \assertlt{\xIoneR}{\xMsgCaseTwoSendTop}{case 2 send after 1}
        \assertlt{\xMsgCaseTwoSendTop}{\xMsgCaseTwoRecvTop}{case 2 send < receive}
        \assertlt{\xMsgCaseTwoRecvTop}{\xItwoR}{case 2 receive before 2}
        \assertlt{\xMsgCaseThreeSendTop}{\xIoneR}{case 3 send before 1}
        \assertlt{\xIoneR}{\xMsgCaseThreeRecvTop}{case 3 receive after 1}

        \ifdrawtransformarrows\else
            \fill[transformshade] (\xLeft,\yTop) -- (\xA,\yTop) -- (\xA,\yBot) -- (\xLeft,\yBot) -- cycle;
            \fill[transformshade] (\xA,\yTop) -- (\xIoneR,\yTop) -- (\xJoneR,\yBot) -- (\xJoneL,\yBot) -- cycle;
            \fill[transformshade] (\xItwoL,\yTop) -- (\xItwoR,\yTop) -- (\xJtwoR,\yBot) -- (\xJtwoL,\yBot) -- cycle;
            \fill[transformshade] (\xIlastL,\yTop) -- (\xIlastR,\yTop) -- (\xJlastR,\yBot) -- (\xJlastL,\yBot) -- cycle;
        \fi

        \draw[red, line width=0.8pt] (\xAlpha,\yTop) -- (\xAlpha,\yBot);
        \node[sctlab, text=red] at (\xAlpha,{0.35*(\yTop+\yBot)}) {$\alpha$};

        \node[sctlab, anchor=east] at (\xLeft-\rowLabelXShift,\yTop) {$T$};
        \ifdrawintervalrects
            \filldraw[fill=gray!18, draw=black!35] (\xIoneL,\yTop-\intervalHalfHeight) rectangle (\xIoneR,\yTop+\intervalHalfHeight);
            \filldraw[fill=gray!18, draw=black!35] (\xItwoL,\yTop-\intervalHalfHeight) rectangle (\xItwoR,\yTop+\intervalHalfHeight);
            \filldraw[fill=gray!18, draw=black!35] (\xIlastL,\yTop-\intervalHalfHeight) rectangle (\xIlastR,\yTop+\intervalHalfHeight);
        \fi
        \draw[sctimelineend] (\xLeft,\yTop) -- (\xRight,\yTop);
        \path (\xA,\yTop) pic {scopabove={\xL-\xA/\opTopY-\yTop/O}};
        \draw[scopmark] (\xL,\yTop-\opBarExtend) -- (\xL,\opTopY+\opBarExtend);

        \path (\xMsgCaseOneSendTop,\yTop) pic {scgoodmsgarc={\xMsgCaseOneRecvTop-\xMsgCaseOneSendTop}};
        \path (\xMsgCaseTwoSendTop,\yTop) pic {scgoodmsgarc={\xMsgCaseTwoRecvTop-\xMsgCaseTwoSendTop}};
        \path (\xMsgCaseThreeSendTop,\yTop) pic {scgoodmsgarc={\xMsgCaseThreeRecvTop-\xMsgCaseThreeSendTop}};
        \node[scmsgcaselabel] at ({0.5*(\xMsgCaseOneSendTop+\xMsgCaseOneRecvTop)}, {\yTop+\arcHeight+0.10}) {1};
        \node[scmsgcaselabel] at ({0.5*(\xMsgCaseTwoSendTop+\xMsgCaseTwoRecvTop)}, {\yTop+\arcHeight+0.10}) {2};
        \node[scmsgcaselabel] at ({0.5*(\xMsgCaseThreeSendTop+\xMsgCaseThreeRecvTop)}, {\yTop+\arcHeight+0.10}) {3};

        \node[sctlab] at (\xTopDots,\yTop) {$\cdots$};

        \path (\xA,\yTop) pic {scendpointtick};
        \node[scstepdot] at (\xA,\yTop) {};
        \node[scstepdot] at (\xIoneR,\yTop) {};
        \node[scstepdot] at (\xItwoR,\yTop) {};
        \node[scstepdot] at (\xIlastL,\yTop) {};
        \node[scstepdot] at (\xIlastR,\yTop) {};

        \node[scstepdot] at (\xL,\yTop) {};
        \node[sctlab, above=\labelAboveOffset] at (\xL,\yTop) {$\latency[R]{O}$};

        \ifdrawtransformarrows
            \draw[scconnector] (\xA,\yTop) -- (\xA,\yBot);
            \draw[scconnector] (\xA,\yTop) -- (\xJoneL,\yBot);
            \draw[scconnector] (\xIoneR,\yTop) -- (\xJoneR,\yBot);
            \draw[scconnector] (\xIoneR,\yTop) -- (\xJtwoL,\yBot);
            \draw[scconnector] (\xItwoR,\yTop) -- (\xJtwoR,\yBot);
            \draw[scconnector] (\xIlastL,\yTop) -- (\xKBot,\yBot);
            \draw[scconnector] (\xIlastL,\yTop) -- (\xJlastL,\yBot);
            \draw[scconnector] (\xIlastR,\yTop) -- (\xJlastR,\yBot);
        \fi

        \node[sctoptimelab, below=\labelBelowOffset] at (\xA,\yTop) {$0$};
        \node[sctoptimelab, below=\labelBelowOffset] at (\xIoneR,\yTop) {$1$};
        \node[sctoptimelab, below=\labelBelowOffset] at (\xItwoR,\yTop) {$2$};
        \node[sctoptimelab, below=\labelBelowOffset] at (\xIlastL,\yTop) {$k$};
        \node[sctoptimelab, anchor=west, xshift=8pt, below=\labelBelowOffset] at (\xIlastR,\yTop) {$k+1$};

        \node[sctlab, anchor=east] at (\xLeft-\rowLabelXShift,\yBot) {$T'$};
        \ifdrawintervalrects
            \filldraw[fill=blue!12, draw=blue] (\xJoneL,\yBot-\intervalHalfHeight) rectangle (\xJoneR,\yBot+\intervalHalfHeight);
            \filldraw[fill=blue!12, draw=blue] (\xJtwoL,\yBot-\intervalHalfHeight) rectangle (\xJtwoR,\yBot+\intervalHalfHeight);
            \filldraw[fill=blue!12, draw=blue] (\xJlastL,\yBot-\intervalHalfHeight) rectangle (\xJlastR,\yBot+\intervalHalfHeight);
        \fi
        \draw[sctimeline] (\xLeft,\yBot) -- (\xA,\yBot);
        \draw[sctimeline, dashed] (\xA,\yBot) -- (\xJoneL,\yBot);
        \draw[sctimeline] (\xJoneL,\yBot) -- (\xJoneR,\yBot);
        \draw[sctimeline, dashed] (\xJoneR,\yBot) -- (\xJtwoL,\yBot);
        \draw[sctimeline] (\xJtwoL,\yBot) -- (\xJtwoR,\yBot);
        \draw[sctimeline, dashed] (\xJtwoR,\yBot) -- (\xJlastL,\yBot);
        \draw[sctimeline] (\xJlastL,\yBot) -- (\xJlastR,\yBot);
        \draw[sctimelineend, dashed] (\xJlastR,\yBot) -- (\xRight,\yBot);
        \path (\xA,\yBot) pic {scopbelow={\xPhiL-\xA/\opBotY-\yBot/O}};
        \draw[scopmark] (\xPhiL,\opBotY-\opBarExtend) -- (\xPhiL,\yBot+\opBarExtend);

        \path (\xMsgCaseOneSendBot,\yBot) pic {scgoodmsgarc={\xMsgCaseOneRecvBot-\xMsgCaseOneSendBot}};
        \path (\xMsgCaseTwoSendBot,\yBot) pic {scgoodmsgarc={\xMsgCaseTwoRecvBot-\xMsgCaseTwoSendBot}};
        \path (\xMsgCaseThreeSendBot,\yBot) pic {scgoodmsgarc={\xMsgCaseThreeRecvBot-\xMsgCaseThreeSendBot}};
        \node[scmsgcaselabel] at ({0.5*(\xMsgCaseOneSendBot+\xMsgCaseOneRecvBot)}, {\yBot+\arcHeight+0.10}) {1};
        \node[scmsgcaselabel] at ({0.5*(\xMsgCaseTwoSendBot+\xMsgCaseTwoRecvBot)}, {\yBot+\arcHeight+0.10}) {2};
        \node[scmsgcaselabel] at ({0.5*(\xMsgCaseThreeSendBot+\xMsgCaseThreeRecvBot)}, {\yBot+\arcHeight+0.10}) {3};

        \node[sctlab] at (\xBotDots,\yBot) {$\cdots$};

        \node[circle, draw=blue, fill=blue, inner sep=1.5pt] at (\xA,\yBot) {};
        \node[circle, draw=blue, fill=white, inner sep=1.5pt] at (\xJoneL,\yBot) {};
        \node[circle, draw=blue, fill=blue, inner sep=1.5pt] at (\xJoneR,\yBot) {};
        \node[circle, draw=blue, fill=white, inner sep=1.5pt] at (\xJtwoL,\yBot) {};
        \node[circle, draw=blue, fill=blue, inner sep=1.5pt] at (\xJtwoR,\yBot) {};
        \ifdrawtransformarrows
            \node[circle, draw=blue, fill=blue, inner sep=1.5pt] at (\xKBot,\yBot) {};
        \fi
        \node[circle, draw=blue, fill=white, inner sep=1.5pt] at (\xJlastL,\yBot) {};
        \node[circle, draw=blue, fill=blue, inner sep=1.5pt] at (\xJlastR,\yBot) {};

        \node[scstepdot] at (\xPhiL,\yBot) {};
        \node[sctlab, below=\labelBelowOffset, xshift=7pt] at (\xPhiL,\yBot) {$\latency[R']{O}$};

        \node[sctlab, below=\labelBelowOffset] at (\xA,\yBot) {$0$};
        \node[sctlab, below=\labelBelowOffset] at (\xJoneL,\yBot) {\textcolor{blue}{$1-\eta$}};
        \node[sctlab, below=\labelBelowOffset] at (\xJoneR,\yBot) {\textcolor{blue}{$1$}};
        \node[sctlab, below=\labelBelowOffset] at (\xJtwoL,\yBot) {\textcolor{blue}{$2-\eta$}};
        \node[sctlab, below=\labelBelowOffset] at (\xJtwoR,\yBot) {\textcolor{blue}{$2$}};
        \ifdrawtransformarrows
            \node[sctlab, below=\labelBelowOffset] at (\xKBot,\yBot) {\textcolor{blue}{$k$}};
        \fi
        \node[sctlab, anchor=east, xshift=-24pt, below=\labelBelowOffset] at (\xJlastL,\yBot) {\textcolor{blue}{$k+1-\eta$}};
        \node[sctlab, anchor=west, xshift=8pt, below=\labelBelowOffset] at (\xJlastR,\yBot) {\textcolor{blue}{$k+1$}};
    \end{tikzpicture}
    \caption{
    \textbf{Proof of~\Cref{thm:floor-worst-case-delay}:} retiming the run~$R=(S,T)$ (top) into the run~$R'=(S,T')$ (bottom) when $k<\alpha<k+1$.
    For each integer~$i\geq 1$, the unit interval $(i-1,i]$ in~$T$ is \textcolor{blue}{compressed} into $(i-\eta,i]$ in~$T'$, where $\eta=(k+1-\alpha)/2\in(0,1/2)$.
    In~$T'$, ``\textcolor{blue}{$\circ$}'' marks open interval endpoints, ``\textcolor{blue}{$\bullet$}'' marks closed interval endpoints, and dashed segments mark times at which no step is taken.
    Every message~$m$ sent during~$O$ in~$R$ with $\latency[R]{m}\leq1$ still satisfies $\latency[R']{m}\leq1$ (as illustrated by cases~\circcase{1},~\circcase{2}, and~\circcase{3}).
    The operation satisfies $\latency[R]{O}\leq\alpha$, but this choice of~$\eta$ ensures that $\latency[R']{O}=\phi(\latency[R]{O})>\alpha$.
    }\label{fig:floor-worst-case-delay}
\end{figure}

\begin{proof}
  \proofqed{thm:floor-worst-case-delay}
  Let $k\deq \lfloor \alpha \rfloor$.
  If $\alpha = k$, the claim is immediate, so assume $k < \alpha < k+1$.

  Suppose, for the sake of contradiction, that $\latency{\alg} > k$.
  Then there exist a run~$R=(S,T)$ of~$\alg$, and an operation~$O$ invoked by a correct process, such that messages are timely during~$O$ in~$R$ and $\latency[R]{O} > k$.
  By the definition of worst-case message delay, $\latency[R]{O}\leq \latency{\alg}\leq \alpha<\infty$; hence~$O$ is complete.
  Let~$O$ be invoked in step~$S[a]$ and return in step~$S[b]$, so $\latency[R]{O}=T[b]-T[a]$.
  Putting the inequalities together, we have
  \begin{equation}\label{eq:floor-worst-case-delay-inequality}
    k < \latency[R]{O} \le \alpha < k+1.
  \end{equation}

  By subtracting~$T[a]$ from every entry of~$T$, we may assume without loss of generality that $T[a]=0$.
  Then $T[b]=\latency[R]{O}$.

  We will now retime~$R$ to obtain another run~$R'$ with schedule~$S$ such that $\latency[R']{O}>\alpha$; see \Cref{fig:floor-worst-case-delay}.
  Set
  \[
    \eta\Deq\frac{k+1-\alpha}{2}.
  \]
  Since $k<\alpha<k+1$, we have $0<\eta<1/2$.
  We construct a time mapping~$\phi:\R\to\R$ that leaves times at or before~$0$ unchanged and, for each integer $i\geq 1$, compresses the unit interval $(i-1,i]$ affinely onto the shorter interval $(i-\eta,i]$:
  \[
      \phi(t)\Deq
      \begin{cases}
          t, & t\le 0,\\
          \Aff_{(i-1, i]\to (i-\eta, i]}(t), & \text{$i-1 < t \leq i$ for some $i\geq 1$}.
      \end{cases}
  \]
  Then $\phi$ is strictly increasing and unbounded.

  Define the infinite sequence~$T'$ by $T'[i]\deq\phi(T[i])$, and let $R'\deq(S,T')$.
  Since $\phi$~is strictly increasing and unbounded, $R'$ is a run of~$\alg$; moreover, it has the same correct processes as~$R$.
  By construction,
  \[
    \interval[R']{O}
    = [T'[a], T'[b]]
    = [\phi(T[a]), \phi(T[b])]
    = [0, \phi(\latency[R]{O})].
  \]

  \begin{claim}\label{claim:floor-worst-case-delay-timely}
    Messages are timely during~$O$ in~$R'$.
  \end{claim}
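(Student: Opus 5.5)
The approach is to show that the retiming~$\phi$ commutes with a unit shift on nonnegative times, that is, $\phi(t+1)=\phi(t)+1$ for every $t\geq 0$, and then use monotonicity of~$\phi$.

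\textbf{Step 1: identical message sets.} The set of messages sent during~$O$ to a process in~$L$ is the same in $R$ and~$R'$. Both runs share the schedule~$S$, so the set~$L$ of processes taking infinitely many steps is unchanged, and so is each message's sending and receiving step. A message sent in step~$S[j]$ is sent during~$O$ in~$R'$ iff $T'[j]\in[0,\phi(T[b])]$. Since $\phi$ is strictly increasing with $\phi(0)=0$ and $\phi(T[a])=0$, this holds iff $T[j]\in[0,T[b]]$, i.e., iff the message is sent during~$O$ in~$R$. By hypothesis, every such message~$m$, sent at time $s=T[j]\geq 0$ and received at time $r=T[k]$, satisfies $s<r\leq s+1$.

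\textbf{Step 2: the shift identity.} I prove $\phi(t+1)=\phi(t)+1$ for all $t\geq0$.
\begin{itemize}
\item For $t=0$: we have $\phi(0)=0$ and $\phi(1)=\Aff_{(0,1]\to(1-\eta,1]}(1)=1$.
\item For $t\in(i-1,i]$ with $i\geq1$: we have $t+1\in(i,i+1]$. Unwinding \Cref{def:affine-map},
\[
\phi(t)=i-\eta+\eta\,(t-(i-1))
\quad\text{and}\quad
\phi(t+1)=(i+1)-\eta+\eta\,(t-(i-1)),
\]
which differ by exactly~$1$.
\end{itemize}

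\textbf{Step 3: conclusion.} Since $\phi$ is strictly increasing (by \Cref{thm:affine-map} on each piece, together with the fact that the pieces are ordered), $r\leq s+1$ gives
\[
\phi(r)\leq\phi(s+1)=\phi(s)+1 .
\]
Hence $\latency[R']{m}=\phi(r)-\phi(s)\leq1$, so messages are timely during~$O$ in~$R'$.

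The only step needing care is Step~2. The half-open pieces $(i-1,i]$ must be aligned with the shift by~$1$, including at the integer endpoints and at $t=0$, where the definition of~$\phi$ switches from the identity to the affine pieces. This is precisely why the send time~$s$ must be nonnegative, which Step~1 guarantees.
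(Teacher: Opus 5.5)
Your proof is correct. Step~1 is the same reduction the paper uses: strict monotonicity of~$\phi$ pulls ``sent during~$O$'' back from~$R'$ to~$R$, so $0\le s$ and $t-s\le1$.

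The key inequality is obtained differently. The paper splits into three cases according to where the send time~$s$ and receive time~$t$ fall on the integer grid:
\begin{itemize}
\item $s=0$;
\item both in the same piece $(i-1,i]$, where slope~$\eta$ gives $\phi(t)-\phi(s)\le\eta$;
\item in adjacent pieces, where it computes $\phi(t)-\phi(s)=1+\eta(t-s-1)$.
\end{itemize}
You instead prove the single identity $\phi(t+1)=\phi(t)+1$ for all $t\ge0$ and combine it with monotonicity: $\phi(r)\le\phi(s+1)=\phi(s)+1$.

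Your route is shorter and needs no case analysis on the relative positions of send and receive times. It also shows exactly where $s\ge0$ is needed: the identity fails for $t<0$ near~$0$, e.g.\ $\phi(1/2)-\phi(-1/2)=3/2-\eta/2>1$. The paper's third case is in effect the computation behind your identity (take $t=s+1$), and its first two cases are absorbed by your monotonicity step. The paper's version additionally records the sharper bound~$\eta$ for messages within a single piece, which is not needed for the claim.
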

  \begin{proof}
    \proofqed{claim:floor-worst-case-delay-timely}
    Fix an arbitrary message~$m$ sent during~$O$ in~$R'$ to a process that takes infinitely many steps.
    Let $s<t$ denote, respectively, the send and receive times of~$m$ in~$R$.
    Since $T'=\phi\circ T$, its send and receive times in~$R'$ are $\phi(s)$ and~$\phi(t)$, respectively.

    Thus, $\phi(0)\leq \phi(s)\leq \phi(\latency[R]{O})$.
    Because $\phi$ is strictly increasing, this implies $0\leq s\leq \latency[R]{O}$, so $m$~is sent during~$O$ in~$R$.
    Since messages are timely during~$O$ in~$R$, $\latency[R]{m}=t-s\leq 1$.

    It remains to show that $\latency[R']{m}=\phi(t)-\phi(s)\leq 1$.
    Given that $0 < t-s \leq 1$, there are three cases for the positions of $s$ and $t$, exemplified by messages labeled~\circcase{1},~\circcase{2}, and~\circcase{3} in \Cref{fig:floor-worst-case-delay}:
    \begin{enumerate}
      \item If $s=0$, then $0<t\le 1$, and we get $\phi(t)-\phi(s) = \phi(t) \leq \phi(1) = 1$.
      \item If $s, t\in (i-1, i]$ for some $i\geq 1$, then since $\phi$ is affine on $(i-1,i]$ with slope~$\eta$,
        \[
          \phi(t)-\phi(s)=\eta(t-s)\leq \eta<1.
        \]
      \item Otherwise, $s\in(i-1,i]$ and $t\in(i,i+1]$ for some $i\geq 1$. Then
        \[
          \phi(t)-\phi(s)
          = \bigl[(i+1-\eta)+\eta(t-i)\bigr]-\bigl[(i-\eta)+\eta(s-(i-1))\bigr]
          = 1+\eta(t-s-1)
          \leq 1.
        \]
    \end{enumerate}
    Thus, $\latency[R']{m}\leq 1$ in every case.
    Since $m$ was arbitrary, messages are timely during~$O$ in~$R'$.
  \end{proof}

  \begin{claim}\label{claim:floor-worst-case-delay-latency}
    $\latency[R']{O}>\alpha$.
  \end{claim}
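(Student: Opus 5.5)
We need $\phi(\latency[R]{O})>\alpha$, where $L\deq\latency[R]{O}$ satisfies $k<L\le\alpha<k+1$ by \eqref{eq:floor-worst-case-delay-inequality}. The plan is to compute $\phi(L)$ directly from the definition of~$\phi$ and compare it with~$\alpha$ using the choice of~$\eta$. This then contradicts the assumption $\latency{\alg}\le\alpha$, because \Cref{claim:floor-worst-case-delay-timely} shows that $R'$ is a run with messages timely during~$O$.

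First, since $k<L<k+1$, the time $L$ lies in the unit interval $(k,k+1]$, which is the case $i=k+1\ge1$ in the definition of~$\phi$. Hence
\[
  \phi(L)=\Aff_{(k,k+1]\to(k+1-\eta,k+1]}(L)=(k+1-\eta)+\eta(L-k).
\]
Because $L>k$, this gives $\phi(L)>k+1-\eta$.

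Second, substitute $\eta=(k+1-\alpha)/2$:
\[
  k+1-\eta=\frac{k+1+\alpha}{2}>\alpha,
\]
where the inequality uses $\alpha<k+1$. Combining the two steps gives $\latency[R']{O}=\phi(L)-\phi(0)=\phi(L)>\alpha$.

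The only subtlety is to confirm that $L$ falls in the interval indexed by $i=k+1$ rather than at an endpoint. This holds because $L>k$ strictly and $L\le\alpha<k+1$. It is also why $k=\lfloor\alpha\rfloor$ is used and the case $\alpha=k$ was handled separately. Everything else is the arithmetic above; I do not expect a real obstacle.
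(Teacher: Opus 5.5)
Your proposal is correct and follows essentially the same route as the paper's proof: place $\latency[R]{O}$ in $(k,k+1]$, conclude $\phi(\latency[R]{O})>k+1-\eta$, and note $k+1-\eta=(k+1+\alpha)/2>\alpha$. The only difference is that you write out the affine formula explicitly where the paper cites the image interval $(k+1-\eta,k+1]$, and you correctly use $\phi(0)=0$.
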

  \begin{proof}
    \proofqed{claim:floor-worst-case-delay-latency}
    By~\eqref{eq:floor-worst-case-delay-inequality}, $\latency[R]{O}\in(k,k+1]$.
    Since~$\phi$ maps $(k,k+1]$ onto $(k+1-\eta,k+1]$, it follows that
    \[
      \phi(\latency[R]{O})>k+1-\eta.
    \]
    Therefore,
    \[
      \latency[R']{O}
      =\phi(\latency[R]{O})-\phi(0)
      >k+1-\eta
      =\frac{k+1+\alpha}{2}
      >\alpha.\qedhere
    \]
  \end{proof}

  By the first claim and the definition of worst-case message delay,
  $\latency[R']{O}\leq\latency{\alg}\leq\alpha$, contradicting the second claim.
  Hence, $\latency{\alg}\leq k=\lfloor \alpha \rfloor$.
\end{proof}

\Cref{thm:integer-worst-case-delay}, which we restate below, immediately follows from this lemma.

\phantomsection\label{proof:integer-worst-case-delay}
\integerWorstCaseDelay*
\begin{proof}
  \proofqed[proof:integer-worst-case-delay]{thm:integer-worst-case-delay}
  Let the algorithm's worst-case message delay be $\alpha<\infty$.
  Since operation latencies are nonnegative, we have $\alpha \ge 0$.
  By \Cref{thm:floor-worst-case-delay}, the algorithm's worst-case message delay is at most~$\lfloor \alpha \rfloor$, so $\alpha \le \lfloor \alpha \rfloor$.
  Since always $\lfloor \alpha \rfloor \le \alpha$, it follows that $\alpha = \lfloor \alpha \rfloor$.
  Thus~$\alpha$ is a nonnegative integer.
\end{proof}

\section{Auxiliary Results for Latency Lower Bounds}\label{sec:lower-bound-ingredients}
\subsection{Proofs of the Loquacious-Transformation Lemmas}\label{sec:loquacious-transformation-proofs}

We prove the two loquacious-transformation lemmas in turn, beginning with crash failures and then adapting the argument to Byzantine failures.

\subsubsection{Crash Failures}\label{sec:loquacious-transformation-proofs:crash}

We first prove \Cref{thm:loquacious-transformation} from \Cref{sec:crash}, restated below for convenience.

\phantomsection\label{proof:loquacious-transformation}
\loquaciousTransformation*
\begin{proof}
  \proofqed[proof:loquacious-transformation]{thm:loquacious-transformation}
  We construct an algorithm~$\alg'$ in which each process simulates its counterpart in~$\alg$ but broadcasts one \emph{packed message} in every step.
  A packed message encodes a finite, possibly empty set of pairs~$(m,q)$, where $m$~is an \emph{original message} of~$\alg$ and $q$~is its destination.

  For each process~$p$, we define the automaton~$\alg'_p$ that~$\alg'$ assigns to~$p$ in terms of the automaton~$\alg_p$ that~$\alg$ assigns to~$p$.
  The local states and initial state of~$\alg'_p$ are those of~$\alg_p$.
  From every state~$s$, set~$M'$ of received messages, and operation invocation (if any), $\alg'_p$ has a unique transition, which:
  \begin{itemize}
    \item extracts from the packed messages in~$M'$ the set~$M$ of original messages paired with destination~$p$;
    \item follows the unique transition~$t$ of~$\alg_p$ from state~$s$ on~$M$ and the same operation invocation, if any, moving to the state that~$t$ moves to;
    \item broadcasts the packed message containing the pair~$(m,q)$ for every original message~$m$ that~$t$ sends to a process~$q$; and
    \item returns the same value as~$t$, if~$t$ returns from an operation.
  \end{itemize}
  By construction, $\alg'$ is loquacious.

  Given a schedule~$S'$ of~$\alg'$, define an infinite sequence~$S$ step by step.
  For each~$i$, let~$p$ be the process that takes step~$S'[i]$.
  Step~$S[i]$ is taken by~$p$ and is defined as follows:
  \begin{itemize}
    \item $S[i]$ invokes an operation iff $S'[i]$ does;
    \item for every process~$q$, step~$S[i]$ receives from~$q$ precisely those original messages~$m$ for which~$(m,p)$ appears in a packed message received from~$q$ in~$S'[i]$;
    \item for every process~$q$, step~$S[i]$ sends to~$q$ precisely those original messages~$m$ for which~$(m,q)$ appears in the packed message broadcast in~$S'[i]$; and
    \item $S[i]$ returns from an operation with value~$v$ iff $S'[i]$ does with the same value.
  \end{itemize}

  We now isolate the properties of~$S$ that we will later need.
  \begin{claim}\label{claim:loquacious-transformation-correspondence}
    Let $S$ be defined as above from $S'$. Then:
    \begin{enumerate}[label=(\alph*),ref=(\alph*)]
      \item\label{item:loquacious-transformation-operation} For every~$i$, steps~$S[i]$ and~$S'[i]$ are taken by the same process and have the same invocation and return.
      \item\label{item:loquacious-transformation-steps} The same processes take infinitely many steps in~$S$ as in~$S'$.
      \item\label{item:loquacious-transformation-correct} Every process correct in~$S'$ is correct in~$S$.%
        \footnote{Of course, in the crash-failure model, a process is correct exactly when it takes infinitely many steps. We nevertheless state this separately from the preceding property so that both remain true in the Byzantine model, where a process may take infinitely many steps yet be faulty.}
      \item\label{item:loquacious-transformation-message} If a message sent in~$S[i]$ is received in~$S[j]$, then some message sent in~$S'[i]$ is received in~$S'[j]$.
      \item\label{item:loquacious-transformation-schedule} $S$ is a schedule of~$\alg$.
    \end{enumerate}
  \end{claim}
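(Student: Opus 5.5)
I will verify the five properties in order, with (a)--(d) read directly off the step-by-step definition of~$S$ and (e) assembled from them together with the construction of~$\alg'$.

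For~\ref{item:loquacious-transformation-operation}, step~$S[i]$ is by definition taken by the process that takes~$S'[i]$. It invokes iff~$S'[i]$ invokes, and it returns value~$v$ iff~$S'[i]$ does.

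For~\ref{item:loquacious-transformation-steps}, steps correspond one-to-one with the same taking process. Hence each process takes the same number of steps in~$S$ as in~$S'$.

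For~\ref{item:loquacious-transformation-correct}, in the crash model correctness means taking infinitely many steps, so it follows from~\ref{item:loquacious-transformation-steps}.

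For~\ref{item:loquacious-transformation-message}, suppose an original message~$m$ sent in~$S[i]$ is received in~$S[j]$. Then $(m,p)$ appears in a packed message received in~$S'[j]$ from the sender of~$S'[i]$. I must check that this packed message is the one broadcast in~$S'[i]$, not some other packed message from the same sender. Since each packed message is received at most once and was sent earlier, I will identify messages by their sending step. (I would tacitly make packed messages distinguishable by the sender, e.g.\ by including the sender's step count if needed; this bookkeeping matches what the definition of~$S$ implicitly assumes.)

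The substantive part is~\ref{item:loquacious-transformation-schedule}. I plan an induction on~$i$. Let $s_p^i$ be the state of~$p$ before its step in~$S'[i]$. The invariant is that this is also $p$'s state before~$S[i]$ when~$S$ is executed under~$\alg$. The inductive step uses the definition of~$\alg'_p$. The transition of~$\alg'_p$ on the received packed messages~$M'$ follows the $\alg_p$-transition on exactly the extracted set~$M$. That set~$M$ is what~$S[i]$ receives, and the invocation is the same. Hence the new state, the original messages sent (with their destinations), and the return value all coincide with those specified for~$S[i]$. So every step of~$S$ is taken according to~$\alg$.

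Next come the schedule conditions. By~\ref{item:loquacious-transformation-steps}, at least $n-f$ processes take infinitely many steps. By~\ref{item:loquacious-transformation-message}, each original message is received only after it was sent, to its intended destination, and at most once: distinct receptions of~$m$ would require distinct receptions of the one packed message carrying it. For eventual delivery, let~$m$ be sent in~$S[i]$ to a process~$q$ taking infinitely many steps. By definition, $(m,q)$ lies in the packed message broadcast in~$S'[i]$. Since $\alg'$ broadcasts to every process, including~$q$, and~$S'$ is a schedule, $q$~eventually receives that packed message in some step~$S'[j]$. Then~$S[j]$ receives~$m$.

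The main obstacle I anticipate is the message-identity issue in~\ref{item:loquacious-transformation-message} and in the at-most-once argument: the equal-content messages that loquacious broadcasting creates must not blur which send a receipt corresponds to. I would handle it by treating each message as tagged by its sending step, as the model's notion of ``the message sent in step~$S[j]$'' already presupposes.
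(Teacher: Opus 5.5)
Your proposal is correct and follows essentially the same route as the paper: an induction on~$i$ showing that each $S[i]$ performs, from the same local state, the transition of~$\alg$ that $S'[i]$ simulates; (a) and (b) read directly off the definition of~$S$; (d) by identifying each original message sent to~$q$ in~$S[i]$ with the copy of the packed message sent to~$q$ in~$S'[i]$, which is exactly the instance-tagging you describe and which the paper uses implicitly, so no change to~$\alg'$ such as adding step counts is needed; and (e) assembled from these. The only divergence is (c): you derive it from the crash-model equivalence of correctness with taking infinitely many steps, whereas the paper derives it from (b) plus the fact that every step of~$S$ follows~$\alg$, so that the claim can be reused in the Byzantine setting; also note, as the paper does, that (a) keeps each process's operations sequential, which is part of $S$ being a schedule.
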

  \begin{proof}
    \proofqed{claim:loquacious-transformation-correspondence}
    An induction on~$i$ shows that every step~$S[i]$ follows~$\alg$: It performs precisely the transition of~$\alg$ that~$S'[i]$ simulates, from the same local state.
    Moreover, corresponding steps of~$S$ and~$S'$ are taken by the same process and have the same invocation and return.
    Thus \ref{item:loquacious-transformation-operation} and~\ref{item:loquacious-transformation-steps} hold.
    Since every step of~$S$ follows~$\alg$, \ref{item:loquacious-transformation-steps} implies \ref{item:loquacious-transformation-correct}.

    For every original message~$m$ sent to~$q$ in~$S[i]$, let~$m'$ be the copy of the packed message broadcast in~$S'[i]$ that is sent to~$q$.
    By construction, $m'$ contains the pair~$(m,q)$, and $m$ is received in~$S[j]$ iff $m'$ is received in~$S'[j]$, proving \ref{item:loquacious-transformation-message}.
    Together with \ref{item:loquacious-transformation-steps} and the fact that~$S'$ is a schedule, this correspondence implies that in~$S$, every received message was sent earlier, no message is received more than once, and every message sent to a process that takes infinitely many steps is eventually received.
    Moreover, \ref{item:loquacious-transformation-steps} implies that at least $n-f$ processes take infinitely many steps in~$S$, since they do in~$S'$.
    By~\ref{item:loquacious-transformation-operation}, each process's operations remain sequential.
    Thus $S$ is a schedule of~$\alg$, proving \ref{item:loquacious-transformation-schedule}.
  \end{proof}

  \begin{claim}\label{claim:loquacious-transformation-implementation}
    $\alg'$ is an implementation of~\ts{}.
  \end{claim}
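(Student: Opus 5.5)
We verify the conditions of \Cref{thm:timestamp-characterization} for $\alg'$, namely ordering, uniqueness, liveness, and positive integer return values. In each case we transfer the property from~$\alg$ through the correspondence $S'\mapsto S$ of \Cref{claim:loquacious-transformation-correspondence}. The key observation is that the history of a schedule is determined by the steps' processes, invocations and returns. By \Crefpart{claim:loquacious-transformation-correspondence}{item:loquacious-transformation-operation}, the histories of $S'$ and~$S$ therefore coincide exactly, as sequences of invocation and response events at the same processes with the same values.

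Fix an arbitrary schedule~$S'$ of~$\alg'$ with history~$H'$, and let $S$ be the sequence constructed from~$S'$. By \Crefpart{claim:loquacious-transformation-correspondence}{item:loquacious-transformation-schedule}, $S$ is a schedule of~$\alg$. Since $\alg$ implements~\ts{}, \Cref{thm:timestamp-characterization} shows that the history~$H$ of~$S$ satisfies ordering~(\ref{cond:timestamp-ordering}), uniqueness~(\ref{cond:timestamp-uniqueness}), and liveness~(\ref{cond:timestamp-liveness}). The return values of~$\alg$ are positive integers.
\begin{itemize}
  \item Because $H'=H$, the return values of~$\alg'$ are positive integers.
  \item For the same reason, precedence between complete operations and their return values are identical in $H'$ and~$H$, so (\ref{cond:timestamp-ordering}) and~(\ref{cond:timestamp-uniqueness}) hold for~$H'$.
  \item For liveness, let $O$ be an operation invoked by a process correct in~$S'$. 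By \Crefpart{claim:loquacious-transformation-correspondence}{item:loquacious-transformation-correct}, that process is correct in~$S$. Hence the corresponding operation in~$S$, which has the same invocation step index, returns at some step~$S[j]$. Then $S'[j]$ returns as well.
\end{itemize}
\Cref{thm:timestamp-characterization} then yields that $\alg'$ implements~\ts{}.

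There is one subtlety to check: the history identity requires the indices of $S$ and $S'$ to match step for step. This holds by construction, and it is exactly what part~\ref{item:loquacious-transformation-operation} records. I expect no real obstacle here. The main work, showing that $S$ is a genuine schedule of~$\alg$ (sent-before-received, at-most-once delivery, eventual delivery, and at least $n-f$ infinitely-stepping processes), was already done in \Cref{claim:loquacious-transformation-correspondence}. The delicate ingredient of the overall lemma is the separate bound $\latency{\alg'}\leq\latency{\alg}$. That bound will reuse part~\ref{item:loquacious-transformation-message} to show that timeliness during~$O$ in a run of~$\alg'$ implies timeliness in the corresponding run $(S,T)$ of~$\alg$, but it is not needed for this claim.
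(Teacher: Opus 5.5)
Your proof is correct and takes essentially the paper's approach: both rest on the identical histories of $S'$ and~$S$ from part~\ref{item:loquacious-transformation-operation}, and both transfer liveness via part~\ref{item:loquacious-transformation-correct} and the shared return-step index. The only difference is that the paper transfers linearizability~(\ref{cond:timestamp-safety}) directly from $H'=H$, so your detour through ordering, uniqueness, and \Cref{thm:timestamp-characterization} (including the positive-integer check) is valid but not needed.
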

  \begin{proof}
    \proofqed{claim:loquacious-transformation-implementation}
    Let $S'$ be any schedule of~$\alg'$ and let $S$ be as defined above.
    By \Crefpart{claim:loquacious-transformation-correspondence}{item:loquacious-transformation-schedule}, $S$ is a schedule of~$\alg$.

    By \Crefpart{claim:loquacious-transformation-correspondence}{item:loquacious-transformation-operation}, $S$ and~$S'$ have the same history.
    Since $\alg$ satisfies the safety condition~(\ref{cond:timestamp-safety}), this history is linearizable with respect to~\ts{}, so $\alg'$ satisfies~(\ref{cond:timestamp-safety}) as well.

    To show that $\alg'$ satisfies the liveness condition~(\ref{cond:timestamp-liveness}), let $O'$ be any operation in~$S'$ invoked by a correct process~$p$.
    By \Crefpart{claim:loquacious-transformation-correspondence}{item:loquacious-transformation-operation}, $S$ has a corresponding operation~$O$ invoked by the same process~$p$.
    By \Crefpart{claim:loquacious-transformation-correspondence}{item:loquacious-transformation-correct}, $p$ is correct in~$S$ as well.
    Since $S$ is a schedule of~$\alg$, which implements~\ts{}, $O$ eventually returns.
    By \Crefpart{claim:loquacious-transformation-correspondence}{item:loquacious-transformation-operation}, the corresponding return occurs in~$S'$, so $O'$ eventually returns as well.
  \end{proof}

  \begin{claim}\label{claim:loquacious-transformation-latency}
    $\latency{\alg'}\leq\latency{\alg}$.
  \end{claim}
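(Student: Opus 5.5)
The plan is to show that every run of~$\alg'$ can be transferred to a run of~$\alg$ in which the same operation has the same latency and messages are still timely during it. Then the supremum defining $\latency{\alg'}$ ranges over a subset of the values in the supremum defining $\latency{\alg}$.

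Fix a run $R'=(S',T')$ of~$\alg'$ and an operation~$O'$ invoked by a correct process such that messages are timely during~$O'$. Let $S$ be the sequence obtained from~$S'$ by the construction above; by \Crefpart{claim:loquacious-transformation-correspondence}{item:loquacious-transformation-schedule} it is a schedule of~$\alg$. Set $R\deq(S,T')$, keeping the same times.

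First, check that $R$ is a run of~$\alg$.
\begin{itemize}
  \item Condition~(2) of \Cref{def:run} holds because corresponding steps are taken by the same processes (\Crefpart{claim:loquacious-transformation-correspondence}{item:loquacious-transformation-operation}).
  \item Condition~(1) holds by \Crefpart{claim:loquacious-transformation-correspondence}{item:loquacious-transformation-message}: if a message sent in~$S[i]$ is received in~$S[j]$, then some packed message sent in~$S'[i]$ is received in~$S'[j]$, so $T'[i]<T'[j]$.
\end{itemize}
Let $O$ be the operation of~$S$ corresponding to~$O'$. By \Crefpart{claim:loquacious-transformation-correspondence}{item:loquacious-transformation-operation}, $O$ has the same invocation and return steps as~$O'$, hence $\interval[R]{O}=\interval[R']{O'}$ and $\latency[R]{O}=\latency[R']{O'}$. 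By \Crefpart{claim:loquacious-transformation-correspondence}{item:loquacious-transformation-correct}, the invoker of~$O$ is correct in~$R$.

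The remaining and main step is timeliness. Take an original message~$m$ sent in~$S[i]$ during~$O$ to a process~$q$ that takes infinitely many steps. Such a $q$ is the same in both schedules by \Crefpart{claim:loquacious-transformation-correspondence}{item:loquacious-transformation-steps}. Since $T'[i]\in\interval[R]{O}=\interval[R']{O'}$, the copy~$m'$ of the packed message broadcast in~$S'[i]$ to~$q$ is sent during~$O'$. That copy is therefore timely: it is received in some step~$S'[j]$ with $T'[j]-T'[i]\le1$. The step $S[j]$ receives exactly the original messages addressed to~$q$ that $m'$ contains, which includes~$m$. Hence $\latency[R]{m}\le1$, so messages are timely during~$O$ in~$R$.

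The delicate point is precisely this use of the fact, established inside the proof of \Crefpart{claim:loquacious-transformation-correspondence}{item:loquacious-transformation-message}, that $m$ is received in~$S[j]$ iff $m'$ is. This gives the reverse direction of the correspondence, so I would cite that argument explicitly.

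Thus each value in the supremum defining $\latency{\alg'}$ also appears in the supremum defining $\latency{\alg}$, giving $\latency{\alg'}\le\latency{\alg}$.
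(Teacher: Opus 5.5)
Your proof is correct and follows the same approach as the paper: you keep the time sequence $T'$, check via the correspondence claim that $(S,T')$ is a run of~$\alg$, and transfer to it the operation's interval and latency, the correctness of its invoker, and the timeliness of messages. The only difference is the direction of the message correspondence in the timeliness step: the paper starts from the receipt of~$m$ in~$S$ (guaranteed because $S$ is a schedule and $q$ takes infinitely many steps) and applies part~(d) to obtain~$m'$, whereas you start from the timely receipt of~$m'$ in~$S'$ and use the converse established inside the proof of part~(d); both are valid.
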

  \begin{proof}
    \proofqed{claim:loquacious-transformation-latency}
    Let $R'=(S',T')$ be any run of~$\alg'$, and let $R=(S,T')$, where~$S$ is as defined above.
    We first verify that $R$ is a run of~$\alg$:
    \begin{itemize}
      \item By \Crefpart{claim:loquacious-transformation-correspondence}{item:loquacious-transformation-schedule}, $S$ is a schedule of~$\alg$.
      \item Since $R'$ is a run, $T'$ is non-decreasing and unbounded.
      \item If a message is sent in~$S[i]$ and received in~$S[j]$, then \Crefpart{claim:loquacious-transformation-correspondence}{item:loquacious-transformation-message} gives a message sent in~$S'[i]$ and received in~$S'[j]$; hence $T'[i]<T'[j]$.
      \item Likewise, if $S[i]$ and~$S[j]$, with $i<j$, are taken by the same process, then \Crefpart{claim:loquacious-transformation-correspondence}{item:loquacious-transformation-operation} implies that $S'[i]$ and~$S'[j]$ are taken by the same process; hence again $T'[i]<T'[j]$.
    \end{itemize}
    Thus all conditions of the definition of a run are satisfied.

    Let $O'$ be any operation in~$R'$ invoked by a correct process~$p$ such that messages are timely during~$O'$.
    By \Crefpart{claim:loquacious-transformation-correspondence}{item:loquacious-transformation-operation}, $S$ has a corresponding operation~$O$ invoked by~$p$, with the same invocation- and return-step indices.
    Since $R$ and $R'$ share the time sequence~$T'$,
    \[
      \interval{O}=\interval{O'}
      \quad\text{and hence}\quad
      \latency{O}=\latency{O'}.
    \]
    By \Crefpart{claim:loquacious-transformation-correspondence}{item:loquacious-transformation-correct}, $p$ is correct in~$R$ as well.

    We now show that messages are timely during~$O$ in~$R$.
    Let~$m$ be any message sent during~$O$ to a process~$q$ that takes infinitely many steps in~$S$, so $m$ is eventually received.
    Suppose it is sent in step~$S[i]$ and received in step~$S[j]$.
    By \Crefpart{claim:loquacious-transformation-correspondence}{item:loquacious-transformation-message}, some message~$m'$ is sent in~$S'[i]$ and received in~$S'[j]$, and so
    \[
      \latency{m'}=T'[j]-T'[i]=\latency{m}.
    \]
    By \Crefpart{claim:loquacious-transformation-correspondence}{item:loquacious-transformation-operation}, step~$S'[j]$ is taken by~$q$, so $m'$ is sent to~$q$.
    By \Crefpart{claim:loquacious-transformation-correspondence}{item:loquacious-transformation-steps}, $q$ takes infinitely many steps in~$S'$.
    Moreover, because $m$ and~$m'$ are sent at the same time and $\interval{O}=\interval{O'}$, message~$m'$ is sent during~$O'$.
    Because messages are timely during~$O'$ in~$R'$, we have
    \[
      \latency{m}=\latency{m'}\leq1.
    \]
    Thus every message sent during~$O$ to a process that takes infinitely many steps has latency at most one, so messages are timely during~$O$ in~$R$.

    Since $O$ is invoked by a correct process, the definition of $\latency{\alg}$ yields
    \[
      \latency{O'}
      =\latency{O}
      \leq\latency{\alg}.
    \]
    Taking the supremum over all such $R'$ and~$O'$ yields
    $\latency{\alg'}\leq\latency{\alg}$.
  \end{proof}

  The construction and the latter two claims establish the result.
\end{proof}

\subsubsection{Byzantine Failures}\label{sec:loquacious-transformation-proofs:byz}

We next restate \Cref{thm:loquacious-transformation-byz} from \Cref{sec:byz:auxiliary-lemmas} and sketch its proof.

\phantomsection\label{proof:loquacious-transformation-byz}
\loquaciousTransformationByz*
\begin{proof}[Proof sketch]
  \proofqed[proof:loquacious-transformation-byz]{thm:loquacious-transformation-byz}
  We reuse the construction of~$\alg'$ and the transformation of~$S'$ into~$S$ from the proof of the crash-failure counterpart, \nameCref{thm:loquacious-transformation}~\hyperref[proof:loquacious-transformation]{\ref*{thm:loquacious-transformation}}, modifying only how the messages sent in~$S[i]$ are defined:
  \begin{quote}
    For every process~$q$, step~$S[i]$ sends to~$q$ precisely those original messages~$m$ for which~$(m,q)$ appears in \uline{a packed message sent to~$q$} in~$S'[i]$.
  \end{quote}
  This modification is necessary because the crash-failure transformation refers to \emph{the packed message broadcast} in~$S'[i]$, whereas a Byzantine process may send different messages to different recipients or send no message at all.

  Because the construction of~$\alg'$ is unchanged, the algorithm is loquacious as before.
  Moreover, under the modified transformation, \Cref{claim:loquacious-transformation-correspondence} continues to hold under Byzantine failures.
  We omit the details of the argument, but two points merit mention:
  \begin{itemize}
    \item Malformed messages in~$S'$ are simply ignored, as desired.
      This is because both the construction of~$\alg'$ and the transformation of~$S'$ into~$S$ consider only (valid) packed messages.
    \item A process that is faulty in~$S'$ and takes infinitely many steps may nevertheless be correct in~$S$, if all of its transformed steps happen to conform to~$\alg$.
      This is harmless: \Cref{claim:loquacious-transformation-correspondence} requires only that every process correct in~$S'$ also be correct in~$S$.
      The subsequent arguments use this implication only in that direction.
  \end{itemize}

  Let $S'$ be any schedule of~$\alg'$, and let $S$ be the sequence obtained by the modified transformation.
  Since $\alg$ satisfies the safety condition~(\ref{cond:byz-timestamp-safety}), there is a history~$\widehat H$ that is linearizable with respect to~\ts{} and agrees with the history of~$S$ on the processes correct in~$S$.
  By \Cref{claim:loquacious-transformation-correspondence}, $S$ and~$S'$ have the same history, and every process correct in~$S'$ is correct in~$S$.
  Hence $\widehat H$ agrees with the history of~$S'$ on the processes correct in~$S'$, so $\alg'$ satisfies the safety condition~(\ref{cond:byz-timestamp-safety}).
  The liveness and latency arguments carry over unchanged, using~(\ref{cond:byz-timestamp-liveness}) for the liveness condition.

\end{proof}

\subsection{Weaker Timeliness Requirements}
The definitions and results in this subsection apply in both the crash-failure and Byzantine settings, with ``correct'' interpreted in the respective sense.

Recall from \Cref{def:worst-case-message-delay} that an algorithm's worst-case message delay bounds the latency of an operation if messages are timely (i.e., have latency at most one) \emph{throughout} the operation.
This bound should continue to hold under a weaker requirement that exempts messages sent within one unit of time before the operation returns.
Indeed, if such a message is untimely, it arrives only after the operation has returned and cannot affect the timing of the return step.

To state this requirement more conveniently, we extend timeliness from operations to arbitrary time intervals:
Given a run~$R=(S,T)$ and an interval $I\subseteq\R$, we say that \emph{messages are timely during~$I$ in~$R$} if $\latency{m}\leq1$ for every message~$m$ sent at a time in~$I$ to a process that takes infinitely many steps in~$S$.
Timeliness during an operation~$O$ (\Cref{def:message-timeliness}) is the special case $I=\interval[R]{O}$.

\Cref{def:ultimately-timely} formalizes the weaker requirement described above, called \emph{ultimate timeliness}; \Cref{thm:ultimately-timely} shows that the message-delay bound still applies.

\begin{definition}[Ultimate timeliness]\label{def:ultimately-timely}
  Let $R=(S, T)$ be a run, and let $O$ be an operation in~$R$.
  We say that \emph{messages are ultimately timely during~$O$ in~$R$} if either of the following applies:
  \begin{enumerate}[label=(\arabic*), ref=(\arabic*)]
    \item\label{case:ultimately-timely-incomplete}
      $O$ is incomplete, and messages are timely during~$O$ in~$R$; or
    \item $O$ is complete, having been invoked in step~$S[a]$ and returned in step~$S[b]$, and messages are timely during~$[T[a], T[b]-1]$ in~$R$.
  \end{enumerate}
\end{definition}

Note that if $\latency{O} < 1$, then messages are vacuously ultimately timely during~$O$.

\begin{lemma}\label{thm:ultimately-timely}
  Let $\alg$ be an algorithm, let $R$ be a run of~$\alg$, and let $O$ be an operation in~$R$ invoked by a correct process.
  If messages are \emph{ultimately timely} during~$O$ in~$R$, then $\latency{O}\leq \latency{\alg}$.
\end{lemma}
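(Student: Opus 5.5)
The plan is to build, from $R=(S,T)$, a second run $R'=(S,T')$ with the same schedule. In $R'$, messages are timely during~$O$ in the ordinary sense of \Cref{def:message-timeliness}, and $\interval[R']{O}=\interval[R]{O}$. The definition of $\latency{\alg}$ (\Cref{def:worst-case-message-delay}) then gives $\latency[R]{O}=\latency[R']{O}\le\latency{\alg}$, since the schedule, and hence the set of correct processes, is unchanged.

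The incomplete case~\ref{case:ultimately-timely-incomplete} is immediate: take $R'=R$. The same holds in the complete case whenever $\latency{\alg}=\infty$. So assume $O$ is complete, invoked in $S[a]$ and returning in $S[b]$. Let $I=[T[a],T[b]-1]$ (possibly empty), and let $U=(T[b]-1,T[b]]$. The only messages that may violate timeliness are those sent to infinitely-stepping processes at times in~$U$ with latency exceeding one. Such a message is received after time $T[b]$, because it was sent after $T[b]-1$ and took longer than one unit.

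The retiming leaves every step with $T[j]\le T[b]$ unchanged, so the interval of~$O$ and all messages sent in~$I$ keep their latencies. Every step with $T[j]>T[b]$ gets pulled earlier toward $T[b]$. The key observation is that an offending message is sent at some time $s\in U$, so $s\le T[b]$, and received at some time $t>T[b]$. It therefore suffices to map $(T[b],\infty)$ by a strictly increasing, unbounded $\psi$ with $\psi(x)>T[b]$ and $\psi(x)\le T[b]+\epsilon$ on a suitable initial portion.

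I would take
\[
  \psi(x)=\begin{cases} x, & x\le T[b],\\ T[b]+\epsilon\,(x-T[b]), & T[b]<x\le T[b]+1/\epsilon\ldots\end{cases}
\]
This is not quite enough on its own, since receive times of offending messages can be arbitrarily late. Instead I would use the fact that only finitely many steps occur by time $T[b]$, because $T$ is unbounded and nondecreasing. Hence only finitely many messages are sent in~$U$, and only finitely many steps $S[j]$ with $T[j]>T[b]$ receive such a message. Let $j^*$ be the largest index of such a receive step. The steps with $T[b]<T[j]\le T[j^*]$ are all compressed affinely into $(T[b],T[b]+\delta]$, where $\delta=\min\{1-(T[b]-s)\}$ over the send times $s$ of the offending messages; this $\delta$ is positive since each such $s>T[b]-1$. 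Later times are shifted down by a constant via $\Aff$ (\Cref{def:affine-map}), so $\psi$ is continuous, strictly increasing and unbounded (\Cref{thm:affine-map}).

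Next I would check that $R'=(S,\psi\circ T)$ is a run: $\psi$ strictly increasing preserves both order conditions of \Cref{def:run}, and $\psi\circ T$ is unbounded. Then I would check that messages are timely during~$O$ in~$R'$, treating messages sent during~$O$ case by case.

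If the message is sent in~$I$, its send time is unchanged. Its receive time is unchanged or decreased, and $\psi(x)-\psi(y)\le x-y$ for $y\le x$ because $\psi$ has slope at most~$1$. So it stays timely.

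If the message is sent in~$U$ and received by $T[b]$, its latency is unchanged and below one.

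If the message is sent in~$U$ and received after $T[b]$, it is received by step $j^*$ by the choice of $j^*$. Its new receive time lies in $(T[b],T[b]+\delta]$, so its latency is at most $T[b]-s+\delta\le1$.

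The main obstacle is exactly this finiteness and choice of compression so that one uniform $\delta$ works. I expect it to go through because only finitely many steps precede $T[b]$.
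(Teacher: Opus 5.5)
Your proof is correct and follows essentially the paper's argument: keep the schedule, fix every time up to $T[b]$, and affinely compress the stretch from $T[b]$ to the last relevant receive time into $(T[b],T[b]+\delta]$. The paper takes $\delta=\tau+1-T[b]$, with $\tau$ the earliest send time over \emph{all} messages sent in $(T[b]-1,T[b]]$, and leaves later times unchanged rather than shifting them down.

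One small repair is needed. Your $\delta$ ranges only over offending messages, so the bound $T[b]-s+\delta\le 1$ in your third case can fail for a non-offending message sent earlier in $(T[b]-1,T[b]]$. For those messages, use your slope-at-most-one observation instead, or take the minimum over all messages sent in that interval, as the paper does. You should also treat the case with no offending message separately, by taking $R'=R$.
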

\begin{figure}
    \centering
    \begin{tikzpicture}[scbase]
    \newif\ifdrawtransformarrows
    \drawtransformarrowsfalse

    \def\yTop{2.5}
    \def\yBot{0.0}
    \def\labelAboveOffset{12pt}
    \def\labelBelowOffset{9pt}
    \def\rowLabelXShift{0.45}
    \def\opTopY{3.5}
    \def\opBotY{-1.0}
    \def\opBarExtend{0.12}
    \def\arcHeight{0.4}
    \ifdrawtransformarrows
    \tikzset{sctranslab/.style={sctlab}}
    \else
    \tikzset{sctranslab/.style={sctlab, fill=none}}
    \fi
    
    \def\xLeft{-.5}
    \def\xRight{13.5}
    
    \def\xA{0.0}
    \def\xOne{.5}
    \def\xTwo{2}
    \def\xThree{3}
    \def\xFour{5}
    \def\xTau{6}
    \def\xFive{7}
    \def\xB{9}
    \def\xSixTopArrive{10}
    \def\xDeltaWidth{5} %
    \pgfmathsetmacro{\xTbmd}{\xB-\xDeltaWidth}      %
    \pgfmathsetmacro{\xTauPlusDelta}{\xTau+\xDeltaWidth}  %
    
    \def\xMuTop{13}
    \pgfmathsetmacro{\xPhiEps}{(\xTauPlusDelta-\xB)/(\xMuTop-\xB)}
    \pgfmathsetmacro{\xSixBotArrive}{\xB + (\xSixTopArrive-\xB)*\xPhiEps}

    \assertlt{\xA}{\xOne}{xA < xOne}
    \assertlt{\xOne}{\xTwo}{xOne < xTwo}
    \assertlt{\xTwo}{\xThree}{xTwo < xThree}
    \assertlt{\xThree}{\xTbmd}{xThree < xTbmd}
    \assertlt{\xTbmd}{\xFour}{xTbmd < xFour}
    \assertlt{\xFour}{\xTau}{xFour < xTau}
    \assertlt{\xTau}{\xFive}{xTau < xFive}
    \assertlt{\xFive}{\xB}{xFive < xB}
    \assertlt{\xB}{\xSixTopArrive}{xB < xSixTopArrive}
    \assertlt{\xSixTopArrive}{\xTauPlusDelta}{xSixTopArrive < xTauPlusDelta}
    \assertlt{\xTauPlusDelta}{\xMuTop}{xTauPlusDelta < xMuTop}

    \assertlt{\xB}{\xSixBotArrive}{xB < xSixBotArrive}
    \assertlt{\xSixBotArrive}{\xTauPlusDelta}{xSixBotArrive < xTauPlusDelta}

    \assertle{\xTwo}{\xOne + \xDeltaWidth}{xTwo <= xOne + xDeltaWidth}
    \assertle{\xFour}{\xThree + \xDeltaWidth}{xFour <= xThree + xDeltaWidth}
    \assertle{\xSixTopArrive}{\xTau + \xDeltaWidth}{xSixTopArrive <= xTau + xDeltaWidth}
    \assertlt{\xDeltaWidth}{\xMuTop - \xFive}{xDeltaWidth < xMuTop - xFive}

    \assertle{\xSixBotArrive}{\xTau + \xDeltaWidth}{xSixBotArrive <= xTau + xDeltaWidth}
    \assertle{\xTauPlusDelta}{\xTau + \xDeltaWidth}{xTauPlusDelta <= xTau + xDeltaWidth}

    \ifdrawtransformarrows\else
    \fill[transformshade] (\xB,\yTop) -- (\xMuTop,\yTop) -- (\xTauPlusDelta,\yBot) -- (\xB,\yBot) -- cycle;
    \fi

    \node[sctlab, anchor=east, above=\labelAboveOffset] at (\xLeft-\rowLabelXShift,\yTop) {$S$};
    \node[sctranslab, anchor=east, below=\labelBelowOffset] at (\xLeft-\rowLabelXShift,\yTop) {$T$};
    
    \fill[scwindow] (\xTbmd,\yTop-0.18) rectangle (\xB,\yTop+0.18);
    \draw[sctimeline] (\xLeft,\yTop) -- (\xB,\yTop);
    \draw[sctimeline] (\xB,\yTop) -- (\xMuTop,\yTop);
    \draw[sctimelineend] (\xMuTop,\yTop) -- (\xRight,\yTop);
    \path (\xA,\yTop) pic {scopabove={\xB-\xA/\opTopY-\yTop/O}};
    
    \path (\xA,\yTop) pic {scendpointtick};
    \path (\xTbmd,\yTop) pic {screftick};
    \path (\xB,\yTop) pic {scendpointtick};
    \path (\xTauPlusDelta,\yTop) pic {screftick};
    
    \node[sctlab, above=\labelAboveOffset] at (\xA,\yTop) {$a$};
    \node[sctranslab, below=\labelBelowOffset] at (\xA,\yTop) {$T[a]$};
    
    \node[sctranslab, below=\labelBelowOffset] at (\xTbmd,\yTop) {$T[b]-1$};
    \node[sctranslab, below=\labelBelowOffset] at (\xTau,\yTop) {$\tau$};
    
    \node[sctlab, above=\labelAboveOffset] at (\xB,\yTop) {$b$};
    \node[sctranslab, below=\labelBelowOffset] at (\xB,\yTop) {$T[b]$};
    
    \node[sctranslab, below=\labelBelowOffset] at (\xTauPlusDelta,\yTop) {$\tau+1$};
    \node[sctranslab, below=\labelBelowOffset] at (\xMuTop,\yTop) {$\mu$};
    \ifdrawtransformarrows
    \draw[scconnector] (\xB,\yTop) -- (\xB,\yBot);
    \draw[scconnector] (\xMuTop,\yTop) -- (\xTauPlusDelta,\yBot);
    \fi
    
    \node[scstepdot] at (\xA,\yTop) {};
    \node[scstepdot] at (\xOne,\yTop) {};
    \node[scstepdot] at (\xTwo,\yTop) {};
    \node[scstepdot] at (\xThree,\yTop) {};
    \node[scstepdot] at (\xFour,\yTop) {};
    \node[scstepdot] at (\xTau,\yTop) {};
    \node[scbadstep] at (\xFive,\yTop) {};
    \node[scstepdot] at (\xB,\yTop) {};
    \node[scstepdot] at (\xSixTopArrive,\yTop) {};
    \node[scbadstep] at (\xMuTop,\yTop) {};
    
    \path (\xOne,\yTop) pic {scgoodmsgarc={\xTwo-\xOne}};
    
    \path (\xThree,\yTop) pic {scgoodmsgarc={\xFour-\xThree}};
    
    \path (\xTau,\yTop) pic {scgoodmsgarc={\xSixTopArrive-\xTau}};
    
    \path (\xFive,\yTop) pic {scbadmsgarc={\xMuTop-\xFive}};
    
    \node[sctranslab, anchor=east, above=\labelAboveOffset] at (\xLeft-\rowLabelXShift,\yBot) {$S$};
    \node[sctlab, anchor=east, below=\labelBelowOffset] at (\xLeft-\rowLabelXShift,\yBot) {$T'$};
    
    \fill[scwindow] (\xTbmd,\yBot-0.18) rectangle (\xB,\yBot+0.18);
    \draw[sctimeline] (\xLeft,\yBot) -- (\xB,\yBot);
    \draw[sctimeline] (\xB,\yBot) -- (\xTauPlusDelta,\yBot);
    \draw[sctimeline, dashed] (\xTauPlusDelta,\yBot) -- (\xMuTop,\yBot);
    \draw[sctimelineend] (\xMuTop,\yBot) -- (\xRight,\yBot);
    \path (\xA,\yBot) pic {scopbelow={\xB-\xA/\opBotY-\yBot/O}};
    
    \path (\xA,\yBot) pic {scendpointtick};
    \path (\xTbmd,\yBot) pic {screftick};
    \path (\xB,\yBot) pic {scendpointtick};
    \path (\xTauPlusDelta,\yBot) pic {screftick};
    \path (\xMuTop,\yBot) pic {screftick};
    
    \node[sctranslab, above=\labelAboveOffset] at (\xA,\yBot) {$a$};
    \node[sctlab, below=\labelBelowOffset] at (\xA,\yBot) {$T[a]$};
    
    \node[sctlab, below=\labelBelowOffset] at (\xTbmd,\yBot) {$T[b]-1$};
    \node[sctlab, below=\labelBelowOffset] at (\xTau,\yBot) {$\tau$};
    
    \node[sctranslab, above=\labelAboveOffset] at (\xB,\yBot) {$b$};
    \node[sctlab, below=\labelBelowOffset] at (\xB,\yBot) {$T[b]$};
    
    \node[sctlab, below=\labelBelowOffset] at (\xTauPlusDelta,\yBot) {$\tau+1$};
    \node[sctlab, below=\labelBelowOffset] at (\xMuTop,\yBot) {$\mu$};
    
    \node[scstepdot] at (\xA,\yBot) {};
    \node[scstepdot] at (\xOne,\yBot) {};
    \node[scstepdot] at (\xTwo,\yBot) {};
    \node[scstepdot] at (\xThree,\yBot) {};
    \node[scstepdot] at (\xFour,\yBot) {};
    \node[scstepdot] at (\xTau,\yBot) {};
    \node[scstepdot] at (\xFive,\yBot) {};
    \node[scstepdot] at (\xB,\yBot) {};
    \node[scstepdot] at (\xSixBotArrive,\yBot) {};
    \node[scstepdot] at (\xTauPlusDelta,\yBot) {};
    
    \path (\xOne,\yBot) pic {scgoodmsgarc={\xTwo-\xOne}};
    
    \path (\xThree,\yBot) pic {scgoodmsgarc={\xFour-\xThree}};
    
    \path (\xTau,\yBot) pic {scgoodmsgarc={\xSixBotArrive-\xTau}};
    
    \path (\xFive,\yBot) pic {scgoodmsgarc={\xTauPlusDelta-\xFive}};
    
    \end{tikzpicture}
    \caption{
    \textbf{Proof of \Cref{thm:ultimately-timely}:} retiming the run~$R=(S,T)$ (top) into the run~$R'=(S,T')$ (bottom).
    Both runs share the same schedule~$S$, which contains an invocation at step~$S[a]$ and a corresponding response at step~$S[b]$.
    Each horizontal line is a ``unified'' timeline depicting steps taken by all processes.
    Dots are steps, arcs are messages (\textcolor{red}{red}: latency exceeding~$1$; \textcolor{black}{black}: at most~$1$), and dashed portions of a timeline indicate no step is taken.
    Observe that in~$R$, messages are ultimately timely, but not timely, during the operation (denoted~$O$).
    The \colorbox{windowshade}{\smash{shaded band}\vphantom{shaded band}} is the \emph{ultimate interval} $(T[b]-1, T[b]]$.
    Time~$\tau$ is the earliest sending time of a message in~$\ultMsgs$; time~$\mu$ is the latest receiving time of such a message.
    $T'$ is constructed from~$T$ by \textcolor{blue}{compressing} the time interval $(T[b], \mu]$ into $(T[b], \tau+1]$, pulling late message receptions to within~$1$ time unit of the send times.
    The operation~$O$ has the same latency in~$R'$ as in~$R$, but messages are timely during~$O$ in~$R'$.
    }\label{fig:ultimately-timely-proof}
\end{figure}

\begin{proof}
  \proofqed{thm:ultimately-timely}
  Write $R=(S,T)$.
  If messages are timely during~$O$ in~$R$, the claim follows directly from the definition of worst-case message delay, so assume otherwise.
  Since ultimate timeliness coincides with timeliness for an incomplete operation (\Crefpart{def:ultimately-timely}{case:ultimately-timely-incomplete}), $O$~must be complete.
  Let $O$~be invoked in step~$S[a]$ and return in step~$S[b]$.
  We will construct a run $R'=(S,T')$ such that $\latency[R']{O}=\latency[R]{O}$ and messages are timely during~$O$ in~$R'$; see \Cref{fig:ultimately-timely-proof}.

  Let $\allMsgs$ denote the set of messages sent during~$O$ in~$R$ to a process that takes infinitely many steps.
  Since $S$~is a schedule, every message in~$\allMsgs$ is eventually received.
  Let $\ultMsgs\subseteq\allMsgs$ consist of the messages sent during the \emph{ultimate interval} $(T[b]-1,T[b]]$.
  Since $T$~is non-decreasing and unbounded, only finitely many steps occur during this interval, hence $\ultMsgs$ is finite.
  Because messages are ultimately timely but not timely during~$O$ in~$R$, some message in~$\ultMsgs$ has latency exceeding~$1$, so $\ultMsgs\neq\emptyset$.

  Let $\tau$ and~$\mu$ be, respectively, the earliest send time and latest receive time of a message in~$\ultMsgs$.
  These times are well-defined because $\ultMsgs$ is finite, nonempty, and consists of messages that are received.
  By definition, $\tau>T[b]-1$.
  Moreover, if some message $m\in\ultMsgs$ with $\latency[R]{m}>1$ is sent at time~$s$ and received at time~$r$, then
  \[
    \mu\geq r>s+1\geq\tau+1.
  \]
  Thus, $T[b]<\tau+1<\mu$.

  We ``retime''~$R$ by compressing the interval $(T[b],\mu]$ into $(T[b],\tau+1]$ and leaving other times unchanged.
  Define $\phi:\R\to\R$ by
  \[
    \phi(t) \Deq
    \begin{cases}
      t, & t\leq T[b],\\
      \Aff_{(T[b],\mu]\to(T[b],\tau+1]}(t), & T[b]<t\leq\mu,\\
      t, & t>\mu.
    \end{cases}
  \]
  Since $T[b]<\tau+1<\mu$, the map~$\phi$ is strictly increasing and unbounded.

  Define $T'[i]\deq\phi(T[i])$.
  Because $T$ is non-decreasing and unbounded, so is~$T'$.
  Moreover, since~$\phi$ is strictly increasing, $T[i]<T[j]$ iff $T'[i]<T'[j]$ for all indices~$i$ and~$j$.
  Thus, $R'=(S,T')$ is a run of~$\alg$ with the same schedule as~$R$.
  Since $T[a]\leq T[b]$, we have $T'[a]=\phi(T[a])=T[a]$ and $T'[b]=\phi(T[b])=T[b]$, so $\latency[R']{O}=\latency[R]{O}$.
  
  It remains to show that messages are timely during~$O$ in~$R'$.
  Because $\phi$ is strictly increasing and fixes~$T[a]$ and~$T[b]$, for any time~$t$ we have
  \begin{equation}\label{eq:ultimately-timely-proof-interval}
    t\in\interval[R]{O} \qquad\iff\qquad \phi(t)\in\interval[R']{O}.
  \end{equation}
  Fix any message~$m$ sent during~$O$ in~$R'$ to a process that takes infinitely many steps.
  Let $S[i]$ be its send step and set $s\deq T[i]$.
  Its send time in~$R'$ is $\phi(s)\in \interval[R']{O}$, so \eqref{eq:ultimately-timely-proof-interval} implies $s\in \interval[R]{O}$, i.e., $m$ is also sent during~$O$ in~$R$.
  Hence $m\in\allMsgs$.
  Let $S[j]$ be its receive step and set $r\deq T[j]$.
  Since $s\leq T[b]$, we have $\phi(s)=s$, and therefore
  \[
    \latency[R']{m}
    =\phi(r)-\phi(s)
    =\phi(r)-s.
  \]

  There are two cases, depending on whether $m$ was sent before or during the ultimate interval:
  \begin{enumerate}
      \item If $s\leq T[b]-1$, ultimate timeliness during~$O$ in~$R$ implies $r-s\leq1$, and hence $r\leq T[b]$.
      Thus~$\phi(r)=r$, so
      \[
        \latency[R']{m}=\phi(r)-s=r-s\leq1.
      \]
      \item If $s>T[b]-1$, then $m\in\ultMsgs$, so $s\geq\tau$ and $r\leq\mu$.
      Therefore,
      \[
        \latency[R']{m}=\phi(r)-s
        \leq\phi(\mu)-\tau
        =(\tau+1)-\tau
        =1.
      \]
  \end{enumerate}
  In both cases, $\latency[R']{m}\leq1$.
  So every message sent during~$O$ in~$R'$ to a process that takes infinitely many steps has latency at most one, meaning messages are timely during~$O$ in~$R'$.

  By the definition of worst-case message delay, $\latency[R']{O}\leq\latency{\alg}$.
  Since $\latency[R]{O}=\latency[R']{O}$, the claim follows.
\end{proof}

We can weaken the timeliness requirement once more.
Write~$t$ for the time a complete operation returns.
The idea is to additionally exempt messages sent during the time interval $(t-2, t-1]$ to processes other than the invoker~$p$.
Any run that meets this weaker requirement can be transformed into one in which messages are ultimately timely during the operation: We move the receipt of each exempted message that arrived late to just after time~$t-1$.
Since~$p$ is not the recipient of any such message, this transformation leaves~$p$'s steps---and hence the operation's return time and latency---unchanged, so \Cref{thm:ultimately-timely} bounds the latency.

\Cref{def:penultimately-timely} states this weaker requirement precisely and names it \emph{penultimate timeliness}; \Cref{thm:penultimately-timely} then shows that the message-delay bound still applies under it.

\begin{definition}[Penultimate timeliness]\label{def:penultimately-timely}
  Let $R = (S, T)$ be a run, and let $O$ be an operation in~$R$.
  We say that \emph{messages are penultimately timely during~$O$ in~$R$} if either of the following applies:
  \begin{enumerate}[label=(\arabic*), ref=(\arabic*)]
    \item\label{case:penultimately-timely-incomplete}
      $O$ is incomplete and messages are timely during~$O$ in~$R$; or
    \item\label{case:penultimately-timely-complete}
      $O$ is complete, having been invoked by a process~$p$ in step~$S[a]$ and returned in step~$S[b]$;
      writing
      \[
        I_{\mathrm{early}} \Deq [T[a], T[b]-2]
        \qquad\text{and}\qquad
        I_{\mathrm{pen}} \Deq [T[a], T[b]-1] \setminus I_{\mathrm{early}},
      \]
      both of the following conditions apply:
      \begin{enumerate}[label=(\alph*), ref=\theenumi(\alph*)]
        \item\label{cond:penultimately-timely-1}
          messages are timely during~$I_{\mathrm{early}}$ in~$R$; and
        \item\label{cond:penultimately-timely-2}
          if $p$ takes infinitely many steps in~$S$, then every message sent to~$p$ during~$I_{\mathrm{pen}}$ has latency at most one in~$R$.
      \end{enumerate}
  \end{enumerate}
\end{definition}

Note that condition~\ref{cond:penultimately-timely-1} holds vacuously when $\latency{O} < 2$, and condition~\ref{cond:penultimately-timely-2} holds vacuously when $p$~crashes or when $\latency{O} < 1$.
In particular, if $\latency{O} < 1$, then messages are penultimately timely during~$O$.

\begin{lemma}\label{thm:penultimately-timely}
  Let $\alg$ be an algorithm, let $R$ be a run of~$\alg$, and let $O$ be an operation in~$R$ invoked by a correct process.
  If messages are penultimately timely during~$O$ in~$R$, then $\latency{O}\leq \latency{\alg}$.
\end{lemma}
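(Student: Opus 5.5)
We reduce \Cref{thm:penultimately-timely} to \Cref{thm:ultimately-timely} by modifying the run so that messages become ultimately timely during~$O$, while $p$'s view and the timing of the return step stay fixed. If messages are already timely during~$O$, or $O$ is incomplete, the claim follows directly from the definition of $\latency{\alg}$ or from \Cref{thm:ultimately-timely}. So assume $O$ is complete, invoked by~$p$ in~$S[a]$ and returning in~$S[b]$, and write $t_b=T[b]$.

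Let $\latePenMsgs$ be the set of messages sent during~$I_{\mathrm{pen}}=(t_b-2,t_b-1]$ to processes that take infinitely many steps, with latency exceeding one. By condition~\ref{cond:penultimately-timely-2}, none of these is sent to~$p$ when $p$ takes infinitely many steps. Since $p$ is correct, in the crash model it takes infinitely many steps, so every message in $\latePenMsgs$ goes to some $q\neq p$. (In the Byzantine model, correct processes take infinitely many steps too.)

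For each $m\in\latePenMsgs$, sent at time $s\le t_b-1$ to~$q$, we change the schedule. We remove $m$ from the step of~$q$ that originally received it, and insert a new step of~$q$ that receives only~$m$. This step is placed right after the last step with time at most $t_b-1$, and it gets a fresh time in $(t_b-1,\min(s+1,\cdot))$ below $q$'s next step time. Such a time exists because only finitely many steps occur in any bounded window, and $\latePenMsgs$ is finite.

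This is the main obstacle. Moving a receipt changes $q$'s state, so $q$'s subsequent steps, and the messages it sends, may differ from the original run. I would therefore not keep the suffix of the run. Instead, I would take the modified schedule prefix up to the point where $p$ returns and check that it is a legitimate prefix of~$\alg$ in which $p$'s local history through step~$b$ is identical to the original. To get this, any message sent by a perturbed process after the perturbation and received by~$p$ before~$b$ must be unaffected.

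The way to guarantee this is a causality argument. A message from~$q$ whose sending step comes after time $t_b-1$ and before $t_b$ can reach~$p$ before~$t_b$. Such a message does come from a perturbed state. To handle it, the inserted receive step should be placed \emph{after} $q$'s last step before~$t_b$, that is, in the interval $(t_b,\,s+1]$. This is nonempty because $s>t_b-2$ gives $s+1>t_b-1$, and here I would need $s+1>t_b$. That fails when $s\le t_b-1$ and $s+1\le t_b$.

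Let me restate the timing carefully. The receipt is delayed beyond $s+1$ only if it arrives after $s+1$. If $s+1\le t_b$, I would instead place the receipt after time $t_b-1$, so that the message is counted in the ultimate window. The cleanest option is to make the new receive time exactly $\min(s+1, t')$, where $t'$ is just after $q$'s last step at time at most $t_b$ whose output can influence~$p$. I expect to resolve this by delaying all of $q$'s affected steps until after~$t_b$ via a retiming like \Cref{thm:ultimately-timely}. This rests on the observation that in $[t_b-1,t_b]$, only messages sent there with latency at most one matter.

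Once the prefix through~$b$ is fixed with $p$'s steps unchanged, we extend it to a schedule of~$\alg$ with the same correct processes. In the resulting run, messages are timely during $I_{\mathrm{early}}$ and $I_{\mathrm{pen}}$, so they are ultimately timely during~$O$. The return time is unchanged, so \Cref{thm:ultimately-timely} gives $\latency{O}\le\latency{\alg}$.
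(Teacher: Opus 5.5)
Your reduction to \Cref{thm:ultimately-timely} is the right one. So is your observation that condition~\ref{cond:penultimately-timely-2} forces every late message sent in $I_{\mathrm{pen}}$ to go to some $q\neq p$. But the proposal stops at the step that carries the whole proof: you never build a run in which those receipts land within one unit of their sends while $p$'s step $S[b]$ stays intact. None of your candidate placements works:
\begin{itemize}
  \item Inserting $q$'s receipt right after time $t_b-1$ perturbs every later step of~$q$ before $S[b]$, and $p$ may receive messages from those steps in $S[b]$.
  \item The window $(t_b,s+1]$ is always empty. Every $s\in I_{\mathrm{pen}}$ satisfies $s\le t_b-1$, hence $s+1\le t_b$, so the case you say ``fails'' is the only case.
  \item Taking the time $\min(s+1,t')$ can still land before $q$'s influential step.
  \item ``Delaying $q$'s affected steps until after $t_b$'' goes the wrong way: a step whose message $p$ receives in $S[b]$ must stay before $S[b]$.
\end{itemize}

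The missing idea is to move the other steps \emph{earlier}, instead of pushing receipts or steps later. Let $\tau>T[b]-2$ be the earliest send time of a message in $I_{\mathrm{pen}}$, and fix $\lambda\in(T[b]-1,\tau+1)$. For all steps before $S[b]$, compress $(T[b]-1,T[b]]$ affinely into $(T[b]-1,\lambda]$, and let $S[b]$ keep time $T[b]$. This keeps the schedule and $\latency{O}$ unchanged, and its effect on the constrained messages is as follows:
\begin{itemize}
  \item Messages sent in $I_{\mathrm{pen}}$ and received before $S[b]$ now have latency at most $\lambda-\tau<1$.
  \item Messages received in $S[b]$ are covered by condition~\ref{cond:penultimately-timely-2}.
  \item Messages sent after $T[b]-1$ stay unconstrained.
\end{itemize}
Crucially, the compression clears the window $(\lambda,\tau+1)$. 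This window lies after \emph{every} step of $\sigma_0=S[0..(b-1)]$, yet before $S[b]$ and within one unit of every $I_{\mathrm{pen}}$ send.

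Insert into this window a finite block $\sigma_1$ of receive steps, taken by processes other than~$p$, for every constrained message that $S$ receives after step~$b$. Then append $S[b]$. This is legal by commutativity: no step of $\sigma_0$ is disturbed, and $p$'s state and the messages it receives in $S[b]$ are unchanged. Finally, discard the old suffix and extend with the same correct processes; in the Byzantine case, faulty recipients may take arbitrary receive steps. Only then does \Cref{thm:ultimately-timely} apply.

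Because the old suffix is discarded, $\sigma_1$ must cover all such messages received after $S[b]$, not only those of latency greater than one. Also, define your late set over $I_{\mathrm{pen}}$ itself, i.e.\ intersect $(t_b-2,t_b-1]$ with $[T[a],\infty)$. Condition~\ref{cond:penultimately-timely-2} says nothing about messages sent before the invocation.
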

\begin{figure}
    \centering
    \begin{tikzpicture}[scbase]
        \newif\ifdrawtransformarrows
        \drawtransformarrowsfalse

        \def\yTop{5.0}
        \def\yMid{2.5}
        \def\yBot{0.0}
        \def\labelAboveOffset{15pt}
        \def\labelBelowOffset{9pt}
        \def\rowLabelXShift{0.45}
        \def\bandHalfHeight{0.18}
        \def\arcHeight{0.4}
        \def\opTopY{6.0}
        \def\opMidY{3.5}
        \def\opMidBelowY{1.45}
        \def\opBotY{-1.0}
        \def\opBarExtend{0.12}
        \ifdrawtransformarrows
        \tikzset{sctimelab/.style={sctlab}}
        \else
        \tikzset{sctimelab/.style={sctlab, fill=none}}
        \fi
        \tikzset{scmidtimelab/.style={sctlab}}

        \def\xLeft{-0.6}
        \def\xA{0.0}
        \def\xStrictGap{0.001} %
        \pgfmathsetmacro{\xGap}{(13-\xStrictGap)/22} %
        \pgfmathsetmacro{\xDeltaWidth}{8*\xGap} %
        \pgfmathsetmacro{\xB}{\xA + 18*\xGap}
        \pgfmathsetmacro{\xTbTwoDelta}{\xB-2*\xDeltaWidth}    %
        \pgfmathsetmacro{\xEarlySend}{\xA + \xGap}
        \pgfmathsetmacro{\xEarlyRecv}{\xA + 5*\xGap}
        \pgfmathsetmacro{\xTau}{\xA + 7*\xGap}
        \pgfmathsetmacro{\xPenSend}{\xA + 8.5*\xGap}
        \pgfmathsetmacro{\xTbOneDelta}{\xB-\xDeltaWidth}      %
        \pgfmathsetmacro{\xLambda}{0.5*(\xTau+\xB)}
        \pgfmathsetmacro{\xTauPlusDelta}{\xTau+\xDeltaWidth}
        \pgfmathsetmacro{\xTauRecvTop}{\xTauPlusDelta + 1.2*\xGap}
        \pgfmathsetmacro{\xTauRecvMid}{
            \xTbOneDelta + (\xTauRecvTop-\xTbOneDelta)*((\xLambda-\xTbOneDelta)/(\xB-\xTbOneDelta))
        }
        \pgfmathsetmacro{\xUltTop}{\xA + 11.7*\xGap}
        \pgfmathsetmacro{\xUltMid}{
            \xTbOneDelta + (\xUltTop-\xTbOneDelta)*((\xLambda-\xTbOneDelta)/(\xB-\xTbOneDelta))
        }
        \pgfmathsetmacro{\xPenRecvBot}{0.5*(\xLambda+\xTauPlusDelta)}
        \pgfmathsetmacro{\xPenLateRecv}{\xA + 19*\xGap}
        \pgfmathsetmacro{\xUltLateRecv}{\xA + 20.7*\xGap}
        \pgfmathsetmacro{\xUltLateRecvBot}{\xA + 21.4*\xGap}
        \pgfmathsetmacro{\xBotSuffixOne}{\xA + 19.8*\xGap}
        \pgfmathsetmacro{\xBotSuffixTwo}{\xA + 20.6*\xGap}
        \pgfmathsetmacro{\xBotSuffixThree}{\xA + 22.2*\xGap}
        \def\xRight{13.5}

        \assertlt{\xA}{\xTbTwoDelta}{xA < xTbTwoDelta}
        \assertlt{\xA}{\xEarlySend}{xA < xEarlySend}
        \assertlt{\xEarlySend}{\xTbTwoDelta}{xEarlySend < xTbTwoDelta}
        \assertlt{\xTbTwoDelta}{\xEarlyRecv}{xTbTwoDelta < xEarlyRecv}
        \assertlt{\xTbTwoDelta}{\xTau}{xTbTwoDelta < xTau}
        \assertlt{\xTau}{\xPenSend}{xTau < xPenSend}
        \assertlt{\xTbOneDelta}{\xLambda}{xTbOneDelta < xLambda}
        \assertlt{\xTbOneDelta}{\xUltTop}{xTbOneDelta < xUltTop}
        \assertlt{\xUltTop}{\xTauRecvTop}{xUltTop < xTauRecvTop}
        \assertlt{\xLambda}{\xTauPlusDelta}{xLambda < xTauPlusDelta}
        \assertlt{\xPenSend}{\xTbOneDelta}{xPenSend < xTbOneDelta}
        \assertlt{\xTauRecvTop}{\xB}{xTauRecvTop < xB}
        \assertlt{\xB}{\xPenLateRecv}{xB < xPenLateRecv}
        \assertlt{\xPenLateRecv}{\xUltLateRecv}{xPenLateRecv < xUltLateRecv}
        \assertlt{\xUltLateRecv}{\xRight}{xUltLateRecv < xRight}

        \assertlt{\xTauRecvMid}{\xLambda}{xTauRecvMid < xLambda}
        \assertlt{\xTbOneDelta}{\xUltMid}{xTbOneDelta < xUltMid}
        \assertlt{\xUltMid}{\xLambda}{xUltMid < xLambda}
        \assertlt{\xLambda}{\xB}{xLambda < xB}
        \assertlt{\xB}{\xPenLateRecv}{xB < xPenLateRecv}
        \assertlt{\xPenLateRecv}{\xUltLateRecv}{xPenLateRecv < xUltLateRecv}
        \assertlt{\xPenSend}{\xTauRecvMid}{xPenSend < xTauRecvMid}

        \assertlt{\xLambda}{\xPenRecvBot}{xLambda < xPenRecvBot}
        \assertlt{\xPenRecvBot}{\xTauPlusDelta}{xPenRecvBot < xTauPlusDelta}
        \assertlt{\xTauPlusDelta}{\xB}{xTauPlusDelta < xB}
        \assertlt{\xB}{\xBotSuffixOne}{xB < xBotSuffixOne}
        \assertlt{\xBotSuffixOne}{\xBotSuffixTwo}{xBotSuffixOne < xBotSuffixTwo}
        \assertlt{\xBotSuffixTwo}{\xUltLateRecvBot}{xBotSuffixTwo < xUltLateRecvBot}
        \assertlt{\xUltLateRecvBot}{\xBotSuffixThree}{xUltLateRecvBot < xBotSuffixThree}
        \assertlt{\xBotSuffixTwo}{\xBotSuffixThree}{xBotSuffixTwo < xBotSuffixThree}
        \assertlt{\xBotSuffixThree}{\xRight}{xBotSuffixThree < xRight}
        \assertlt{\xUltLateRecvBot}{\xRight}{xUltLateRecvBot < xRight}
        \assertlt{\xB}{\xRight}{xB < xRight}

        \assertle{\xEarlyRecv}{\xEarlySend + \xDeltaWidth}{xEarlyRecv <= xEarlySend + xDeltaWidth}
        \assertlt{\xTau + \xDeltaWidth}{\xTauRecvTop}{tau top latency > 1}
        \assertlt{\xPenSend + \xDeltaWidth}{\xPenLateRecv}{penultimate top latency > 1}
        \assertlt{\xUltTop + \xDeltaWidth}{\xUltLateRecv}{ultimate top latency > 1}

        \assertlt{\xTauRecvMid}{\xTau + \xDeltaWidth}{xTauRecvMid < xTau + xDeltaWidth}
        \assertlt{\xPenSend + \xDeltaWidth}{\xPenLateRecv}{penultimate middle latency > 1}
        \assertlt{\xUltMid + \xDeltaWidth}{\xUltLateRecv}{ultimate middle latency > 1}

        \assertle{\xTauRecvMid}{\xTau + \xDeltaWidth}{xTauRecvMid <= xTau + xDeltaWidth}
        \assertle{\xPenRecvBot}{\xPenSend + \xDeltaWidth}{xPenRecvBot <= xPenSend + xDeltaWidth}
        \assertlt{\xUltMid + \xDeltaWidth}{\xUltLateRecvBot}{ultimate bottom latency > 1}

        \ifdrawtransformarrows\else
        \fill[transformshade] (\xTbOneDelta,\yTop) -- (\xB,\yTop) -- (\xLambda,\yMid) -- (\xTbOneDelta,\yMid) -- cycle;
        \fi

        \node[sctlab, anchor=east, above=\labelAboveOffset] at (\xLeft-\rowLabelXShift,\yTop) {$S$};
        \node[sctlab, anchor=east, below=\labelBelowOffset] at (\xLeft-\rowLabelXShift,\yTop) {$T$};

        \fill[scwindow] (\xTbTwoDelta,\yTop-\bandHalfHeight) rectangle (\xTbOneDelta,\yTop+\bandHalfHeight);
        \draw[sctimelineend] (\xLeft,\yTop) -- (\xRight,\yTop);
        \path (\xA,\yTop) pic {scopabove={\xB-\xA/\opTopY-\yTop/O}};

        \ifdrawtransformarrows
        \draw[scconnector] (\xTbOneDelta,\yTop) -- (\xTbOneDelta,\yMid);
        \draw[scconnector] (\xB,\yTop) -- (\xLambda,\yMid);
        \fi

        \path (\xA,\yTop) pic {scendpointtick};
        \path (\xTbTwoDelta,\yTop) pic {screftick};
        \path (\xTbOneDelta,\yTop) pic {screftick};
        \path (\xB,\yTop) pic {scendpointtick};

        \node[sctlab, above=\labelAboveOffset] at (\xA,\yTop) {$a$};
        \node[sctimelab, below=\labelBelowOffset] at (\xA,\yTop) {$T[a]$};
        \node[sctimelab, below=\labelBelowOffset] at (\xTbTwoDelta,\yTop) {$T[b]-2$};
        \node[sctimelab, below=\labelBelowOffset] at (\xTau,\yTop) {$\tau$};
        \node[sctimelab, below=\labelBelowOffset] at (\xTbOneDelta,\yTop) {$T[b]-1$};
        \node[sctlab, above=\labelAboveOffset] at (\xB,\yTop) {$b$};
        \node[sctimelab, below=\labelBelowOffset] at (\xB,\yTop) {$T[b]$};

        \node[scstepdot] at (\xA,\yTop) {};
        \node[scstepdot] at (\xEarlySend,\yTop) {};
        \node[scstepdot] at (\xEarlyRecv,\yTop) {};
        \node[scbadstep] at (\xTau,\yTop) {};
        \node[scbadstep] at (\xPenSend,\yTop) {};
        \node[scbadstep] at (\xTauRecvTop,\yTop) {};
        \node[scbadstep] at (\xUltTop,\yTop) {};
        \node[scbadspecialstep] at (\xPenLateRecv,\yTop) {};
        \node[scbadstep] at (\xUltLateRecv,\yTop) {};
        \node[scstepdot] at (\xB,\yTop) {};

        \path (\xEarlySend,\yTop) pic {scgoodmsgarc={\xEarlyRecv-\xEarlySend}};
        \path (\xTau,\yTop) pic {scbadmsgarc={\xTauRecvTop-\xTau}};
        \path (\xPenSend,\yTop) pic[scmsgarcraise=0.45] {scbadmsgarc={\xPenLateRecv-\xPenSend}};
        \path (\xUltTop,\yTop) pic {scbadmsgarc={\xUltLateRecv-\xUltTop}};

        \node[sctlab, anchor=east, above=\labelAboveOffset] at (\xLeft-\rowLabelXShift,\yMid) {$S$};
        \node[sctlab, anchor=east, below=\labelBelowOffset] at (\xLeft-\rowLabelXShift,\yMid) {$\widetilde T$};

        \fill[scwindow] (\xTbTwoDelta,\yMid-\bandHalfHeight) rectangle (\xTbOneDelta,\yMid+\bandHalfHeight);
        \draw[sctimeline] (\xLeft,\yMid) -- (\xTauRecvMid,\yMid);
        \draw[sctimeline] (\xTauRecvMid,\yMid) -- (\xLambda,\yMid);
        \draw[sctimeline, dashed] (\xLambda,\yMid) -- (\xB,\yMid);
        \draw[sctimelineend] (\xB,\yMid) -- (\xRight,\yMid);
        \path (\xA,\yMid) pic {scopbelow={\xB-\xA/\opMidBelowY-\yMid/O}};

        \path (\xA,\yMid) pic {scendpointtick};
        \path (\xTbTwoDelta,\yMid) pic {screftick};
        \path (\xTbOneDelta,\yMid) pic {screftick};
        \path (\xLambda,\yMid) pic {screftick};
        \path (\xTauPlusDelta,\yMid) pic {screftick};
        \path (\xB,\yMid) pic {scendpointtick};

        \node[sctlab, above=\labelAboveOffset] at (\xA,\yMid) {$a$};
        \node[scmidtimelab, below=\labelBelowOffset] at (\xA,\yMid) {$T[a]$};
        \node[scmidtimelab, below=\labelBelowOffset] at (\xTbTwoDelta,\yMid) {$T[b]-2$};
        \node[scmidtimelab, below=\labelBelowOffset] at (\xTau,\yMid) {$\tau$};
        \node[scmidtimelab, below=\labelBelowOffset] at (\xTbOneDelta,\yMid) {$T[b]-1$};
        \node[scmidtimelab, below=\labelBelowOffset] at (\xLambda,\yMid) {$\lambda$};
        \node[scmidtimelab, below=\labelBelowOffset] at (\xTauPlusDelta,\yMid) {$\tau+1$};
        \node[sctlab, above=\labelAboveOffset] at (\xB,\yMid) {$b$};
        \node[scmidtimelab, below=\labelBelowOffset] at (\xB,\yMid) {$T[b]$};

        \node[scstepdot] at (\xA,\yMid) {};
        \node[scstepdot] at (\xEarlySend,\yMid) {};
        \node[scstepdot] at (\xEarlyRecv,\yMid) {};
        \node[scstepdot] at (\xTau,\yMid) {};
        \node[scbadstep] at (\xPenSend,\yMid) {};
        \node[scstepdot] at (\xTauRecvMid,\yMid) {};
        \node[scbadstep] at (\xUltMid,\yMid) {};
        \node[scbadspecialstep] at (\xPenLateRecv,\yMid) {};
        \node[scbadstep] at (\xUltLateRecv,\yMid) {};
        \node[scstepdot] at (\xB,\yMid) {};

        \path (\xEarlySend,\yMid) pic {scgoodmsgarc={\xEarlyRecv-\xEarlySend}};
        \path (\xTau,\yMid) pic {scgoodmsgarc={\xTauRecvMid-\xTau}};
        \path (\xPenSend,\yMid) pic[scmsgarcraise=0.45] {scbadmsgarc={\xPenLateRecv-\xPenSend}};
        \path (\xUltMid,\yMid) pic {scbadmsgarc={\xUltLateRecv-\xUltMid}};

        \node[sctlab, anchor=east, above=\labelAboveOffset] at (\xLeft-\rowLabelXShift,\yBot) {$S'$};
        \node[sctlab, anchor=east, below=\labelBelowOffset] at (\xLeft-\rowLabelXShift,\yBot) {$T'$};

        \fill[scwindow] (\xTbTwoDelta,\yBot-\bandHalfHeight) rectangle (\xTbOneDelta,\yBot+\bandHalfHeight);
        \draw[sctimeline] (\xLeft,\yBot) -- (\xLambda,\yBot);
        \draw[sctimeline] (\xLambda,\yBot) -- (\xTauPlusDelta,\yBot);
        \draw[sctimeline, dashed] (\xTauPlusDelta,\yBot) -- (\xB,\yBot);
        \draw[sctimeline, decorate, decoration={snake, amplitude=0.6pt, segment length=4pt}, -{Classical TikZ Rightarrow}] (\xB,\yBot) -- (\xRight,\yBot);
        \path (\xA,\yBot) pic {scopbelow={\xB-\xA/\opBotY-\yBot/O}};

        \ifdrawtransformarrows
        \draw[scspecialconnector] (\xPenLateRecv,\yMid) -- (\xPenRecvBot,\yBot);
        \fi

        \path (\xA,\yBot) pic {scendpointtick};
        \path (\xTbTwoDelta,\yBot) pic {screftick};
        \path (\xTbOneDelta,\yBot) pic {screftick};
        \path (\xLambda,\yBot) pic {screftick};
        \path (\xTauPlusDelta,\yBot) pic {screftick};
        \path (\xB,\yBot) pic {scendpointtick};

        \node[sctlab, above=\labelAboveOffset] at (\xA,\yBot) {$a$};
        \node[sctlab, below=\labelBelowOffset] at (\xA,\yBot) {$T[a]$};
        \node[sctlab, below=\labelBelowOffset] at (\xTbTwoDelta,\yBot) {$T[b]-2$};
        \node[sctlab, below=\labelBelowOffset] at (\xTau,\yBot) {$\tau$};
        \node[sctlab, below=\labelBelowOffset] at (\xTbOneDelta,\yBot) {$T[b]-1$};
        \node[sctlab, below=\labelBelowOffset] at (\xLambda,\yBot) {$\lambda$};
        \node[sctlab, below=\labelBelowOffset] at (\xTauPlusDelta,\yBot) {$\tau+1$};
        \node[sctlab, above=\labelAboveOffset] at (\xB,\yBot) {$b+k$};
        \node[sctlab, below=\labelBelowOffset] at (\xB,\yBot) {$T[b]$};

        \node[scstepdot] at (\xA,\yBot) {};
        \node[scstepdot] at (\xEarlySend,\yBot) {};
        \node[scstepdot] at (\xEarlyRecv,\yBot) {};
        \node[scstepdot] at (\xTau,\yBot) {};
        \node[scstepdot] at (\xPenSend,\yBot) {};
        \node[scstepdot] at (\xTauRecvMid,\yBot) {};
        \node[scbadstep] at (\xUltMid,\yBot) {};
        \node[scspecialstep] at (\xPenRecvBot,\yBot) {};
        \node[scstepdot] at (\xB,\yBot) {};
        \node[scbadstep] at (\xUltLateRecvBot,\yBot) {};
        \node[scstepdot] at (\xBotSuffixOne,\yBot) {};
        \node[scstepdot] at (\xBotSuffixTwo,\yBot) {};
        \node[scstepdot] at (\xBotSuffixThree,\yBot) {};

        \path (\xEarlySend,\yBot) pic {scgoodmsgarc={\xEarlyRecv-\xEarlySend}};
        \path (\xTau,\yBot) pic {scgoodmsgarc={\xTauRecvMid-\xTau}};
        \path (\xPenSend,\yBot) pic {scgoodmsgarc={\xPenRecvBot-\xPenSend}};
        \path (\xUltMid,\yBot) pic {scbadmsgarc={\xUltLateRecvBot-\xUltMid}};
    \end{tikzpicture}
    \caption{
    \textbf{Proof of \Cref{thm:penultimately-timely}:} transforming the run~$R=(S,T)$ (top) into the run~$R'=(S',T')$ (bottom), by way of the intermediate run~$\widetilde R=(S,\widetilde T)$ (middle).
    Each horizontal line is a ``unified'' timeline depicting steps taken by all processes.
    Dots are steps, arcs are messages (\textcolor{red}{red}: latency exceeding~$1$; \textcolor{black}{black}: at most~$1$), and dashed portions indicate no step is taken.
    \textbf{Top:} A correct process~$p$ invokes the operation~$O$ in step~$S[a]$ and returns in step~$S[b]$.
    Messages are penultimately timely, but not ultimately timely, during~$O$ in~$R$.
    $\tau$ is the earliest sending time of a message in~$\penMsgs$, and $\lambda\in (T[b]-1,\tau+1)$.
    Penultimate timeliness forbids a red arc from the \colorbox{windowshade}{\smash{shaded band}\vphantom{shaded band}} into step~$S[b]$, which is taken by~$p$; hence the ``\textcolor{red}{\raisebox{0.2ex}{\scalebox{0.75}{$\blacktriangle$}}}'' step is taken by a process other than~$p$.
    \textbf{Middle:} $\widetilde R$ keeps the schedule~$S$ and \textcolor{blue}{compresses} the time interval $(T[b]-1,T[b]]$ into $(T[b]-1,\lambda]$ for the steps before the return, leaving the return step~$S[b]$ at time~$T[b]$.
    $\latency[\widetilde R]{O}=\latency[R]{O}$.
    \textbf{Bottom:} $R'$ keeps the steps in~$S[0..(b-1)]$ at their times in~$\widetilde R$, inserts $k$~receive steps ``\textcolor{darkgreen}{\raisebox{0.15ex}{\rule{0.8ex}{0.8ex}}}'' in the time interval $(\lambda,\tau+1)$ for the messages in~$\latePenMsgs$, keeps the return step at time~$T[b]$, and continues with a new infinite suffix (the wavy portion).
    In~$R'$, the operation~$O$ is invoked in step~$S'[a]=S[a]$ and returns in step~$S'[b+k]=S[b]$.
    Messages are ultimately timely during~$O$ in~$R'$.
    Thus \Cref{thm:ultimately-timely} gives $\latency[R']{O}\leq\latency{\alg}$; since the construction preserves the latency of~$O$, we have $\latency[R]{O}=\latency[\widetilde R]{O}=\latency[R']{O}\leq\latency{\alg}$, as desired.
    }\label{fig:penultimately-timely-proof}
\end{figure}

\begin{proof}
  \proofqed{thm:penultimately-timely}
  Write $R=(S,T)$.
  If messages are timely during~$O$ in~$R$, the claim follows directly from the definition of worst-case message delay, so assume otherwise.
  Since penultimate timeliness coincides with timeliness for an incomplete operation (\Crefpart{def:penultimately-timely}{case:penultimately-timely-incomplete}), $O$~must be complete.
  Let $O$~be invoked by a correct process~$p$ in step~$S[a]$ and return in step~$S[b]$.

  If messages are ultimately timely during~$O$ in~$R$, the claim follows from \Cref{thm:ultimately-timely}, so assume messages are \emph{not} ultimately timely during~$O$ in~$R$.
  This means $\latency{O}\geq1$, and hence $a<b$.

  \paragraph{Proof plan.}
  The remainder of the proof is devoted to constructing a run~$R'$ of~$\alg$ in which $O$~is invoked by a correct process, messages are ultimately timely during~$O$, and $\latency[R']{O}=\latency[R]{O}$; see \Cref{fig:penultimately-timely-proof}.
  Given such a run, \Cref{thm:ultimately-timely} would give $\latency[R]{O}=\latency[R']{O}\leq\latency{\alg}$, completing the proof.

  Since messages are already timely during~$I_{\mathrm{early}}$ in~$R$, only messages sent during~$I_{\mathrm{pen}}$ to processes that take infinitely many steps can prevent ultimate timeliness.
  Let $\penMsgs$ denote the set of such messages; we must ensure that each message in~$\penMsgs$ has latency at most one in~$R'$.

  A message in~$\penMsgs$ received in~$S[b]$ is received by~$p$.
  Because $p$~is correct and hence takes infinitely many steps, condition~\ref{cond:penultimately-timely-2} already bounds the message's latency by one; we will leave such a message alone.
  We will deal with the remaining messages in~$\penMsgs$ in two stages of construction:
  \begin{itemize}
    \item First, we will retime~$R$ into an intermediate run~$\widetilde R$ in which every step preceding~$S[b]$ occurs early enough to make those messages received before~$S[b]$ timely.
      This retiming also clears a stretch of time that precedes~$S[b]$ and yet lies less than one time unit after every message in~$\penMsgs$ is sent.
    \item Then, we will insert into that stretch receive steps for the messages received after~$S[b]$.
    Such steps may be placed before~$S[b]$ because they are taken by processes other than~$p$ and so commute with~$S[b]$.
    This yields the desired run~$R'$.
  \end{itemize}

  We now give the full proof following the plan.

  \paragraph{Setup.}
  Let $I_{\mathrm{early}}$ and~$I_{\mathrm{pen}}$ be as in \Crefpart{def:penultimately-timely}{case:penultimately-timely-complete} for~$O$; recall that they partition $[T[a],T[b]-1]$.
  Let $\allMsgs$ denote the (finite) set of messages in~$R$ sent during $[T[a],T[b]-1]$ to a process that takes infinitely many steps, and let $\penMsgs\subseteq\allMsgs$ consist of those sent during~$I_{\mathrm{pen}}$.
  Recall that by condition~\ref{cond:penultimately-timely-1}, messages are timely during~$I_{\mathrm{early}}$ in~$R$.
  Since messages are \emph{not} ultimately timely during~$O$, $\penMsgs$ must contain a message whose latency exceeds one, hence $\penMsgs\neq\emptyset$.

  Let $\tau$~be the earliest send time of a message in~$\penMsgs$, so that $T[a]\leq\tau\leq T[b]-1$ and $\tau>T[b]-2$.
  Since $\tau+1>T[b]-1$, we may fix an arbitrary time $\lambda\in(T[b]-1,\tau+1)$, obtaining
  \begin{equation}\label{eq:penultimately-timely-lambda-bounds}
    T[b]-1<\lambda<\tau+1\leq T[b].
  \end{equation}

  \paragraph{An intermediate run.}
  We now retime~$R$ so that every step preceding the response step~$S[b]$ occurs at time at most~$\lambda$: For such steps we will compress the interval $(T[b]-1,T[b]]$ into $(T[b]-1,\lambda]$, while $S[b]$ and every later step keeps its time.

  To do so, define $\phi:(-\infty,T[b]]\to\R$ by
  \[
    \phi(t) \Deq
    \begin{cases}
      t, & t\leq T[b]-1,\\
      \Aff_{(T[b]-1,T[b]]\to(T[b]-1,\lambda]}(t), & T[b]-1<t\leq T[b].
    \end{cases}
  \]
  By~\eqref{eq:penultimately-timely-lambda-bounds}, the map~$\phi$ is strictly increasing; $\phi(t)=t$ for all $t\leq T[b]-1$; and the range of~$\phi$ is contained in $(-\infty,\lambda]$.
  Define the infinite sequence~$\widetilde T$ by
  \[
    \widetilde T[i] \Deq
    \begin{cases}
      \phi(T[i]), & i<b,\\
      T[i], & i\geq b.
    \end{cases}
  \]
  The sequence~$\widetilde T$ is non-decreasing and unbounded.
  Furthermore, it can be verified that $T[i] < T[j]$ only if $\widetilde T[i] < \widetilde T[j]$.
  Thus, $\widetilde R=(S,\widetilde T)$ is a run of~$\alg$ with the same schedule, and hence the same correct processes, as~$R$.
  Since $a<b$ and $T[a]\leq T[b]-1$, we have $\widetilde T[a]=\phi(T[a])=T[a]$; also, $\widetilde T[b]=T[b]$.
  Hence $\latency[\widetilde R]{O}=\latency[R]{O}$.

  Since $\widetilde T[a]=T[a]$ and~$\widetilde T[b]=T[b]$, and since $\widetilde T$ agrees with~$T$ at every step whose time is at most $T[b]-1$ in either run, $\allMsgs$ is also the set of messages in~$\widetilde R$ sent during $[\widetilde T[a],\widetilde T[b]-1]$ that timeliness constrains, and each has the same send time in both runs.
  Let $\latePenMsgs\subseteq\allMsgs$ consist of the (finitely many) messages received in a step~$S[j]$ with $j>b$.

  \begin{claim}\label{claim:penultimately-timely-message}
    Fix a message~$m\in\allMsgs$ with send time~$s$, and let $S[i]$ and~$S[j]$ be its send and receive steps.
    Then:
    \begin{enumerate}[label=(\alph*), ref=(\alph*)]
      \item\label{item:intermediate-send}
        $i<b$ and $T[i]=\widetilde T[i]=s$.
      \item\label{item:intermediate-early}
        If $s\leq T[b]-2$, then $T[j]\leq s+1\leq T[b]-1$ and $j<b$.
      \item\label{item:intermediate-pen}
        If $s>T[b]-2$, then $s\in I_{\mathrm{pen}}$, so $m\in\penMsgs$ and $s\geq\tau$.
    \end{enumerate}
  \end{claim}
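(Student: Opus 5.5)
The plan is to prove the three parts in order, using only the definitions of $\allMsgs$, $I_{\mathrm{early}}$, $I_{\mathrm{pen}}$ and $\tau$, the timeliness condition~\ref{cond:penultimately-timely-1}, and the way $\widetilde T$ is built from~$\phi$.

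For part~\ref{item:intermediate-send}, recall that $m\in\allMsgs$ is sent during $[T[a],T[b]-1]$, so $s=T[i]\le T[b]-1<T[b]$. Since $T$ is non-decreasing, $i\ge b$ would force $T[i]\ge T[b]$, which is impossible; hence $i<b$. Then $\widetilde T[i]=\phi(T[i])$, and $\phi$ is the identity on $(-\infty,T[b]-1]$, so $\widetilde T[i]=T[i]=s$.

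For part~\ref{item:intermediate-early}, suppose $s\le T[b]-2$. Because $s\ge T[a]$, we have $s\in I_{\mathrm{early}}$. The recipient of~$m$ takes infinitely many steps (this is part of the definition of $\allMsgs$), so condition~\ref{cond:penultimately-timely-1} gives $T[j]-s\le1$. Therefore $T[j]\le s+1\le T[b]-1<T[b]$. The same monotonicity argument as in part~\ref{item:intermediate-send} then gives $j<b$.

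For part~\ref{item:intermediate-pen}, suppose $s>T[b]-2$. Together with $s\in[T[a],T[b]-1]$, this puts $s$ in $[T[a],T[b]-1]\setminus I_{\mathrm{early}}=I_{\mathrm{pen}}$. Since $m$ is addressed to a process taking infinitely many steps, $m\in\penMsgs$. By the definition of~$\tau$ as the earliest send time of a message in~$\penMsgs$, we get $s\ge\tau$.

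The only point needing care is the monotonicity step, namely that $T[i]<T[b]$ implies $i<b$. This holds because $T$ is non-decreasing, and no strictness is needed. Everything else is unfolding definitions, so I expect no real obstacle.
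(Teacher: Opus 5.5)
Your proof is correct and matches the paper's argument part by part. The only cosmetic difference is in part~(a): you get $i<b$ from the monotonicity of $T$, using $T[i]=s\le T[b]-1$ directly from the definition of $\allMsgs$, whereas the paper uses the monotonicity of $\widetilde T$ together with $\widetilde T[b]=T[b]$; both routes are valid.
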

  \begin{proof}
    \proofqed{claim:penultimately-timely-message}
    We prove each part in turn.
    \begin{enumerate}[label=(\alph*), ref=(\alph*)]
      \item The monotonicity of~$\widetilde T$, with $\widetilde T[i]=s\leq T[b]-1<\widetilde T[b]$, yields $i<b$.
        The definitions of~$\phi$ and~$\widetilde T$ then give $T[i]=\widetilde T[i]=s$.
      \item Since $s\in I_{\mathrm{early}}$, condition~\ref{cond:penultimately-timely-1} of penultimate timeliness gives $\latency[R]{m}\leq1$, and hence $T[j]\leq T[i]+1=s+1\leq T[b]-1<T[b]$.
        The monotonicity of~$T$ therefore implies $j<b$.
      \item Since $s\in[T[a],T[b]-1]$ and $s>T[b]-2$, we have $s\in I_{\mathrm{pen}}$ and hence $m\in\penMsgs$.
        By the definition of~$\tau$, it follows that $s\geq\tau$.\qedhere
    \end{enumerate}
  \end{proof}

  \begin{claim}\label{claim:penultimately-timely-intermediate}
    In~$\widetilde R$, the following hold:
    \begin{enumerate}[label=(\alph*), ref=(\alph*)]
      \item\label{item:intermediate-times}
        Every $i<b$ satisfies $\widetilde T[i]\leq\lambda$, while
        \[
          \widetilde T[b]-1<\lambda<\tau+1\leq\widetilde T[b].
        \]
      \item\label{item:intermediate-late}
        Every message $m\in\latePenMsgs$ is sent to a process other than~$p$, in a step~$S[i]$ with $i<b$ and $\widetilde T[i]\geq\tau$.
      \item\label{item:intermediate-timely}
        Every message in $\allMsgs\setminus\latePenMsgs$ is received in a step~$S[j]$ with $j\leq b$ and has latency at most one.
    \end{enumerate}
  \end{claim}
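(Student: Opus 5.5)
I will prove the three parts of \Cref{claim:penultimately-timely-intermediate} in order, each directly from the definitions of~$\phi$, $\widetilde T$, $\latePenMsgs$ and from \Cref{claim:penultimately-timely-message}.

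\emph{Part~(a).} For $i<b$, monotonicity of~$T$ gives $T[i]\leq T[b]$, so $T[i]$ lies in the domain of~$\phi$. The range of~$\phi$ is contained in $(-\infty,\lambda]$, hence $\widetilde T[i]=\phi(T[i])\leq\lambda$. The displayed chain is exactly \eqref{eq:penultimately-timely-lambda-bounds} with $\widetilde T[b]=T[b]$.

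\emph{Part~(b).} Fix $m\in\latePenMsgs$, sent in $S[i]$ and received in $S[j]$ with $j>b$. By \Cref{claim:penultimately-timely-message}\ref{item:intermediate-send}, $i<b$ and $\widetilde T[i]=s$.
\begin{itemize}
\item If $s\leq T[b]-2$, then part~\ref{item:intermediate-early} of that claim gives $j<b$, which is a contradiction. Hence $s>T[b]-2$, and part~\ref{item:intermediate-pen} gives $m\in\penMsgs$ and $s\geq\tau$.
\item It remains to show that the recipient is not~$p$. Suppose it were. Since $p$ is correct, it takes infinitely many steps, so condition~\ref{cond:penultimately-timely-2} gives $\latency[R]{m}\leq1$. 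Then $T[j]\leq s+1\leq T[b]$. Combined with $j>b$ and monotonicity of~$T$, this forces $T[j]=T[b]$.
\item The delicate point is that equal times alone do not contradict $j>b$. Here I will use that $S[j]$ and $S[b]$ would both be steps of~$p$, and the run axioms require distinct steps of one process to have strictly increasing times. So $T[b]<T[j]$, a contradiction. Thus the recipient differs from~$p$.
\end{itemize}

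\emph{Part~(c).} Take $m\in\allMsgs\setminus\latePenMsgs$, sent at time~$s$ in $S[i]$. By definition of~$\latePenMsgs$, its receive step $S[j]$ satisfies $j\leq b$.
\begin{itemize}
\item \textbf{Case $s\leq T[b]-2$.} \Cref{claim:penultimately-timely-message}\ref{item:intermediate-early} gives $T[j]\leq T[b]-1$ and $j<b$. Then $\widetilde T[j]=T[j]$ and $\widetilde T[i]=T[i]$, so the latency is unchanged and at most one.
\item \textbf{Case $s>T[b]-2$.} Then $s\geq\tau$.
  \begin{itemize}
  \item If $j<b$, part~(a) gives $\widetilde T[j]\leq\lambda<\tau+1\leq s+1$, so the latency is less than one.
  \item If $j=b$, the recipient is~$p$, and condition~\ref{cond:penultimately-timely-2} gives $\latency[R]{m}\leq1$. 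Since $\widetilde T[b]=T[b]$ and $\widetilde T[i]=T[i]=s$, the latency in~$\widetilde R$ is the same.
  \end{itemize}
\end{itemize}

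I expect the main obstacle to be the equality-of-times subtlety in part~(b), which is settled by the per-process strict increase of times. Everything else is bookkeeping.
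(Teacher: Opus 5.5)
Your proposal is correct and follows the paper's proof essentially step for step. Part (a) uses the range of $\phi$ together with \eqref{eq:penultimately-timely-lambda-bounds}. Part (b) rules out $s\leq T[b]-2$ via \Cref{claim:penultimately-timely-message} and then contradicts condition~\ref{cond:penultimately-timely-2}, because $S[b]$ and $S[j]$ would both be steps of~$p$ and so must satisfy $T[j]>T[b]$. Part (c) is the same three-case latency bound. Your intermediate conclusion $T[j]=T[b]$ in part (b) is harmless but unnecessary: the per-process run axiom gives $T[j]>T[b]$ directly, which already contradicts $T[j]\leq T[b]$.
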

  \begin{proof}
    \proofqed{claim:penultimately-timely-intermediate}
    \begin{enumerate}[label=(\alph*), ref=(\alph*)]
      \item For every $i<b$, we have $\widetilde T[i]=\phi(T[i])\leq\lambda$ because the range of~$\phi$ is contained in $(-\infty,\lambda]$.
        Moreover, since $\widetilde T[b]=T[b]$, the displayed inequalities are a restatement of~\eqref{eq:penultimately-timely-lambda-bounds}.

      \item Fix a message~$m\in\latePenMsgs$, let $S[i]$ and~$S[j]$ be its send and receive steps, and write $s\deq\widetilde T[i]$.
        The definition of~$\latePenMsgs$ gives $j>b$.
        \Crefpart{claim:penultimately-timely-message}{item:intermediate-send} gives $i<b$.

        \Crefpart{claim:penultimately-timely-message}{item:intermediate-early} then rules out $s\leq T[b]-2$, so \Crefpart{claim:penultimately-timely-message}{item:intermediate-pen} applies and gives $s\in I_{\mathrm{pen}}$ and $s\geq\tau$.
        Finally, if $m$~were sent to~$p$, then $S[b]$~and~$S[j]$ would both be steps of~$p$, implying $T[j]>T[b]$.
        But $p$~takes infinitely many steps and $s\in I_{\mathrm{pen}}$, so condition~\ref{cond:penultimately-timely-2} of penultimate timeliness would yield $T[j]\leq s+1\leq T[b]$, a contradiction.

      \item Fix a message~$m\in\allMsgs\setminus\latePenMsgs$ with send time~$s$, and let $S[i]$ and~$S[j]$ be its send and receive steps.
        \Cref{claim:penultimately-timely-message} applies to~$m$, and the definition of~$\latePenMsgs$ gives $j\leq b$.
        We bound $\latency[\widetilde R]{m}=\widetilde T[j]-s$ in three cases.

        If $s\leq T[b]-2$, then \Crefpart{claim:penultimately-timely-message}{item:intermediate-early} gives $j<b$ and $T[j]\leq s+1\leq T[b]-1$.
        Thus $\phi(T[j])=T[j]$, and
        $\latency[\widetilde R]{m}=T[j]-s\leq1$.

        Now suppose $s>T[b]-2$.
        By \Crefpart{claim:penultimately-timely-message}{item:intermediate-pen}, $s\in I_{\mathrm{pen}}$ and $s\geq\tau$.
        If $j<b$, then $\widetilde T[j]\leq\lambda$ by part~\ref{item:intermediate-times}, so
        $\latency[\widetilde R]{m}\leq\lambda-\tau<1$.
        If instead $j=b$, then $m$~is received by~$p$, which takes infinitely many steps, so condition~\ref{cond:penultimately-timely-2} of penultimate timeliness gives $T[b]-s\leq1$.
        Hence $\latency[\widetilde R]{m}=T[b]-s\leq1$.\qedhere
    \end{enumerate}
  \end{proof}

  \paragraph{Inserting receive steps.}
  Let $\sigma_0\deq S[0..(b-1)]$, a schedule prefix of~$\alg$.
  By \Crefpart{claim:penultimately-timely-intermediate}{item:intermediate-late}, every message in~$\latePenMsgs$ is sent in~$\sigma_0$ to a process other than~$p$.
  Moreover, its receive step in~$S$ has index greater than~$b$, so it has not yet been received by the end of~$\sigma_0$.

  We may therefore extend~$\sigma_0$ by a finite sequence~$\sigma_1$ of steps, all taken by processes other than~$p$, in which every message in~$\latePenMsgs$ is received:
  A recipient that is correct in~$S$ takes a step according to~$\alg$, and a recipient that is faulty in~$S$---possible only under Byzantine failures, since these recipients take infinitely many steps---takes an arbitrary step receiving the message.
  Write $k\deq|\sigma_1|$.
  Because $\sigma_1$ contains no step of~$p$ and step~$S[b]$ is a step of~$p$,
  \[
    \sigma' \Deq \sigma_0 \circ \sigma_1 \circ \langle S[b]\rangle
  \]
  is a schedule prefix of~$\alg$ by commutativity.
  Extend~$\sigma'$ to a schedule~$S'$ of~$\alg$ such that exactly the processes that take infinitely many steps in~$S$ do so in~$S'$, and every correct process in~$S$ is correct in~$S'$.
  In particular, $p$~is correct in~$S'$.
  By construction,
  \[
    S'[0..(b-1)]=S[0..(b-1)],\qquad
    S'[b..(b+k-1)]=\sigma_1,\qquad
    S'[b+k]=S[b].
  \]

  \paragraph{Assigning times.}
  To construct~$R'$, we keep the steps in~$S[0..(b-1)]$ at their times in~$\widetilde R$, place the $k$~inserted steps in the interval $(\lambda,\tau+1)$, and keep the return step at time~$\widetilde T[b]$.
  Specifically, define
  \[
    T'[i] \Deq
    \begin{cases}
      \widetilde T[i], & i<b,\\
      \Aff_{(b-1,b+k)\to(\lambda,\tau+1)}(i), & b\leq i<b+k,\\
      \Aff_{[b+k,+\infty)\to[\widetilde T[b],+\infty)}(i), & i\geq b+k.
    \end{cases}
  \]
  By \Crefpart{claim:penultimately-timely-intermediate}{item:intermediate-times}, every step in~$S[0..(b-1)]$ has time at most~$\lambda$.
  The inserted steps have strictly increasing times in $(\lambda,\tau+1)$, the return step has time $\widetilde T[b]\geq\tau+1$, and all subsequent steps have strictly increasing times exceeding~$\widetilde T[b]$.
  Thus $T'$~is non-decreasing and unbounded.
  Moreover, all required strict inequalities are either inherited from~$\widetilde R$ or follow from these strict separations.
  Thus $R'=(S',T')$ is a run of~$\alg$.

  In~$R'$, the operation~$O$ is invoked in step~$S'[a]=S[a]$ and returns in step~$S'[b+k]=S[b]$; it is invoked by~$p$, which is correct in~$S'$.
  Since $T'[a]=\widetilde T[a]$ and $T'[b+k]=\widetilde T[b]$, we have $\latency[R']{O}=\latency[\widetilde R]{O}$.

  \begin{claim}\label{claim:penultimately-timely-final}
    Messages are ultimately timely during~$O$ in~$R'$.
  \end{claim}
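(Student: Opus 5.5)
We must show that every message sent during~$O$ in~$R'$ in the window $[T'[a],T'[b+k]-1]$ to a process taking infinitely many steps in~$S'$ has latency at most one. This is case~2 of \Cref{def:ultimately-timely}, since $O$ is complete in~$R'$. Write $t_b\deq T'[b+k]=\widetilde T[b]=T[b]$. First observe which send steps qualify. Every step $S'[i]$ with $i\ge b$ has time $T'[i]>\lambda>t_b-1$ by \Crefpart{claim:penultimately-timely-intermediate}{item:intermediate-times} and the definition of~$T'$. So any such message is sent in some $S'[i]=S[i]$ with $i<b$, at time $\widetilde T[i]$. Moreover $\widetilde T[i]\le t_b-1$ forces $\widetilde T[i]=T[i]$, because $\phi$ fixes times at most $T[b]-1$ and maps larger times above $T[b]-1$. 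Hence the message is one sent in~$S$ at the same time within $[T[a],T[b]-1]$.

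Next, I would argue that the recipient takes infinitely many steps in~$S$, so the message lies in~$\allMsgs$. The suffix $S'$ was chosen so that exactly the same processes take infinitely many steps as in~$S$. One subtlety: the message's receive step in~$S'$ may differ from that in~$S$. There are three ways it can be received. It may be received in $\sigma_0$ (index $<b$); it may be received in $S[b]$, which is now $S'[b+k]$; or it may be received in one of the inserted steps of~$\sigma_1$. I would also note that messages of $\allMsgs$ received after $S[b]$ in~$S$ are exactly $\latePenMsgs$, and all of these are received in~$\sigma_1$ by construction. The remaining messages are received at the same step as in~$S$, since $S'[0..b-1]=S[0..b-1]$ and $S'[b+k]=S[b]$.

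Finally, I would bound latencies by cases. For $m\in\allMsgs\setminus\latePenMsgs$, the receive step is $S[j]$ with $j\le b$, and its time in~$R'$ equals its time in~$\widetilde R$. Then \Crefpart{claim:penultimately-timely-intermediate}{item:intermediate-timely} gives latency at most one. For $m\in\latePenMsgs$, \Crefpart{claim:penultimately-timely-intermediate}{item:intermediate-late} gives send time $\ge\tau$, and the receive time lies in $(\lambda,\tau+1)$, so the latency is less than $1$.

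The main obstacle is bookkeeping rather than estimates. It consists of matching messages of $R'$ with those of $R$ and making sure no new message sent in the window arises in~$R'$. The latter holds because all new steps of $\sigma_1$ and the suffix lie after time $t_b-1$.
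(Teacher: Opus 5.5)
Your proposal is correct and follows essentially the same route as the paper. You first show that every qualifying send step has index $i<b$, because all steps from index $b$ onward are timed after $\lambda>\widetilde T[b]-1$, so the message lies in $\allMsgs$ with its send time unchanged; you then split on membership in $\latePenMsgs$, using \Crefpart{claim:penultimately-timely-intermediate}{item:intermediate-timely} in one case and, in the other, the placement of the inserted steps in $(\lambda,\tau+1)$ together with send time at least~$\tau$ (and your remark that $\phi$ fixes send times at most $T[b]-1$ is the same observation the paper makes just before the claim).
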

  \begin{proof}
    \proofqed{claim:penultimately-timely-final}
    Fix a message~$m$ sent at a time in $[T'[a],T'[b+k]-1]=[\widetilde T[a],\widetilde T[b]-1]$ to a process that takes infinitely many steps in~$S'$, and let~$S'[i]$ be its send step.
    By the definition of~$T'$ and \Crefpart{claim:penultimately-timely-intermediate}{item:intermediate-times}, every $\ell\geq b$ satisfies $T'[\ell]>\lambda>\widetilde T[b]-1$.
    Hence $i<b$, so $S'[i]=S[i]$ and $T'[i]=\widetilde T[i]$; in particular, $m$~is also sent in~$\widetilde R$, with the same send time, and by the choice of~$S'$ its recipient is the same in both schedules and takes infinitely many steps in each.
    Thus $m\in\allMsgs$.

    \begin{itemize}
      \item If $m\notin\latePenMsgs$, then by \Crefpart{claim:penultimately-timely-intermediate}{item:intermediate-timely}, $m$~is received in a step~$S[j]$ with $j\leq b$ and $\latency[\widetilde R]{m}\leq1$.
        Its receive step in~$R'$ is~$S'[j]$ if $j<b$, and~$S'[b+k]$ if $j=b$; in either case, its receive time is the same in~$R'$ as in~$\widetilde R$.
        Therefore $\latency[R']{m}=\latency[\widetilde R]{m}\leq1$.
      \item If $m\in\latePenMsgs$, then \Crefpart{claim:penultimately-timely-intermediate}{item:intermediate-late} gives $T'[i]=\widetilde T[i]\geq\tau$.
        By construction, $m$~is received in an inserted step~$S'[j']$ with $b\leq j'<b+k$; the definition of~$T'$ therefore gives $T'[j']<\tau+1$.
        Hence $\latency[R']{m}=T'[j']-T'[i]<(\tau+1)-\tau=1$.
    \end{itemize}
    Since $m$~was arbitrary, messages are ultimately timely during~$O$ in~$R'$.
  \end{proof}

  By \Cref{thm:ultimately-timely}, $\latency[R']{O}\leq\latency{\alg}$.
  Since $\latency[R]{O}=\latency[\widetilde R]{O}=\latency[R']{O}$, the claim follows.
\end{proof}

\subsection{Proofs of the Completion Lemmas}\label{sec:completion-lemma-proofs}

We prove the crash- and Byzantine-failure completion lemmas in turn.

\subsubsection{Crash Failures}\label{sec:completion-lemma-proofs:crash}

We first prove the crash-failure completion lemma using penultimate timeliness; its statement is restated below.

\phantomsection\label{proof:two-delay-completion}
\twoDelayCompletion*
\begin{figure}
  \centering
  \begin{tikzpicture}[scbase, x=1.18cm, y=.75cm]
    \def\yP{0.0}
    \def\yQMid{1.45}
    \def\yQTop{2.9}
    \def\yCrash{4.35}
    \def\labelBelowOffset{9pt}
    \def\timeBelowOffset{33pt}
    \def\rowLabelXShift{0.18}
    \def\opBotY{-0.75}
    \def\opBarExtend{0.12}
    \def\xLeft{-.75}
    \newif\ifshowDontCareEdges
    \showDontCareEdgesfalse %

    \tikzset{
        sctimeguide/.style={draw=gray!60, dashed},
        scfirstwave/.style={scgoodmsg, draw=blue!70!black},
        screturnmsg/.style={scgoodmsg, draw=orange!85!black},
        scmsglab/.style={sctlab, fill=none},
        scothermsg/.style={->, line width=0.9pt, dotted, draw=black!45},
        scearlywindow/.style={fill=blue!6, draw=none},
        scpenwindow/.style={fill=orange!8, draw=none},
    }

    \pgfmathsetmacro{\rhoVal}{2.5}
    \pgfmathsetmacro{\xScale}{4.25}

    \pgfmathsetmacro{\xZero}{0}
    \pgfmathsetmacro{\xRhoMinusTwo}{(\rhoVal-2)*\xScale}
    \pgfmathsetmacro{\xThreeQuarters}{0.75*\xScale}
    \pgfmathsetmacro{\xOne}{1*\xScale}
    \pgfmathsetmacro{\xRhoMinusOne}{(\rhoVal-1)*\xScale}
    \pgfmathsetmacro{\xSevenQuarters}{1.75*\xScale}
    \pgfmathsetmacro{\xTwo}{2*\xScale}
    \pgfmathsetmacro{\xRho}{\rhoVal*\xScale}
    \pgfmathsetmacro{\xRight}{\xRho + 0.15*\xScale}

    \def\xA{\xZero}
    \def\xB{\xSevenQuarters}
    \pgfmathsetmacro{\xIone}{0.82*\xScale}
    \pgfmathsetmacro{\xItwo}{0.84*\xScale}
    \pgfmathsetmacro{\xIp}{0.93*\xScale}
    \pgfmathsetmacro{\xMidLate}{1.95*\xScale}
    \pgfmathsetmacro{\xQTopLate}{2.05*\xScale}
    \pgfmathsetmacro{\xPLateRecv}{2.18*\xScale}
    \def\xCrashEnd{\xB}

    \assertlt{\xZero}{\xRhoMinusTwo}{0 < rho-2}
    \assertlt{\xRhoMinusTwo}{\xThreeQuarters}{rho-2 < 3/4}
    \assertlt{\xThreeQuarters}{\xIone}{3/4 < i1}
    \assertlt{\xIone}{\xItwo}{i1 < i2}
    \assertlt{\xItwo}{\xIp}{i2 < ip}
    \assertlt{\xIp}{\xOne}{ip < 1}
    \assertlt{\xOne}{\xRhoMinusOne}{1 < rho-1}
    \assertlt{\xRhoMinusOne}{\xB}{rho-1 < b}
    \assertlt{\xB}{\xMidLate}{b < midLate}
    \assertlt{\xMidLate}{\xTwo}{midLate < 2}
    \assertlt{\xTwo}{\xQTopLate}{2 < qTopLate}
    \assertlt{\xQTopLate}{\xPLateRecv}{qTopLate < lateRecv}
    \assertlt{\xPLateRecv}{\xRho}{lateRecv < rho}

    \assertlt{\xA}{\xIone}{a -> i1}
    \assertlt{\xA}{\xIp}{a -> ip}
    \assertlt{\xA}{\xItwo}{a -> i2}
    \assertlt{\xIp}{\xB}{ip -> b}
    \assertlt{\xIone}{\xB}{i1 -> b}
    \assertlt{\xItwo}{\xB}{i2 -> b}
    \assertlt{\xIone}{\xMidLate}{i1 -> midLate}
    \assertlt{\xItwo}{\xQTopLate}{i2 -> qTopLate}
    \assertlt{\xQTopLate}{\xPLateRecv}{qTopLate -> lateRecv}

    \fill[scearlywindow] (\xZero,\yP-0.55) rectangle (\xRhoMinusTwo,\yCrash+0.52);
    \fill[scpenwindow] (\xRhoMinusTwo,\yP-0.55) rectangle (\xRhoMinusOne,\yCrash+0.52);

    \draw[sctimeguide] (\xZero,\yP-0.9) -- (\xZero,\yCrash+0.7);
    \draw[sctimeguide] (\xRhoMinusTwo,\yP-0.9) -- (\xRhoMinusTwo,\yCrash+0.7);
    \draw[sctimeguide] (\xThreeQuarters,\yP-0.9) -- (\xThreeQuarters,\yCrash+0.7);
    \draw[sctimeguide] (\xOne,\yP-0.9) -- (\xOne,\yCrash+0.7);
    \draw[sctimeguide] (\xRhoMinusOne,\yP-0.9) -- (\xRhoMinusOne,\yCrash+0.7);
    \draw[sctimeguide] (\xSevenQuarters,\yP-0.9) -- (\xSevenQuarters,\yCrash+0.7);
    \draw[sctimeguide] (\xTwo,\yP-0.9) -- (\xTwo,\yCrash+0.7);
    \draw[sctimeguide, draw=red] (\xTwo,\yP-0.9) -- (\xTwo,\yCrash+0.7);
    \draw[sctimeguide] (\xRho,\yP-0.9) -- (\xRho,\yCrash+0.7);

    \draw[sctimeline] (\xLeft,\yP) -- (\xRight,\yP);
    \draw[sctimeline] (\xLeft,\yQMid) -- (\xRight,\yQMid);
    \draw[sctimeline] (\xLeft,\yQTop) -- (\xRight,\yQTop);
    \draw[sctimeline] (\xLeft,\yCrash) -- (\xCrashEnd,\yCrash);

    \node[sctlab, anchor=east] at (\xLeft-\rowLabelXShift,\yP) {$p$};
    \node[sctlab, anchor=east] at (\xLeft-\rowLabelXShift,\yQMid) {$q$};
    \node[sctlab, anchor=east] at (\xLeft-\rowLabelXShift,\yQTop) {$r$};
    \node[sctlab, anchor=east] at (\xLeft-\rowLabelXShift,\yCrash) {$s$};
    \node[sctlab, anchor=east, below=\timeBelowOffset] at (\xLeft-\rowLabelXShift,\yP) {$T$};
    \path (\xA,\yP) pic {scopbelow={\xRho-\xA/\opBotY-\yP/O}};

    \path (\xA,\yP) pic {scendpointtick};
    \path (\xRho,\yP) pic {scendpointtick};

    \draw[scfirstwave] (\xA,\yP) -- (\xIone,\yQTop);
    \draw[scfirstwave] (\xA,\yP) -- (\xItwo,\yQMid);
    \draw[scfirstwave] (\xA,\yP)
        to[out=15,in=165,looseness=0.55]
        (\xIp,\yP);
    \node[scmsglab, text=blue!70!black] at ({0.12*\xScale},\yP+0.75) {$m$};
    \draw[screturnmsg] (\xIp,\yP)
        to[out=18,in=170,looseness=0.55]
        node[scmsglab, pos=0.30, above, text=orange!85!black] {$m_p$}
        (\xB,\yP);
    \draw[screturnmsg] (\xIone,\yQTop) --
        node[scmsglab, pos=0.45, above, sloped, text=orange!85!black] {$m_r$}
        (\xB,\yP);
    \draw[screturnmsg] (\xItwo,\yQMid) --
        node[scmsglab, pos=0.40, above, sloped, text=orange!85!black] {$m_q$}
        (\xB,\yP);

    \ifshowDontCareEdges
        \draw[scothermsg, draw=black!30] (\xA,\yP) to[out=65,in=190] (\xB,\yCrash);
        \draw[scothermsg, draw=black!30] (\xIone,\yQTop) to[out=35,in=175] (\xB,\yCrash);
        \draw[scothermsg, draw=black!30] (\xItwo,\yQMid) to[out=50,in=205] (\xB,\yCrash);
        \draw[scothermsg] (\xIone,\yQTop) -- (\xMidLate,\yQMid);
        \draw[scothermsg] (\xItwo,\yQMid) -- (\xQTopLate,\yQTop);
        \draw[scothermsg] (\xQTopLate,\yQTop) -- (\xPLateRecv,\yP);
    \fi

    \node[scstepdot] at (\xA,\yP) {};
    \node[scstepdot] at (\xIp,\yP) {};
    \node[scstepdot] at (\xB,\yP) {};
    \node[scstepdot] at (\xRho,\yP) {};

    \node[scstepdot] at (\xItwo,\yQMid) {};

    \node[scstepdot] at (\xIone,\yQTop) {};

    \ifshowDontCareEdges
        \node[scstepdot] at (\xPLateRecv,\yP) {};
        \node[scstepdot] at (\xMidLate,\yQMid) {};
        \node[scstepdot] at (\xQTopLate,\yQTop) {};
    \fi

    \node[sctlab, yshift=3pt, below=\labelBelowOffset] at (\xA,\yP) {$a$};
    \node[sctlab, yshift=3pt, below=\labelBelowOffset] at (\xB,\yP) {$b$};
    \node[sctlab, yshift=3pt, below=\labelBelowOffset] at (\xRho,\yP) {$c$};

    \node[sctlab, below=\timeBelowOffset] at (\xZero,\yP) {$0$};
    \node[sctlab, below=\timeBelowOffset] at (\xRhoMinusTwo,\yP) {$1/2$};
    \node[sctlab, below=\timeBelowOffset] at (\xThreeQuarters,\yP) {$3/4$};
    \node[sctlab, below=\timeBelowOffset] at (\xOne,\yP) {$1$};
    \node[sctlab, below=\timeBelowOffset] at (\xRhoMinusOne,\yP) {$3/2$};
    \node[sctlab, below=\timeBelowOffset] at (\xSevenQuarters,\yP) {$7/4$};
    \node[sctlab, text=red, below=\timeBelowOffset] at (\xTwo,\yP) {$2$};
    \node[sctlab, below=\timeBelowOffset] at (\xRho,\yP) {$5/2$};
  \end{tikzpicture}
  \caption{\textbf{Proof of \nameCref{thm:two-delay-completion}~\hyperref[proof:two-delay-completion]{\ref*{thm:two-delay-completion}}.}
  Let~$\alg$ be a loquacious algorithm with $\latency{\alg}\leq2$, and suppose toward a contradiction that the operation~$O$, invoked by~$p$ in step~$a$, has not returned by step~$b$.
  Extend the schedule prefix so that exactly the processes in~$Q$ are correct; then $O$ returns in some step~$c>b$, and the times~$T$ give $\latency{O}=T[c]-T[a]=5/2$.
  The figure depicts the instance $n=4$, $f=1$, and $Q=\{p,q,r\}$, in which $p$, $q$, and~$r$ are correct and~$s$ crashes; in general $|Q|=n-f$.
  The shaded bands show the two intervals from \Crefpart{def:penultimately-timely}{case:penultimately-timely-complete} of penultimate timeliness:
  (a)~\textcolor{blue!70!black}{$I_{\mathrm{early}}=[0,1/2]$}; and
  (b)~\textcolor{orange!85!black}{$I_{\mathrm{pen}}=(1/2,3/2]$}.
  The figure shows only the messages to which these conditions apply; each has latency less than~$1$.
  So messages are penultimately timely during~$O$, and \Cref{thm:penultimately-timely} gives $\latency{O}\leq\latency{\alg}\leq\textcolor{red}{2}$, contradicting $\latency{O}=5/2$.}
  \label{fig:less-than-three-delay}
\end{figure}

\begin{proof}
  \proofqed[proof:two-delay-completion]{thm:two-delay-completion}
  In a system with crash failures, let $\alg$ be a loquacious algorithm with $\latency{\alg}\leq 2$, and let $\sigma$ be a schedule prefix of~$\alg$ that satisfies the assumptions in the lemma statement.

  Suppose, for the sake of contradiction, that process~$p$ has not returned from the operation by step~$\sigma[b]$.
  Extend the prefix~$\sigma[0..b]$ to a schedule~$S$ of~$\alg$ in which every process in~$Q$ takes infinitely many further steps and receives every message sent to it, while every remaining process takes no further steps.
  Since $p$~is correct in~$S$ and $\latency{\alg}<\infty$, \Cref{thm:liveness-from-finite-delay} implies that the operation invoked in step~$S[a]=\sigma[a]$ eventually returns, in some step~$S[c]$ with $c>b$.

  Now, define the time sequence~$T$ as follows (see \Cref{fig:less-than-three-delay}):
  \[
    T[i] \Deq
    \begin{cases}
      \Aff_{(-\infty,a]\to(-\infty,0]}(i), & i\leq a, \\
      \Aff_{(a,b)\to(3/4,1)}(i), & a<i<b, \\
      \Aff_{[b,c]\to[7/4,5/2]}(i), & b\leq i\leq c, \\
      \Aff_{(c,+\infty)\to(5/2,+\infty)}(i), & i>c.
    \end{cases}
  \]
  It can be verified that $T$~is strictly increasing and unbounded, so $R=(S,T)$ is a run of~$\alg$.
  Let $O$~denote the operation in~$R$ invoked in step~$S[a]$ and returning in step~$S[c]$.
  It follows that $\latency{O} = T[c] - T[a] = 5/2$.

  \begin{claim*}
    Messages are penultimately timely during~$O$ in~$R$.
  \end{claim*}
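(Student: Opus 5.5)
We verify the two conditions of \Crefpart{def:penultimately-timely}{case:penultimately-timely-complete} for~$O$, which is complete with $T[a]=0$ and $T[c]=5/2$. This gives $I_{\mathrm{early}}=[0,1/2]$ and $I_{\mathrm{pen}}=(1/2,3/2]$.

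First we record which steps fall where. By hypothesis~\ref{item:two-delay-completion-no-other-steps}, the only steps strictly between $\sigma[a]$ and $\sigma[b]$ are the steps $\sigma[j_q]$ for $q\in Q$, and these have times in $(3/4,1)$. Step~$\sigma[b]$ has time $7/4$, and every step with index $i>b$ has time greater than $7/4$. Consequently, the only step with time in $I_{\mathrm{early}}=[0,1/2]$ is $\sigma[a]$, at time~$0$, and the only steps with times in $I_{\mathrm{pen}}=(1/2,3/2]$ are the steps $\sigma[j_q]$.

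\emph{Condition (a).} The only messages sent during $I_{\mathrm{early}}$ are the copies of~$m$ broadcast in step~$\sigma[a]$. We must bound the latency of each copy sent to a process that takes infinitely many steps in~$S$. In~$S$, exactly the processes of~$Q$ take infinitely many steps; the others take no steps after~$\sigma[b]$. For $q\in Q$, hypothesis~\ref{item:two-delay-completion-first-wave} says that $q$ receives~$m$ in step~$\sigma[j_q]$. That step has time in $(3/4,1)$, so the latency of this copy is less than~$1$. Copies sent to processes outside~$Q$ need no bound.

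\emph{Condition (b).} Since $p$ takes infinitely many steps, we must consider every message sent to~$p$ during $I_{\mathrm{pen}}$. These are exactly the copies of the messages~$m_q$, $q\in Q$, that are addressed to~$p$. By hypothesis~\ref{item:two-delay-completion-second-wave}, all of them are received in step~$\sigma[b]$, at time~$7/4$. Their send times lie in $(3/4,1)$, so each latency is less than $7/4-3/4=1$.

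Here the construction of~$S$ matters: the copies of~$m_q$ addressed to processes other than~$p$ may arrive late. Condition~(b) exempts them, so they need no bound.

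By the claim, messages are penultimately timely during~$O$, and $p$ is correct in~$R$. \Cref{thm:penultimately-timely} therefore gives $\latency{O}\le\latency{\alg}\le 2$, contradicting $\latency{O}=5/2$. Hence the operation returns no later than~$\sigma[b]$.

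The main obstacle is the bookkeeping in the first step: checking that no other step falls inside the two windows. Hypothesis~\ref{item:two-delay-completion-no-other-steps} is exactly what rules out such steps, and it is also what makes $T$ strictly increasing so that $R$ is a run.
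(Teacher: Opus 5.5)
Your proof is correct and follows the paper's argument essentially step for step. Only $\sigma[a]$ lies in $I_{\mathrm{early}}$ and only the steps $\sigma[j_q]$ lie in $I_{\mathrm{pen}}$; the paper gets the latter by mapping send times back to indices, while you classify the time of every step. After that, both conditions follow from the time bounds $(3/4,1)$ and $7/4$. One small correction to your closing remark: $T$ is strictly increasing from its piecewise-affine definition alone, so hypothesis~4 is needed only to identify the steps inside the two windows, not to make $R$ a run.
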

  \begin{proof}
    \proofqed*{Claim}
    Since $O$ is complete, we must verify conditions~\ref{cond:penultimately-timely-1} and~\ref{cond:penultimately-timely-2} of \Cref{def:penultimately-timely}.
    \begin{itemize}
    \item For condition~\ref{cond:penultimately-timely-1}, consider the interval
      \[
        I_{\mathrm{early}}=[T[a],T[c]-2] = \left[0,\frac{1}{2}\right] \qquad(\text{blue interval in \Cref{fig:less-than-three-delay}}).
      \]
      The only step in this interval is~$S[a]$, so the messages sent during this interval to a process that takes infinitely many steps are precisely the copies of~$m$ sent in step~$S[a]$ to processes in~$Q$.
      By assumption~\ref{item:two-delay-completion-first-wave}, each such message is received in some step~$S[j_q]$ with $a<j_q<b$, and hence before time~$1$.
      Since these messages are sent at time~$0$, each has latency less than~$1$.
    \item For condition~\ref{cond:penultimately-timely-2}, we have
      \[
        I_{\mathrm{pen}}
        =[T[a],T[c]-1]\setminus I_{\mathrm{early}}
        =\left(\frac{1}{2},\frac{3}{2}\right]
        \qquad(\text{orange interval in \Cref{fig:less-than-three-delay}}).
      \]
      Since $p$ takes infinitely many steps, we must examine the messages sent in this interval to~$p$.
      Let~$\widehat m$ be such a message, sent in step~$S[j]$.
      Since
      \[ T[a]=0<\frac{1}{2}<T[j]\leq\frac{3}{2}<\frac{7}{4}=T[b] \]
      and~$T$ is strictly increasing, we have $a<j<b$.
      So by assumptions~\ref{item:two-delay-completion-first-wave} and~\ref{item:two-delay-completion-no-other-steps}, we have $\widehat m=m_q$ for some~$q\in Q$.
      By assumption~\ref{item:two-delay-completion-second-wave}, message~$\widehat m$ is received in step~$S[b]$.
      Furthermore, $j>a$ implies $T[j]>3/4$ by the definition of~$T$.
      Therefore,
      \[
        \latency{\widehat m} = T[b]-T[j] < \frac{7}{4} - \frac{3}{4} = 1.\qedhere
      \]
    \end{itemize}
  \end{proof}
  By \Cref{thm:penultimately-timely}, the claim implies $\latency{O}\leq\latency{\alg}\leq2$.
  This contradicts $\latency{O}=5/2$.
\end{proof}

\subsubsection{Byzantine Failures}\label{sec:completion-lemma-proofs:byz}
We next derive the Byzantine variant from the same argument; its statement is also restated below (with the differences from the crash version underlined).

\phantomsection\label{proof:two-delay-completion-byz}
\twoDelayCompletionByz*
\begin{proof}
  \proofqed[proof:two-delay-completion-byz]{thm:two-delay-completion-byz}
  The proof of \nameCref{thm:two-delay-completion}~\hyperref[proof:two-delay-completion]{\ref*{thm:two-delay-completion}} applies with only one change: We must justify its initial extension of~$\sigma[0..b]$ to a schedule of~$\alg$.
  Since every process in~$Q$ follows~$\alg$ throughout~$\sigma[0..b]$, we can extend this prefix to a schedule of~$\alg$ by letting these processes take infinitely many further steps according to~$\alg$ (and ensuring that every message sent to them is eventually received), and letting the remaining processes take no further steps.
  In the resulting schedule, the processes in~$Q$ are exactly the ones that take infinitely many steps and exactly the correct ones, and so there are $n-|Q|=f$ faulty processes.
  Because $p$ is correct in the schedule, its operation eventually returns.
  The remainder of the proof---the time assignment and the verification of penultimate timeliness---applies verbatim and yields the same contradiction.
\end{proof}

\section{Proof of the Two-Delay Lower Bound for Crash Failures}\label{sec:crash-lower-bound-proofs}
In this section, we prove the two-delay latency lower bound in systems with crash failures from~\Cref{sec:crash:lower-bounds}.
The proof relies on the following
lemma.

\begin{lemma}\label{thm:one-delay-completion}
  In a system with crash failures, let $\alg$~be a loquacious algorithm with $\latency{\alg} \leq 1$.
  Suppose that $\sigma$ is a schedule prefix of~$\alg$ in which some process~$p$:
  \begin{itemize}
    \item invokes an operation and broadcasts a message~$m$ in step~$\sigma[a]$, and
    \item receives its own message~$m$ in step~$\sigma[b]$.
  \end{itemize}
  Then the operation invoked by~$p$ in step~$\sigma[a]$ returns no later than step~$\sigma[b]$.
\end{lemma}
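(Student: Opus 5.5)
We will mimic the proof of \Cref{thm:two-delay-completion}, with a single round of messages in place of two. Suppose, toward a contradiction, that $p$'s operation $O$ has not returned by step~$\sigma[b]$. We extend $\sigma[0..b]$ to a schedule~$S$ of~$\alg$ in which $p$ (together with any $n-f-1$ further processes, to meet the resilience requirement) takes infinitely many further steps and receives every message sent to it, while all other processes stop. Since $p$ is correct in~$S$ and $\latency{\alg}<\infty$, \Cref{thm:liveness-from-finite-delay} shows that $O$ returns in some step~$S[c]$ with $c>b$.

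Next we assign times. Steps with index at most~$a$ are placed affinely in $(-\infty,0]$, so that $T[a]=0$. Steps strictly between $a$ and~$b$ are placed in $(0,1/4)$. Steps $b,\dots,c$ are placed affinely in $[1/2,3/2]$, and later steps are placed in $(3/2,+\infty)$. The resulting $T$ is strictly increasing and unbounded, so $R=(S,T)$ is a run. In this run $\latency{O}=T[c]-T[a]=3/2$.

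It then suffices to show that messages are penultimately timely during~$O$, since \Cref{thm:penultimately-timely} would give $3/2=\latency{O}\le\latency{\alg}\le1$, a contradiction. With $T[c]=3/2$ we have $I_{\mathrm{early}}=[0,-1/2]=\emptyset$, so condition~\ref{cond:penultimately-timely-1} holds vacuously. We also have $I_{\mathrm{pen}}=[0,1/2]$. For condition~\ref{cond:penultimately-timely-2}, we must check that every message sent to~$p$ at a time in $[0,1/2]$ has latency at most one.

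The main obstacle is this verification. The message $m$ itself is sent at time~$0$ and received at time $T[b]=1/2$, so it has latency $1/2$. The problem is other messages sent to~$p$ during $(0,1/2]$: those sent in steps strictly between $a$ and~$b$, and the one broadcast by $p$ in step~$S[b]$ at time $1/2$. These may be received arbitrarily late in~$S$, unlike in \Cref{thm:two-delay-completion}, where assumption~\ref{item:two-delay-completion-no-other-steps} excluded such steps.

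We plan to remove this difficulty by first rearranging~$\sigma$. Steps between $\sigma[a]$ and~$\sigma[b]$ that are taken by processes other than~$p$ do not affect $p$'s state, because $p$ receives only $m$ in step $\sigma[b]$. Step~$\sigma[b]$ itself must occur before any of $p$'s later steps; we will check this when rearranging. We can therefore delete those intermediate steps, or move them after~$\sigma[b]$, by commutativity. The resulting prefix is indistinguishable to~$p$, so if $O$ has not returned by $\sigma[b]$ in the original prefix, it has not returned in the new one either.

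In the new prefix, the only steps in $[0,1/2]$ are $S[a]$ and~$S[b]$, both taken by~$p$. The messages they broadcast to~$p$ can then be delivered promptly in the extension: we have $p$ receive them in its next steps, placed just after time~$1/2$ and within one unit of their send times. We must make sure this choice is compatible with $p$'s return in step $S[c]$ and with the schedule for $T$ above, adjusting the positions of these steps if needed. With these deliveries, condition~\ref{cond:penultimately-timely-2} holds, and the contradiction follows.
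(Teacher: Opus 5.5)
Your overall plan (extend $\sigma[0..b]$, use \Cref{thm:liveness-from-finite-delay} to get a return step $S[c]$, retime so that $\latency{O}=3/2$, verify penultimate timeliness) matches the paper's. However, the rearrangement step you rely on is invalid.

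You justify deleting or postponing the steps strictly between $\sigma[a]$ and $\sigma[b]$ ``because $p$ receives only $m$ in step $\sigma[b]$.'' The lemma assumes nothing of the kind. Process~$p$ may receive messages broadcast by those intermediate steps, either in $\sigma[b]$ or in steps of its own between $\sigma[a]$ and $\sigma[b]$. In that case, deleting them yields a sequence that is not a schedule prefix or changes $p$'s state, and postponing them past $\sigma[b]$ is then impossible.

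Moreover, $p$ itself may take steps strictly between $\sigma[a]$ and $\sigma[b]$. These cannot be removed without changing $p$'s view. Under your timing they sit in $(0,1/4)\subseteq I_{\mathrm{pen}}$, and by loquaciousness each sends $p$ a message that may still be pending after $\sigma[b]$. Since $c$ is determined by $\alg$ only after you fix the extension, $p$'s first step after $\sigma[b]$ may already be its return step. Such a message is then received no earlier than $T[c]=3/2$, more than one unit after it was sent. ``Adjusting the positions'' of the steps after $\sigma[b]$ cannot repair this.

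The missing idea is that no rearrangement is needed, because the time assignment is entirely at your disposal. Put every step after $\sigma[a]$, including the intermediate ones and $\sigma[b]$ itself, strictly after time $1/2$ but no later than time $1$. The paper maps indices $a<i\le b$ into $(1/2,3/4]$ and $b<i\le c$ into $(3/4,3/2]$. Then $I_{\mathrm{pen}}=[0,1/2]$ contains only the step $S[a]$. By loquaciousness, its sole message to $p$ is $m$, which is received at $T[b]=3/4$. Every other step of the run lies outside $[T[a],T[c]-1]=[0,1/2]$, so no other message is constrained by condition~\ref{cond:penultimately-timely-2}. \Cref{thm:penultimately-timely} then yields the contradiction $3/2\le 1$.
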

\begin{proof}
  \proofqed{thm:one-delay-completion}
  Let the worst-case message delay of~$\alg$ be~$k\leq 1$.
  Suppose, for the sake of contradiction, that the operation invoked by~$p$ in step~$\sigma[a]$ does not return by step~$\sigma[b]$.
  By assumption, $b>a$.

  Extend~$\sigma[0..b]$ to an (infinite) schedule~$S$ of~$\alg$ in which $p$~takes infinitely many steps.
  Then $p$~is correct in~$S$, and since $\alg$~has finite worst-case message delay, \Cref{thm:liveness-from-finite-delay} implies that the operation invoked in step~$S[a]$ eventually returns.
  Let $S[c]$ be its return step, where $c > b$, and call the complete operation~$O$.

  Assign times~$T$ to the steps in~$S$ by setting
  \[
    T[i] \Deq
    \begin{cases}
      \Aff_{(-\infty,a]\to(-\infty,0]}(i), & i\leq a, \\
      \Aff_{(a,b]\to(1/2,3/4]}(i), & a<i\leq b, \\
      \Aff_{(b,c]\to(3/4,3/2]}(i), & b<i\leq c, \\
      \Aff_{(c,+\infty)\to(3/2,+\infty)}(i), & i>c.
    \end{cases}
  \]
  This sequence is strictly increasing and unbounded.
  Let $R=(S,T)$ be the resulting run.

  Since $O$ is complete, we show that messages are penultimately timely during~$O$ in~$R$ by verifying the two conditions of \Crefpart{def:penultimately-timely}{case:penultimately-timely-complete}.
  Condition~\ref{cond:penultimately-timely-1} holds vacuously because $T[c]-2 < T[a]$.
  Since $p$~takes infinitely many steps, condition~\ref{cond:penultimately-timely-2} requires us to consider the messages sent to~$p$ during the relevant time interval $[0,1/2]$; the only step taken in this interval is~$S[a]$.
  Since~$\alg$ is loquacious, the only message sent to~$p$ in step~$S[a]$ is~$m$; by assumption, $p$ receives~$m$ in step~$S[b]$, with latency $T[b]-T[a]=3/4\leq 1$.

  Therefore, by \Cref{thm:penultimately-timely}, $\latency{O}\leq k \leq 1$, contradicting the fact that $\latency{O}=T[c]-T[a]=3/2$.
\end{proof}

We now restate and prove the two-delay latency lower bound.

\phantomsection\label{proof:crash-two-delay-lower-bound}
\crashTwoDelayLowerBound*
\begin{proof}
  \proofqed[proof:crash-two-delay-lower-bound]{thm:crash-two-delay-lower-bound}
  Suppose, toward a contradiction, that there is an implementation~$\alg$ of~\ts{} whose worst-case message delay is at most one.
  By \Cref{thm:loquacious-transformation}, we may assume without loss of generality that~$\alg$ is loquacious.
  By \Cref{thm:timestamp-characterization}, $\alg$ satisfies ordering~(\ref{cond:timestamp-ordering}).
  Since $n\geq 2$, let $p$ and $q$ be distinct processes in the system.

  Construct a schedule prefix~$\sigma_p$ of~$\alg$ consisting of two steps, both taken by process~$p$: In the first step, $p$ invokes an operation and broadcasts a message~$m_p$; in the second step, $p$~receives~$m_p$.
  By \Cref{thm:one-delay-completion}, this operation returns by the end of~$\sigma_p$---let its return value be~$x_p$.
  Construct~$\sigma_q$ symmetrically for process~$q$, with return value~$x_q$.

  Note that~$\sigma_p$ contains only steps of~$p$ and~$\sigma_q$ contains only steps of~$q$, so their concatenation $\sigma_p \circ \sigma_q$ is also a schedule prefix of~$\alg$.
  Extend $\sigma_p \circ \sigma_q$ to a schedule of~$\alg$; in this schedule, condition~(\ref{cond:timestamp-ordering}) implies $x_p<x_q$.
  Similarly, we can extend $\sigma_q \circ \sigma_p$ to a schedule of~$\alg$, yielding $x_q<x_p$, a contradiction.
\end{proof}

\end{document}